\newtheorem{theorem}{Theorem}
\newtheorem{lemma}[theorem]{Lemma}
\newtheorem{proposition}[theorem]{Proposition}
\newtheorem{corollary}[theorem]{Corollary}
\newtheorem{remark}[theorem]{Remark}
\newtheorem*{remarkno}{Remark}
\newtheorem{definition}[theorem]{Definition}
\newtheorem*{lemrealexp}{Proposition \ref{lem:realexp}}
\newtheorem*{lemcomplexexp}{Proposition \ref{lem:complexexp}}
\newtheorem*{lemopoptwo}{Lemma \ref{opop2}}
\newtheorem*{lemmainonetwoone}{Proposition \ref{lem:main121}}
\newtheorem*{lemfixpoints}{Lemma \ref{lem:fixpoints}}
\newtheorem*{lemhoptwo}{Lemma \ref{hop2}}
\newtheorem*{lemprecision}{Lemma \ref{lem:precision}}
\renewcommand\arg{\text{arg}}
\newcommand\Arg{\text{Arg}} 
\def\calI{\mathcal{I}}
\def\NP{\mathsf{NP}}
\def\numP{\#\mathsf{P}}
\def\Zin{Z^{\mathsf{in}}}
\def\Zout{Z^{\mathsf{out}}}
\def\Reals{\mathbb{R}}
\def\QQ{\mathbb{Q}}
\def\Complex{\mathbb{C}}
\def\CQ{\mathbb{C}_{\mathbb{Q}}}
\newcommand{\size}[1]{\mathrm{size}(#1)}
\def\Cnz{\mathbb{C}_{\neq0}}
\def\Riem{\widehat{\Complex}}
\def\LambdaD{\Lambda_\Delta}
\def\tLambdaD{\vartheta\Lambda_\Delta}
\def\Lbip{\mathcal{L}_{\mathrm{bip}}}
\newcommand{\fn}[2]{#1^{#2}}
\def\lambdab{\ensuremath{\boldsymbol{\lambda}}}
\newcommand{\im}{\mathrm{i}}
\newcommand{\emm}{\mathrm{e}}
\def\FactorHardCore#1{\#\ensuremath{\mathsf{BipHardCoreNorm}(\lambda,\Delta,#1)}}
\def\FactorMVHardCore#1{\#\ensuremath{\mathsf{MVHardCoreNorm}(\lambda,\Delta,#1)}}
\def\FactorMVBipHardCore#1{\#\ensuremath{\mathsf{MVBipHardCoreNorm}(\lambda,\Delta,#1)}}
\def\ArgHardCore#1{\#\ensuremath{\mathsf{BipHardCoreArg}(\lambda,\Delta,#1)}}
\def\ArgMVHardCore#1{\#\ensuremath{\mathsf{MVHardCoreArg}(\lambda,\Delta,#1)}}
\def\ArgMVBipHardCore#1{\#\ensuremath{\mathsf{MVBipHardCoreArg}(\lambda,\Delta,#1)}}
\def\ComplexHardCore{\#\ensuremath{\mathsf{BipComplexHardCore}(\lambda,\Delta)}}
\newcommand{\eps}{\epsilon}
\newcommand{\abs}[1]{\left|#1\right|}
\def\prob#1#2#3{\goodbreak\begin{list}{}{\labelwidth\z@ \itemindent-\leftmargin
                        \itemsep\z@  \topsep6\p@\@plus6\p@
                        \let\makelabel\descriptionlabel}
                \item[\it Name]#1
               \item[\it Instance]                #2
                \item[\it Output]#3
                \end{list}}
\def\Bezakova{Bez\'{a}kov\'{a}}
\def\textin{\text{in}}
\title{Inapproximability of the independent set polynomial in the complex plane\footnote{A preliminary short version of the manuscript appeared in STOC 2018.}}
\author{
Ivona \Bezakova\thanks{
  Department of Computer Science, Rochester Institute of Technology, Rochester, NY, USA. Research
supported by NSF grant CCF-1319987.} \and
Andreas Galanis\thanks{
  Department of Computer Science, University of Oxford, Wolfson Building, Parks Road, Oxford, OX1~3QD, UK.
  The research leading to these results has received funding from the European Research Council under
  the European Union's Seventh Framework Programme (FP7/2007-2013) ERC grant agreement no.\ 334828. The paper
  reflects only the authors' views and not the views of the ERC or the European Commission.
  The European Union is not liable for any use that may be made of the information contained therein.}
  \and
  Leslie Ann Goldberg$^\ddag$
\and
 Daniel \v{S}tefankovi\v{c}\thanks{
Department of Computer Science, University of Rochester,
Rochester, NY 14627.  Research
supported by NSF grant CCF-1563757.}
 }
\date{July 5, 2020}
\begin{document}

\maketitle
\thispagestyle{empty}
\begin{abstract} 

We study the complexity of approximating the value of the independent set  polynomial $Z_G(\lambda)$
of a graph~$G$ with maximum degree $\Delta$ when the activity~$\lambda$ is a complex number.

When $\lambda$ is real,  
the complexity picture is well-understood, and is captured by two 
real-valued thresholds~$\lambda^*$ and $\lambda_c$, which depend on~$\Delta$ and satisfy
$0<\lambda^*<\lambda_c$. It is known that if $\lambda$ is a
real number in the interval $(-\lambda^*,\lambda_c)$
then there is an FPTAS for approximating $Z_G(\lambda)$ on 
graphs $G$ with maximum degree at most~$\Delta$.
On the other hand, if $\lambda$ is a real number outside of the (closed)
interval, then approximation is NP-hard. The key to establishing this picture was the interpretation of the thresholds $\lambda^*$ and $\lambda_c$ on the $\Delta$-regular tree.
The ``occupation ratio'' of a $\Delta$-regular tree~$T$ is
the contribution to $Z_T(\lambda)$ from independent sets containing the root of the tree,
divided by $Z_T(\lambda)$ itself.
This occupation ratio converges to a limit,
as the height of the tree grows,  if and only if $\lambda\in [-\lambda^*,\lambda_c]$.

Unsurprisingly, the case where $\lambda$ is complex is more challenging.  
It is known that there is an FPTAS when
$\lambda$ is a complex number  with norm at most $\lambda^*$
and  also when $\lambda$ is in a small strip surrounding the real interval $[0,\lambda_c)$.
However, neither of these results is believed to fully capture the truth about
when approximation is possible.  Peters and Regts identified the complex values of $\lambda$ for which 
the occupation  ratio of the $\Delta$-regular tree converges. These values carve a cardioid-shaped region $\LambdaD$ 
in the complex plane, whose boundary
  includes the critical points $-\lambda^*$ and $\lambda_c$. Motivated by the picture in the real case, they asked whether $\LambdaD$ marks the true approximability threshold for general complex values $\lambda$. 

Our main result shows that for every $\lambda$ outside of $\LambdaD$, 
the problem of approximating $Z_G(\lambda)$ on graphs $G$ with maximum degree at most~$\Delta$
is  indeed NP-hard.  In fact, when 
$\lambda$ is outside of $\LambdaD$ and  is not a positive real number, we give the stronger result
that approximating $Z_G(\lambda)$ is actually \#P-hard.
Further, on the negative real axis, when $\lambda < - \lambda^*$, we show that it is \#P-hard to even decide 
whether $Z_G(\lambda)>0$, resolving in the affirmative a conjecture of 
Harvey, Srivastava and Vondr\'ak. 

Our proof techniques are
based around tools from complex analysis --- specifically 
the study of iterative multivariate rational maps.
\end{abstract}

\newpage

\clearpage
 \setcounter{page}{1}
\section{Introduction}

The independent set polynomial is one of the most well-studied graph polynomials, arising in combinatorics and in computer science. 
It is also known in statistical physics as the partition function of the hard-core  model.
This paper studies the computational complexity of evaluating the polynomial approximately 
when a parameter, called the \emph{activity}, is complex.
For properties of this polynomial  in the complex plane, including connections to the 
Lov\'asz Local Lemma, see the work of Scott and Sokal~\cite{SS}.
For $\lambda \in \Complex$ and a graph $G$ 
the polynomial is defined as $Z_G(\lambda):=\sum_{I}\lambda^{|I|}$, where the sum ranges over all independent sets of $G$.
We will be interested in the problem of approximating $Z_G(\lambda)$ when the
maximum degree of $G$ is bounded.

When $\lambda$ is real, the complexity picture is well-understood.
For $\Delta\geq 3$, let $\mathcal{G}_\Delta$ be the set of graphs
with maximum degree at most~$\Delta$.
The complexity of
approximating $Z_G(\lambda)$ for $G\in \mathcal{G}_\Delta$ is captured
by two real-valued thresholds
$\lambda^*$ and $\lambda_c$ which depend on $\Delta$ and satisfy $0<\lambda^* < \lambda_c$.
To be precise,    $\lambda^*=\frac{(\Delta-1)^{\Delta-1}}{\Delta^{\Delta}}$ and $\lambda_c=\frac{(\Delta-1)^{\Delta-1}}{(\Delta-2)^{\Delta}}$.
The known  results are as follows.  
\begin{enumerate}
\item If $\lambda$ is in the interval  $-\lambda^*<\lambda<\lambda_c$, there  is an FPTAS for approximating $Z_G(\lambda)$ on graphs 
$G\in \mathcal{G}_\Delta$. For $0\leq\lambda<\lambda_c$, this follows from the work of Weitz \cite{Weitz}, while for $-\lambda^*<\lambda<0$
it follows  from the works of Harvey, Srivastava, and Vondr\'{a}k \cite{Piyush} and Patel and Regts \cite{PR}.
\item  If  $\lambda<-\lambda^*$ or $\lambda>\lambda_c$, it is $\NP$-hard to approximate $|Z_G(\lambda)|$ on graphs $G\in \mathcal{G}_\Delta$, even within an exponential factor. For  $\lambda>\lambda_c$, this follows from the work of Sly and Sun \cite{SlySun}, while for $\lambda<-\lambda^*$ it follows  from the work of Galanis, Goldberg, and \v{S}tefankovi\v{c} \cite{GGS}.
\end{enumerate}
The  key to establishing   this complexity characterisation was the following interpretation of the thresholds $\lambda^*$ and $\lambda_c$. 
Given a $\Delta$-regular tree $T$ of height $h$ with root~$\rho$, let $p_h$ denote the ``occupation ratio'' of the tree, 
which is given by $p_h=\frac{\sum_{I; \rho\in I}\lambda^{|I|}}{Z_{T}(\lambda)}$,
where   the sum ranges over  the independent sets of~$T$ that include the root~$\rho$. It turns out that the occupation ratio $p_h$ converges to a limit as $h\rightarrow\infty$ if and only if the activity $\lambda$ lies within the interval $[-\lambda^*,\lambda_c]$, so it turns out that the complexity of approximating
$Z_G(\lambda)$ for $G\in \mathcal{G}_\Delta$ depends on whether this  quantity converges.

Understanding the complexity picture in the case where $\lambda\in\Complex$ is more challenging.  
If $\lambda$ is a complex number with norm at most $\lambda^*$ 
then there is an FPTAS for
approximating $Z_G(\lambda)$  on graphs $G\in \mathcal{G}_\Delta$.
This is due to Harvey, Srivastava and Vondr{\'a}k
and to Patel and Regts~\cite{Piyush,PR}. 
More recently, Peters and Regts~\cite{Peters} showed the existence of an FPTAS when $\lambda$ is in a small strip surrounding the real interval $[0,\lambda_c)$.
However, neither of these results is believed to fully capture the truth about
when approximation is possible.  
Motivated by the real case, Peters and Regts~\cite{Peters} identified the values of $\lambda$ for which 
the occupation  ratio of the $\Delta$-regular tree converges (for $\Delta \geq 3$). These values carve a cardioid-shaped region $\LambdaD$ 
in the complex plane, whose boundary
  includes the critical points $-\lambda^*$ and $\lambda_c$.
  The definition of $\LambdaD$ is as follows  
       (see Figure~\ref{fig:cardioid})\footnote{Technically, the word ``cardioid'' refers to a curve which can be obtained by a point on the perimeter of a circle which is rolling around a fixed circle of the same radius. The region \eqref{eq:region} does not formally correspond to a ``cardioid'' in this sense, but its shape closely resembles a heart for all values of $\Delta\geq 3$, which justifies the (slight) abuse of terminology.}:
\begin{equation}\label{eq:region}
\LambdaD=\Big\{\lambda \in \mathbb{C}\,\Big|\,\exists z\in \Complex:\ |z|\leq 1/(\Delta-1),\ \lambda=\frac{z}{(1-z)^{\Delta}}\Big\}.
\end{equation}%
\begin{figure}[h]
\begin{center}
\scalebox{0.2}[0.2]{\includegraphics{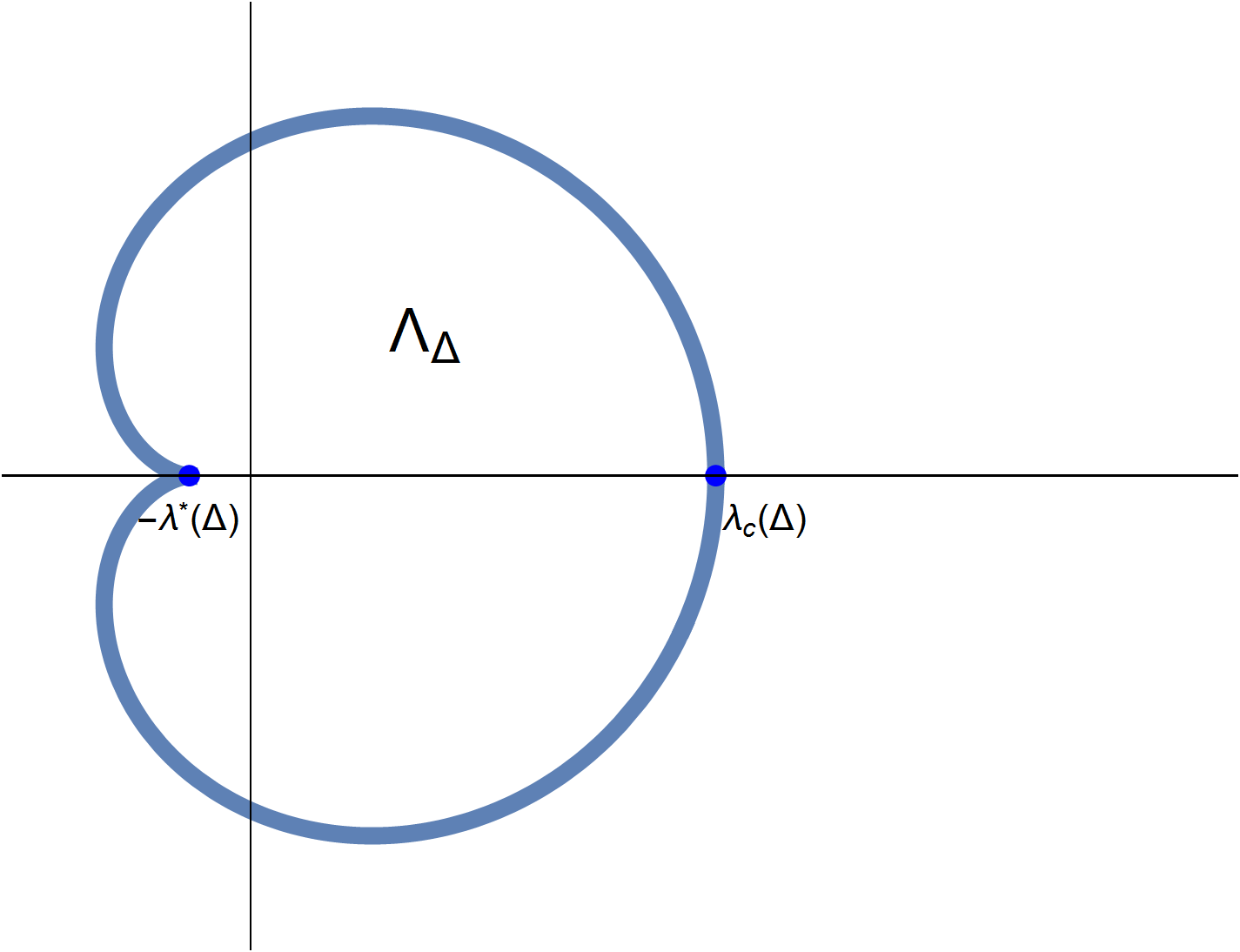}}
\end{center}
\caption{\label{fig:cardioid}The cardioid-shaped region $\LambdaD$ in the complex plane. We show that for all $\lambda\in \Complex\backslash (\LambdaD\cup \Reals_{\geq 0})$, approximating $Z_G(\lambda)$ is $\numP$-hard. Previously, it was known that the problem is $\NP$-hard on the real line in the intervals $\lambda<-\lambda^*$ and $\lambda>\lambda_c$. Note, we have  that the thresholds $-\lambda^*,\lambda_c$ belong to $\LambdaD$, by taking $z=\pm 1/(\Delta-1)$ in \eqref{fig:cardioid}.}
\end{figure}%
Peters and Regts showed that, for every $\lambda$ in the (strict) interior of $\LambdaD$, the occupation ratio of the $\Delta$-regular tree converges, and asked whether the region $\LambdaD$ marks the true approximability threshold for general complex values $\lambda$. 

Our main result shows that for every $\lambda$ outside of the region $\LambdaD$, 
the problem of approximating $Z_G(\lambda)$ on graphs $G \in \mathcal{G}_\Delta$
is  indeed NP-hard, thus answering \cite[Question 1]{Peters}.  In fact, when 
$\lambda$ is outside of $\LambdaD$ and  is not a positive real number, we establish the stronger result
that approximating $Z_G(\lambda)$ is actually \#P-hard. We do this by showing that an approximation algorithm for  $Z_G(\lambda)$ can be converted 
into  a polynomial-time algorithm for exactly counting independent sets.  
Further, on the negative real axis, when $\lambda < - \lambda^*$, we show that it is \#P-hard to even decide 
whether $Z_G(\lambda)>0$, resolving in the affirmative a conjecture of Harvey, Srivastava, and Vondr\'{a}k \cite[Conjecture 5.1]{Piyush}.

We need the following notation to formally state our results. Given
  a complex number   $x\in \Complex$, we use $|x|$ to denote its norm and $\Arg(x)$ to denote the principal value of its argument in the range $[0,2\pi)$. We also define $\arg(x) = \{ \Arg(x)+ 2 \pi j \mid j \in \mathbb{Z}\}$.  For $y,z\in \Complex$, we use
$d(y,z)$ to denote the Ziv distance between them~\cite{Ziv}, namely $d( y,z) = \frac{| y-z|}{\max(| y|,|z|)}$. We denote by $\CQ$ the set of complex numbers whose real and imaginary parts are rational numbers (see Definition~\ref{def:CQ}).

We consider the problems of multiplicatively approximating the norm of $Z_G(\lambda)$, 
additively approximating the argument of $Z_G(\lambda)$, and
approximating $Z_G(\lambda)$ by producing a complex number $\widehat{Z}$
such that the Ziv distance $d\big(\widehat{Z},Z_G(\lambda)\big)$ is small. 
  We start with  the following problem, which captures
the approximation of the norm of $Z_G(\lambda)$.
\prob{
$\FactorHardCore{K}$.} 
{ A bipartite graph $G$ with maximum degree at
most $\Delta$.} 
 { If $|Z_G(\lambda)|=0$  then the algorithm may output any rational number. Otherwise,
 it must output a  rational number $\widehat{N}$ such that
$\widehat{N}/K \leq
|Z_{G}(\lambda)|\leq K \widehat{N}$.}
Our first theorem shows that it is $\numP$-hard to approximate $|Z_G(\lambda)|$ on bipartite graphs of maximum degree $\Delta$ within a constant factor.

\begin{theorem}\label{thm:norm}
Let $\Delta\geq 3$ and $\lambda\in \CQ$ be such that $\lambda\not\in (\LambdaD \cup \Reals_{\geq 0})$. Then, 
 $\FactorHardCore{1.01}$  is $\numP$-hard.\end{theorem}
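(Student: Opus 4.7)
The plan is to exploit the dynamics of the univariate tree recursion $T_{\lambda}(z) = \lambda/(1+z)^{\Delta-1}$, whose fixed point behaviour governs whether $\lambda\in \LambdaD$. By the Peters--Regts characterisation, $\LambdaD$ is exactly the set of activities for which the iterates of $T_\lambda$ starting at $z=0$ converge; for $\lambda\notin \LambdaD$ the orbit is non-convergent, and in fact the induced dynamics on the Riemann sphere has periodic behaviour or a large invariant set that we intend to harness. I would first use \emph{tree gadgets}: bipartite trees of bounded degree $\Delta$ with one distinguished root and with some leaves ``pinned'' (either forced into the independent set or forbidden from it, using local bipartite constructions with auxiliary $\lambda$-weighted paths). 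Each such gadget realises an effective activity $\lambda'$ at the root, equal to the corresponding iterate of $T_\lambda$ applied to a chosen starting ratio. Composing gadgets in series corresponds to iterating $T_\lambda$; branching corresponds to multiplying ratios; and attaching $k$ copies of a gadget to a fresh root implements $T_\lambda^k$ acting on their combined ratio.

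Next I would establish the \textbf{density step}: for $\lambda\notin\LambdaD\cup\Reals_{\geq 0}$, the set of effective activities implementable by polynomial-size bipartite gadgets is dense, in either norm or argument, in an annulus around some non-real target point. The intuition is that outside $\LambdaD$ the critical orbit of $T_\lambda$ escapes the attracting region, and iteration generates a rich trajectory. I would split the analysis according to the nature of the nearby periodic cycles of $T_\lambda$: if there is a repelling or indifferent cycle whose multiplier is not a root of unity, then standard tools from complex dynamics (equicontinuity failure on the Julia set, Denjoy--Wolff-type statements for multivariate maps) yield dense orbits. The bipartiteness of the gadget constructions lets us separately control the two ``sides'' of the tree recursion and combine iterates $T_\lambda^{2k}$ with complex conjugation-style tricks to produce both an arbitrary norm and an arbitrary argument; excluding the positive real ray $\Reals_{\geq 0}$ is what guarantees that arguments of the resulting effective activities can be made non-trivial.

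The third step is the \textbf{reduction}. Given dense implementability of effective activities, I would reduce from the $\numP$-hard problem of exactly counting independent sets in a graph $G_0$ of maximum degree $\Delta$. I would replace each vertex of $G_0$ by a tree-gadget that implements an effective activity whose value is tuned (up to polynomial-precision control) so that $Z_{G}(\lambda)$ can be expressed as an explicit polynomial in the number of independent sets of $G_0$ of each size. A $1.01$-factor approximation to $|Z_G(\lambda)|$, combined with the ability to vary the effective activities over a dense set of arguments, lets one recover sufficiently many evaluations of this polynomial at distinct points to interpolate it exactly; hence the approximation oracle yields an exact counting algorithm, completing the $\numP$-hardness reduction. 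Because $\lambda\in\CQ$ and all gadgets are polynomial-size, every arithmetic step can be performed with polynomially many bits of precision, which I would verify via the \textit{Lemma \ref{lem:precision}} precision analysis of the paper.

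The main obstacle will be the \emph{density step}, which is the heart of the argument. Showing non-convergence of iterates of $T_\lambda$ for $\lambda\notin\LambdaD$ is a pointwise statement, but inapproximability requires \emph{quantitative} density: a polynomial-size gadget must get polynomially close to any specified target activity. This requires a careful case analysis of the local dynamics of $T_\lambda$ at its periodic points — in particular handling the indifferent case where linearisation may fail — and combining iteration of $T_\lambda$ with bipartite pairing to generate enough independent ``knobs'' for the argument of the effective activity. The rest of the steps (gadget construction, precision control, polynomial interpolation) are essentially routine once this analytic control is in hand.
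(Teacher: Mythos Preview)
Your high-level architecture (gadget implementation of activities via the tree recursion, then a reduction from exact counting) matches the paper, but the two places where the real work happens are either missing or misdescribed.

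\textbf{The density/precision step is the genuine gap.} You correctly flag it as the heart of the argument, but the tools you invoke do not do the job. For $\lambda\notin\LambdaD$, \emph{all} fixpoints of the tree recursion are repelling (Lemma~\ref{lem:fixpoints}); iterating the \emph{univariate} map $T_\lambda$ therefore pushes you \emph{away} from every fixpoint, and ``equicontinuity failure on the Julia set'' or ``Denjoy--Wolff-type statements'' give you neither a way to reach a specified neighbourhood nor the \emph{exponential} (not merely polynomial) precision that the reduction needs. The paper's two key innovations, which your plan does not contain, are: (i) a \emph{multivariate} Fibonacci-type recursion (Lemma~\ref{hop2}) whose iterates \emph{converge} to the repelling fixpoint $\omega$ --- this is what lets you get close to $\omega$ at all; and (ii) the ``contracting maps that cover'' mechanism, which is exactly what Lemma~\ref{lem:precision} is about. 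You cite Lemma~\ref{lem:precision} as mere arithmetic bookkeeping, but it is in fact the engine that turns a finite $\epsilon$-covering of a ball around $\omega$ into \emph{exponentially} accurate implementations: by freezing $d-1$ arguments of the multivariate map one obtains a family of univariate contractions whose images cover the ball, and pulling back through these contractions trades graph size linearly for accuracy geometrically. Without these two ingredients there is no mechanism in your plan for producing, in polynomial time, a gadget whose effective activity is within $2^{-n}$ of a prescribed target --- and that precision is essential.

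\textbf{The reduction step is also different and, as written, incomplete.} The paper does not interpolate the independence polynomial from many noisy norm evaluations; recovering a polynomial from $1.01$-factor approximations to $|p(\mu)|$ at several $\mu$ is not standard interpolation and would itself require a nontrivial argument. Instead, the paper implements an activity within $\epsilon$ of $-M$ at a single apex vertex joined to every vertex of $H$ (with all $H$-vertices given activity $1$), so that the partition function is essentially $N-M$; a $1.02$-factor norm oracle then distinguishes $N>M$ from $N<M$, and binary search over $M$ recovers $N$ exactly (Theorem~\ref{thm:MV}). Equality gadgets (Lemma~\ref{lem:equality}) and a bipartite edge gadget (Lemma~\ref{lem:bipartitegadget}) reduce the degree and enforce bipartiteness, and the exponential-precision implementation lemma (Lemma~\ref{lem:complexexp}) eliminates the auxiliary activities (Theorem~\ref{thm:MVtoSV}).
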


\begin{remarkno}The value ``$1.01$'' in the statement of Theorem~\ref{thm:norm} is
not important. In fact, for any fixed $\epsilon>0$ we can
use the theorem, together with a standard powering argument, to
show that  it is $\numP$-hard to approximate $|Z_G(\lambda)|$ 
within a factor of $2^{n^{1-\epsilon}}$. 
\end{remarkno}

The following problem captures the approximation of the argument of $Z_G(\lambda)$.  
  \prob{$\ArgHardCore{\rho}$.} 
 { A bipartite graph $G$ with maximum degree at
most $\Delta$.}
{If $Z_G(\lambda)=0$ then the algorithm may output any rational number. Otherwise, it must output  
a rational number $\widehat{A}$ such that, for some
$a \in  \arg(Z_{G}(\lambda))$,
$ |\widehat{A} - a | \leq \rho$.} 

Our second theorem shows that it is $\numP$-hard to approximate $\arg (Z_{G}(\lambda))$ on bipartite graphs of maximum degree $\Delta$ within an additive constant $\pi/3$. 
\begin{theorem}\label{thm:arg}
Let $\Delta\geq 3$ and $\lambda\in \CQ$ be such that $\lambda\not\in (\LambdaD \cup \Reals_{\geq 0})$. Then, $\ArgHardCore{\pi/3}$
 is $\numP$-hard.
\end{theorem}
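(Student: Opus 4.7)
The plan is to prove Theorem~\ref{thm:arg} by mirroring the overall architecture used for Theorem~\ref{thm:norm}, which reduces a $\numP$-hard counting problem (such as $\#\mathsf{IS}$ on bipartite bounded-degree graphs) to approximation of the partition function, with the difference that the $\numP$-hard answer is decoded from the argument rather than the norm. The shared backbone is a gadget-construction framework powered by the complex-dynamics properties of the tree recursion: since $\lambda\notin\LambdaD$, the iteration governing the effective activity on the $\Delta$-regular tree has no attracting fixed point, and by the Peters--Regts characterization its iterates wander through a two-dimensional region $D\subset\Complex$. The key reusable lemma, which I expect to be established jointly with Theorem~\ref{thm:norm}, is that bipartite bounded-degree ``field gadgets'' can realize effective activities arbitrarily close to a dense subset of $D$, with gadget size polynomial in the required precision (this is exactly the role played by Lemmas~\ref{lem:realexp}, \ref{lem:complexexp}, \ref{lem:main121} and \ref{lem:precision}, together with the fixed-point analysis of Lemma~\ref{lem:fixpoints}).

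Given this density lemma, I would build the reduction as follows. Start from a $\numP$-hard instance with answer $N\in\{0,1,\dots,M\}$ where $M$ is polynomial-size. Using the gadgets from the density lemma, realize (up to high precision) a complex factor $\mu \in D$ whose argument $\arg(\mu)$ is an irrational multiple of $\pi$ that we can control; attach copies of the $\mu$-realizing gadget to a distinguished vertex structure in the original instance to obtain a bipartite bounded-degree graph $G'$ whose partition function $Z_{G'}(\lambda)$, up to a known computable scaling, equals $\mu^{f(N)}\cdot c$ for an integer-valued function $f$ with $f(N+1)-f(N)$ constant. The argument of $Z_{G'}(\lambda)$ then changes by a controlled (and computable) quantity $\delta$ as $N$ varies. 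By choosing $\mu$ so that $\delta$ sits in an appropriate window, distinct values of $N$ produce arguments separated by more than $2\pi/3$ modulo $2\pi$, so an $\ArgHardCore{\pi/3}$ oracle distinguishes them and we recover $N$ bit-by-bit (iterating with several gadget choices for different ranges of $N$, as is standard).

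The main obstacle will be quantitative control of the ``argument amplification'' step: I need to realize a gadget whose effective activity $\mu$ has argument that can be tuned so that, when iterated $f(N)$ times, the induced arguments for different $N$ lie in distinguishable buckets under an additive $\pi/3$ oracle, while simultaneously keeping the norm $|\mu|^{f(N)}$ controlled (so that Ziv-style precision issues in the oracle and its output remain polynomial). This requires combining the density-of-iterates statement with a Diophantine-type argument ensuring that the available arguments $\arg(\mu)$ are dense modulo $2\pi$, not just in some sub-arc; the proof of this density is exactly the nontrivial complex-dynamics input inherited from Theorem~\ref{thm:norm}. Once this density, together with the precision-control lemma~\ref{lem:precision}, is in place, the remainder of the argument is a careful but routine bookkeeping of rational approximations, paralleling the norm case.
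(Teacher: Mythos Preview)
Your proposal has a genuine gap at its central step: the claim that one can construct a bipartite bounded-degree graph $G'$ with $Z_{G'}(\lambda)=c\cdot\mu^{f(N)}$, where $N$ is the $\numP$-hard quantity being computed and $f$ is affine. The hard-core partition function is a \emph{sum} over independent sets, so it is a polynomial in any implemented activity $\mu$ of degree at most the number of vertices; it is not an exponential $\mu^{f(N)}$ in the unknown count $N$. Attaching $k$ disjoint copies of a $\mu$-gadget multiplies $Z$ by a factor independent of $N$; attaching a $\mu$-gadget to each vertex of the base graph $H$ turns $Z$ into $Z_H(\mu)=\sum_I\mu^{|I|}$, which is still not of the form $c\,\mu^{aN+b}$. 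Without a concrete mechanism producing this exponential dependence, the ``argument amplification'' and bucket-separation steps have nothing to act on, and the Diophantine density considerations are beside the point.

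The paper's route is both simpler and avoids this obstruction entirely. The key observation is that the implementation lemmas (Lemmas~\ref{lem:complexexp} and~\ref{lem:realexp}) let you implement \emph{any} target activity in $\CQ$, in particular a real activity $\approx -M$ for rational $M\in[0,2^n]$. Attaching a single gadget $J_M$ with terminal activity $\lambda_M\approx -M$ to every vertex of the base instance $H$ yields (after the equality-gadget reduction of Lemma~\ref{lem:equality}) a partition function which, up to a computable nonzero factor, equals $\lambda_M+N\approx N-M$. This is essentially a \emph{real} number, so its argument is within $\pi/12$ of $0$ when $N>M+1/7$ and within $\pi/12$ of $\pi$ when $N<M-1/7$; a $\pi/3$-oracle for the argument therefore decides the comparison $N\gtrless M$, and binary search on $M$ recovers $N$. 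No argument amplification, no $\mu^{f(N)}$ structure, and no irrationality or Diophantine input is needed---the whole reduction rests on the ability to implement the single real activity $-M$ with exponential precision.
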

 Theorem~\ref{thm:arg} also has the following immediate corollary for the case in which~$\lambda$ is a negative real number, resolving in the affirmative \cite[Conjecture 5.1]{Piyush}.
\begin{corollary}\label{cor:sign}
Let $\Delta\geq 3$ and 
$\lambda \in \QQ$  be such that
$\lambda<-\lambda^*$. Then, given as input a bipartite graph $G$ with maximum degree $\Delta$, it is $\numP$-hard to decide whether $Z_G(\lambda)>0$.
\end{corollary}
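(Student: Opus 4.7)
The plan is to give a very short Turing reduction from $\ArgHardCore{\pi/3}$, which is $\numP$-hard by Theorem~\ref{thm:arg}, to the decision problem of determining whether $Z_G(\lambda)>0$ on bipartite graphs $G\in\mathcal{G}_\Delta$. First I would verify that Theorem~\ref{thm:arg} applies to the given $\lambda$: since $\lambda\in\QQ$ we have $\lambda\in\CQ$; the real section of $\LambdaD$ equals the interval $[-\lambda^*,\lambda_c]$ (with $-\lambda^*$ on the boundary, obtained by taking $z=-1/(\Delta-1)$ in~\eqref{eq:region}), so $\lambda<-\lambda^*$ forces $\lambda\notin\LambdaD$; and $\lambda<0$ gives $\lambda\notin\Reals_{\geq 0}$.

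The key structural observation is that for real $\lambda$ the value $Z_G(\lambda)$ is real, so whenever it is nonzero its argument set $\arg(Z_G(\lambda))$ is either $\{2\pi j\mid j\in\mathbb{Z}\}$ (if $Z_G(\lambda)>0$) or $\{\pi+2\pi j\mid j\in\mathbb{Z}\}$ (if $Z_G(\lambda)<0$). These two arithmetic progressions are separated by exactly $\pi$, which is three times the allowed additive slack of $\ArgHardCore{\pi/3}$, so an additive $\pi/3$-approximation of the argument pinpoints the sign of $Z_G(\lambda)$.

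The reduction itself is then one line: given an oracle $\mathcal{O}$ for the decision problem, on input $G$ I would query $\mathcal{O}(G)$ and output $\widehat{A}=0$ if the answer is ``yes'' and $\widehat{A}=22/7$ if the answer is ``no''. In the ``yes'' case, $Z_G(\lambda)>0$ and $|0-0|\leq \pi/3$ is valid with $a=0\in\arg(Z_G(\lambda))$. In the ``no'' case, either $Z_G(\lambda)=0$, in which case any rational is permitted, or $Z_G(\lambda)<0$, in which case $a=\pi\in\arg(Z_G(\lambda))$ and $|22/7-\pi|<10^{-2}<\pi/3$. Because the reduction is so direct, there is no genuine obstacle; the only quantitative fact used is that the allowed slack $\pi/3$ is less than half the angular gap $\pi$ between the positive and negative real axes.
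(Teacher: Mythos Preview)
Your proof is correct and matches the paper's approach: the paper states this as an ``immediate'' corollary of Theorem~\ref{thm:arg} without writing out the reduction, and you have correctly spelled out the one-oracle-call argument. The only point that could use one more line of justification is that $\lambda<-\lambda^*$ implies $\lambda\notin\LambdaD$; this follows because if $\lambda\in\LambdaD\cap\Reals$ then by Lemma~\ref{lem:uniqueness} the (unique) $z$ with $|z|\leq 1/(\Delta-1)$ and $\lambda=z/(1-z)^\Delta$ must equal its conjugate, hence be real, and the map $z\mapsto z/(1-z)^\Delta$ is increasing on $[-1/(\Delta-1),1/(\Delta-1)]$ with endpoints $-\lambda^*,\lambda_c$.
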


Theorems~\ref{thm:norm} and~\ref{thm:arg}
show as a corollary that it is \#P-hard to approximate $Z_G(\lambda)$ within small Ziv distance.
 
  \prob{$\ComplexHardCore$.} 
 { A  bipartite graph $G$ with maximum degree at
most $\Delta$. A positive integer $R$, in unary.}
{If $Z_G(\lambda)=0$ then the algorithm may output any complex number. Otherwise, it must output  
a  complex number $z$ such that
$d(z, Z_G(\lambda)) \leq   1/R$.}

\begin{corollary}\label{cor:Ziv}
Let $\Delta\geq 3$ and $\lambda\in \CQ$ be such that $\lambda\not\in (\LambdaD \cup \Reals_{\geq 0})$. Then, the problem 
$\ComplexHardCore$
 is $\numP$-hard.
\end{corollary}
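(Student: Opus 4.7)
The plan is to reduce both $\FactorHardCore{1.01}$ and $\ArgHardCore{\pi/3}$ to $\ComplexHardCore$; by Theorems~\ref{thm:norm} and~\ref{thm:arg} this immediately yields the claim. The core observation I would exploit is that a Ziv-close complex approximation $z$ of $Z_G(\lambda)$ simultaneously controls the norm multiplicatively and the argument additively.

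Concretely, I would fix a large absolute constant $R$ (something like $R=1000$ is more than enough) and, given a bipartite graph $G$ of maximum degree $\Delta$, feed the pair $(G,R)$ into the hypothesised $\ComplexHardCore$ oracle to obtain $z\in\CQ$. The case $Z_G(\lambda)=0$ is disposed of by the ``any output'' clauses in all three problem specifications, so I may assume $Z_G(\lambda)\neq 0$, which gives $d(z,Z_G(\lambda))\leq 1/R$. From the definition of the Ziv distance and the triangle inequality, it is routine to deduce the two-sided bound $(1-1/R)|z|\leq |Z_G(\lambda)|\leq |z|/(1-1/R)$. For the argument, I would write $z=Z_G(\lambda)(1+w)$; the same bounds yield $|w|\leq 1/(R-1)$, hence $|\Arg(1+w)|\leq \arcsin(1/(R-1))$, which for $R=1000$ is far below $\pi/3$. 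Thus, up to rational rounding, $|z|$ solves $\FactorHardCore{1.01}$ and $\Arg(z)$ solves $\ArgHardCore{\pi/3}$ on the same graph $G$.

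The only real care-point, and the ``main obstacle'' such as it is, is producing \emph{rational} outputs $\widehat N,\widehat A$ as demanded by the two target problems: since $z$ has rational real and imaginary parts, $|z|^2$ and $\tan(\Arg z)$ are rational, but $|z|$ and $\Arg(z)$ themselves typically are not. I would absorb this by choosing $R$ large enough to leave comfortable slack (say, requiring the Ziv bound to yield an eventual factor of $1.001$ and an angular bound of $\pi/10$), and then computing $\widehat N\in\QQ$ within a factor of $1.005$ of $|z|$ by binary search on $\widehat N^2$ against the rational $|z|^2$, and $\widehat A\in\QQ$ within $\pi/20$ of $\Arg z$ by a standard approximation of $\arctan$. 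Both rounding steps run in time polynomial in the bit-length of $z$, preserving the polynomial-time reduction and completing the proof modulo Theorems~\ref{thm:norm} and~\ref{thm:arg}.
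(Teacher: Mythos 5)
Your proposal is correct and takes essentially the same approach as the paper: both reduce $\FactorHardCore{1.01}$ (and, alternatively, $\ArgHardCore{\pi/3}$) to $\ComplexHardCore$ using the observation that Ziv-closeness simultaneously controls the norm multiplicatively and the argument additively. The paper delegates the precise bounds and the oracle-reduction mechanics to Lemmas~2.1 and~2.2 of~\cite{ComplexIsing}, which you re-derive directly, together with the rational-rounding step that the cited reference handles.
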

 
Corollary~\ref{cor:Ziv} follows immediately from Theorem~\ref{thm:norm} using the fact (see \cite[Lemma 2.1]{ComplexIsing})
that $d(z',z) \leq \epsilon$ implies $|z'|/|z| \leq 1/(1-\epsilon)$.
This fact implies 
(see \cite[Lemma 2.2]{ComplexIsing})
that there is 
 a polynomial Turing reduction from  
$\FactorHardCore{K}$  to $\ComplexHardCore$.
Similarly,
Corollary~\ref{cor:Ziv} can also be proved using Theorem~\ref{thm:arg}.
To see this, note  that
for $\epsilon \leq 1/3$, $d(z',z) \leq \epsilon$ implies 
that there are 
$a\in \arg(z)$ and $a'\in \arg(z')$ such that
$| a-a'|\leq \sqrt{36\epsilon/11}$. This fact is proved in \cite[Lemma 2.1]{ComplexIsing}
 and it implies
 (see \cite[Lemma 2.2]{ComplexIsing})  that  there is a polynomial Turing reduction from   $\ArgHardCore{\rho}$ to $\ComplexHardCore$.

Note that our $\numP$-hardness results for $\lambda\in \CQ\backslash (\LambdaD\cup \Reals_{\geq 0})$  highlight a difference in complexity 
between this case and the case where $\lambda$ is a   rational satisfying $\lambda> \lambda_c$.
If $\lambda$ is a positive  rational then $Z_G(\lambda)$ can be efficiently approximated 
in polynomial time using an NP oracle, via the bisection technique of Valiant and Vazirani~\cite{VV}.
Thus, in that case approximation is NP-easy, and is unlikely to be $\numP$-hard. The techniques for proving hardness also differ in the two cases.

\subsection{Proof approach}
To prove our inapproximability results, we construct graph gadgets which, when  appended appropriately to a vertex, have the effect of altering the activity $\lambda$ to any complex activity $\lambda'$ that we wish, perhaps with some small error $\epsilon$. In fact, it is essential for our $\numP$-hardness results to be able to make the error $\epsilon$ exponentially small with respect to the number of the vertices in the graph  (see the upcoming Proposition~\ref{lem:complexexp} for details).

Interestingly,  our constructions are based on using tools from complex analysis for analysing the iteration of rational maps.   We start with the observation that $(\Delta-1)$-ary trees of height $h$  can be used to ``implement''  activities $\lambda'$ which correspond to the iterates of the  complex rational map $f:x\mapsto \frac{1}{1+\lambda x^{\Delta-1}}$. Crucially, we show that when $\lambda\notin \LambdaD$, all of the fixpoints of $f$ are repelling, i.e., applying the map $f$ at any point close to a fixpoint $\omega$ will push us away from the fixpoint. In the iteration of univariate complex rational maps, repelling fixpoints belong to the so-called Julia set of the map; a consequence of this is that iterating $f$ in a neighbourhood $U$ of a repelling fixpoint gives rise to a chaotic behaviour: after sufficiently many iterations, one ends up anywhere in the complex plane. 

This sounds promising, but how can we get close to a \emph{repelling} fixpoint of $f$ in the first place? In fact, we need to be able to create arbitrary points in a neighbourhood $U$ of a repelling fixpoint and iterating $f$ will not get us anywhere close (since the fixpoint is repelling).  The key is  to use a Fibonacci-type construction which requires analysing a more intricate multivariate version of the map $f$. Surprisingly, we can show that the iterates of the multivariate version converge to the fixpoint $\omega$ of the univariate $f$ with the smallest norm.  Using convergence properties of the multivariate map around $\omega$ (and some extra work), we obtain a family of (univariate) contracting\footnote{Let $\Phi: \Complex\rightarrow \Complex$ be a complex map. We say that $\Phi$ is contracting on a set $S\subseteq \Complex$ if there exists a  real number $M<1$ such that for all $x,y\in S$ it holds that $|\Phi(x)-\Phi(y)|\leq M|x-y|$.}  maps $\Phi_1,\hdots,\Phi_t$ and  a small neighbourhood $U$ around $\omega$ such that $U\subseteq \cup^t_{i=1}\Phi_i(U)$. The final step is to show that ``contracting maps that cover yield exponential precision''. To do this we first show  that, starting from any point in $U$, we can apply (some sequence of) $\Phi_1,\hdots,\Phi_t$ at most $poly(n)$ times to  implement any point in $U$ with precision $\exp(-\Omega(n))$. 
We then show that by iteratively applying  the univariate map $f$ and carefully tracking the distortion introduced, we 
can eventually implement any point in the complex plane with exponentially small error.

Section~\ref{sec:outline} gives a more detailed description of the proof approach, stating the key lemmas, and
explaining how they relate to each other. Section~\ref{sec:prelims} gives some preliminary definitions,
which are not necessary for understanding the proof outline in Section~\ref{sec:outline}, but are necessary once we start with the proofs.
 Section~\ref{sec:prelims} also imports some key facts from related works.
 Section~\ref{sec:cardioid} proves some key properties of the cardioid defined by~\eqref{eq:region}. 
 The main technical part of the paper is contained in Sections~\ref{sec:main1} and~\ref{sec:reductions}.
 Section~\ref{sec:main1} shows how to implement activities with exponential precision.
 Section~\ref{sec:reductions} shows how to obtain our inapproximability results, using these.
 Some of the technical proofs for Section~\ref{sec:main1} are deferred to Section~\ref{sec:mainpart}.

\subsection{New Developments}

After this paper was written, Bencs and Csikv\'ari~\cite{BC} have discovered a new zero-free region
inside the region $\LambdaD$. Using the algorithm of Patel and Regts~\cite{PR},
this gives  
an FPTAS for
approximating $Z_G(\lambda)$  on graphs $G\in \mathcal{G}_\Delta$, within this zero-free region. Further to this, Rivera-Letelier~\cite{JRL} and Buys~\cite{Buys} showed the existence of zeros inside the region $\LambdaD$, close to the boundary.

\section{Proof Outline}\label{sec:outline}
In this section, we give a more detailed outline of the proof of our results. We focus mainly on the case where $\lambda\in \CQ\backslash(\LambdaD\cup \Reals)$.
In Section~\ref{sec:realmod} we describe suitable modifications 
that will give us the ingredients needed for negative real values $\lambda\in \QQ\backslash\LambdaD$.

Let $\lambda\in \Complex$ and $G=(V,E)$ be an arbitrary graph. We denote by $\mathcal{I}_G$ the set of independent sets of $G$
(including the empty independent set). 
For a vertex $v\in V$, we will denote
\[\Zin_{G,v}(\lambda):=\sum_{I\in \mathcal{I}_G;\, v\in I}\lambda^{|I|},\quad \Zout_{G,v}(\lambda):=\sum_{I\in \mathcal{I}_G;\, v\notin I}\lambda^{|I|}.\]
Thus, $\Zin_{G,v}(\lambda)$ is the contribution to the partition function $Z_G(\lambda)$ from those independent sets $I\in \mathcal{I}_G$ such that $v\in I$; similarly, $\Zout_{G,v}(\lambda)$ is the contribution to $Z_G(\lambda)$ from those $I\in \mathcal{I}_G$ such that $v\notin I$. 
\begin{definition}\label{def:Gimplement}
Fix a complex number~$\lambda$ that is not~$0$. Given~$\lambda$,
the graph $G$  is said to \emph{implement} the activity $\lambda'\in \Complex$ with \emph{accuracy} $\epsilon>0$ if there is a vertex $v$ in $G$ such that $\Zout_{G,v}(\lambda)\neq 0$  and
\begin{enumerate}
\item \label{it:voneGde} $v$ has degree one in $G$, and
\item  \label{it:ps123}   $\displaystyle \Big|\frac{\Zin_{G,v}(\lambda)}{\Zout_{G,v}(\lambda)}-\lambda'|\leq \epsilon $.
\end{enumerate}
We call $v$ the terminal of $G$. If Item~\ref{it:ps123} holds with $\epsilon=0$, then $G$ is said to \emph{implement} the activity $\lambda'$.
\end{definition}

The key to obtaining our $\numP$-hardness results is to show that, given \emph{any} target activity $\lambda'\in \Complex$, we can construct in  polynomial time a bipartite graph $G$ that implements $\lambda'$ with exponentially small accuracy,
as a function of the size of~$\lambda'$. More precisely, we
use $\size{\lambda',\eps}$ to denote the number of bits needed to represent 
the complex number $\lambda' \in \CQ$ and the rational~$\epsilon$ (see Definition~\ref{def:CQ}).
The implementation that we need is captured by the following proposition. 
\newcommand{\statelemcomplexexp}{Let $\Delta\geq 3$ and $\lambda \in \CQ$ be such that  $\lambda\notin \LambdaD\cup \Reals$.

There is an algorithm which, on input $\lambda'\in \CQ$ and rational $\eps>0$, outputs in $poly(\size{\lambda',\eps})$ time a bipartite graph $G$ of maximum degree at most  $\Delta$ with terminal $v$ that implements $\lambda'$ with accuracy $\eps$. Moreover, the algorithm outputs the values $\Zin_{G,v}(\lambda),\Zout_{G,v}(\lambda)$.}
\begin{proposition}\label{lem:complexexp}
\statelemcomplexexp
\end{proposition}
Proposition~\ref{lem:complexexp} is extremely helpful in our reductions since it enables us to construct other gadgets very easily, e.g., equality gadgets that reduce the degree of a graph and gadgets  that can turn it into  a bipartite graph. The proofs of  Theorems~\ref{thm:norm} and~\ref{thm:arg} show how to use these gadgets to obtain $\numP$ hardness. In this proof outline, we focus on the  most difficult part which is the proof   of Proposition~\ref{lem:complexexp}.

To prove Proposition~\ref{lem:complexexp}, we will make use of the following multivariate map:
\begin{equation}\label{eq:funifmult}
(x_1,\hdots,x_{d})\mapsto \frac{1}{1+\lambda x_1\cdots x_d }, \mbox{ where } d:=\Delta-1.
\end{equation}
If, starting from 1, there is a sequence of operations \eqref{eq:funifmult} which ends with the value $x$, for the purposes of this outline, we will loosely say that ``we can generate the value $x$'' (the notion is formally defined in Definition~\ref{def:generates}). There is a simple correspondence between the values that we can generate and the activities that we can implement:  in Lemma~\ref{lem:hard-core-imp}, we show that if we can generate a value $x$, we can also implement the activity $\lambda x$ using a tree of maximum degree  $\Delta$.\footnote{\label{foot:generates}Note the extra factor of $\lambda$ when we pass to the implementation setting which  is to ensure the degree requirement in Item~\ref{it:voneGde} of Definition~\ref{def:Gimplement}; while the reader should not bother at this stage with this technical detail, the statements of our lemmas are usually about implementing activities and therefore have this extra factor $\lambda$ incorporated.} 

To get some insight about the map \eqref{eq:funifmult}, the first natural step is to look at the univariate case $x_1=\hdots=x_d=x$, where the map \eqref{eq:funifmult} simplifies into
\[f:x\mapsto \frac{1}{1+\lambda x^d}.\]
Even analysing the iterates of this map is a surprisingly intricate task; fortunately there is a rich theory concerning the iteration of complex rational maps which we can use (though much less is known in the multivariate setting!).  In the next section, we review the basic ingredients of the theory that we need, see \cite{Beardon, Milnor} for detailed accounts on the subject.

\subsection{Iteration of complex rational maps}\label{sec:iter}

We will use $\Riem=\Complex\cup \{\infty\}$ to denote the Riemann sphere (complex numbers with infinity). To handle $\infty$, it will be convenient to consider the \emph{chordal} metric $d(\cdot,\cdot)$ on the Riemann sphere $\Riem$, which is given for $z,w\in \Complex$ by
\[d(z,w)=\frac{2|z-w|}{(1+|z|^2)^{1/2}(1+|w|^2)^{1/2}}, \mbox{ and }
d(z,\infty)=\lim_{w\rightarrow\infty} d(z,w)=\frac{2}{(1+|z|^2)^{1/2}}.\]
Note that $d(z,w)$ is bounded by an absolute constant for all $z,w\in \Riem$.

Let $f:\Riem\rightarrow \Riem$ be a complex rational map, i.e., $f(z)=P(z)/Q(z)$ for some coprime polynomials $P,Q$. We define $f(\infty)$ as the limit of $f(z)$ when $z\rightarrow\infty$. The \emph{degree} of $f$ is the maximum of the degrees of $P,Q$.  A point $p\in \Complex$ is called a \emph{pole} of $f$ if $Q(p)=0$; when $p=\infty$, $p$ is a pole of $f$ if 0 is a pole of $f(1/z)$. 

Suppose that $z^*\in \Complex$ is a fixpoint of $f$, i.e., $f(z^*)=z^*$. The multiplier of $f$ at $z^*$ is given by $q=f'(z^*)$. If $z^*=\infty$, the multiplier of $f$ at $z^*$ is given by $1/f'(\infty)$. Depending on the value of $|q|$, the fixpoint $z^*$ is classified as follows: (i) \emph{attracting} if $|q|<1$, (ii) \emph{repelling} if $|q|>1$, and (iii) \emph{neutral} if $|q|=1$.  

For a non-negative integer $n\geq 0$, we will denote by $\fn{f}{n}$ the $n$-fold iterate of $f$ (for $n=0$, we let $\fn{f}{0}$ be the identity map). Given $z_0\in \Riem$, the sequence of points $\{z_n\}$ defined by $z_n=f(z_{n-1})=\fn{f}{n}(z_0)$ is called the \emph{orbit} of $z_0$.  

Given a rational map $f:\Riem\rightarrow \Riem$, we will be interested in the sensitivity of an orbit under small perturbations of the starting point. A point $z_0$ belongs to the \emph{Fatou} set if, for every $\epsilon>0$ there exists $\delta>0$ such that, for any point $z'$ with $d(z',z_0)\leq \delta$, it holds that $d(\fn{f}{n}(z'),\fn{f}{n}(z_0))\leq \eps$ for all positive integer $n$ (in other words, $z_0$ belongs to the Fatou set if the family of maps $\{\fn{f}{n}\}_{n\geq 1}$ is equicontinuous at $z_0$ under the chordal metric). A point $z_0$ belongs to the \emph{Julia} set if $z_0$ does not belong to the Fatou set (i.e., the Julia set is the complement of the Fatou set).  

\begin{lemma}[e.g., {\cite[Lemma 4.6]{Milnor}}]\label{lem:repelling}
Every repelling fixpoint belongs to the Julia set.
\end{lemma}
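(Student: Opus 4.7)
The plan is to argue by contradiction: suppose that $z^*$ is a repelling fixpoint (with multiplier $q$, $|q|>1$) which lies in the Fatou set, and derive a contradiction by showing that the derivatives $(\fn{f}{n})'(z^*)$ are simultaneously forced to be bounded (by the Fatou condition) and unbounded (by the chain rule).

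First I would reduce to the case $z^*\in \Complex$ (i.e.\ finite). If $z^*=\infty$, I would conjugate $f$ by the Möbius transformation $z\mapsto 1/z$, under which $\infty$ becomes the finite fixpoint $0$ with multiplier $1/f'(\infty) = q$, and the chordal metric is invariant under this change of coordinates up to a bi-Lipschitz factor on a neighbourhood of $0$. So, without loss of generality $z^*\in \Complex$, and on a small enough Euclidean disk around $z^*$ the chordal metric is comparable to the Euclidean metric (both from above and below by positive constants).

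Now, since $z^*$ is a fixpoint, $\fn{f}{n}(z^*) = z^*$ for every $n\geq 0$. If $z^*$ lies in the Fatou set, then by definition of equicontinuity at $z^*$, picking any $\eps>0$ small enough that the Euclidean ball $B(z^*,\eps)$ is comparable to the chordal ball, there is $\delta>0$ such that $\fn{f}{n}$ maps the Euclidean disk $D := \{z : |z-z^*|\leq \delta\}$ into the Euclidean disk $\{w : |w-z^*|\leq \eps\}$ for every $n\geq 1$. In particular, shrinking $\delta$ further if necessary, the iterates $\fn{f}{n}$ are all holomorphic on $D$ (the orbit stays bounded away from any pole of $f$, since the images land in a small Euclidean ball around $z^*$). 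So I have a family of holomorphic maps $g_n(z) := \fn{f}{n}(z) - z^*$ on $D$, satisfying $|g_n(z)|\leq \eps$ on $D$.

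Applying Cauchy's derivative estimate at the center $z^*$ of $D$ gives $|g_n'(z^*)|\leq \eps/\delta$ for every $n$. But by the chain rule, $(\fn{f}{n})'(z^*) = (f'(z^*))^n = q^n$, so $|g_n'(z^*)| = |q|^n$, which tends to $\infty$ since $|q|>1$. This is the desired contradiction, and therefore $z^*$ must belong to the Julia set.

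The one point that requires care (rather than being a genuine obstacle) is the choice of $\delta$ so that $D$ avoids the poles of all iterates $\fn{f}{n}$; this is automatic because the Fatou condition guarantees that the images of $D$ lie uniformly in a small neighbourhood of the finite value $z^*$, so the $g_n$ are holomorphic on $D$. The rest of the proof is then a clean application of Cauchy's estimate.
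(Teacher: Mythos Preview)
Your proof is correct and is essentially the standard argument (as in Milnor, Lemma~4.6): equicontinuity on a disk around a Fatou fixpoint forces a uniform Cauchy bound on the derivatives $(\fn{f}{n})'(z^*)$, which contradicts the chain-rule computation $(\fn{f}{n})'(z^*)=q^n$ with $|q|>1$. Note that the paper does not supply its own proof of this lemma; it simply cites Milnor. So there is no alternative argument in the paper to compare against---your write-up is exactly the argument being invoked.
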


For $z\in \Riem$, the grand orbit $[z]$ is the set of points $z'$ whose orbit intersects the orbit of $z$, i.e., for every $z'\in [z]$, there exist integers $m,n\geq 0$ such that $\fn{f}{m}(z)=\fn{f}{n}(z')$. The exceptional set of the map $f$ is the set of points $z$ whose grand orbit $[z]$ is finite. As we shall see in the upcoming Lemma~\ref{lem:except}, the exceptional set of a rational map $f$ can have at most two points and, in our applications, it will in fact be empty. 

For $z_0\in \Complex$ and $r>0$, we use $B(z_0,r)$ to denote the open ball of radius $r$ around $z_0$. A set $U$ is a neighbourhood of $z_0$ if $U$ contains a ball $B(z_0,r)$ for some $r>0$. We will use the following fact.
\begin{theorem}[see, e.g., {\cite[Theorem 4.10]{Milnor}}]\label{thm:bazooka}
Let $f:\Riem\rightarrow \Riem$ be a complex rational map with exceptional set $E_f$. Let $z_0$ be a point in the Julia set and let $U$ be an arbitrary neighbourhood of $z_0$.  Then, the union of the forward images of $U$, i.e., the set $\bigcup_{n\geq 0} \fn{f}{n}(U)$, contains $\Riem\backslash E_f$.
\end{theorem}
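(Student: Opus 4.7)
The plan is to derive the conclusion from Montel's normal-family theorem. First I would exploit the Julia-set hypothesis: by definition, the family of iterates $\{\fn{f}{n}\}_{n\geq 0}$ is not equicontinuous at $z_0$ in the chordal metric, hence is not a normal family on any neighbourhood $U$ of $z_0$. Montel's theorem states that a family of meromorphic maps into $\Riem$ whose members commonly omit three distinct values on their common domain must be normal. Its contrapositive applied to $\{\fn{f}{n}|_U\}$ forces the union $V:=\bigcup_{n\geq 0}\fn{f}{n}(U)$ to omit at most two points of $\Riem$.

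It then remains to show that any omitted point is exceptional. Set $W:=\Riem\setminus V$, so $|W|\leq 2$. Since $f(\fn{f}{n}(U))=\fn{f}{n+1}(U)\subseteq V$, the set $V$ is forward-invariant, and dually $W$ is totally backward-invariant: $f^{-1}(W)\subseteq W$. Iterating gives $f^{-n}(W)\subseteq W$ for every $n\geq 0$, so for any $w\in W$ the full backward orbit $B:=\bigcup_{n\geq 0}f^{-n}(w)$ sits inside the finite set $W$ and is itself finite. Because $f$ is surjective on $\Riem$, for every $y\in B$ the preimage set $f^{-1}(y)\subseteq B$ is non-empty, so $B\subseteq f(B)$. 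Combined with the trivial $|f(B)|\leq |B|$ this forces $f(B)=B$; hence $B$ is totally invariant and the grand orbit of $w$ is contained in $B$, making it finite. So $w\in E_f$, giving $W\subseteq E_f$ and therefore $\Riem\setminus E_f\subseteq V$, as required.

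The main obstacle will be the careful invocation of Montel's theorem in the Riemann-sphere setting --- one needs the version for meromorphic families with target equipped with the chordal metric, together with the standard fact that failure of normality of $\{\fn{f}{n}\}$ at the single point $z_0$ suffices to rule out a triple of common omitted values on all of $U$. Given this classical input, the reduction above is a short invariance-and-cardinality argument, and there are no further surprises.
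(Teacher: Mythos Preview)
The paper does not supply its own proof of this theorem; it is quoted as a classical result from Milnor's textbook on complex dynamics and used as a black box. Your argument via Montel's theorem is correct and is exactly the standard proof one finds in Milnor or Beardon: non-normality of $\{\fn{f}{n}\}$ at a Julia point together with Montel's three-omitted-values criterion forces $\bigcup_{n\geq 0}\fn{f}{n}(U)$ to miss at most two points, and your backward-invariance and cardinality argument then cleanly shows any omitted point has finite grand orbit. There is nothing to compare against in the paper itself, and your sketch has no gaps beyond the acknowledged reliance on the classical form of Montel's theorem for families of meromorphic maps into $\Riem$.
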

Peters and Regts \cite{Peters} used a version of Theorem~\ref{thm:bazooka} to conclude the existence of trees $T$ and $\lambda$'s close to the boundary of $\LambdaD$ such that $Z_T(\lambda)=0$. We will use Theorem~\ref{thm:bazooka} as a tool to get our $\numP$-hardness results for any $\lambda$ outside the cardioid $\LambdaD$.

\subsection{A characterisation of the cardioid}\label{sec:tg4gt3}
To use the tools from the previous section, we will need to analyse the fixpoints of the map $f(z)=\frac{1}{1+\lambda z^{\Delta-1}}$. We denote by $\tLambdaD$ the following curve  (which is actually the boundary of the region $\LambdaD$ defined in \eqref{eq:region})\footnote{The fact that the curve $\tLambdaD$, as defined in \eqref{eq:regionbound}, is the boundary of the region $\LambdaD$ (defined in \eqref{eq:region}) follows from Lemma~\ref{lem:uniqueness}.
Lemma~\ref{lem:uniqueness} 
states that 
for every $\lambda\in \LambdaD$, there is a unique $z\in \mathbb{C}$ such that $|z|\leq 1/(\Delta-1)$ and $\lambda=\frac{z}{(1-z)^{\Delta}}$.
Thus, the function 
$g(z)=\frac{z}{(1-z)^{\Delta}}$ is holomorphic and injective on the open disc $U$ given by $|z|<1/(\Delta-1)$. By the open mapping theorem, we have that $g(U)$ is an open set. Moreover, $|g(z)|\leq |z|/(1-|z|)^{\Delta-1}$, so $g(U)$ is bounded.  Since $g$ extends injectively and continuously to the closed disc $|z|\leq 1/(\Delta-1)$, the boundary of $g(U)$ is given by the image of the circle $|z|=1/(\Delta-1)$. (Alternatively, one can also apply the domain invariance theorem to $g$.)}:
\begin{equation}\label{eq:regionbound}
\tLambdaD=\Big\{\lambda \in \mathbb{C}\,\big|\,\exists z\in \mathbb{C}:\ |z|= 1/(\Delta-1),\ \lambda=\frac{z}{(1-z)^{\Delta}}\Big\}.
\end{equation}
The following lemma is proved in Section~\ref{sec:cardioid}. 
\begin{lemma}\label{lem:fixpoints}
Let $\Delta\geq 3$ and consider the map $f(z)=\frac{1}{1+\lambda z^{\Delta-1}}$ for $\lambda \in\Complex$. Then,
\begin{enumerate}
\item For all $\lambda\in \LambdaD\backslash \tLambdaD$, $f$ has a unique attractive fixpoint; all other fixpoints are repelling.
\item For all $\lambda\in \tLambdaD$, $f$ has a unique neutral fixpoint; all other fixpoints are repelling.
\item For all $\lambda\notin \LambdaD$, all of the fixpoints of $f$ are repelling.
\end{enumerate}
\end{lemma}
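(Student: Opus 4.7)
The plan is to relate the set of non-repelling fixpoints of $f(z) = 1/(1 + \lambda z^{\Delta-1})$ directly to the cardioid parametrization in \eqref{eq:region}--\eqref{eq:regionbound} via a simple change of variables, and then to read off the three cases from injectivity of that parametrization.

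First I would compute the multiplier of $f$ at an arbitrary fixpoint $w$. Differentiating gives
\[
f'(z) = -\frac{\lambda(\Delta-1)z^{\Delta-2}}{(1+\lambda z^{\Delta-1})^2}.
\]
The fixpoint equation $w = 1/(1+\lambda w^{\Delta-1})$ is equivalent to $1+\lambda w^{\Delta-1} = 1/w$, and also to $\lambda w^\Delta = 1-w$. Using the former to simplify the denominator as $(1+\lambda w^{\Delta-1})^2 = 1/w^2$ and the latter to clean up the resulting expression, one obtains the clean formula
\[
f'(w) \;=\; -(\Delta-1)\lambda w^\Delta \;=\; -(\Delta-1)(1-w).
\]
So $w$ is attracting, neutral, or repelling according to whether $(\Delta-1)|1-w|$ is less than, equal to, or greater than $1$.

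Next, the substitution $z := 1-w$ turns the fixpoint equation $\lambda w^\Delta = 1-w$ into $\lambda = z/(1-z)^\Delta$, which is precisely the parametrization used to define $\LambdaD$ and $\tLambdaD$. Hence the non-repelling (respectively attracting, neutral) fixpoints of $f$ are in bijection, via $z = 1-w$, with solutions of $\lambda = z/(1-z)^\Delta$ lying in the closed disc $|z|\le 1/(\Delta-1)$ (respectively the open disc $|z|<1/(\Delta-1)$, the circle $|z|=1/(\Delta-1)$). Part~3 of the lemma is immediate: if $\lambda \notin \LambdaD$ then no such $z$ exists, so every fixpoint $w$ of $f$ satisfies $|1-w| > 1/(\Delta-1)$ and is repelling.

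For Parts~1 and~2 I need that at most one $z$ in the closed disc solves $\lambda = z/(1-z)^\Delta$. On the open disc this is exactly Lemma~\ref{lem:uniqueness} (the map $g(z) = z/(1-z)^\Delta$ is holomorphic and one-to-one, hence conformal, as already noted in the footnote to \eqref{eq:regionbound}), and Carath\'eodory's theorem extends this conformal map to a homeomorphism of the closed disc onto the closed region $\LambdaD$, sending the boundary circle bijectively onto $\tLambdaD$. With this in hand the remaining cases fall out: if $\lambda \in \LambdaD \setminus \tLambdaD$ there is a unique preimage $z_0$ with $|z_0| < 1/(\Delta-1)$, giving the unique attracting fixpoint $w_0 = 1-z_0$, and any other non-repelling fixpoint $w'$ would produce a second preimage $z' = 1-w'$ in the closed disc, contradicting injectivity or placing $\lambda$ on $\tLambdaD$; the case $\lambda \in \tLambdaD$ is analogous with the unique preimage lying on the boundary circle and yielding the unique neutral fixpoint. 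The one nontrivial ingredient is the boundary injectivity of $g$, which is the main obstacle; once Carath\'eodory is invoked, the lemma reduces entirely to the multiplier computation above.
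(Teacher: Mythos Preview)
Your approach is essentially the same as the paper's: compute the multiplier as $f'(w)=-(\Delta-1)(1-w)$, substitute $z=1-w$ to land on the parametrization $\lambda=z/(1-z)^\Delta$, and read off the three cases from uniqueness of the preimage in the closed disc.

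The one place you take an unnecessary detour is the boundary case. You write that Lemma~\ref{lem:uniqueness} gives injectivity on the \emph{open} disc and then appeal to Carath\'eodory's theorem to push this to the closed disc. But Lemma~\ref{lem:uniqueness} as stated already asserts uniqueness of $z$ with $|z|\le 1/(\Delta-1)$, i.e.\ on the \emph{closed} disc; the paper's proof uses exactly this and never invokes Carath\'eodory. Your Carath\'eodory step, as written, is also slightly circular: the standard statement requires the target to be a Jordan domain, which is equivalent to the boundary injectivity you are trying to establish. (The continuous extension itself is free here since $g(z)=z/(1-z)^\Delta$ is holomorphic in a neighbourhood of the closed disc, but injectivity on the boundary is not.) Simply citing Lemma~\ref{lem:uniqueness} for the closed disc eliminates this detour and makes your argument identical to the paper's.
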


\subsection{Applying the theory}\label{sec:vt4vv45665v}
We are now in a position to discuss in  detail   how to apply the tools of Section~\ref{sec:iter} and the result of Section~\ref{sec:tg4gt3}. Let $\lambda\in \CQ\backslash (\LambdaD\cup \Reals)$. By Lemma~\ref{lem:fixpoints}, all of the fixpoints of the map $f(z)=\frac{1}{1+\lambda z^{\Delta-1}}$ are repelling. By Lemma~\ref{lem:repelling}, all of the repelling fixpoints belong to the Julia set of the map, and therefore, by applying Theorem~\ref{thm:bazooka}, iteratively applying $f$ to a neighbourhood $U$ of a repelling fixpoint gives the entire complex plane.  Therefore, if we want to generate an arbitrary complex value $\lambda'\in \Complex$, it suffices to be able to generate values in a neighbourhood $U$ close to a repelling fixpoint of $f$. Of course, in our setting we will also need to do this efficiently, up to exponential precision. The following proposition is therefore the next important milestone. It formalises exactly what we need to show in order to be able to  prove Proposition~\ref{lem:complexexp}.

\newcommand{\statelemmainonetwoone}{
Let $\Delta\geq 3$ and $\lambda\in \CQ\setminus\Reals$, and set $d:=\Delta-1$. Let $\omega$ be the fixpoint of $f(x)=\frac{1}{1+\lambda x^{d}}$ with the smallest norm. There exists a rational $\rho>0$ such that the following holds.

There is a polynomial-time algorithm such that, on input $\lambda'\in B(\lambda\omega,\rho)\cap \CQ$ and rational $\epsilon>0$, outputs a bipartite graph $G$ of maximum degree at most  $\Delta$ with terminal $v$ that implements $\lambda'$ with accuracy $\epsilon$. Moreover, the algorithm outputs the values $\Zin_{G,v}(\lambda),\Zout_{G,v}(\lambda)$.
}
\begin{proposition}\label{lem:main121}
Let $\Delta\geq 3$ and $\lambda\in \CQ\setminus\Reals$, and set $d:=\Delta-1$. Let $\omega$ be the fixpoint of $f(x)=\frac{1}{1+\lambda x^{d}}$ with the smallest norm.\footnote{Note, by Lemma~\ref{lem:4frf46}, all the fixpoints of $f$ have different norms for $\lambda\in \CQ\setminus\Reals$, so $\omega$ is well-defined.} There exists a rational $\rho>0$ such that the following holds.

There is a polynomial-time algorithm such that, on input $\lambda'\in B(\lambda\omega,\rho)\cap \CQ$ and rational $\epsilon>0$, outputs a bipartite graph $G$ of maximum degree at most  $\Delta$ with terminal $v$ that implements $\lambda'$ with accuracy $\epsilon$. Moreover, the algorithm outputs the values $\Zin_{G,v}(\lambda),\Zout_{G,v}(\lambda)$.
\end{proposition}

To briefly explain why Proposition~\ref{lem:complexexp} follows from  Proposition~\ref{lem:main121}, we first show how to use  Proposition~\ref{lem:main121} to implement activities 
$\lambda x^*$ where $x^*$ is
close to a pole $p$ of $f$ (i.e., a point $p$ which  satisfies $1+\lambda p^d=0$). For some $r>0$, let $U$ be the ball $B(\omega,r)$ of radius $r$ around $\omega$. Using Theorem~\ref{thm:bazooka}, we find the first integer value of $N>0$ such that a pole $p^*$ belongs to $\fn{f}{N}(U)$; in fact, we can choose $r$ (see Lemma~\ref{lem:tranpole}) so that there exists a radius $r^*>0$ such that $B(p^*,r^*)\subseteq\fn{f}{N}(U)$. The idea of ``waiting till we hit the pole of $f$'' is that, up to this point, the iterates of $f$ satisfy Lipschitz inequalities, i.e., it can be shown that there exists a  real number $L>0$ such that $|\fn{f}{N}(x_1)-\fn{f}{N}(x_2)|\leq L|x_1-x_2|$ for all $x_1,x_2\in U$. 
Therefore, for any desired target $x^*\in B(p^*,r^*)$ we can find $w^*\in U$ such that $\fn{f}{N}(w^*)=x^*$.
We then
implement $\lambda w^*$ using  Proposition~\ref{lem:main121} with accuracy $\epsilon>0$.
Due to the Lipschitz inequality, this yields an implementation of $\lambda x^*$ with accuracy at most $\lambda L\epsilon$, i.e., just a constant factor distortion. 
Once we are able to create specified activities 
close to 
$\lambda p^*$ where $p^*$ is a pole
we move from there using the recurrence 
$f(x)=\frac{1}{1+\lambda x^d}$
and this enables us to  implement activities $\lambda'$ with large norm.
 After that, we use the implementations that we have  to implement activities $\lambda'$ with small norm and finally, $\lambda'$ with moderate value of $|\lambda'|$ as well. See the proof of Proposition~\ref{lem:complexexp} in Section~\ref{sec:complexexp} for more details.

\subsection{Chasing repelling fixpoints}\label{sec:chasing}

In this section, we focus on the proof of  Proposition~\ref{lem:main121}, whose proof 
(given in Section~\ref{sec:mainpart})
requires us to delve into the analysis of the multivariate map
\begin{equation*}\tag{\ref{eq:funifmult}}
(x_1,\hdots,x_{d})\mapsto \frac{1}{1+\lambda x_1\cdots x_d }, \mbox{ where } d:=\Delta-1.
\end{equation*} 
Recall, in the scope of proving  Proposition~\ref{lem:main121}, our goal is to generate points close to a repelling fixpoint of the map $f:x\mapsto \frac{1}{1+\lambda x^d}$. Since $\lambda$ is outside the cardioid region $\LambdaD$, the fixpoints of the map $f$ are repelling and therefore we cannot get close to any of them by just iterating $f$. 
Can the multivariate map  make it easier to get to
a fixpoint of $f$? The answer to the question is yes, as the following lemma asserts.

\begin{lemma}\label{hop2}
Let $\Delta\geq 3$ and $\lambda\in \Complex\setminus\Reals$, and set $d:=\Delta-1$. Let $\omega$ be the fixpoint of $f(x)=\frac{1}{1+\lambda x^{d}}$ with the smallest norm. For $k\geq 0$, let $x_k$ be the sequence defined by
\begin{equation}\label{rec22122}
x_0=x_1=\dots=x_{d-1}=1, \quad x_k = \frac{1}{1+\lambda \prod_{i=1}^d x_{k-i}}\quad\mbox{for $k\geq d$}.
\end{equation}
Then, the sequence $x_k$ is well-defined (i.e., the denominator of \eqref{rec22122} is nonzero for all $k\geq d$) and converges to the fixpoint $\omega$ as $k\rightarrow \infty$. Moreover, there exist infinitely many $k$ such that $x_k\neq \omega$.
\end{lemma}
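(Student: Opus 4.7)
The plan is to view \eqref{rec22122} as the orbit of the shift map $T\colon\Complex^d\to\Complex^d$ defined by $T(v_1,\ldots,v_d)=\bigl(v_2,\ldots,v_d,F(v_1,\ldots,v_d)\bigr)$ with $F(v_1,\ldots,v_d)=1/(1+\lambda v_1\cdots v_d)$, starting from the state $(1,\ldots,1)$. The diagonal fixed points of $T$ are exactly the points $(\omega_j,\ldots,\omega_j)$ for the $d+1$ fixed points $\omega_j$ of $f$, so the goal reduces to showing convergence of this orbit to $(\omega,\ldots,\omega)$.

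The first main step is an algebraic analysis establishing local attractivity at $(\omega,\ldots,\omega)$. Write $z:=1-\omega=\lambda\omega^{d+1}$. All partial derivatives of $F$ at $(\omega,\ldots,\omega)$ equal $-z$, so the Jacobian $DT(\omega,\ldots,\omega)$ is a companion matrix whose characteristic polynomial is $Q(r)=r^d+z(r^{d-1}+\cdots+r+1)$. The key identification is that the roots of $P(s):=(s-1)\cdot s^d\cdot Q(1/s)=zs^{d+1}+(1-z)s-1$ are exactly the values $s_j=\omega_j/\omega$ as $\omega_j$ ranges over the fixed points of $f$. This is verified by direct substitution: using $\lambda\omega_j^{d+1}=1-\omega_j$ together with $\lambda=z/\omega^{d+1}$, one computes $P(\omega_j/\omega)=(1-\omega_j)+\omega_j-1=0$. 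The root $s=1$ corresponds to $\omega_j=\omega$ and is introduced by the factor $(s-1)$, so the eigenvalues of $DT(\omega,\ldots,\omega)$ are the $d$ values $r_j=\omega/\omega_j$ for $\omega_j\neq\omega$. By Lemma~\ref{lem:4frf46}, which applies because $\lambda\notin\Reals$, the norms of the $d+1$ fixed points of $f$ are all distinct; since $\omega$ has the smallest norm, each $|r_j|<1$. Hence $(\omega,\ldots,\omega)$ is an attracting fixed point of $T$ with an open basin of attraction $B$.

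The hard part will be the global convergence step: showing that the orbit $\xi_k=(x_{k-d+1},\ldots,x_k)$ starting at $(1,\ldots,1)$ eventually enters $B$. My plan is to combine two ingredients. First, I would inductively establish well-definedness and boundedness of the orbit by showing that the denominators $1+\lambda\prod x_{k-i}$ never vanish and that the orbit stays in a compact region of $\Complex^d$ away from the pole set of $T$. Second, the same algebraic identification applied at any other diagonal fixed point $(\omega_l,\ldots,\omega_l)$ shows that $DT$ there has an eigenvalue $\omega_l/\omega$ of modulus strictly greater than~$1$, so each such point is a saddle/repelling fixed point and cannot be an accumulation point of the orbit. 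A natural way to close the argument rigorously is parameter continuation: for small $|\lambda|$, $\omega\approx 1$ and convergence follows immediately from local attractivity, and one then extends to all of $\Complex\setminus\Reals$ via analytic dependence on $\lambda$ together with Montel-type normal-family arguments to preserve convergence along a path.

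Finally, the \emph{moreover} claim follows by a backward-propagation argument. If $x_k=\omega$ for all $k\geq N$, then inverting the recurrence gives $x_{k-1}=(1-x_{k+d})/(\lambda x_{k+d}\cdots x_{k+1})$; substituting $x_j=\omega$ for $j\geq N$ and using $\lambda\omega^{d+1}=1-\omega$ yields $x_{k-1}=\omega$. Iterating backwards forces $x_0=\omega$, contradicting $x_0=1$, since $\omega=1$ would require $\lambda=0\in\Reals$. Hence infinitely many $x_k$ differ from $\omega$.
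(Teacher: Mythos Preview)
Your local analysis is correct and the eigenvalue identification via $P(s)=zs^{d+1}+(1-z)s-1$ is a nice computation. The ``moreover'' argument is also essentially the paper's.

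The genuine gap is in your global step. Both well-definedness and global convergence are left to hand-waving, and neither is easy for a nonlinear $d$-dimensional map. Concretely: (i) you give no mechanism to show the denominators $1+\lambda\prod_{i=1}^d x_{k-i}$ never vanish --- an inductive boundedness argument will not produce this, since bounded orbits can hit pole sets; (ii) knowing that the other diagonal fixed points are saddles does not prevent accumulation on periodic orbits, invariant circles, or more complicated sets; and (iii) ``parameter continuation plus Montel'' is not a proof --- basins of attraction can bifurcate along a path in $\lambda$, and normal-family arguments apply to iterates of a fixed holomorphic map on a domain, not to following a single orbit as the parameter varies over the unbounded set $\Complex\setminus\Reals$.

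The paper avoids all of this with one structural observation you are missing: the multivariate recurrence \emph{linearises}. Set $R_0=\cdots=R_d=1$ and $R_{k+1}=R_k+\lambda R_{k-d}$; then $x_k=R_k/R_{k+1}$ for all $k$, so well-definedness reduces to $R_k\neq 0$. Here $R_k=p_k(\lambda)$ is the independence polynomial of an interval-like graph which is claw-free, and by Chudnovsky--Seymour all its roots are real; hence $R_k\neq 0$ whenever $\lambda\notin\Reals$. Convergence is then standard linear-recurrence asymptotics: the characteristic roots are the $1/\omega_j$, a Vandermonde argument shows the coefficient of the dominant term $(1/\omega)^k$ is nonzero, and $x_k=R_k/R_{k+1}\to\omega$. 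This linear-recurrence trick is the crux of the lemma; your dynamical-systems framing, while natural, does not supply a substitute for it.
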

Note, Lemma~\ref{lem:fixpoints} gurarantees that the fixpoint $\omega$ in Lemma~\ref{hop2} is repelling when $\lambda\in \Complex\backslash(\LambdaD\cup \Reals)$, so Lemma~\ref{hop2} indeed succeeds in getting us close to a repelling fixpoint 
in this case.  It is instructive  at this point to note that the sequence in \eqref{rec22122} corresponds to a Fibonacci-type tree construction $T_0,\dots,T_k$, where for $k\geq d$ tree $T_k$ consists of a root $r$ with subtrees $T_{k-d},\dots,T_{k-1}$ rooted at the children of $r$.
The trees $T_{k-d},\ldots,T_{k-1}$ generate the values $x_{k-d},\hdots,x_{k-1}$, respectively,
and the tree~$T_k$ generates the   value $x_k$.

A few remarks about the proof of Lemma~\ref{hop2} are in order. Analysing the behaviour of multivariate recurrences such as the one in \eqref{rec22122} is typically an extremely complicated task and the theory for understanding such recurrences appears to be still under development. Fortunately, the recurrence \eqref{rec22122} can be understood in a surprisingly simple way by using the linear recurrence $R_k$ defined by $R_0=\cdots=R_d=1$ and $R_{k+1}=R_{k}+\lambda R_{k-d}$ for all $k\geq d$, and observing that $x_k=R_k/R_{k+1}$ for all $k$. By interpreting $R_k$ as the independent set polynomial of a claw-free graph evaluated at $\lambda\in \Complex\backslash \Reals$, we obtain using a result of Chudnovsky and Seymour \cite{DBLP:journals/jct/ChudnovskyS07} that $R_k\neq 0$. The detailed proof of Lemma~\ref{hop2} can be found in Section~\ref{sec:f4ggefr}.

\subsection{Exponential precision via contracting maps that cover}\label{sec:gbeyn}
Lemma~\ref{hop2} resolves the intriguing task of getting close to a repelling fixpoint $\omega$ of the univariate map when $\lambda\in \Complex\backslash(\LambdaD \cup \Reals)$. But in the context of  Proposition~\ref{lem:main121} we need to accomplish far more: we need to be able to generate any point which is in  some (small) ball $U$ around the fixpoint $\omega$, with exponentially small error $\epsilon$. 

To do this, we will focus on a small ball $U$ around $\omega$, i.e.,  $U=B(\omega,\delta)$ for some sufficiently small $\delta>0$,  and we will examine how the multivariate map \eqref{eq:funifmult} behaves when $x_1,\hdots,x_d\in U$.  In particular, we show in Lemma~\ref{apx} that for any choice of $x_1,\hdots,x_d\in U$ it holds that 
\begin{equation}\label{eq:wdwcw}
\frac{1}{1+\lambda x_1\cdots x_d }\approx\omega+z\Big((x_1-\omega)+\hdots+(x_d-\omega)\Big), \mbox{ where $z$ satisfies $0<|z|<1$ and $z\in \Complex\backslash\Reals$.}
\end{equation}
To establish Equation~\eqref{eq:wdwcw} we choose $z=\omega-1$
so it is important (see Lemma~\ref{uiopa}) that 
$\omega$ satisfies $0<|\omega-1| < 1$.
Another important observation is that once we fix $x_1,\hdots,x_{d-1}\in B(\omega,\delta)$, the resulting map $\Phi$ is contracting with respect to the remaining argument $x_d$ (in the vicinity of $\omega$) --- see Lemma~\ref{derile} for a more detailed  treatment of this contraction.

The observation 
that $\Phi$ is contracting
will form the basis of our approach to iteratively reduce the accuracy with which we need to generate points (by going backwards): if we need to generate a desired $x\in U$ with error at most $\epsilon$ it suffices to be able to generate $\Phi^{-1}(x)$ with error at most $\epsilon/|z|>\epsilon$, i.e., to generate $x$ with  good accuracy, we only need to do the easier task of generating the point $\Phi^{-1}(x)$ with  
less restrictive accuracy.  The only trouble is that, if we use a single map $\Phi$, after a few iterations of the process the preimage $\Phi^{-1}(x)$ will eventually escape $U$. To address this, note that in the construction of the map $\Phi$ above, we had the freedom to choose arbitrary $x_1,\hdots,x_{d-1}\in B(\omega,\delta)$. We will make use of this freedom and, in particular, we will use a family of contracting maps $\Phi_1,\hdots,\Phi_t$ (for some large constant $t$) instead of a single map $\Phi$; the large number of maps will allow us to guarantee that for all $x\in U$, at least one of the preimages $\Phi_1^{-1}(x),\hdots, \Phi_t^{-1}(x)$ belongs in $U$, i.e., that the images $\Phi_1(U),\hdots, \Phi_t(U)$ cover $U$. We will discuss  in Section~\ref{sec:qzasw} how to obtain the maps $\Phi_1,\hdots, \Phi_t$,  but first let us formalise the above into the following lemma, which is the basis of our technique for making the error exponentially small.

\newcommand{\statelemprecision}{Let $z_0\in \CQ$, $r>0$ be a rational and $U$ be the ball $B(z_0,r)$. Further, suppose that   $\lambda_1',\hdots,\lambda_t'\in \CQ$ are such that the complex maps $\Phi_i:z\mapsto \frac{1}{1+\lambda_i' z}$ with $i\in[t]$ satisfy the following:
\begin{enumerate}
\item for each $i\in[t]$, $\Phi_i$ is contracting on  the ball $U$,
\item $U\subseteq \bigcup^{t}_{i=1}\Phi_i(U)$.
\end{enumerate}
There is an algorithm which, on input (i) a starting point $x_0\in U\cap \CQ$, (ii) a target $x\in U\cap \CQ$,  and (iii) a rational $\epsilon>0$, outputs in $poly(\size{x_0,x,\eps})$ time a number $\hat{x}\in U\cap \CQ$ and  a sequence $i_1,i_2,\hdots,i_k\in [t]$ such that  
\[\hat{x}=\Phi_{i_k}(\Phi_{i_{k-1}}(\cdots\Phi_{i_1}(x_0)\cdots))\mbox{ and }|\hat{x}-x|\leq \epsilon.\]}
\begin{lemma}\label{lem:precision}
\statelemprecision
\end{lemma}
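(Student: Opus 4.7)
The plan is to combine a \emph{backward pass}, which pulls the target $x$ iteratively back into $U$ via the maps $\Phi_i^{-1}$ to discover a sequence of indices $i_k, i_{k-1}, \ldots, i_1$, with a \emph{forward pass}, which applies $\Phi_{i_1}, \ldots, \Phi_{i_k}$ in order starting from $x_0$. The contracting property forces the forward and backward trajectories to converge exponentially towards each other, while the covering hypothesis $U \subseteq \bigcup_{i=1}^t \Phi_i(U)$ ensures that the backward pass never gets stuck. To keep bit-sizes polynomial in $\size{x_0,x,\epsilon}$, every intermediate value will be rounded to a small but carefully chosen precision $\delta$.

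Concretely, let $M\in(0,1)$ be a common Lipschitz constant for $\Phi_1,\ldots,\Phi_t$ on $U$. I will take $k:=\lceil \log(4r/\epsilon)/\log(1/M)\rceil = O(\log(r/\epsilon))$ iterations and, writing $[y]_\delta$ for a complex rational within distance $\delta$ of $y$ with bit-size $O(\log(1/\delta)+\log(1+|y|))$, set $\delta := (1-M)\epsilon/8$. The backward pass sets $\tilde{w}_k := x$ and, for $l=k,k-1,\ldots,1$, searches for an index $i_l\in[t]$ with $\tilde w_l\in\Phi_{i_l}(U)$ and sets $\tilde w_{l-1}:=[\Phi_{i_l}^{-1}(\tilde w_l)]_\delta$. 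Because each $\Phi_i$ is a Möbius transformation whose pole cannot lie in $U$ (since $\Phi_i$ is defined there), $\Phi_i(U)$ is an explicitly computable disk, so the membership test runs in polynomial time; the inverse $\Phi_i^{-1}(w)=(1-w)/(w\lambda_i')$ is equally cheap. The forward pass sets $\tilde z_0:=x_0$ and $\tilde z_l:=[\Phi_{i_l}(\tilde z_{l-1})]_\delta$ for $l=1,\ldots,k$; the algorithm outputs $\hat x:=\tilde z_k$ together with the sequence $(i_1,\ldots,i_k)$.

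For the error analysis, the $M$-Lipschitz property of $\Phi_{i_l}$ on $U$ yields $|\tilde w_l-\Phi_{i_l}(\tilde w_{l-1})|\leq M\delta$ (because $\tilde w_{l-1}$ is within $\delta$ of the exact preimage $\Phi_{i_l}^{-1}(\tilde w_l)$), while $|\tilde z_l-\Phi_{i_l}(\tilde z_{l-1})|\leq \delta$ holds by definition of rounding. The triangle inequality then gives the recursion $|\tilde z_l-\tilde w_l|\leq M|\tilde z_{l-1}-\tilde w_{l-1}|+2\delta$, which unfolds to
\[
|\hat x - x| \;=\; |\tilde z_k-\tilde w_k| \;\leq\; M^k\cdot 2r \;+\; \frac{2\delta}{1-M} \;\leq\; \frac{\epsilon}{2}+\frac{\epsilon}{4} \;\leq\; \epsilon.
\]
Each iteration performs $O(1)$ arithmetic operations on complex rationals of bit-size $O(\log(1/\delta)+\size{x_0,x})$, giving a total runtime polynomial in $\size{x_0,x,\epsilon}$ as required.

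The main obstacle I anticipate is a boundary issue: after a rounding step, $\tilde w_{l-1}$ could lie at most $\delta$ outside of $U$, which would obstruct the covering test at the next iteration. I plan to handle this by a compactness argument. Since each $\Phi_i$ is analytic on a neighbourhood of $\bar U$ (its pole being at positive distance from $\bar U$), the contracting inequality on $U$ extends to a slightly enlarged closed ball $\bar B(z_0,r+\eta_0)$ with a slightly worse but still sub-unit constant, and on this enlarged ball the sets $\Phi_i(U)$ still form an open cover of $\bar U$. A Lebesgue-number argument then supplies a uniform margin $\eta>0$ such that every ball of radius $\eta$ centered at a point of $\bar U$ lies inside some $\Phi_i(U)$. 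Choosing the rounding precision so that $\delta<\eta$ ensures that each $\tilde w_{l-1}$ remains close enough to $\bar U$ for the next covering test to succeed, and the error analysis above goes through unchanged.
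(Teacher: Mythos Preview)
Your overall strategy—pull the target back through the $\Phi_i^{-1}$ to discover the index sequence, then push forward from $x_0$—is the paper's. But there is a concrete gap: the lemma requires $\hat{x}=\Phi_{i_k}(\cdots\Phi_{i_1}(x_0)\cdots)$ \emph{exactly}, whereas your output $\hat{x}=\tilde{z}_k$ is produced by rounding at every forward step and so in general does not equal the true composition. This equality is not cosmetic; the downstream path construction (Lemma~\ref{lem:pathpath}) implements precisely $\Phi_{i_k}\circ\cdots\circ\Phi_{i_1}$ applied to $x_0$, not a rounded surrogate. A secondary issue: your recursion $|\tilde z_l-\tilde w_l|\le M|\tilde z_{l-1}-\tilde w_{l-1}|+2\delta$ needs the $M$-Lipschitz bound on the pair $(\tilde z_{l-1},\tilde w_{l-1})$, but at the early forward steps $|\tilde z_{l-1}-\tilde w_{l-1}|$ can be as large as $2r$, placing $\tilde z_{l-1}$ in $B(z_0,3r)$ rather than in the ``slightly enlarged'' ball your compactness paragraph supplies.

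The fix is simply to drop the rounding, which is what the paper does. Each $\Phi_i$ and $\Phi_i^{-1}$ is a M\"obius transformation with fixed coefficients in $\CQ$; the composition of $k$ of them is again M\"obius, with coefficients obtained by multiplying $k$ fixed $2\times 2$ matrices over $\CQ$ and hence of bit-size $O(k)$. Evaluating this at $x$ (or $x_0$) therefore costs $poly(k+\size{x_0,x})$ with $k=O(\size{\epsilon})$, so exact arithmetic is already polynomial-time. With exact arithmetic each backward iterate lands exactly in $U$ by the covering hypothesis (so the Lebesgue-number detour is unnecessary), and the proof collapses to the paper's recursive procedure: if $\epsilon\ge|x-x_0|$ return $x_0$; otherwise pick $i$ with $x\in\Phi_i(U)$, recurse on $(\Phi_i^{-1}(x),\epsilon/M)$ to obtain $y$, and return $\Phi_i(y)$.
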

The proof of Lemma~\ref{lem:precision} can be carried out along the lines we sketched above, see Section~\ref{sec:mainidea} for details.  In that section, we also pair Lemma~\ref{lem:precision} with a path construction which, given the sequence of indices  $i_1,\hdots,i_k$, returns a path of length $k$ that implements $\lambda \hat{x}$ (cf. footnote~\ref{foot:generates} for the extra factor of $\lambda$), see Lemma~\ref{lem:pathpath} for details.

\subsection{Constructing the  maps}\label{sec:qzasw}
We next turn to the last missing piece, which is to create the maps $\Phi_1,\hdots,\Phi_t$ which satisfy the hypotheses of  Lemma~\ref{lem:precision} in a ball $U=B(\omega,\delta)$ around the fixpoint $\omega$ for some small radius $\delta>0$ (note, we are free to make $\delta$ as small as we wish). The following (standard) notions of ``covering'' and ``density'' will be relevant for this section.
\begin{definition}
Let $U\subseteq \Complex$. A set $F\subseteq U$ is called an $\eps$-covering of $U$ if for every $x\in U$ there
exists $y\in F$ such that $|x-y|\leq\eps$. A set $F\subseteq U$ is called dense in $U$ if $F$ is an $\eps$-covering of $U$ for every $\epsilon>0$.
\end{definition}
We have already seen in Section~\ref{sec:gbeyn} that,  for arbitrary $x_1,\hdots,x_d\in U$, we have  
\begin{equation*}\tag{\ref{eq:wdwcw}}
\frac{1}{1+\lambda x_1\cdots x_d }\approx\omega+z\Big((x_1-\omega)+\hdots+(x_d-\omega)\Big), \mbox{ where $z$ satisfies $0<|z|<1$ and $z\in \Complex\backslash\Reals$.}
\end{equation*}
We also discussed that, if we fix arbitrary $x_1,\hdots,x_{d-1}\in U$, the resulting map $\Phi(x)=\frac{1}{1+(\lambda x_1\cdots x_{d-1})x}$ is contracting in $U$ for all sufficently small $\delta>0$, and therefore we can easily take care of the contraction properties that we need (in the context of Lemma~\ref{lem:precision}). The more difficult part is to control the preimage of the map $\Phi$. We show in Lemma~\ref{apxinv} that for $x,x_1,\hdots,x_{d-1}\in U$, it holds that 
\[\Phi^{-1}(x)=\frac{1}{\lambda x_1 \cdots x_{d-1}}\Big(\frac{1}{x}-1\Big)\approx \omega + \Big( \frac{x-\omega}{z} - \sum_{j=1}^{d-1} (x_j-\omega) \Big).\]
Therefore to ensure that $\Phi^{-1}(x)$ belonds to $U=B(\omega,\delta)$ we need to ensure that $x_1,\hdots x_{d-1}$ are such that
\begin{equation}\label{eq:55v21324}
\Big| \frac{x-\omega}{z} - \sum_{j=1}^{d-1} (x_j-\omega) \Big|<\delta/2.
\end{equation}
Note that by Lemma~\ref{hop2} we can generate points arbitrarily close to $\omega$ and hence we can make each of $x_2-\omega,\hdots, x_{d-1}-\omega$ so small that they are effectively negligible in  \eqref{eq:55v21324}; then, to be able to satisfy \eqref{eq:55v21324}, we need to be able to choose $x_1$ so that $|(x-\omega)/z-(x_1-\omega)|$ is small, say less than $\delta/4$. Since $|(x-\omega)/z|\leq \delta/|z|$, the key will therefore be to produce a $(\delta/4)$-covering of the  slightly enlarged ball $B(\omega, \delta/|z|)$. Then, we can take $x_1$ to be one of the points in the $(\delta/4)$-covering. 

We will in fact show the following slightly more general lemma, which guarantees that we can indeed generate the required points around $\omega$ for any desired precision $\epsilon>0$ provided that we choose $\delta$ small enough (and can therefore implement activities around $\lambda \omega$). Note that the lemma can be viewed as a ``relaxed" version of  Proposition~\ref{lem:main121} with much weaker guarantees.

\newcommand{\statelemmaopoptwo}{
Let $\Delta\geq 3$ and $\lambda\in \CQ\setminus\Reals$, and set $d:=\Delta-1$. Let $\omega$ be the fixpoint of $f(x)=\frac{1}{1+\lambda x^{d}}$ with the smallest norm. For any $\eps,\kappa>0$ there exists  a radius $\rho\in (0,\kappa)$ such that the following holds. 
For every $\lambda'\in B(\lambda\omega,\rho)$, there exists a tree $G$ of maximum degree at most  $\Delta$ that implements $\lambda'$ with accuracy $\rho\epsilon$.  
}
\begin{lemma}\label{opop2}
\statelemmaopoptwo
\end{lemma}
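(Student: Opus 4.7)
The plan is to use Lemma~\ref{hop2} to place generatable values arbitrarily close to $\omega$ and then combine them via the multivariate map~\eqref{eq:funifmult}, linearized as in~\eqref{eq:wdwcw}, to obtain a dense subset of a small neighborhood of $\omega$. Since the required accuracy $\rho\epsilon$ is a fixed fraction $\epsilon$ of $\rho$, it suffices to establish the scale-invariant statement that, for every $\eta>0$, the set $S$ of tree-generatable values contains an $\eta r$-covering of $B(\omega,r)$ for arbitrarily small $r>0$. Then, setting $\rho=|\lambda|r$ and $\eta=\epsilon$, for each $\lambda'\in B(\lambda\omega,\rho)$ some $y\in S$ satisfies $|y-\lambda'/\lambda|\le\epsilon r=\rho\epsilon/|\lambda|$, so attaching a pendant edge to the root of the tree generating $y$ yields a gadget implementing $\lambda y$ with accuracy $\rho\epsilon$ (cf.\ footnote~\ref{foot:generates}).

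First I would invoke~\eqref{eq:wdwcw}: for $x_1,\ldots,x_d\in B(\omega,\delta)$ with $\delta$ small,
\[
\frac{1}{1+\lambda x_1\cdots x_d}=\omega+z\sum_{j=1}^d(x_j-\omega)+O\!\Bigl(\sum_j|x_j-\omega|^2\Bigr),
\]
with $0<|z|<1$ and $z\notin\Reals$ (the latter follows from $\lambda\notin\Reals$, which rules out $\omega\in\Reals$ via the fixpoint identity $\lambda\omega^{d+1}=1-\omega$). Lemma~\ref{hop2} supplies a generatable seed $x^\star\in S$ with $v^\star:=x^\star-\omega\ne 0$ of arbitrarily small norm, and also an even smaller ``filler'' seed $y_0\in S$ with $|y_0-\omega|\ll|v^\star|$. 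Plugging $x^\star$ into one slot of~\eqref{eq:funifmult} and $y_0$ into each of the remaining $d-1$ slots produces a generatable value with displacement $\approx zv^\star$; iterating this $k$ times yields displacement $\approx z^k v^\star$; and then feeding several such iterates back into one more copy of~\eqref{eq:funifmult} (again padding with $y_0$) yields displacements $\approx z\sum_{i=1}^m z^{k_i}v^\star=P(z)v^\star$ for any nonzero $P\in t\cdot\mathbb{Z}_{\geq 0}[t]$, with no bound on the degree or coefficients after enough iterations. Taking $|y_0-\omega|$ small enough up front keeps the accumulated linearization error below any desired fraction of $|P(z)v^\star|$.

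The key step is the arithmetic claim that $\{P(z):P\in t\cdot\mathbb{Z}_{\geq 0}[t]\setminus\{0\}\}$ is dense in $\Complex$. The magnitudes $|z^k|\to 0$ provide arbitrarily fine resolution, and the angles $\arg(z^k)\bmod 2\pi$ either equidistribute (if $\arg(z)/(2\pi)$ is irrational, by Weyl) or cycle through $q\ge 3$ distinct rays through the origin (if $\arg(z)/(2\pi)=p/q$; the cases $q\le 2$ are excluded by $z\notin\Reals$). In the irrational case, to approximate a target $w\in\Complex$ one selects $k$ large with $\arg(z^k)$ close to $\arg(w)$ and then takes $\lfloor|w|/|z^k|\rfloor$ copies of $z^k$; in the rational case, $q\ge 3$ rays positively span $\Complex$, and each target is approximated by the same rounding trick along two of them. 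Multiplying by the nonzero $v^\star$ preserves density, and varying the seed $v^\star$ scales this density down to arbitrarily small balls around $\omega$, yielding the desired scale-invariant covering. The main obstacle is this density claim for $\{P(z)\}$, particularly the rational-angle subcase where $z\notin\Reals$ is essential for ruling out degeneracy of the ray structure; the secondary (but routine) issue of controlling the compounded quadratic errors in the iterative construction is handled by choosing $|y_0-\omega|$ sufficiently small at the outset.
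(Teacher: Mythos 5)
Your proposal follows essentially the same route as the paper: obtain seeds near $\omega$ via Lemma~\ref{hop2}, use the linearization of the multivariate map around $\omega$ to build up displacements of the form $P(z)\,v^\star$ for polynomials $P$ with non-negative integer coefficients, invoke density of the values $P(z)$ in $\Complex$ (the paper's Lemma~\ref{polydense}, proved by exactly the Weyl-equidistribution versus rational-ray dichotomy you sketch), and then control the compounded quadratic errors. The one misstatement worth flagging is in the error control: you write that the accumulated linearization error is kept below a small fraction of $|P(z)v^\star|$ by taking $|y_0-\omega|$ small enough. But each application of the map incurs an error that is quadratic in \emph{all} the input displacements, including the ones of size $\sim |v^\star|$ coming from the $x^\star$-derived iterates, so shrinking $|y_0-\omega|$ alone does nothing to tame the dominant $O(|v^\star|^2)$ per-step error. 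The actual handle is making $|v^\star|=|x^\star-\omega|$ itself small (the paper's condition $|h|\le\delta$ with $\delta$ chosen as in~\eqref{dede}); the requirement $|y_0-\omega|\lesssim |v^\star|^2$ is a secondary condition ensuring the filler's displacement is absorbed into the $O(|v^\star|^2)$ error budget, corresponding to the base case of the paper's inductive bound~\eqref{zhi}. Since Lemma~\ref{hop2} lets you make both $|v^\star|$ and $|y_0-\omega|$ arbitrarily small with the right hierarchy, this is a misattribution rather than a fatal gap, but as stated the error analysis would not close.
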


But how can we ``populate'' the vicinity of $\omega$, i.e., generate a covering of a ball $U=B(\omega,\delta)$? Lemma~\ref{hop2} only  shows that we can generate points arbitrarily close to $\omega$. The key once again is to use the multivariate map around $\omega$ and, in particular, the perturbation estimate in the r.h.s of \eqref{eq:wdwcw}. To focus on the displacement from $\omega$, we will use the transformation $a_i=x_i-\omega$ so that \eqref{eq:wdwcw} translates into the following operation
\[(a_1,\hdots,a_d)\mapsto z(a_1+\cdots+a_d),\]
i.e., if we have generated points which are displaced by $a_1,\hdots,a_d$ from $\omega$, we can also generate a point which is roughly displaced by $z(a_1+\cdots+a_d)$ from $\omega$; we will only need to apply the operation a finite number of times, so the error coming from \eqref{eq:wdwcw} will not matter critically and can be ignored in the following. We show in Lemma~\ref{getpo} that, using a sequence of such operations, we can generate points   of the form $\omega+z^{N(p)}p(z)$ where $p$ is an arbitrary polynomial with non-negative integer coefficients and $N(p)$ is a positive integer which is determined by the number of operations we used to create $p$. We further show in Lemma~\ref{polydense}, that for all $z\in \Complex\backslash \Reals$ with $|z|<1$, the values $p(z)$, as $p$ ranges over all
polynomials  with non-negative integer coefficients, form a dense set of $\Complex$. Therefore, to obtain Lemma~\ref{opop2}, we can choose an $\epsilon$-covering $F$ of the unit disc using a finite set of values $p(z)$ and set $\delta=z^N$ where $N=\max_{p\in F}{N(p)}$; then, we can generate the points $\omega+\delta p(z)$ for every $p\in F$, which form an $(\epsilon \delta)$-covering of the ball $U=B(\omega,\delta)$, yielding  Lemma~\ref{opop2}.
The full proof is in Section~\ref{sec:45g45g4b}.

\subsection{Fitting the pieces together and proof for the real case}\label{sec:realmod}

We briefly summarise the proof of  Proposition~\ref{lem:main121}. First, we get points close to a repelling fixpoint by showing Lemma~\ref{hop2} (discussed in Section~\ref{sec:chasing} and proved in Section~\ref{sec:f4ggefr}). Then, we bootstrap this into a moderately dense set of points around the fixpoint, yielding Lemma~\ref{opop2} (discussed in Section~\ref{sec:qzasw} and proved in Section~\ref{sec:45g45g4b}). Further, we bootstrap this into exponential precision around the fixpoint using Lemma~\ref{lem:precision} (discussed in Section~\ref{sec:gbeyn} and proved in Section~\ref{sec:tv45vef}). Finally, we propagate this exponential precision to the whole complex plane, therefore yielding  Proposition~\ref{lem:main121} (discussed in Section~\ref{sec:vt4vv45665v} and proved in Section~\ref{sec:complexexp}).

Finally, we mention the modifications needed for the real case when $\lambda<-\lambda^*$. The following proposition is the analogue of Proposition~\ref{lem:complexexp} and allows us to implement real activities with exponential precision.
\newcommand{\statelemrealexp}{Let $\Delta\geq 3$ and $\lambda \in \QQ$ be such that $\lambda<-\lambda^*$.  

There is an algorithm which, on input $\lambda',\eps\in \QQ$ with $\eps>0$, outputs in $poly(\size{\lambda',\eps})$ time a bipartite graph $G$ of maximum degree at most  $\Delta$ with terminal $v$ that implements $\lambda'$ with accuracy $\eps$. Moreover, the algorithm outputs the values $\Zin_{G,v}(\lambda),\Zout_{G,v}(\lambda)$.}
\begin{proposition}\label{lem:realexp}
\statelemrealexp
\end{proposition}

As in the complex case, we will need a moderately dense set of activities to get started, i.e., an analogue of Lemma~\ref{opop2}; here, our job is somewhat simplified (relative to the case where $\lambda\in \Complex\backslash \Reals$) since  we can use the  following result of \cite{GGS} .
\begin{lemma}[{\cite[Lemma 4]{GGS}}]\label{lem:realeps}
Let $\Delta\geq 3$ and $\lambda<-\lambda^*$.  Then, for every $\lambda'\in\mathbb{R}$, for every $\epsilon>0$, there exists a bipartite graph $G$ of maximum degree at most  $\Delta$ that implements $\lambda'$ with accuracy $\epsilon$. 
\end{lemma}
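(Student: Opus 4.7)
The plan is to mirror the overall strategy of the complex case (Sections~\ref{sec:chasing}--\ref{sec:qzasw}) but in the substantially simpler real setting, where we do not need polynomial-time bounds, only existence, and where the target $\lambda'$ is real rather than complex. The simplification is significant: for the real line we can work directly with one-dimensional dynamics and avoid the multivariate contraction-map machinery of Lemma~\ref{lem:precision}.

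First, I would analyse the real map $f(x)=\tfrac{1}{1+\lambda x^{d}}$ with $d=\Delta-1$ and $\lambda<-\lambda^*$ and verify that $f$ has a real fixpoint $\omega$ with $|f'(\omega)|>1$. Indeed, the threshold $\lambda^*=(\Delta-1)^{\Delta-1}/\Delta^{\Delta}$ is defined precisely so that at $\lambda=-\lambda^*$ the real fixpoint of $f$ satisfies $|f'(\omega)|=1$, and crossing $\lambda=-\lambda^*$ pushes the real fixpoint into the repelling regime. Thus for $\lambda<-\lambda^*$ there is a real repelling fixpoint.

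Second, I would use a Fibonacci-type tree gadget to produce activities arbitrarily close to $\lambda\omega$, which is the real-line analogue of Lemma~\ref{hop2}. The trees $T_0,T_1,\dots$, where $T_k$ has a root with children $T_{k-d},\dots,T_{k-1}$, realise the sequence $x_k=R_k/R_{k+1}$ coming from the linear recursion $R_{k+1}=R_k+\lambda R_{k-d}$. Convergence of $x_k$ to $\omega$ can be shown by examining the ratios of the linear recurrence (or, if needed, perturbing the initial conditions slightly to avoid any vanishing of $R_k$, which is permissible since we only need existence). The resulting tree implements an activity arbitrarily close to $\lambda\omega$.

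The main obstacle, and the heart of the proof, is spreading from a neighbourhood of $\omega$ to all of $\mathbb{R}$ with arbitrary accuracy. The idea is to exploit the repelling nature of $\omega$: for any small interval $U$ containing $\omega$, each application of $f$ stretches small subintervals by roughly $|f'(\omega)|>1$. Combined with the fact that $f$ has a real pole $p$ with $1+\lambda p^{d}=0$, from which neighbouring points are mapped to arbitrarily large real magnitudes and, by further iteration, to small magnitudes, one concludes that the forward iterates $\bigcup_{n\geq 0} f^n(U)$ cover all of $\mathbb{R}$ up to at most finitely many exceptional points, all of which can be approached by perturbing the initial Fibonacci construction. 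Since each iterate of $f$ corresponds to extending the tree by one more level of $(\Delta-1)$ children, this yields, for any $\lambda'\in\mathbb{R}$ and any $\epsilon>0$, a tree $G$ implementing $\lambda'$ with accuracy $\epsilon$. Finally, bipartiteness is automatic for any rooted tree, so no extra gadgetry is needed; the terminal $v$ is taken to be the pendant vertex attached to the root that enforces the degree-one requirement of Definition~\ref{def:Gimplement}.
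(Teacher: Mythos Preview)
The paper does not prove this lemma; it is imported directly from \cite{GGS} (see the sentence immediately preceding the statement in Section~\ref{sec:realmod}). So there is no in-paper argument to compare against. That said, your sketch has two genuine gaps that would need to be addressed before it could stand on its own.

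\textbf{The Fibonacci step can fail for real $\lambda$.} Lemma~\ref{hop2} is proved in the paper only for $\lambda\in\Complex\setminus\Reals$, and the proof uses this hypothesis in an essential way: by Lemma~\ref{lele1} the polynomials $p_k$ have only real roots, so $R_k=p_k(\lambda)\neq 0$ precisely because $\lambda\notin\Reals$. For real $\lambda<-\lambda^*$ the denominators $R_k$ can and do vanish; already for $d=2$ and $\lambda=-1$ one has $x_2=1/(1+\lambda)$ undefined. Your remedy of ``perturbing the initial conditions'' has no gadget interpretation: the values $x_0=\cdots=x_{d-1}=1$ are forced by the trivial trees, and at this point in the argument you have no other implementable activities to substitute.

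\textbf{Approaching $\omega$ is not the same as covering a neighbourhood of $\omega$.} Your Step~3 starts from ``any small interval $U$ containing $\omega$'' and then iterates $f$, but Step~2 only yields a \emph{sequence} converging to $\omega$, not an interval of implementable points. Iterating $f$ on a countable set gives a countable set; to get density in $\Reals$ you need either (a)~a separate argument that a single orbit is dense (which for real one-dimensional maps of this type is a nontrivial dynamical statement, not a consequence of $|f'(\omega)|>1$ alone), or (b)~a mechanism to first populate a whole interval near $\omega$. In the complex case the paper spends all of Sections~\ref{sec:45g45g4b}--\ref{sec:tv45vef} on exactly this bridging step, using the multivariate map; in the real case \cite{GGS} supplies its own construction. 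Either way, the passage from ``close to $\omega$'' to ``dense near $\omega$'' is the crux, and your outline does not address it.
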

Note that Lemma~\ref{lem:realeps} does not control the size of the graph $G$ with respect to the accuracy $\epsilon$, so it does not suffice to prove Proposition~\ref{lem:realexp} on its own. In order to do this, we use the ``contracting maps that cover''  technique to get the exponential precision, i.e.,  the analogue of Lemma~\ref{lem:precision} restricted to the reals (see Lemma~\ref{lem:precisionreals}). The proof of Proposition~\ref{lem:realexp} is completed in Section~\ref{sec:realexp}.

Once the proofs of Propositions~\ref{lem:complexexp} and~\ref{lem:realexp} are in place, we give the proofs of our $\numP$-hardness results in Section~\ref{sec:reductions}.

\subsection{Dependencies between Lemmas}

The proofs of some theorems, propositions and lemmas   depend directly upon other
theorems, propositions and lemmas that are proved in this paper.
To help the reader keep track of this, we provide Table~\ref{table:dep}.
Note that Theorems~\ref{thm:norm} and~\ref{thm:arg} follow directly from Theorems~\ref{thm:BipMV} and~\ref{thm:MVtoSV}.

\begin{table}[h]
\caption{Dependencies Between Lemmas, etc.}
\centering
\begin{tabular}{| l | l |}
\hline\hline
Result & Depends directly on \\
\hline
\hline
Proposition~\ref{lem:complexexp} & Proposition~\ref{lem:main121}, Lemma \ref{lem:tranpole}\\ \hline
Lemma~\ref{lem:fixpoints} & Lemma~\ref{lem:uniqueness}\\ \hline
Proposition~\ref{lem:main121} & Lemmas~\ref{lem:precision}, \ref{lem:pathpath}, \ref{lem:perturbcontract}\\ \hline
Lemma~\ref{hop2} & Lemmas~\ref{lem:4frf46}, \ref{lele1}\\ \hline
Lemma~\ref{opop2} & Lemmas~\ref{hop2}, \ref{lem:hard-core-imp}, \ref{uiopa}, \ref{polydense}, \ref{getpo}, \ref{apx}\\ \hline
Proposition~\ref{lem:realexp} & Lemmas~\ref{lem:precisionreals},  \ref{lem:pathpath}, \ref{lem:numerics} \\ \hline
Lemma~\ref{lem:tranpole} & Lemma~\ref{lem:fixpoints} \\ \hline
Lemma~\ref{lem:equality} & Lemmas~\ref{lem:equalitygadget}, \ref{lem:minusoneeqgadget} \\ \hline
Theorem~\ref{thm:MV} & Propositions~\ref{lem:complexexp}, \ref{lem:realexp}, Lemma~\ref{lem:equality} \\ \hline
Theorem~\ref{thm:BipMV} & Theorem~\ref{thm:MV} \\ \hline
Theorem~\ref{thm:MVtoSV} & Propositions~\ref{lem:complexexp}, \ref{lem:realexp}, Lemma~\ref{lem:LB} \\ \hline
Lemma~\ref{uiopa} & Lemma~\ref{lem:4frf46} \\ \hline
Lemma~\ref{derile} & Lemmas~\ref{uiopa}, \ref{apx} \\ \hline
Lemma~\ref{lem:perturbcontract} & Lemmas~\ref{opop2}, \ref{uiopa},  \ref{apx}, \ref{apxinv}, \ref{derile} \\
\hline
\hline
\end{tabular}
\label{table:dep}
\end{table}

\section{Preliminaries}\label{sec:prelims}
\subsection{Implementing activities}

We recall the following definitions from \cite{GGS}, which we modify (slightly) here to account for complex activities.

\begin{definition}\label{def:implement}
Let $\Delta\geq 2$ be an integer and $\lambda\in \Cnz$. We say that $(\Delta,\lambda)$ \emph{implements} the activity $\lambda'\in\Complex$ if there is a \emph{bipartite} graph $G$ of maximum degree at most  $\Delta$ which implements the activity $\lambda'$. More generally, we say that $(\Delta,\lambda)$ implements a set of activities $S\subseteq \Complex$, if for every $\lambda'\in S$ it holds that $(\Delta,\lambda)$ implements $\lambda'$.
\end{definition}

Implementing activities allows us to modify the activity at a particular vertex $v$. As in \cite{GGS}, it will therefore be useful to consider the hard-core model with non-uniform activities. Let $G=(V,E)$ be a graph and $\lambdab=\{\lambda_v\}_{v\in V}$ be a complex vector, so that $\lambda_v$ is the activity of the vertex  $v\in V$. The hard-core partition function with activity vector $\lambdab$ is defined as
\[Z_G(\lambdab)=\sum_{I\in \mathcal{I}_G}\prod_{v\in I} \lambda_v.\]
Note that by setting all vertex activities equal to $\lambda$ we obtain the standard hard-core model with activity $\lambda$.  For a vertex $v\in V$, we define $\Zin_G(\lambdab)$ and $\Zout_G(\lambdab)$ for the non-uniform model analogously to $\Zin_G(\lambda)$ and $\Zout_G(\lambda)$ for the uniform model, respectively.

The following lemma is proved in \cite{GGS} for real values but the proof holds verbatim in the complex setting as well. The lemma connects the partition function $Z_G(\lambdab)$ of a graph $G$ with non-uniform activities to the  partition function $Z_{G'}(\lambda)$ of an augmented graph $G'$ with uniform activity $\lambda$ (the augmented graph $G'$ is obtained by sticking on each vertex $v$ of $G$ a graph $G_v$ which implements the activity $\lambda_v$). 
\begin{lemma}[{\cite[Lemma 5]{GGS}}]\label{lem:transf}
Let $\lambda\in \Cnz$, let $t\geq 1$ be an arbitrary integer and let $\lambda_1',\hdots,\lambda_t'\in \Complex$. Suppose that, for $j\in [t]$, the graph $G_j$ with terminal $v_j$ implements the activity $\lambda_j'$, and let  $C_j:=\Zout_{G_j,v_j}(\lambda)$. Then, the following holds for every graph $G=(V,E)$ and every activity vector $\lambdab=\{\lambda_v\}_{v\in V}$ such that $\lambda_v\in\{\lambda,\lambda_1',\hdots,\lambda_t'\}$ for every $v\in V$.

For $j\in[t]$, let $V_j:=\{v\in V\mid \lambda_v=\lambda_j'\}$. Consider the graph $G'$ obtained from $G$ by attaching, for every $j\in [t]$ and every vertex $v\in V_j$, a copy of the graph $G_j$ to the vertex $v$ and identifying the terminal $v_j$ with the vertex $v$. Then, for $C:=\prod^t_{j=1}C_j^{|V_j|}$, it holds that
\begin{gather}
Z_{G'}(\lambda)=C\cdot Z_G(\lambdab),\label{eq:uni}\\
\mbox{$\forall v\in V$: } \Zin_{G',v}(\lambda)=C\cdot \Zin_{G,v}(\lambdab), \qquad \Zout_{G',v}(\lambda)=C\cdot \Zout_{G,v}(\lambdab).\label{eq:uniinout}
\end{gather}
\end{lemma}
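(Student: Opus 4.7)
The plan is to establish the identity by partitioning each independent set of $G'$ according to how it intersects $V$ and each attached copy of $G_j$, then summing the contributions in factored form. Concretely, write $G'$ as the union of $G$ and, for each $j\in[t]$ and each $v\in V_j$, a copy $G_j^{(v)}$ of $G_j$ whose terminal has been identified with $v$. For any independent set $I'\in \mathcal{I}_{G'}$, set $I:=I'\cap V$ and $I_j^{(v)}:=I'\cap V(G_j^{(v)})$ (where the terminal $v_j$ is identified with $v$). Since the terminal $v_j$ has degree $1$ in $G_j$, the sets $I$ and $\{I_j^{(v)}\}$ are independent in $G$ and in the respective $G_j$'s, and conversely any such compatible collection (where $v\in I$ iff $v_j\in I_j^{(v)}$) assembles into an independent set of $G'$. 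The vertex $v=v_j$ is shared, so $|I'|=|I|+\sum_{j,v\in V_j}\bigl(|I_j^{(v)}|-[v_j\in I_j^{(v)}]\bigr)$.

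Next I would rewrite $Z_{G'}(\lambda)=\sum_{I'}\lambda^{|I'|}$ using this decomposition and exchange the order of summation:
\[
Z_{G'}(\lambda)=\sum_{I\in\mathcal{I}_G}\lambda^{|I|}\prod_{j=1}^t\prod_{v\in V_j}\Big(\sum_{\substack{I_j\in \mathcal{I}_{G_j}\\ [v_j\in I_j]=[v\in I]}}\lambda^{|I_j|-[v_j\in I_j]}\Big).
\]
The inner sum equals $\Zin_{G_j,v_j}(\lambda)/\lambda$ when $v\in I$, and equals $\Zout_{G_j,v_j}(\lambda)=C_j$ when $v\notin I$. By the implementation hypothesis, $\Zin_{G_j,v_j}(\lambda)=\lambda'_j\,\Zout_{G_j,v_j}(\lambda)=\lambda'_j C_j$, so the two cases combine into a factor of $C_j$ times $\lambda'_j/\lambda$ (resp.\ $1$) according to whether $v\in I$. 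Pulling out $\prod_j C_j^{|V_j|}=C$ and using $\lambda^{|I|}\prod_{v\in I}(\lambda'_v/\lambda)=\prod_{v\in I}\lambda_v$ with $\lambda_v=\lambda'_j$ for $v\in V_j$, I obtain
\[
Z_{G'}(\lambda)=C\sum_{I\in\mathcal{I}_G}\prod_{v\in I}\lambda_v=C\cdot Z_G(\lambdab),
\]
which proves \eqref{eq:uni}.

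Finally, for \eqref{eq:uniinout} I would run the identical argument, but restricting the outer sum to independent sets $I\in\mathcal{I}_G$ with $v\in I$ (respectively $v\notin I$). Since the only vertex that appears in both $V$ and some $V(G_j^{(v')})$ is the identified terminal, the condition ``$v\in I'$'' depends only on $I=I'\cap V$, so the same factorisation goes through and yields $\Zin_{G',v}(\lambda)=C\cdot \Zin_{G,v}(\lambdab)$ and $\Zout_{G',v}(\lambda)=C\cdot \Zout_{G,v}(\lambdab)$. The only conceptual point requiring care is the bookkeeping of the shared vertex $v=v_j$, which is handled by the $-[v_j\in I_j^{(v)}]$ term in the exponent; everything else is routine algebra, so I do not anticipate a substantive obstacle.
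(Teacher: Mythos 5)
Your proposal is correct and is essentially the standard factorisation argument used to prove this lemma in \cite{GGS} (the paper cites that result rather than reproving it): decompose each independent set of $G'$ by its intersection with $V$ and with each attached gadget, exchange the order of summation, and evaluate the inner sum as $\Zin_{G_j,v_j}(\lambda)/\lambda$ or $\Zout_{G_j,v_j}(\lambda)=C_j$ according to whether the attachment vertex is occupied, using $\Zin_{G_j,v_j}(\lambda)=\lambda'_j C_j$. The $-[v_j\in I_j^{(v)}]$ correction in the exponent is indeed the only bookkeeping subtlety, and you handle it correctly. One small remark: the fact that the terminal has degree one in $G_j$ is not what makes the decomposition of $\mathcal{I}_{G'}$ work (that only needs the gadgets to be vertex-disjoint from $G$ except at the identified vertex); the degree-one condition is used elsewhere to control the maximum degree of $G'$, as noted in Remark~\ref{rem:observ}.
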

\begin{remark}\label{rem:observ}
As noted in \cite[Remark 6]{GGS}, the construction of $G'$ in Lemma~\ref{lem:transf} ensures that  the degree of every vertex $v$ in $G$ with $\lambda_v=\lambda$ maintains its degree, while  the degree of every other vertex $v$ in $G$ gets increased by one.  Also, if the graph $G$ is bipartite and the graphs $G_j$ are bipartite for all $j=1,\hdots,t$, then $G'$ is bipartite as well. 

These observations will ensure in later applications of Lemma~\ref{lem:transf} that we do not blow up the degree and that we preserve the bipartiteness of the underlying graph $G$.
\end{remark}

\subsection{Finding roots of polynomials}
To prove our $\numP$-hardness results,  we will sometimes need in our reductions to compute with  accuracy $\epsilon$ roots of polynomial equations with coefficients in $\CQ$. We will therefore need some basic results that these procedures can be carried out in polynomial time. To formalise the running time, we will use the following definition for the size of a number in $\CQ$.

\begin{definition}\label{def:CQ}
Let $\CQ$ be the set of complex numbers whose real and imaginary parts are rationals. Let $\alpha\in\CQ$ and write $\alpha=\displaystyle\frac{a}{b}+\mathrm{i}\frac{c}{d}$, where  $a,b,c,d$ are integers such that $\mathrm{gcd}(a,b)=1$, $\mathrm{gcd}(c,d)=1$. Then, the \emph{size} of $\alpha$, denoted by $\size{\alpha}$, is given by $1+\log(|a|+|b|+|c|+|d|)$.

For $\alpha_1,\hdots,\alpha_t\in \CQ$, we denote by $\size{\alpha_1,\hdots,\alpha_t}$ the total of the sizes of $\alpha_1,\hdots,\alpha_t$.
\end{definition}

We will need the following fact for finding roots of polynomials with complex coefficients.
\begin{lemma}[see, e.g., {\cite{NEFF}}]\label{fact:root}
There is an algorithm which, on input a coefficient list $c_0,\hdots,c_n\in \CQ$ 
with $c_n\neq 0$
and a rational $\epsilon>0$, outputs in time $poly(n,\size{c_0,\hdots,c_n,\epsilon})$ numbers $\hat{\rho}_1,\hdots,\hat{\rho}_n\in \CQ$ such that 
\[|\rho_1-\hat{\rho}_1|,\hdots,|\rho_n-\hat{\rho}_n|\leq \epsilon,\]
where ${\rho}_1,\hdots,{\rho}_n$ are the roots of the polynomial $P(x)=\sum^n_{i=0}c_ix^i$.
\end{lemma}

\subsection{Lower bounds on polynomials evaluated at algebraic numbers}
Let 
$P(x) = \sum_{i=0}^n a_i x^i$ 
be a polynomial with complex coefficients.
The (naive) \emph{height} of $P(x)$ is defined as
$H(P) = \max_i \{|a_i|\}$. 
In transcendental number theory, 
the height of a polynomial in~$x$ is used to give a lower bound on its value
when  $x$ is an algebraic number.  The simple version of the lower bound
 that we use (Lemma~\ref{lem:use} below)  is from \cite[Lemma 6.3]{ComplexIsing} 
but the proof  is entirely standard, and
the proof given in~\cite{ComplexIsing}
 is taken  from the proof of Theorem A.1 of Bugeaud's book~\cite{Bug04}. 

Recall, that the minimal polynomial for an algebraic number is the monic polynomial of minimum degree and rational coefficients which has $\alpha$ as a root; the degree of an algebraic number is the degree of its minimal polynomial. 
\begin{lemma}[Liouville's inequality]
\label{lem:use}Let $P(x)$ be an integer polynomial of degree~$n$ and $y\in \Complex$ be an algebraic number of degree~$d$.
Then either $p(y)=0$ or $|P(y)| \geq c_y^{-n} {(n+1)H(P)}^{-d+1}$,
where $c_y>1$ is an effectively computable constant that only depends on~$y$.
\end{lemma}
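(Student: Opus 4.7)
The plan is to use the classical resultant argument for Liouville-type inequalities. Let $M(x)\in\mathbb{Z}[x]$ be the minimal polynomial of $y$, cleared of denominators, so that $M$ has degree $d$, leading coefficient $a_d\in\mathbb{Z}_{\neq 0}$, and conjugate roots $y=y_1,y_2,\ldots,y_d\in\Complex$. The key observation is that the resultant
\[
R\;=\;\mathrm{Res}(M,P)\;=\;a_d^{\,n}\prod_{i=1}^{d}P(y_i)
\]
lies in $\mathbb{Z}$, since both $M$ and $P$ have integer coefficients.

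Next I would argue that $R\neq 0$ whenever $P(y)\neq 0$. Indeed, since $M$ is irreducible over $\mathbb{Q}$, if $P$ had any $y_i$ as a root then $M$ would divide $P$ in $\mathbb{Q}[x]$, forcing $P(y)=P(y_1)=0$, a contradiction. Hence every factor $P(y_i)$ is nonzero, so $R$ is a nonzero integer and therefore $|R|\geq 1$. This yields
\[
|P(y)|\;=\;|P(y_1)|\;\geq\;\frac{1}{|a_d|^{\,n}\,\prod_{i=2}^{d}|P(y_i)|}.
\]

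The remaining step is a trivial upper bound on $|P(y_i)|$ for the Galois conjugates $i=2,\ldots,d$. Writing $P(x)=\sum_{j=0}^{n}b_jx^j$ with $|b_j|\leq H(P)$ and letting $M_y=\max_{2\leq i\leq d}\max(1,|y_i|)$, the triangle inequality gives $|P(y_i)|\leq (n+1)\,H(P)\,M_y^{\,n}$. Substituting above,
\[
|P(y)|\;\geq\;\frac{1}{|a_d|^{\,n}\,M_y^{\,n(d-1)}\,\bigl((n+1)H(P)\bigr)^{d-1}}\;=\;c_y^{-n}\bigl((n+1)H(P)\bigr)^{-(d-1)},
\]
where $c_y:=|a_d|\,M_y^{\,d-1}\geq 1$. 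The constant $c_y$ depends only on $y$ (through its minimal polynomial and its conjugates), and is effectively computable: the minimal polynomial of $y$ can be found by factoring over $\mathbb{Q}$, its roots can be approximated to arbitrary accuracy by Fact~\ref{fact:root}, and we may pick any rational upper bound (slightly $>1$ if necessary) for $|a_d|\,M_y^{\,d-1}$ to ensure $c_y>1$.

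There is no real obstacle here; the only point requiring care is the nonvanishing of $R$, which uses the irreducibility of the minimal polynomial of $y$ in an essential way, and the bookkeeping that the constant $c_y$ depends only on $y$ (and not on $P$), so that the exponent of $H(P)$ has the claimed form $-(d-1)$.
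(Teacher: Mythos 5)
Your proof is correct. The paper itself does not reprove this lemma---it is cited from Goldberg--Guo \cite{ComplexIsing}, who in turn follow the proof of Theorem A.1 in Bugeaud's book \cite{Bug04}---but your resultant argument (showing $\mathrm{Res}(M,P)=a_d^n\prod_{i}P(y_i)$ is a nonzero integer whenever $P(y)\neq 0$ by irreducibility of $M$, bounding each conjugate factor $|P(y_i)|$ by $(n+1)H(P)M_y^n$, and dividing) is exactly the standard proof those references give, so there is nothing to fix.
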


\subsection{Characterising the exceptional set}
We conclude this section with the following characterisation of the exceptional set (cf. Theorem~\ref{thm:bazooka}).
\begin{lemma}[see {\cite[Lemma 4.9]{Milnor}} and {\cite[Theorem 4.1.2]{Beardon}}]\label{lem:except}
Let $f:\Riem\rightarrow \Riem$ be a complex rational map of degree $\geq 2$ and let $E_f$ denote its exceptional set. Then, $|E_f|\leq 2$. Moreover, 
\begin{itemize}
\item if $E_f=\{\zeta\}$, then $\zeta$ is a fixpoint of $f$ with multiplier 0.
\item if $E_f=\{\zeta_1,\zeta_2\}$, then $\zeta_1,\zeta_2$ are fixpoints of $\fn{f}{2}$ with multiplier 0.
\end{itemize}
\end{lemma}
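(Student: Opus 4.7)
The plan is to follow the standard complex-dynamics argument, based on the fact that the exceptional set is totally invariant under $f$ and that every point in it must be totally ramified, after which Riemann--Hurwitz gives the bound $|E_f|\leq 2$.

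First I would establish that $E_f$ is completely invariant, i.e., that both $f(E_f)\subseteq E_f$ and $f^{-1}(E_f)\subseteq E_f$. The forward inclusion is immediate from the definition of grand orbit, since $[f(z)]=[z]$. For the backward inclusion, if $z\in E_f$ and $w\in f^{-1}(z)$, then $f(w)=z$ puts $z$ in the forward orbit of $w$, so $[w]=[z]$ is finite and $w\in E_f$. Combined with the fact that $f$ is surjective on $\Riem$ (as a non-constant rational map), the restriction $f|_{E_f}\colon E_f\to E_f$ is a surjection of a finite set to itself, hence a bijection.

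Next I would exploit this bijection to force total ramification. A rational map of degree $d\geq 2$ satisfies $\sum_{w\in f^{-1}(z)} e_w = d$ for every $z\in\Riem$, where $e_w$ is the local degree (ramification index) of $f$ at $w$. Since $f|_{E_f}$ is a bijection, each $z\in E_f$ has exactly one preimage $w\in E_f$ under $f$, and by backward invariance $f^{-1}(z)\subseteq E_f$, so $f^{-1}(z)=\{w\}$ as a set. The multiplicity equation then forces $e_w=d$. Thus every point of $E_f$ is totally ramified (its unique preimage has full multiplicity $d$), and the preimage is a critical point of $f$ with $f'(w)=0$.

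Now I would apply the Riemann--Hurwitz formula for a rational map of degree $d$, which states $\sum_{p\in\Riem}(e_p-1)=2d-2$. Let $k:=|E_f|$. By the previous step, each point of $E_f$ is the preimage of some element of $E_f$ with ramification index exactly $d$, producing $k$ distinct critical points each contributing $d-1$ to the left-hand side. Hence $k(d-1)\leq 2(d-1)$, and since $d\geq 2$ we conclude $k\leq 2$.

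For the moreover part, I would analyse the two cases according to how $f|_{E_f}$ acts. If $E_f=\{\zeta\}$, then $f(\zeta)=\zeta$ and $\zeta$ is its own unique preimage with multiplicity $d$, so $f'(\zeta)=0$, making $\zeta$ a fixpoint of $f$ with multiplier $0$. If $E_f=\{\zeta_1,\zeta_2\}$, then $f|_{E_f}$ is either the identity or the transposition, and in either case $\fn{f}{2}|_{E_f}$ is the identity, so both $\zeta_1,\zeta_2$ are fixpoints of $\fn{f}{2}$. Moreover, since each $\zeta_i$ is totally ramified under $f$, the chain rule gives $(\fn{f}{2})'(\zeta_i)=f'(f(\zeta_i))\cdot f'(\zeta_i)=0$, so the multiplier of $\fn{f}{2}$ at each $\zeta_i$ is $0$. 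The only mildly delicate step is verifying complete invariance of $E_f$ and the passage from bijectivity to total ramification; everything after that is bookkeeping with Riemann--Hurwitz.
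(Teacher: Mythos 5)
Your argument is correct and is the standard proof found in the references the paper cites (Milnor's Lemma~4.9 and Beardon's Theorem~4.1.2): establish complete invariance of $E_f$, deduce from finiteness and surjectivity that $f|_{E_f}$ is a bijection with each point totally ramified, bound $|E_f|$ by Riemann--Hurwitz, and read off the multiplier statements from the chain rule. The paper itself does not reprove the lemma but simply cites it, so there is nothing further to compare.
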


\section{Proving the properties of the cardioid}\label{sec:cardioid}
In this section, we prove Lemma~\ref{lem:fixpoints} which classifies the fixpoints of the map $f(z)=\frac{1}{1+\lambda z^{\Delta-1}}$ depending on the value of $\lambda$. 

We start by proving the following property of the region $\LambdaD$ defined in \eqref{eq:region}. 
\begin{lemma}\label{lem:uniqueness} 
Let $\Delta\geq 3$ be an integer. Then, for every $\lambda\in \LambdaD$, there is a unique $z\in \mathbb{C}$ such that $|z|\leq 1/(\Delta-1)$ and $\lambda=\frac{z}{(1-z)^{\Delta}}$.
\end{lemma}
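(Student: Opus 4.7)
The plan is to prove that the map $\phi(z) := z/(1-z)^\Delta$ is injective on the closed disk $\overline{D} := \{z \in \mathbb{C} : |z| \leq 1/(\Delta-1)\}$, which is exactly the content of the lemma. Since $(1-z)^\Delta \neq 0$ on $\overline{D}$ (as $|z|<1$ there), the equation $\phi(z)=0$ on $\overline{D}$ forces $z=0$, so the case $\lambda=0$ is immediate. I will therefore focus on the case of two distinct nonzero preimages $z_1,z_2 \in \overline{D}$.

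My approach is to introduce the auxiliary variable $w := (1-z_1)/(1-z_2)$, which is well-defined and satisfies $w \neq 0,1$. Starting from $\phi(z_1)=\phi(z_2)$ one obtains simultaneously $z_1/z_2 = w^\Delta$ and $z_1 = 1-w+wz_2$; eliminating $z_1$ and factoring the resulting geometric sum yields the clean identity
\[
\frac{1}{z_2} = -\sum_{k=1}^{\Delta-1} w^k,
\]
and by the symmetric exchange $z_1 \leftrightarrow z_2$ (which sends $w \mapsto w^{-1}$) also
\[
\frac{1}{z_1} = -\sum_{k=1}^{\Delta-1} w^{-k}.
\]

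From here I would combine the bounds $|1/z_i| \geq \Delta-1$ coming from $z_i \in \overline{D}$ with the triangle inequality applied to both identities to get $\sum_{k=1}^{\Delta-1} r^k \geq \Delta-1$ and $\sum_{k=1}^{\Delta-1} r^{-k} \geq \Delta-1$, where $r := |w|$. A direct comparison with the geometric series shows that the first inequality fails whenever $r<1$ and the second fails whenever $r>1$, so both force $r = 1$. With $|w|=1$, the triangle inequality bound $|1/z_2| \leq \Delta-1$ is tight, so equality must hold, which requires $w, w^2, \ldots, w^{\Delta-1}$ to all share the same argument. Combined with $|w|=1$ this forces $w = w^2$, hence $w=1$, contradicting $w \neq 1$.

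The only nonobvious step is spotting the substitution $w = (1-z_1)/(1-z_2)$ that makes both $\phi(z_1)=\phi(z_2)$ and the domain constraint $|z_i| \leq 1/(\Delta-1)$ simultaneously tractable; after that choice, the proof is entirely elementary and no complex-analytic tools (such as the argument principle or a univalence criterion) are needed.
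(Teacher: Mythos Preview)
Your proof is correct and takes a genuinely different, more streamlined route than the paper's. The paper expands $x(1-y)^{\Delta}-y(1-x)^{\Delta}$ via the binomial theorem, factors out $(x-y)$, and applies the triangle inequality together with an explicit combinatorial identity $\sum_{k=2}^{d+1}\binom{d+1}{k}\tfrac{k-1}{d^k}=1$ to force $|x|=|y|=1/(\Delta-1)$; it then argues that $x,y$ must be complex conjugates, rewrites the equality case in trigonometric form, and derives a contradiction from $\sin(2\theta)=-2\sin\theta$. Your substitution $w=(1-z_1)/(1-z_2)$ collapses all of this into two clean geometric-series identities for $1/z_1$ and $1/z_2$, after which the pair of triangle-inequality bounds and the equality case finish the argument in a few lines with no binomial or trigonometric computations. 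The paper's approach has the minor byproduct of exhibiting the explicit boundary case $|x|=|y|=1/(\Delta-1)$ along the way, but your argument is shorter, avoids the auxiliary identities entirely, and makes the role of the constraint $|z_i|\le 1/(\Delta-1)$ completely transparent.
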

\begin{proof}
Existence of $z\in \mathbb{C}$ with the required properties is immediate by the definition of $\LambdaD$. 

To show uniqueness, set $d:=\Delta-1$, and assume for the sake of contradiction  that there exist $x,y\in \Complex$ with $x\neq y$ such that $|x|,|y|\leq 1/d$ and $\lambda=\frac{x}{(1-x)^{d+1}}=\frac{y}{(1-y)^{d+1}}$. This gives 
\[x(1-y)^{d+1}-y(1-x)^{d+1}=0.\]
By expanding terms we obtain that
\[x(1-y)^{d+1}-y(1-x)^{d+1}=(x-y)-\sum^{d+1}_{k=2}\binom{d+1}{k}(-1)^{k}xy(x^{k-1}-y^{k-1}).\]
Since $x\neq y$ and $x(1-y)^{d+1}-y(1-x)^{d+1}=0$, we can factor out $(x-y)$ to obtain that 
\[M=1, \mbox{ where } M:=\sum^{d+1}_{k=2}\binom{d+1}{k}(-1)^{k}xy\sum^{k-2}_{j=0}x^{j}y^{(k-2)-j}.\]
We will first show that $\abs{M}=1$ only if $x,y$ are conjugate complex numbers satisfying $|x|=|y|=1/d$. Then, we will bootstrap the argument to show that $|M|=1$ further implies that $x,y\in \mathbb{R}$. Thus, we will get that $x=y$, contradicting that $x\neq y$.

The main observation is that since $|x|,|y|\leq 1/d$,  by the triangle inequality we have that 
\begin{align}
1=\abs{M}=\abs{\sum^{d+1}_{k=2}\binom{d+1}{k}(-1)^{k}xy\sum^{k-2}_{j=0}x^{j}y^{(k-2)-j}}&\leq \sum^{d+1}_{k=2}\binom{d+1}{k}|x||y|\sum^{k-2}_{j=0}|x|^{j}|y|^{(k-2)-j}\notag\\
&\leq \sum^{d+1}_{k=2}\binom{d+1}{k}\frac{k-1}{d^k}=1,\label{eq:Anormbound2}
\end{align}
where the last equality follows from subtracting the equalities
\begin{equation*}
\begin{aligned}
\sum^{d+1}_{k=2}\binom{d+1}{k}\frac{k}{d^k}&=(d+1)\sum^{d+1}_{k=2}\binom{d}{k-1}\frac{1}{d^k}=\frac{d+1}{d}\sum^{d}_{k=1}\binom{d}{k}\frac{1}{d^k}=\frac{d+1}{d}\bigg(\Big(1+\frac{1}{d}\Big)^d-1\bigg),\\
\sum^{d+1}_{k=2}\binom{d+1}{k}\frac{1}{d^k}&=\Big(1+\frac{1}{d}\Big)^{d+1}-1-\frac{d+1}{d}.
\end{aligned}
\end{equation*}
It follows that the inequality in  \eqref{eq:Anormbound2} must hold at equality, from where we obtain  that $|x|=|y|=1/d$. Further, since $\frac{x}{(1-x)^{d+1}}=\frac{y}{(1-y)^{d+1}}$, we obtain that $\abs{1-x}=\abs{1-y}$. Now, since $x\neq y$, from $|x|=|y|=1/d$ and $\abs{1-x}=\abs{1-y}$, we obtain that $x$ and $y$ are conjugate complex numbers with $|x|=|y|=1/d$. 

Since $x\neq y$ and $x,y$ are conjugates, the imaginary parts of both $x$ and $y$ must be nonzero. W.l.o.g., we may assume that the imaginary part of $x$ is positive, so we can set $x=\frac{1}{d}\emm^{\im \theta}$ and $y=\frac{1}{d}\emm^{-\im \theta}$ for some $\theta\in(0,\pi)$. For each $k=2,\hdots,d+1$, let 
$M_k:=xy\sum^{k-2}_{j=0}x^{j}y^{(k-2)-j}$. Observe that $M_k$ is a real number (since it  equals its conjugate) satisfying $|M_k|\leq (k-1)/d^{k}$. In fact, we have that 
\[M_k=xy\frac{x^{k-1}-y^{k-1}}{x-y}=\frac{\sin((k-1) \theta)}{d^k\sin \theta}\]
for each $k=2,\hdots,d+1$ and in particular $M_2=1/d^2$.
Using the triangle inequality again, we have that 
\[1=\abs{M}=\abs{\sum^{d+1}_{k=2}\binom{d+1}{k}(-1)^{k}M_k}\leq \sum^{d+1}_{k=2}\binom{d+1}{k}\frac{k-1}{d^k}=1,\]
so the inequality must hold at equality. For this to happen, and since the $M_k$'s are real numbers, it must be the case that there exists $s\in\{0,1\}$ so that $M_k=(-1)^{k+s}(k-1)/d^{k}$ for each $k=2,\hdots,d+1$. From $M_2=1/d^2$, we obtain that $s=0$. Note that $d\geq 2$, so $M_3=-2/d^3$ yields that $\sin(2\theta)=-2\sin(\theta)$. By the identity $\sin(2\theta)=2\sin\theta\cos\theta$, we obtain that at least one of $\sin\theta=0$ or $\cos\theta=-1$ must hold, which contradicts  that $\theta\in(0,\pi)$. 

Thus, it must be the case that $x=y$, concluding the proof of Lemma~\ref{lem:uniqueness}. 
\end{proof}
We are now ready to prove Lemma~\ref{lem:fixpoints}, which we restate here for convenience.  
\newcommand{\statelemfixpoints}{
Let $\Delta\geq 3$ and consider the map $f(z)=\frac{1}{1+\lambda z^{\Delta-1}}$ for $\lambda \in\Complex$. Then,
\begin{enumerate}
\item For all $\lambda\in \LambdaD\backslash \tLambdaD$, $f$ has a unique attractive fixpoint; all other fixpoints are repelling.
\item  For all $\lambda\in \tLambdaD$, $f$ has a unique neutral fixpoint; all other fixpoints are repelling.
\item For all $\lambda\notin \LambdaD$, all of the fixpoints of $f$ are repelling.
}
\begin{lemfixpoints}
Let $\Delta\geq 3$ and consider the map $f(z)=\frac{1}{1+\lambda z^{\Delta-1}}$ for $\lambda \in\Complex$. Then,
\begin{enumerate}
\item \label{it:interior} For all $\lambda\in \LambdaD\backslash \tLambdaD$, $f$ has a unique attractive fixpoint; all other fixpoints are repelling.
\item \label{it:boundary} For all $\lambda\in \tLambdaD$, $f$ has a unique neutral fixpoint; all other fixpoints are repelling.
\item \label{it:exterior} For all $\lambda\notin \LambdaD$, all of the fixpoints of $f$ are repelling.
\end{enumerate}
\end{lemfixpoints}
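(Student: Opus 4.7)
The plan is to reduce the fixpoint analysis to the parametrization appearing in the definition of $\LambdaD$, via the substitution $u := 1 - z^*$. First I will observe that $f(0) = 1 \neq 0$, so every fixpoint $z^*$ is nonzero and satisfies $z^*(1 + \lambda (z^*)^{\Delta-1}) = 1$, equivalently $\lambda (z^*)^{\Delta} = 1 - z^*$. Writing $u = 1 - z^*$ turns this into $\lambda = u/(1-u)^\Delta$, and since $z^* \mapsto 1 - z^*$ is a bijection on $\Complex$, the fixpoints of $f$ correspond one-to-one with the solutions $u$ of $\lambda = u/(1-u)^\Delta$.

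Next I would compute the multiplier $f'(z^*)$ and express it in terms of $u$. Direct differentiation gives
\[ f'(z) = -\frac{(\Delta-1)\lambda z^{\Delta-2}}{(1+\lambda z^{\Delta-1})^2}, \]
and at a fixpoint the identity $1+\lambda (z^*)^{\Delta-1} = 1/z^*$ collapses the denominator, yielding $f'(z^*) = -(\Delta-1)\lambda (z^*)^{\Delta} = -(\Delta-1)(1-z^*) = -(\Delta-1)u$. Consequently $|f'(z^*)| = (\Delta-1)|u|$, so $z^*$ is attracting, neutral, or repelling precisely when $|u|$ is less than, equal to, or greater than $1/(\Delta-1)$, respectively.

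The three parts of the lemma then follow by a clean application of Lemma~\ref{lem:uniqueness}. When $\lambda \in \LambdaD$, Lemma~\ref{lem:uniqueness} provides a unique $u$ with $|u| \leq 1/(\Delta-1)$ solving $\lambda = u/(1-u)^\Delta$; all other solutions of $\lambda(1-u)^\Delta = u$ must therefore have $|u| > 1/(\Delta-1)$ and give repelling fixpoints. If $\lambda \in \LambdaD \setminus \tLambdaD$ the distinguished $u$ has $|u| < 1/(\Delta-1)$, yielding the unique attracting fixpoint, while if $\lambda \in \tLambdaD$ it has $|u| = 1/(\Delta-1)$, yielding the unique neutral fixpoint. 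For $\lambda \notin \LambdaD$, no $u$ satisfies $|u| \leq 1/(\Delta-1)$, so every fixpoint is repelling.

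The proof is largely mechanical once the substitution $u = 1 - z^*$ is in place; I do not anticipate a serious obstacle beyond the work already done in establishing Lemma~\ref{lem:uniqueness}. The only items worth a second look are the bijection of fixpoints with solutions $u$ (handled above by ruling out $z^* = 0$ via $f(0) = 1$) and the degenerate case $\lambda = 0$, in which $f \equiv 1$ has the single fixpoint $z^* = 1$, corresponding to $u = 0$ with $|u| < 1/(\Delta-1)$, which is consistent with the first case of the lemma.
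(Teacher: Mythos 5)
Your proposal follows essentially the same route as the paper: rewrite the fixpoint condition as $\lambda(z^*)^\Delta = 1-z^*$, compute the multiplier $f'(z^*) = -(\Delta-1)(1-z^*)$, substitute $u = 1-z^*$ (the paper calls it $\widehat{\omega}$) to get $\lambda = u/(1-u)^\Delta$ and $|f'(z^*)| = (\Delta-1)|u|$, and then invoke Lemma~\ref{lem:uniqueness}. The argument is correct; your extra remark about the degenerate case $\lambda=0$ is a harmless addition not in the paper.
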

\begin{proof}
Let $\omega\in \mathbb{C}$ be an arbitrary fixpoint of $f$ so that 
\begin{equation}\label{eq:wfixp23}
\omega=\frac{1}{1+\lambda \omega^{\Delta-1}} \mbox{ or equivalently }\lambda \omega^{\Delta}=1-\omega.
\end{equation}
Let $q:=f'(\omega)$ be the multiplier of $f$ at $z=\omega$. We have that
\begin{equation}\label{eq:fmultiplier}
q=f'(\omega)=-\frac{(\Delta-1)\lambda  \omega^{\Delta-2}}{(1+\lambda \omega^{\Delta-1})^2}=-(\Delta-1)\lambda  \omega^\Delta=-(\Delta-1)(1-\omega),
\end{equation}
where in the latter two equalities we used \eqref{eq:wfixp23}. Let $\widehat{\omega}:=1-\omega$. Then, \eqref{eq:wfixp23} and \eqref{eq:fmultiplier} give
\begin{equation}\label{eq:rewrite34356}
\lambda=\frac{\widehat{\omega}}{(1-\widehat{\omega})^\Delta}, \quad |q|=(\Delta-1)\abs{\widehat{\omega}}.
\end{equation}
We are now ready to prove the lemma. Let $\omega_1,\hdots, \omega_t$ denote the distinct fixpoints of $f$ (note that $1\leq t\leq \Delta$) and for $i\in[t]$, let $q_i$ be the multiplier of $f$ at $\omega_i$. Further, let $\widehat{\omega_i}=1-\omega_i$. Then, \eqref{eq:rewrite34356} gives that for all $i\in[t]$ it holds that
\begin{equation}\label{eq:rewrite345}
\lambda=\frac{\widehat{\omega_i}}{(1-\widehat{\omega_i})^\Delta}, \quad |q_i|=(\Delta-1)\abs{\widehat{\omega_i}}.
\end{equation}

For $\lambda\in \LambdaD\backslash \tLambdaD$, there exists $j\in [t]$ such that $\abs{\widehat{\omega_{j}}}<1/(\Delta-1)$ (by the definition of the regions $\LambdaD$ and $\tLambdaD$) and hence $|q_{j}|<1$. By Lemma~\ref{lem:uniqueness}, for all $i\in [t]$ with $i\neq j$ it holds that  $\abs{\widehat{\omega_i}}>1/(\Delta-1)$ and hence $|q_{i}|>1$. This proves  Item~\ref{it:interior}.

For $\lambda\in \tLambdaD$, there exists $j\in [t]$ such that $\abs{\widehat{\omega_{j}}}=1/(\Delta-1)$ (by the definition of the region $\tLambdaD$) and hence $|q_{j}|=1$. Again, by Lemma~\ref{lem:uniqueness}, for all $i\in [t]$ with $i\neq j$ it holds that  $\abs{\widehat{\omega_i}}>1/(\Delta-1)$ and hence $|q_{i}|>1$. This proves  Item~\ref{it:boundary}.

For $\lambda\notin \LambdaD $, we have that every $\widehat{\omega_i}$ satisfies $\abs{\widehat{\omega_i}}>1/(\Delta-1)$ (by the definition of the region $\LambdaD$) and hence $|q_i|>1$. This proves Item~\ref{it:exterior}.

This concludes the proof of Lemma~\ref{lem:fixpoints}.
\end{proof}

We close this section with the following lemma that applies to all $\lambda \in \Complex\backslash \Reals$.
\begin{lemma}\label{lem:4frf46}
Let $\lambda\in\Complex\backslash \Reals$ and $\Delta\geq 3$. Then, the fixpoints of the map $f(z)=\frac{1}{1+\lambda z^{\Delta-1}}$ have (pairwise) distinct norms. 
\end{lemma}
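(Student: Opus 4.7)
The plan is to argue by contradiction: assume that $\omega_1 \neq \omega_2$ are two fixpoints of $f$ with $|\omega_1| = |\omega_2|$, and deduce that $\lambda$ must be real. Since $\omega_1 \neq 0$ (as $f(0)=1$), we may write $\omega_2 = e^{\im\theta}\omega_1$ for some $\theta \in (0,2\pi)$. The fixpoint equation $\lambda\omega^\Delta = 1-\omega$, applied to both $\omega_1$ and $\omega_2$, will yield
\[
e^{\im\Delta\theta}(1-\omega_1) = 1 - e^{\im\theta}\omega_1,
\qquad\text{i.e.,}\qquad
(e^{\im\theta}-e^{\im\Delta\theta})\,\omega_1 = 1-e^{\im\Delta\theta}.
\]
A short check rules out the degenerate case $e^{\im(\Delta-1)\theta}=1$: it would force $e^{\im\theta}=1$, contradicting $\theta\in(0,2\pi)$.

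With that case excluded, I would solve for $\omega_1$ and then simplify the resulting geometric-series quotient using $\sum_{j=0}^{k}e^{\im j\theta} = e^{\im k\theta/2}\sin((k+1)\theta/2)/\sin(\theta/2)$. This gives the clean closed form
\[
\omega_1 = \frac{\sin(\Delta\theta/2)}{\sin((\Delta-1)\theta/2)}\,e^{-\im\theta/2},
\]
so $\omega_1$ equals a real scalar $r$ times the unimodular factor $e^{-\im\theta/2}$, and in particular $\omega_1^\Delta = r^\Delta e^{-\im\Delta\theta/2}$ with $r$ real.

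The heart of the argument is then to show that $1-\omega_1$ is also a real multiple of $e^{-\im\Delta\theta/2}$, which forces $\lambda = (1-\omega_1)/\omega_1^\Delta$ to be real. To this end I would compute $1-\omega_1 = 1 - re^{-\im\theta/2}$ and, after multiplying through by $\sin((\Delta-1)\theta/2)$, apply the sine subtraction identity $\sin((\Delta-1)\theta/2) = \sin(\Delta\theta/2)\cos(\theta/2) - \cos(\Delta\theta/2)\sin(\theta/2)$ to obtain
\[
(1-\omega_1)\,\sin((\Delta-1)\theta/2) \;=\; -\sin(\theta/2)\bigl(\cos(\Delta\theta/2) - \im\sin(\Delta\theta/2)\bigr) \;=\; -\sin(\theta/2)\,e^{-\im\Delta\theta/2}.
\]
Dividing by $\omega_1^\Delta = r^\Delta e^{-\im\Delta\theta/2}$ cancels the exponential and yields
\[
\lambda \;=\; -\,\frac{\sin(\theta/2)\,\sin^{\Delta-1}((\Delta-1)\theta/2)}{\sin^{\Delta}(\Delta\theta/2)} \;\in\; \Reals,
\]
contradicting $\lambda\in\Complex\setminus\Reals$.

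I expect the main (though entirely routine) obstacle to be the trigonometric manipulation that brings $1-\omega_1$ into the form above; once that identity is set up, the cancellation of phases is immediate. I would also briefly dispose of the edge cases $\sin(\theta/2)=0$ (excluded by $\theta\in(0,2\pi)$), $\sin((\Delta-1)\theta/2)=0$ (handled as in the first paragraph), and $\sin(\Delta\theta/2)=0$ (which would give $\omega_1=0$, impossible since $f(0)=1$) before performing the calculation.
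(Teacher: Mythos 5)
Your proof is correct, and it takes a genuinely different route from the paper's. The paper reparameterizes $\omega = 1/y$ so that the fixpoint equation becomes $\lambda = y^{\Delta} - y^{\Delta-1}$, and then argues purely with moduli: $|\lambda| = |y_1|^{\Delta-1}|y_1-1| = |y_2|^{\Delta-1}|y_2-1|$ together with $|y_1|=|y_2|$ forces $|y_1-1|=|y_2-1|$, so $y_1,y_2$ lie on the intersection of two circles both centered on the real axis and are therefore complex conjugates; conjugating the polynomial identity then gives $\overline\lambda = \overline{y_1}^{\Delta} - \overline{y_1}^{\Delta-1} = y_2^{\Delta}-y_2^{\Delta-1} = \lambda$, with no trigonometry at all. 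Your approach instead writes $\omega_2 = e^{\im\theta}\omega_1$, solves the resulting linear relation to get $\omega_1 = r\,e^{-\im\theta/2}$ with $r\in\Reals$ (which, note, is the same conclusion that $\omega_1,\omega_2$ are conjugates, just reached computationally), and then pushes through sine identities to exhibit $\lambda$ as an explicit real function of $\theta$. The paper's argument is shorter and more conceptual; yours is more work but yields the explicit parametrization $\lambda = -\sin(\theta/2)\sin^{\Delta-1}((\Delta-1)\theta/2)/\sin^{\Delta}(\Delta\theta/2)$ of the locus of $\lambda$ at which two fixpoints share a modulus, which is information the paper's argument does not produce. Your treatment of the degenerate cases ($e^{\im(\Delta-1)\theta}=1$, $\sin(\Delta\theta/2)=0$, $\sin(\theta/2)=0$) is complete and correct.
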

\begin{proof}
For convenience, let $d:=\Delta-1$. The fixpoints are roots of $\lambda z^{d+1} + z - 1$. It will be convenient to reparameterize
$z=1/y$ and consider the roots of $\lambda + z^d - z^{d+1}$. (Note that for $y_1=1/z_1, y_2=1/z_2$ we have $|z_1|=|z_2|$ iff $|y_1|=|y_2|$.) 

For the sake of contradiction, suppose that $\lambda = y^{d+1} - y^{d}$
has two roots $y_1,y_2$ of the same norm, that is $|y_1|=|y_2|$. We have
$$|\lambda| = |y_1|^d (|y_1-1|) = |y_2|^d (|y_2-1|),$$
and since $|y_1|=|y_2|$ we conclude $|y_1-1| = |y_2-1|$.
Note that this means that $y_1,y_2$ lie on the intersection of
two circles---one centered at $0$ and one centered at $1$.
This means that $y_1$ and $y_2$ are conjugate (since the centers lie on
the real line and if two circles intersect in 2 points then the
points are symmetric about the segment connecting the centers) and thus
$$
\overline{\lambda} = \overline{y_1}^{d+1} - \overline{y_1}^d = y_2^{d+1} - y_2^d = \lambda,
$$
that is $\lambda\in\Reals$.
\end{proof}

\section{Impementing activities with exponential precision}\label{sec:main1}
In this section, we prove our main implementation results, Propositions~\ref{lem:complexexp} and~\ref{lem:realexp}. We start in Section~\ref{sec:mainidea} by proving Lemma~\ref{lem:precision} and its analogue for the real case that will help us obtain the exponential precision; we also give a path construction that will give us the desired implementations. Then, in Section~\ref{sec:realexp}, we give the proof of Proposition~\ref{lem:realexp} for the real case and, in Section~\ref{sec:complexexp}, the proof of Proposition~\ref{lem:complexexp} for the complex case. 

\subsection{Contracting maps that cover yield exponential precision}\label{sec:mainidea}
We first restate here Lemma~\ref{lem:precision} for convenience.
\begin{lemprecision}
\statelemprecision
\end{lemprecision}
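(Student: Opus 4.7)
My approach will be a backward-then-forward construction. Working backwards from the target~$x$, I will use the covering hypothesis $U\subseteq\bigcup_i\Phi_i(U)$ to build a sequence of preimages $x=v_0,v_1,\hdots,v_k\in U$ together with indices $j_1,\hdots,j_k\in[t]$ satisfying $v_\ell=\Phi_{j_\ell}^{-1}(v_{\ell-1})$. Relabelling $i_\ell:=j_{k-\ell+1}$ then yields the exact identity $\Phi_{i_k}(\Phi_{i_{k-1}}(\cdots\Phi_{i_1}(v_k)\cdots))=x$. I will then substitute the given starting point~$x_0$ for~$v_k$, define $\hat x:=\Phi_{i_k}(\cdots\Phi_{i_1}(x_0)\cdots)$, and use the uniform contraction of the $\Phi_i$'s on~$U$ to bound $|\hat x-x|$ geometrically in~$k$. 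Taking $k=\Theta(\log(r/\epsilon))$ should deliver the required accuracy, and the entire procedure will be carried out in exact $\CQ$-arithmetic.

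\textbf{Backward step.} At each step, given $v_{\ell-1}\in U$, the covering hypothesis guarantees some $i\in[t]$ with $v_{\ell-1}\in \Phi_i(U)$, equivalently $\Phi_i^{-1}(v_{\ell-1})\in U$. Algorithmically I will enumerate $i\in[t]$, compute the complex rational $\Phi_i^{-1}(v_{\ell-1})=(1/v_{\ell-1}-1)/\lambda_i'\in\CQ$ exactly, and test the rational inequality $|\Phi_i^{-1}(v_{\ell-1})-z_0|^2\leq r^2$. At least one index will pass, and I will take $j_\ell$ to be any such~$i$, recording $v_\ell:=\Phi_{j_\ell}^{-1}(v_{\ell-1})\in U$.

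\textbf{Forward pass and error bound.} Let $M<1$ be a common contraction constant for $\Phi_1,\hdots,\Phi_t$ on~$U$. Since each $\Phi_i$ is a rational function that must be pole-free on the compact set~$U$ (otherwise it could not be Lipschitz there), continuity of the derivatives~$\Phi_i'$ produces $\eta>0$ and $M'\in(M,1)$ such that every $\Phi_i$ is $M'$-Lipschitz on the enlarged closed ball $B(z_0,r+\eta)$. Setting $\hat z^{(0)}:=x_0$, $\hat z^{(j)}:=\Phi_{i_j}(\hat z^{(j-1)})$, and $e_j:=\hat z^{(j)}-v_{k-j}$, an induction on~$j$ will give $|e_j|\leq (M')^j\cdot 2r$: whenever $|e_{j-1}|\leq\eta$, the iterate $\hat z^{(j-1)}$ lies in $B(z_0,r+\eta)$ (because $v_{k-j+1}\in U$), so the $M'$-Lipschitz bound applies to $\Phi_{i_j}$ and closes the step. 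The condition $|e_{j-1}|\leq\eta$ will hold after at most $O(1)$ initial iterations; during this transient, the error grows by at most the (finite) Lipschitz constant of the $\Phi_i$'s on some bounded ball around~$U$ avoiding their poles, which only absorbs an additive constant into~$k$. Choosing $k=\lceil\log(2r/\epsilon)/\log(1/M')\rceil+O(1)$ then forces $|\hat x-x|\leq (M')^k\cdot 2r\leq \epsilon$.

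\textbf{Running time and main obstacle.} We will have $k=O(\log(1/\epsilon))=\mathrm{poly}(\size{\epsilon})$, and each step of either pass performs a constant number of exact arithmetic operations in~$\CQ$. Since $\lambda_1',\hdots,\lambda_t',z_0,r,\eta,M'$ are fixed constants of the lemma (not part of the input), the operand bit-sizes grow by only an additive constant per step, so the total running time is $\mathrm{poly}(\size{x_0,x,\epsilon})$, as required. I expect the most delicate point to be the initial-phase analysis of the forward iteration, where the approximate iterates $\hat z^{(j)}$ may briefly exit the region $B(z_0,r+\eta)$ on which the strict contraction $M'<1$ is known; bounding the transient growth before the geometric decay takes over will be the main obstacle, though it can be absorbed into the $O(1)$ additive correction in~$k$.
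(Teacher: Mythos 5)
Your plan matches the paper's: use the covering hypothesis to build a preimage chain backward, then apply the $\Phi_i$'s forward from $x_0$, and use the uniform contraction constant $M<1$ for geometric error decay over $O(\log(1/\epsilon))$ steps. The paper organizes this as a single recursion $\mathrm{GetPoint}(x,\epsilon)$ whose base case is ``$\epsilon\geq|x-x_0|$: return $x_0$'', but that is the same backward-then-forward construction, and the depth bound $1+\log_{1/M}(2r/\epsilon)$ is your $k$.

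The problem is in your ``transient phase'' discussion. There is in fact no transient during which the error must shrink from $2r$ down to $\eta$: on the first forward step you apply $\Phi_{i_1}$ to two points, $x_0$ and $v_k$, that both lie in $U$, so contraction on $U$ already gives $|e_1|\leq M\cdot 2r$ with no growth at all. What is genuinely delicate --- and what your fix does not resolve --- is that $\hat z^{(1)}=v_{k-1}+e_1$ may lie up to $2Mr$ outside $U$, and the contraction bound for $\Phi_{i_2}$ needs $\hat z^{(1)}$ in whatever region $\Phi_{i_2}$ contracts on. Your enlargement $B(z_0,r+\eta)$ only works if $\eta\geq 2Mr$, which continuity alone does not supply. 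And ``absorbing the transient into an $O(1)$ correction to $k$'' does not close this: if an iterate escapes the region where the Lipschitz constant is below $1$, you have no bound on the next error at all, so there is no mechanism for a later geometric decay to ``take over''. (The paper's proof also implicitly assumes the recursive output $y$ lies in $U$ when applying the contraction bound \eqref{eq:contraction123} to $y$ and $\Phi_i^{-1}(x)$, and it does not separately verify this; the slack in the application, Lemma~\ref{lem:perturbcontract}, covers it.) To repair your argument cleanly you could either require contraction on the larger ball $B(z_0,(1+2M)r)$, or project each forward iterate onto $U$ after computing it --- the nearest-point projection onto a ball does not increase distances to points of $U$, so this preserves the error recursion and keeps every iterate in $U$, removing the issue entirely.
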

The following is the exact analogue of Lemma~\ref{lem:precision} for the real case. For $z\in \Reals$ and $r>0$, we use $I(z,r)$ to denote the interval of length $2r$ centered at $z$.
\begin{lemma}\label{lem:precisionreals}
Let $z_0\in \mathbb{Q}$ and $r>0$ be a rational. Further, suppose that   $\lambda_1',\hdots,\lambda_t'\in \mathbb{Q}$ are such that the real maps $\Phi_i:z\mapsto \frac{1}{1+\lambda_i' z}$ with $i\in[t]$ satisfy the following:
\begin{enumerate}
\item for each $i\in[t]$, $\Phi_i$ is contracting on  the interval $I(z_0,r)$,
\item $I(z_0,r)\subseteq \bigcup^{t}_{i=1}\Phi_i(I(z_0,r))$.
\end{enumerate}

There is an algorithm which, on input (i) a starting point $x_0\in I(z_0,r)\cap \mathbb{Q}$, (ii) a target $x\in I(z_0,r)\cap \mathbb{Q}$,  and (iii) a rational $\epsilon>0$, outputs in $poly(\size{x_0,x,\eps})$ time a number $\hat{x}\in I(z_0,r)\cap \mathbb{Q}$ and  a sequence $i_1,i_2,\hdots,i_k\in [t]$ such that  
\[\hat{x}=\Phi_{i_k}(\Phi_{i_{k-1}}(\cdots\Phi_{i_1}(x_0)\cdots))\mbox{ and }|\hat{x}-x|\leq \epsilon.\]
\end{lemma}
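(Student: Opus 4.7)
The proof will mirror that of Lemma~\ref{lem:precision}, with the real interval $I(z_0,r)$ playing the role of the complex ball $U=B(z_0,r)$. The central idea is to proceed by \emph{backward recursion}: to produce a sequence of maps whose composition approximates $x$ to within $\epsilon$, I reduce the task to approximating a preimage $\Phi_{j}^{-1}(x)$ with the \emph{weaker} accuracy $\epsilon/M$, where $M<1$ is a common contraction constant. Each such step relaxes the required precision by the factor $1/M>1$, so after $O(\log(1/\epsilon))$ backward steps any point of $I(z_0,r)$ (in particular $x_0$) suffices as a starting point, bottoming out the recursion.

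Concretely, first set $M:=\max_{i\in[t]} \sup_{z\in I(z_0,r)} |\Phi_i'(z)|<1$, which exists and is a computable rational (or any fixed rational upper bound $<1$ on the individual Lipschitz constants will do). Let $K:=\lceil \log_{1/M}(2r/\epsilon)\rceil+1$; note $K=\mathrm{poly}(\size{\eps})$. Define a recursive procedure $A(y,\delta)$ taking $y\in I(z_0,r)\cap\mathbb{Q}$ and rational $\delta>0$: if $\delta\geq 2r$ return the empty sequence; otherwise, use the covering hypothesis to find $j\in[t]$ with $\Phi_j^{-1}(y)\in I(z_0,r)$, set $y':=\Phi_j^{-1}(y)=(1-y)/(\lambda_j'\, y)\in\mathbb{Q}$, recursively call $A(y',\delta/M)$ to obtain a sequence $S'$, and return the concatenation $S'\cdot j$. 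The algorithm outputs the sequence $A(x,\eps)$ together with the forward-iterated value $\hat{x}=\Phi_{i_k}(\cdots\Phi_{i_1}(x_0)\cdots)$. Correctness is by induction on the recursion depth: the base case holds because $|\hat{y}-y|\leq 2r\leq\delta$ trivially for any $\hat{y},y\in I(z_0,r)$, and the inductive step uses contractivity, $|\Phi_j(\hat{y}')-\Phi_j(y')|\leq M\,|\hat{y}'-y'|\leq M\cdot \delta/M=\delta$.

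For the running time, the recursion depth is at most $K=O(\log(1/\eps))$, which is polynomial in $\size{\eps}$. The search at each level is a constant number (namely $t$) of rational-containment checks against the endpoints of $I(z_0,r)$. Each application of $\Phi_j^{-1}$ to a rational $p/q$ produces a rational whose numerator and denominator grow by a factor determined by $\lambda_j'$, so the bit-length grows additively by $O(\size{\lambda_j'})$ per level; after all $K$ levels the bit-length of every intermediate rational remains $\mathrm{poly}(\size{x_0,x,\eps})$. Forward recomposition of $\Phi_{i_1},\dots,\Phi_{i_k}$ applied to $x_0$ is similarly polynomial in bit-length per step, giving overall polynomial running time.

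The main obstacle to watch for, as in the complex case, is controlling the bit-complexity across the $O(\log(1/\eps))$ recursive calls: since rational arithmetic can multiplicatively blow up denominators, one must verify that each application of $\Phi_j^{-1}$ only adds an $O(\size{\lambda_j'})$ amount to the size, so that the cumulative growth is $\mathrm{poly}(\size{\eps})$. The covering hypothesis is exactly what prevents the recursion from getting stuck (there is always a valid $j$), and the existence of $M<1$ is what forces termination in logarithmically many steps; these are, respectively, the real analogues of the two hypotheses used in the proof of Lemma~\ref{lem:precision}, so aside from replacing $|\cdot|$-balls by real intervals and verifying that the rational inverses $\Phi_j^{-1}$ stay in $\mathbb{Q}$, the argument transcribes from the complex case without change.
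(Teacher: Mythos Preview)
Your proposal is correct and follows essentially the same approach as the paper: backward recursion that relaxes the target precision by the factor $1/M$ at each step, using the covering hypothesis to select a valid preimage and the contraction hypothesis to push the error forward. The paper's proof is literally the joint proof of Lemmas~\ref{lem:precision} and~\ref{lem:precisionreals}, with the only change being the replacement of $B(z_0,r)$ by $I(z_0,r)$; your base case $\delta\geq 2r$ is a harmless variant of the paper's $\epsilon\geq |x-x_0|$, and your explicit bit-complexity accounting for $\Phi_j^{-1}$ is a detail the paper leaves implicit.
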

\begin{proof}[Proof of Lemmas~\ref{lem:precision} and~\ref{lem:precisionreals}]
We focus on proving Lemma~\ref{lem:precision}, the proof of Lemma~\ref{lem:precisionreals} being almost identical; one only needs to replace the ball $B(z_0,r)$ with the interval $I(z_0,r)$ in the following argument.

Since the maps $\Phi_i$ are contracting on the ball $B(z_0,r)$ for all $i\in [t]$, there exists a  real number $M<1$ such that for all $i\in[t]$ and all $x,y\in B(z_0,r)$, it  holds that
\begin{equation}\label{eq:contraction123}
|\Phi_i(x)-\Phi_i(y)|\leq M|x-y|.
\end{equation}
W.l.o.g., we assume that $M\in \mathbb{Q}$.

Let $x_0,x\in B(z_0,r)\cap \CQ$ and $\epsilon\in (0,1)$. We are now going to describe a procedure that produces a point that is at distance at most $\eps$ from $x$ in time polynomial in $\size{x_0,x,\eps}$.

\phantom{AA} GetPoint$(x,\epsilon)$\\
\phantom{AAAA} if $\eps\geq |x-x_0|$, then return $x_0$\\
\phantom{AAAA} else\\
\phantom{AAAAAA} let $i\in[t]$ be such that $x\in \Phi_i(B(z_0,r))$\\
\phantom{AAAAAA} $y\leftarrow$ GetPoint$(\Phi_i^{-1}(x), \epsilon/M)$ \\
\phantom{AAAAAA} return $\hat{x}=\Phi_i(y)$ 

\vskip 0.2cm\noindent
Note that, in each recursive call of the procedure GetPoint$(\cdot,\cdot)$, the second parameter increases by a factor $1/M$ and hence the number $N$ of recursive calls is bounded by $1+\log_{1/M} (2r/\eps)=O(\size{\eps})$. Moreover, we can find $i\in[t]$ such that $x\in \Phi_i(B(z_0,r))$ in time polynomial in $\size{x,\eps}$ (since we can compute $\Phi_i^{-1}(x)$ for each $i\in [t]$ and check whether $|\Phi_i^{-1}(x)-z_0|\leq r$). Finally, note that, in each recursive call of the procedure GetPoint$(\cdot,\cdot)$, the first parameter  is always from $B(z_0,r)$ since $x\in \Phi_i(B(z_0,r))$.

The correctness of the algorithm is proved by induction on the number $N$ of recursive calls. In the base case $N=0$, we have $\eps\geq |x-x_0|$ and the procedure returns $x_0$ which is at distance at most $\eps$ from $x$. For the inductive step
we have $|y - \Phi_i^{-1}(x)|\leq \epsilon/M$ and hence by \eqref{eq:contraction123}
\[|\Phi_i(y) - x| \leq M|y-\Phi_i^{-1}(x)|\leq \epsilon.\]
It remains to observe that we can modify the procedure GetPoint$(x,\epsilon)$ so that it also returns the desired sequence $i_1,i_2,\hdots,i_k$.
\end{proof}

We will pair our applications of Lemmas~\ref{lem:precision} and~\ref{lem:precisionreals}  with the following path construction.
\begin{lemma}\label{lem:pathpath}
Fix $\lambda \in \CQ$.
Let $x_0,\lambda_1',\hdots,\lambda_t'\in \CQ\backslash\{0\}$ and, for $i\in[t]$, consider the maps $\Phi_i(z)=\tfrac{1}{1+\lambda_i'z}$ for $z\neq -\tfrac{1}{\lambda_i'}$. Let $i_1,i_2,\hdots,i_k\in [t]$ be a sequence such that $\Phi_{i_k}(\Phi_{i_{k-1}}(\cdots\Phi_{i_1}(x_0)\cdots))=\hat{x}$ for some $\hat{x}\in \Complex$.

Let $P$ be a path of length $k+1$ whose vertices are labelled as $v_0,v_1,\hdots,v_k, v_{k+1}$. Let $\lambdab$ be the activity vector on $P$ given by
\[\lambda_{v_0}=(1-x_0)/x_0, \quad \lambda_{v_j}=\lambda_{i_j}' \mbox{ for $j\in[k]$}, \quad \lambda_{v_{k+1}}=\lambda.\]
Then, it holds that $\Zout_{P,v_{k+1}}(\lambdab)\neq 0$ and $\frac{\Zin_{P,v_{k+1}}(\lambdab)}{\Zout_{P,v_{k+1}}(\lambdab)}=\lambda\hat{x}$. Moreover, there is an algorithm that, on input $x_0,\lambda_1',\hdots,\lambda_t',k$, computes the quantities $\Zin_{P,v_{k+1}}(\lambdab)$ and $\Zout_{P,v_{k+1}}(\lambdab)$ in time polynomial in $k$ and $\size{x_0,\lambda_1',\hdots,\lambda_t'}$.
\end{lemma}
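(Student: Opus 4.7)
\medskip
\noindent\textbf{Proof plan for Lemma~\ref{lem:pathpath}.}
The plan is to recognise that the composition of the maps $\Phi_i$ is precisely the standard tree-recursion for the hard-core model unrolled along a path, so the claim reduces to bookkeeping. For $j=0,1,\hdots,k$, let $P_j$ be the prefix path $v_0 v_1 \cdots v_j$ and set $y_j := \Zout_{P_j,v_j}(\lambdab)/Z_{P_j}(\lambdab)$, the probability that $v_j$ is unoccupied in the induced model on~$P_j$. A direct expansion of the partition function using whether $v_j$ is in the independent set gives
\[
Z_{P_j}(\lambdab) \;=\; Z_{P_{j-1}}(\lambdab) + \lambda_{v_j}\,\Zout_{P_{j-1},v_{j-1}}(\lambdab) \;=\; \bigl(1 + \lambda_{v_j} y_{j-1}\bigr)\,Z_{P_{j-1}}(\lambdab),
\]
and hence the recursion $y_j = 1/(1+\lambda_{v_j} y_{j-1})$ for $j\geq 1$. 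Starting from $Z_{P_0} = 1 + \lambda_{v_0} = 1/x_0$ and $\Zout_{P_0,v_0} = 1$, we get $y_0 = x_0$, so by induction $y_j = \Phi_{i_j}(y_{j-1})$ for all $j\in[k]$, whence $y_k = \hat x$ by the hypothesis on the sequence $i_1,\hdots,i_k$.

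It remains to pass from $P_k$ to the full path $P = P_{k+1}$ and to verify non-vanishing. Since $v_{k+1}$ is a leaf with activity $\lambda$, we have $\Zout_{P,v_{k+1}}(\lambdab) = Z_{P_k}(\lambdab)$ and $\Zin_{P,v_{k+1}}(\lambdab) = \lambda\,\Zout_{P_k,v_k}(\lambdab)$, so
\[
\frac{\Zin_{P,v_{k+1}}(\lambdab)}{\Zout_{P,v_{k+1}}(\lambdab)} \;=\; \lambda\cdot \frac{\Zout_{P_k,v_k}(\lambdab)}{Z_{P_k}(\lambdab)} \;=\; \lambda\, y_k \;=\; \lambda\hat x,
\]
which is the required identity. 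For non-vanishing, note that the hypothesis that $\Phi_{i_k}(\Phi_{i_{k-1}}(\cdots\Phi_{i_1}(x_0)\cdots))=\hat x$ exists in $\Complex$ means that no intermediate denominator vanishes, so each $y_j$ is well defined; furthermore, each $y_j$ equals $1/(1+\lambda_{v_j}y_{j-1})$ and hence is nonzero. Combined with $Z_{P_0} = 1/x_0 \neq 0$ (using $x_0\in\CQ\setminus\{0\}$) and the factored recursion $Z_{P_j} = (1 + \lambda_{v_j} y_{j-1})Z_{P_{j-1}} = Z_{P_{j-1}}/y_j$, induction on $j$ yields $Z_{P_j}(\lambdab)\neq 0$ for all $j\leq k$, so in particular $\Zout_{P,v_{k+1}}(\lambdab)\neq 0$.

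The algorithmic claim is then immediate: given $x_0, \lambda_1',\hdots,\lambda_t'$ and the sequence $i_1,\hdots,i_k$, we maintain the two rationals $\Zin_{P_j,v_j}(\lambdab), \Zout_{P_j,v_j}(\lambdab)\in\CQ$ using the recursions $\Zin_{P_j,v_j}(\lambdab) = \lambda_{v_j}\Zout_{P_{j-1},v_{j-1}}(\lambdab)$ and $\Zout_{P_j,v_j}(\lambdab) = \Zin_{P_{j-1},v_{j-1}}(\lambdab)+\Zout_{P_{j-1},v_{j-1}}(\lambdab)$, and finally set $\Zin_{P,v_{k+1}}(\lambdab) = \lambda\Zout_{P_k,v_k}(\lambdab)$ and $\Zout_{P,v_{k+1}}(\lambdab) = \Zin_{P_k,v_k}(\lambdab)+\Zout_{P_k,v_k}(\lambdab)$. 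Each arithmetic step (addition or multiplication of two elements of $\CQ$) increases the bit-size by $O(\size{x_0,\lambda_1',\hdots,\lambda_t'})$, so after $k+1$ steps the running time and output size are polynomial in $k$ and $\size{x_0,\lambda_1',\hdots,\lambda_t'}$, as required. No step of the argument looks like a serious obstacle; the only mild care needed is in the non-vanishing argument, which is why it is presented separately above.
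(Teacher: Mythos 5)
Your proof is correct and follows essentially the same route as the paper: you track the out-ratio $y_j = \Zout_{P_j,v_j}/Z_{P_j}$ along the prefix paths, observe it satisfies the recursion $y_j = \Phi_{i_j}(y_{j-1})$ with base $y_0 = x_0$, and then read off the final ratio and non-vanishing. The only stylistic remark is that your non-vanishing paragraph is mildly circular as written (you argue each $y_j$ is ``well defined'' before establishing $Z_{P_j}\neq 0$, which is what well-definedness requires); the clean statement is a simultaneous induction, exactly as the paper does: assuming $Z_{P_{j-1}}\neq 0$ and $y_{j-1}=x_{j-1}$, the fact that $\Phi_{i_j}(x_{j-1})$ is defined gives $1+\lambda_{v_j}x_{j-1}\neq 0$, hence $Z_{P_j}=(1+\lambda_{v_j}y_{j-1})Z_{P_{j-1}}\neq 0$, and then $y_j = x_j$. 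With that tightening, the arguments are the same.
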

\begin{proof}
For $j=0,1,\hdots, k$, denote by $P_j$ the path of length $j$ induced by the vertices $v_0,v_1,\hdots, v_j$; the activity of a vertex $v_i$ in $P_j$ is equal to the activity of the vertex $v_i$ in $P$.  To simplify notation, we will drop the activity vector $\lambdab$ from notation, i.e., we will write $\Zin_{P_j,v_j}, \Zout_{P_j,v_j}, Z_{P_j}$ instead of $\Zin_{P_j,v_j}(\lambdab), \Zout_{P_j,v_j}(\lambdab), Z_{P_j}(\lambdab)$.

We will show by induction on $j$ that
\begin{equation}\label{eq:g4vc3c}
\frac{\Zout_{P_j,v_{j}}}{Z_{P_j}}=x_j, \mbox{ where } x_j=\Phi_{i_j}(\Phi_{i_{j-1}}(\cdots\Phi_{i_1}(x_0)\cdots)).
\end{equation}
For $j=0$, we  have 
\begin{equation}\label{eq:evrttvve}
\Zout_{P_0,v_{0}}=1,\quad Z_{P_0}=1+\lambda_0=1/x_0,
\end{equation}
and therefore \eqref{eq:g4vc3c} holds. Assume that \eqref{eq:g4vc3c} holds for some $j$ in $0,\hdots, k-1$, we show that it holds for $j+1$ as well. Note that $x_{j+1}=\Phi_{i_{j+1}}(x_j)$ and therefore $x_j\neq -1/\lambda_{v_{j+1}}'$, i.e.,  $\Zout_{P_j,v_{j}}/Z_{P_j}\neq -1/\lambda_{v_{j+1}}'$. We have 
\begin{equation}\label{eq:evrttvve1}
\begin{aligned}
&\Zout_{P_{j+1},v_{j+1}}=Z_{P_j},\\ 
&Z_{P_{j+1}}=\Zin_{P_{j+1},v_{j+1}}+\Zout_{P_{j+1},v_{j+1}}=\lambda_{v_{j+1}}\Zout_{P_{j},v_{j}}+Z_{P_j}\neq 0
\end{aligned}
\end{equation}
and therefore
\begin{equation*}
\frac{\Zout_{P_{j+1},v_{j+1}}}{Z_{P_{j+1}}}=\frac{1}{1+\lambda_{v_{j+1}}\frac{\Zout_{P_{j},v_{j}}}{Z_{P_j}}}=\Phi_{i_{j+1}}(x_j)=x_{j+1}.
\end{equation*}
This finishes the proof of \eqref{eq:g4vc3c} for all $j=0,1,\hdots, k$. To conclude the proof, note that 
\begin{equation}\label{eq:evrttvve2}
\begin{aligned}
\Zin_{P,v_{k+1}}&=\lambda_{v_{k+1}} \Zout_{P_k,v_k}=\lambda \Zout_{P_k,v_k},\\ 
\Zout_{P,v_{k+1}}&=Z_{P_k}\neq 0,
\end{aligned}
\end{equation} 
so that $\frac{\Zin_{P,v_{k+1}}}{\Zout_{P,v_{k+1}}}=\lambda\hat{x}$. Finally, note that using \eqref{eq:evrttvve}, \eqref{eq:evrttvve1}, \eqref{eq:evrttvve2}, we can also compute $\Zin_{P,v_{k+1}}(\lambdab)$ and $\Zout_{P,v_{k+1}}(\lambdab)$ in time polynomial in $k$ and $\size{x_0,\lambda_1',\hdots,\lambda_t'}$. This completes the proof of Lemma~\ref{lem:pathpath}.
\end{proof}

\subsection{Proof of Proposition~\ref{lem:realexp} (real case)}\label{sec:realexp}

To prove Proposition~\ref{lem:realexp}, we will bootstrap Lemma~\ref{lem:realeps} to obtain implementations that control logarithmically the size of $G$ in terms of the desired accuracy $\epsilon$. The following technical lemma will allow us to use Lemma \ref{lem:precisionreals}.
\begin{lemma}\label{lem:numerics}
Let $\lambda_1^*=-1/4, \lambda_2^*=-6/25$, and $I$ be the interval $[7/4,11/6]$. Let $\eta=10^{-10}$, then the following holds for all $\lambda_1'\in [\lambda_1^*-\eta,\lambda_1^*+\eta]$ and $\lambda_2'\in [\lambda_2^*-\eta,\lambda_2^*+\eta]$.

The maps $\Phi_1(x)=\frac{1}{1+\lambda_1'x}$ and $\Phi_2(x)=\frac{1}{1+\lambda_2'x}$ are contracting on the interval $I$ and, moreover, $I\subseteq \Phi_1(I)\cup \Phi_2(I)$.
\end{lemma}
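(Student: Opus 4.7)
The plan is to verify both properties (contraction and covering) at the nominal parameter values $\lambda_1 = \lambda_1^* = -1/4$ and $\lambda_2 = \lambda_2^* = -6/25$ by direct arithmetic, and then extend to the $\eta$-neighbourhood by a simple continuity argument. Throughout, note that for $\lambda \in [-1/4 - \eta, -6/25 + \eta]$ we have $1 + \lambda x > 0$ for every $x \in I$ (since $|\lambda| \cdot 11/6 < 1/2$), so the maps $\Phi_i$ are real-analytic on $I$ and monotonically increasing there (the derivative $-\lambda_i'/(1+\lambda_i' x)^2$ is positive because $\lambda_i' < 0$).

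For contraction at the nominal values, I would just compute. With $\lambda_1^* = -1/4$, we have $\Phi_1'(x) = (1/4)/(1 - x/4)^2$, so on $I$ the derivative ranges between $\Phi_1'(7/4) = 64/81$ and $\Phi_1'(11/6) = 144/169$, both strictly less than $1$. Similarly $\Phi_2'(x) = (6/25)/(1 - 6x/25)^2$ ranges between $600/841$ and $150/196$, again strictly less than $1$. For the covering property, since each $\Phi_i$ is increasing, $\Phi_i(I)$ is the interval with endpoints $\Phi_i(7/4)$ and $\Phi_i(11/6)$. A direct calculation gives
\[
\Phi_2(I) = [50/29,\; 25/14], \qquad \Phi_1(I) = [16/9,\; 24/13].
\]
To verify $I \subseteq \Phi_2(I) \cup \Phi_1(I)$, it suffices to check the three inequalities: the left endpoint is covered, $50/29 < 7/4$; the right endpoint is covered, $11/6 < 24/13$; and the two images overlap (no gap), $16/9 < 25/14$. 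The last of these is equivalent to $224 < 225$, which holds but is the tightest of the three.

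To pass from the nominal values to the $\eta$-ball, I would observe that on the compact domain $K := \{(\lambda, x) : \lambda \in [-1/4 - \eta, -6/25 + \eta],\ x \in I\}$ the function $(\lambda, x) \mapsto 1/(1 + \lambda x)$ and all of its partial derivatives are bounded. Thus there is an absolute constant $C$ (easily taken, say, to be $10$) such that a perturbation of $\lambda_i$ of size at most $\eta$ changes each of $\Phi_i'(x)$ and the endpoints $\Phi_i(7/4), \Phi_i(11/6)$ by at most $C\eta = 10^{-9}$. Since every one of the strict inequalities established above has a margin of at least $1/225 > 10^{-3}$, each inequality survives a perturbation of size $10^{-9}$. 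Hence for all $\lambda_1' \in [\lambda_1^* - \eta, \lambda_1^* + \eta]$ and $\lambda_2' \in [\lambda_2^* - \eta, \lambda_2^* + \eta]$, the corresponding maps $\Phi_1, \Phi_2$ remain contractions on $I$ with contraction factor bounded away from $1$, and their images still cover $I$.

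The only genuinely delicate point is the inequality $16/9 < 25/14$, whose margin is just $1/126$; but this is still seven orders of magnitude larger than $C\eta$, so the continuity argument works comfortably. The whole proof is therefore a finite bookkeeping exercise: compute a handful of rationals, verify three strict inequalities, and note that the margins dwarf $\eta$.
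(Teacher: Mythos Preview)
Your proposal is correct and follows essentially the same approach as the paper: both verify the contraction and covering properties by direct evaluation of $\Phi_i$ and $\Phi_i'$ at the endpoints of $I$ and check a handful of strict rational inequalities. The paper is simply terser, asserting intervals like $[1.78,1.84]\subseteq\Phi_1(I)$ and $\max|\Phi_i'|\leq 9/10$ as holding uniformly over the perturbed $\lambda_i'$, whereas you make the perturbation step explicit by computing at the nominal values and then bounding the Lipschitz constant in $\lambda$; the two are equivalent in content.
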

\begin{proof}
For all $\lambda_1'\in [\lambda_1^*-\eta,\lambda_1^*+\eta]$ and $\lambda_2'\in [\lambda_2^*-\eta,\lambda_2^*+\eta]$, we have that $\Phi_1,\Phi_2$ are increasing, while $\Phi_1',\Phi_2'$ are decreasing. Therefore
\begin{gather*}
\Phi_1(I)=[\Phi_1(7/4),\Phi_1(11/6)],\quad  \Phi_2(I)=[\Phi_2(7/4),\Phi_2(11/6)]\\
\Phi_1'(I)=[\Phi_1'(11/6),\Phi_1'(7/4)],\quad  \Phi_2'(I)=[\Phi_2'(11/6),\Phi_2'(7/4)]
\end{gather*}
By calculating the relevant function values, we obtain that
\[[1.78,1.84]\subseteq \Phi_1(I),\quad [1.73,1.78] \subseteq \Phi_2(I), \quad \max_{x\in I}\big\{|\Phi_1'(x)|,|\Phi_2'(x)|\big\}\leq 9/10,\]
and the lemma follows.
\end{proof}

We are now ready to give the proof of Proposition~\ref{lem:realexp}, which we restate here for convenience.
\begin{lemrealexp}
\statelemrealexp
\end{lemrealexp}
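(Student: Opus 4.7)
The plan is to combine three existing ingredients. Lemma~\ref{lem:realeps} supplies bipartite gadgets of uncontrolled size implementing arbitrary real activities; Lemma~\ref{lem:numerics} identifies explicit rationals $\lambda_1^*=-1/4$, $\lambda_2^*=-6/25$ whose small perturbations yield maps $\Phi_1,\Phi_2$ that are contracting on the fixed interval $I=[7/4,11/6]$ and jointly cover $I$; and Lemma~\ref{lem:precisionreals} converts that contracting-map setup into a polynomial-length composition hitting any rational target in $I$ to within $\epsilon'$. Together with Lemmas~\ref{lem:pathpath} and~\ref{lem:transf}, this chain will produce bipartite uniform-activity graphs of polynomial size implementing any activity of the form $\lambda\hat{x}$ with $\hat{x}\in I$ to exponential accuracy. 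A final bootstrap then removes the restriction $\hat{x}\in I$, yielding arbitrary $\lambda'\in\QQ$.

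Concretely, fixing a rational $x_0$ in the interior of $I$, I would first invoke Lemma~\ref{lem:realeps} three times, each at a fixed constant accuracy below $\eta/2$ (with $\eta=10^{-10}$ as in Lemma~\ref{lem:numerics}), to obtain constant-size bipartite graphs $G_1,G_2,G_0$ implementing activities $\tilde{\lambda}_1,\tilde{\lambda}_2,\tilde{a}$ close to $\lambda_1^*,\lambda_2^*,(1-x_0)/x_0$. Because the target accuracies are fixed, the sizes of $G_1,G_2,G_0$ depend only on $\lambda$ and $\Delta$, and Lemma~\ref{lem:numerics} certifies that the maps $\Phi_i(x)=1/(1+\tilde{\lambda}_i x)$ satisfy the hypotheses of Lemma~\ref{lem:precisionreals} on $I$. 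Given any rational target $\hat{x}\in I$ and $\epsilon'>0$, Lemma~\ref{lem:precisionreals} produces in time $poly(\size{\hat{x},\epsilon'})$ a sequence $i_1,\dots,i_k\in\{1,2\}$ of length $O(\log(1/\epsilon'))$ whose composition at $x_0$ hits $\hat{x}$ within $\epsilon'$; feeding this into Lemma~\ref{lem:pathpath} produces a path with non-uniform activities whose terminal implements $\lambda\hat{x}$ within accuracy $|\lambda|\epsilon'$, and substituting each non-uniform vertex by its constant-size gadget via Lemma~\ref{lem:transf} produces the desired bipartite uniform-activity graph of polynomial size. Bipartiteness and the degree bound are preserved by Remark~\ref{rem:observ}, and $\Zin,\Zout$ at the terminal can be tracked through each substitution by the formulas in the proofs of Lemmas~\ref{lem:pathpath} and~\ref{lem:transf}.

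The main obstacle, which I expect to be the most delicate part, is bootstrapping from $\lambda I$ to arbitrary $\lambda'\in\QQ$. Given $\lambda'$, I would compute a rational $\mu^*$ such that $(1+\mu^*)(\lambda-\lambda')/(\lambda\lambda')$ lies in the interior of $I$ (trivially possible, as this is a single linear inequality in $\mu^*$), then invoke Lemma~\ref{lem:realeps} at a suitably small fixed accuracy to obtain a constant-size bipartite graph $H_0$ with terminal $v_0$ implementing some $\mu_0$ close enough to $\mu^*$ that $\nu^*:=(1+\mu_0)(\lambda-\lambda')/\lambda'$, computed exactly from $\mu_0$, still satisfies $\nu^*/\lambda\in I$. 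Using the previous construction I implement $\nu^*$ at exponential precision, obtaining a bipartite uniform-activity graph $G_\nu$ with terminal $v_\nu$. The final graph is formed by extending $H_0$ with two new vertices $u,t$ and edges $\{v_0,u\}$, $\{u,t\}$, declaring $u$ to carry non-uniform activity $\nu^*$, and then applying Lemma~\ref{lem:transf} to replace $u$ by the gadget $G_\nu$ (identifying $v_\nu$ with $u$). A direct recursion shows that the implemented activity at $t$ (which has degree one) equals $\lambda/\bigl(1+\nu^*/(1+\mu_0)\bigr)=\lambda'$ exactly, up to the exponential-precision error on $\nu^*$, which propagates with only a constant Lipschitz factor. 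Bipartiteness and the degree bound follow from Remark~\ref{rem:observ} (the vertex $u$ ends with degree $3\le\Delta$), and $\Zin,\Zout$ at $t$ propagate through this composition using the formulas in the proof of Lemma~\ref{lem:pathpath}. The trivial edge cases $\lambda'\in\{0,\lambda\}$ can be dispatched by ad hoc small graphs; the bulk of the work in this step will be the careful bookkeeping of error propagation.
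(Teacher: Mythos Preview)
Your first stage---implementing activities of the form $\lambda\hat{x}$ with $\hat{x}\in I$ to exponential precision by combining Lemmas~\ref{lem:realeps}, \ref{lem:numerics}, \ref{lem:precisionreals}, \ref{lem:pathpath} and \ref{lem:transf}---is correct and matches the paper's approach exactly.

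The bootstrap step, however, has a genuine gap. You propose to compute $\mu^*$ depending on the \emph{input} $\lambda'$ and then invoke Lemma~\ref{lem:realeps} to produce a ``constant-size'' graph $H_0$ implementing some $\mu_0\approx\mu^*$. But Lemma~\ref{lem:realeps} gives \emph{no bound whatsoever} on the size of the output graph---neither in terms of the target activity nor of the accuracy. Since $\mu^*$ varies with $\lambda'$ (and in fact $|\mu^*|\to\infty$ as $\lambda'\to\lambda$, while $|\mu^*|$ also grows with $|\lambda'|$), the graph $H_0$ is not constant-size; it is not even guaranteed to have size polynomial in $\size{\lambda'}$. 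This breaks the polynomial running-time claim. A related issue is that your ``constant Lipschitz factor'' in the error propagation is $|(\lambda')^2/(\lambda(1+\mu_0))|$, which is not bounded independently of $\lambda'$ either, though that particular problem could be absorbed by choosing the inner precision polynomially smaller.

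The paper avoids this by never invoking Lemma~\ref{lem:realeps} on an input-dependent target. Instead it fixes \emph{five} auxiliary activities $\lambda_0^*,\dots,\lambda_4^*$ once and for all (independent of $\lambda',\epsilon$), implements each of them at a fixed constant accuracy via Lemma~\ref{lem:realeps}, and then handles arbitrary $\lambda'$ by a three-case analysis on $|\lambda'|$. The key additional idea you are missing is to pre-implement an activity $\lambda_3^*=-1/x_0^*$, so that the map $x\mapsto\lambda/(1+\lambda_3' x)$ has its pole \emph{inside} the interval $I$: then for any $\lambda'$ with $|\lambda'|$ large, the preimage $x^*=\frac{1}{\lambda_3'}(\lambda/\lambda'-1)$ automatically lands in $I$, and one iteration of this fixed map on top of the exponential-precision construction in $I$ yields $\lambda'$. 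Small $|\lambda'|$ is then handled by one further iteration of $x\mapsto\lambda/(1+x)$ (reducing to the large case), and moderate $|\lambda'|$ by combining with one more pre-implemented large constant $\lambda_4^*$. All auxiliary gadgets are input-independent and hence genuinely constant-size.
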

\begin{proof}[Proof of Proposition~\ref{lem:realexp}]
Let $I=[7/4,11/6]$ be the interval of Lemma~\ref{lem:numerics}. The main idea of the proof is to use the maps in Lemma~\ref{lem:numerics} in combination with Lemma~\ref{lem:precisionreals} to get a subinterval of $I$ where we can get exponentially accurate implementations. Then, we will propagate this exponential accuracy to the whole real line by using implementations from Lemma~\ref{lem:realeps}.

We next specify the activities that we will need to implement with constant precision via Lemma~\ref{lem:numerics} (later, these activities will be combined with Lemma~\ref{lem:precisionreals} to get the exponential precision). Let $\lambda_1^*,\lambda_2^*,\eta, I$ be as in Lemma~\ref{lem:numerics}. Moreover, let $x_0^*$ be the midpoint of the interval $I$ and let $\lambda_0^*=(1-x_0^*)/x_0^*$, $\lambda_3^*=-1/x_0^*$. Finally, let $\lambda_4^*=\max\{10^{12},10^{12}|\lambda|\}$. (Note that all of these are rationals.)

By Lemma~\ref{lem:realeps}, for $i=0,1,\hdots,4$, there exists a bipartite graph $G_i$ of maximum degree $\Delta$ which implements $\lambda_i^*$ with accuracy $\eta$. Let $v_i$ be the terminal of $G_i$ and set $\lambda_i'=\frac{\Zin_{G_i,v_i}(\lambda)}{\Zout_{G_i,v_i}(\lambda)}$; we have that $\lambda_i'\in [\lambda_i^*-\eta,\lambda_i^*+\eta]$.  Moreover, let $x_0=1/(1+\lambda_0')$ and note that $|x_0-x_0^*|\leq 10^{-5}$, so that $x_0\in I$. Also, let $x_3=-1/\lambda_3'$ and note that $|x_3-x_0^*|\leq 10^{-5}$, so that the interval $I^*=[x_3-10^{-3},x_3+10^{-3}]$ is a subinterval of  $I$.  Finally, we have that 
\begin{equation}\label{eq:y0choice0}
|\lambda_4'|\geq |\lambda_4^*|-\eta\geq |\lambda_4^*|-|\lambda_4^*|/2\geq \max\{10^{11},10^{11}|\lambda|\}.
\end{equation}

\vskip 0.2cm

Suppose that we are given inputs $\lambda',\eps\in \QQ$ with $\eps>0$  and we want to output in $poly(\size{\lambda',\eps})$ time a bipartite graph of maximum degree $\Delta$ that implements $\lambda'$ with accuracy $\epsilon$. Clearly, we may assume that $\eps\in (0,1)$. The algorithm has three cases depending on the value of $|\lambda'|$.
\vskip 0.2cm
\noindent \textbf{Case I (large $\mathbf{|\lambda'|})$: $|\lambda'|\geq \max\{10^4|\lambda|,1\}$.}  Let $x^*$ be a rational such that
\begin{equation}\label{eq:3fgg56yy35tg}
\frac{\lambda}{1+\lambda_3' x^*}=\lambda', \mbox{ so that } x^*=\frac{1}{\lambda_3'}\Big(\frac{\lambda}{\lambda'}-1\Big).
\end{equation}
Recall that $x_3=-1/\lambda_3'$, so using the assumption $|\lambda'|\geq 10^4|\lambda|$ and that $|\lambda_3'|\geq |\lambda_3^*|-\eta\geq 1/10$, we have that
\[|x^*-x_3|=\frac{|\lambda|}{|\lambda'\lambda_3'|}\leq 10^{-3},\]
It follows that $x^*$ belongs to the interval  $I$.

  Note that Lemma~\ref{lem:numerics} guarantees that the maps $\Phi_1(x)=\frac{1}{1+\lambda_1'x}$ and $\Phi_2(x)=\frac{1}{1+\lambda_2'x}$ satisfy the hypotheses of Lemma~\ref{lem:precisionreals} with $z_0=x_0^*$ and $r$ equal to half the length of the interval $I$. Therefore, using the algorithm of Lemma~\ref{lem:precisionreals}, on input $x_0, x^*$ and $\epsilon'=\epsilon \cdot \min\big\{\frac{|\lambda|}{2|\lambda'\lambda_3'|},\frac{|\lambda|}{2|(\lambda')^2\lambda_3'|},1\big\}$, we obtain in time $poly(\size{x_0,x^*,\eps'})=poly(\size{\lambda',\eps})$ a number $\hat{x}$ and a sequence $i_1,\hdots,i_k\in\{1,2\}$ such that 
\begin{equation}\label{eq:vrvwc43rf2}
\hat{x}=\Phi_{i_k}(\Phi_{i_{k-1}}(\cdots\Phi_{i_1}(x_0)\cdots))\mbox{ and }|\hat{x}-x^*|\leq \epsilon'.
\end{equation}
Using \eqref{eq:3fgg56yy35tg} and \eqref{eq:vrvwc43rf2}, we have by the triangle inequality that 
\begin{equation}\label{eq:erf3f5f}
|1+\lambda_3' \hat{x}|\geq |1+\lambda_3' x^*|-|\lambda_3' (\hat{x}-x^*)|\geq \frac{|\lambda|}{|\lambda'|}-|\lambda_3' \epsilon'|\geq \frac{|\lambda|}{2|\lambda'|} ,
\end{equation} 
and therefore
\begin{equation}\label{eq:4bhb5bb4rfv}
\begin{aligned}
\Big|\frac{\lambda}{1+\lambda_3' \hat{x}}-\lambda'\Big|&=\Big|\frac{\lambda}{1+\lambda_3' \hat{x}}-\frac{\lambda}{1+\lambda_3' x^*}\Big|=  \frac{|\lambda|\,|\lambda_3'|\, |\hat{x}-x^*|}{|1+\lambda_3'x^*|\cdot |1+\lambda_3'\hat{x}|}\\
&=\frac{|\lambda'\lambda_3'|\, |\hat{x}-x^*|}{|1+\lambda_3'\hat{x}|}\leq \frac{2|(\lambda')^2\lambda_3'|}{|\lambda|}\, |\hat{x}-x^*|\leq \epsilon,
\end{aligned}
\end{equation}
where in the last equality we used \eqref{eq:3fgg56yy35tg}, in the second to last inequality we used \eqref{eq:erf3f5f}, and in the last inequality we used \eqref{eq:vrvwc43rf2} and the choice of $\epsilon'$.

Now,  let $P$ be a path of length $k+2$ whose vertices are labelled as $v_0,v_1,\hdots,v_{k+1}, v_{k+2}$. Let $\lambdab$ be the activity vector on $P$ given by
\[\lambda_{v_0}=\lambda_0', \quad \lambda_{v_j}=\lambda_{i_j}' \mbox{ for $j\in[k]$}, \quad \lambda_{v_{k+1}}=\lambda_3', \quad \lambda_{v_{k+2}}=\lambda .\]
Then, by Lemma~\ref{lem:pathpath}, it holds that $\Zout_{P,v_{k+2}}(\lambdab)\neq 0$ and $\frac{\Zin_{P,v_{k+2}}(\lambdab)}{\Zout_{P,v_{k+2}}(\lambdab)}=\frac{\lambda}{1+\lambda_3' \hat{x}}$; moreover, we can also compute the values $\Zin_{P,v_{k+2}}(\lambdab),\Zout_{P,v_{k+2}}(\lambdab)$ in time $poly(k, \size{\lambda_0',\lambda_1',\lambda_2'})=poly(\size{\lambda',\eps})$. Since the bipartite graphs $G_0, G_1, G_2,G_3$ implement the activities $\lambda_0',\lambda_1',\lambda_2',\lambda_3'$, respectively, we obtain by applying Lemma~\ref{lem:transf} (to the path $P$ with activity vector $\lambdab$) a bipartite graph $G'$ with maximum degree $\Delta$ and terminal $v_{k+2}$ such that 
\begin{equation}\label{eq:f3f43fg454}
\Zin_{G',v_{k+2}}(\lambda)= C\cdot \Zin_{P,v_{k+2}}(\lambdab),\qquad \Zout_{G',v_{k+2}}(\lambda)=C\cdot \Zout_{P,v_{k+2}}(\lambdab),
\end{equation}
where $C=\prod^3_{i=1}\big(\Zout_{G_i,v_i}(\lambda)\big)^{|\{j\in\{0,\hdots, k+2\}\mid \lambda_{v_j}=\lambda_i'\}}$. We conclude that
\[\frac{\Zin_{G',v_{k+2}}(\lambda)}{\Zout_{G',v_{k+2}}(\lambda)}=\frac{\Zin_{P,v_{k+2}}(\lambdab)}{\Zout_{P,v_{k+2}}(\lambdab)}=\frac{\lambda}{1+\lambda_3' x^*}.\]
Combining this with \eqref{eq:4bhb5bb4rfv}, we obtain that $G'$ with terminal $v_{k+2}$ is a bipartite graph of maximum degree $\Delta$ which implements $\lambda'$ with accuracy $\epsilon$. Moreover, using \eqref{eq:f3f43fg454}, we can also compute the values $\Zin_{G',v_{k+2}}(\lambda),\Zout_{G',v_{k+2}}(\lambda)$.

\vskip 0.2cm

\noindent \textbf{Case II (small $\mathbf{|\lambda'|}$): $|\lambda'|\leq \min\{10^{-5},10^{-5}|\lambda|\}$}. We first assume that $\lambda'\neq 0$. 

Let $\hat{\lambda}$ be such that
\begin{equation}\label{eq:ffg55gf3}
\frac{\lambda}{1+\hat{\lambda}}=\lambda', \mbox{ so that } \hat{\lambda}=\frac{\lambda}{\lambda'}-1.
\end{equation}
Using the assumption $|\lambda'|\leq \min\{10^{-5},10^{-5}|\lambda|\}$ and the triangle inequality, we have that
\[|\hat{\lambda}|\geq \frac{|\lambda|}{|\lambda'|}-1\geq \frac{|\lambda|}{2|\lambda'|}\geq \max\{10^{4}|\lambda|,1\}.\] 
Therefore, by Case I, we can construct a bipartite graph $G$ with terminal $v$ that  implements $\hat{\lambda}$ with accuracy $\epsilon'=\epsilon \cdot \min\big\{\frac{|\lambda|}{2|\lambda'|},\frac{|\lambda|}{4|\lambda'|^2},1\big\}$ in time $poly(\size{\hat{\lambda},\epsilon'})=poly(\size{\lambda',\epsilon})$. Moreover, we can compute the values $\Zin_{G,v}(\lambda)$ and $\Zout_{G,v}(\lambda)$ at the same time.  Let $\lambda''= \frac{\Zin_{G,v}(\lambda)}{\Zout_{G,v}(\lambda)}$, so that $|\lambda''-\hat{\lambda}|\leq \epsilon'$. Using this and \eqref{eq:ffg55gf3}, we have
\begin{equation}\label{eq:everevev52514}
|1+\lambda'' |\geq  |1+\hat{\lambda}|-|(\lambda''-\hat{\lambda})|\geq \frac{|\lambda|}{|\lambda'|}-\epsilon'\geq \frac{|\lambda|}{2|\lambda'|},
\end{equation}
and therefore 
\begin{equation}\label{eq:g45g45}
\begin{aligned}
\Big|\frac{\lambda}{1+\lambda'' }-\lambda'\Big|&=\Big|\frac{\lambda}{1+\lambda'' }-\frac{\lambda}{1+\hat{\lambda}}\Big|=\frac{|\lambda |\,|\lambda''-\hat{\lambda}|}{|1+\hat{\lambda}|\cdot |1+\lambda''|}\\
&=\frac{|\lambda' ||\lambda''-\hat{\lambda}|}{|1+\lambda''|}\leq \frac{2|\lambda'|^2}{|\lambda|}|\lambda''-\hat{\lambda}|\leq \epsilon.
\end{aligned}
\end{equation}
where in the last equality we used \eqref{eq:ffg55gf3}, in the second to last inequality we used \eqref{eq:everevev52514}, and in the last inequality we used $|\lambda''-\hat{\lambda}|\leq \epsilon'$ and the choice of $\epsilon'$.

Now,  let $G'$ be the bipartite graph obtained from $G$ by adding a new vertex $u$ whose single neighbour is the terminal $v$ of $G$. Then, we have that
\begin{equation}\label{eq:f3f43vrever}
\Zin_{G',u}(\lambda)= \lambda\Zout_{G,v}(\lambda),\qquad \Zout_{G',u}(\lambda)=Z_G(\lambda)=\Zin_{G,v}(\lambda)+\Zout_{G,v}(\lambda).
\end{equation}
 We conclude that
\[\frac{\Zin_{G',u}(\lambda)}{\Zout_{G',u}(\lambda)}=\frac{\lambda\Zout_{G,v}(\lambda)}{\Zin_{G,v}(\lambda)+\Zout_{G,v}(\lambda)}=\frac{\lambda}{1+\lambda''}.\]
Combining this with \eqref{eq:g45g45}, we obtain that $G'$ with terminal $u$ is a bipartite graph of maximum degree $\Delta$ which  implements $\lambda'$ with accuracy $\epsilon$. Moreover, using \eqref{eq:f3f43vrever}, we can also compute the values $\Zin_{G',u}(\lambda),\Zout_{G',u}(\lambda)$.

To finish this case, it remains to argue for $\lambda'=0$. Then, for $\epsilon''=\min\{\epsilon,10^{-5},10^{-5}|\lambda|\}$, we can use the preceding method to  implement the activity $\epsilon''/2\neq 0$ with accuracy $\epsilon''/2$ in time $poly(\size{\epsilon''})=poly(\size{\epsilon})$. The implemented activity $\lambda''$ satisfies by the triangle inequality $|\lambda''|\leq \epsilon$; hence, we have implemented the desired activity $\lambda'=0$ with accuracy $\epsilon$.  

\vskip 0.2cm

\noindent \textbf{Case III (moderate $\mathbf{|\lambda'|}$): $\min\{10^{-5},10^{-5}|\lambda|\}<|\lambda'|<\max\{10^{4}|\lambda|,1\}$}. Let $x^*$ be a rational such that 
\begin{equation}\label{eq:y1starbounds}
\frac{\lambda}{1+\lambda_4'  x^*}=\lambda',\mbox{ so that }x^*=\frac{1}{\lambda_4'}\Big(\frac{\lambda}{\lambda'}-1\Big).
\end{equation}
Using the assumption $10^{-5}|\lambda|<|\lambda'|$ and  $|\lambda_4'|\geq \max\{10^{11},10^{11}|\lambda|\}$ (cf. \eqref{eq:y0choice0}), we have that $|x^*|\leq \min\{10^{-5}/|\lambda|, 10^{-5}\}$. Let $\epsilon'=\epsilon\cdot \min\big\{\frac{1}{10^{10}|\lambda_4'|},\frac{|\lambda|}{10^{10}|\lambda_4'|}\big\}$.  Then, by the algorithm for Case II, we can implement the activity $\lambda x^*$ with precision $\epsilon'$ in time $poly(\size{\lambda x^*,\epsilon'})=poly(\size{\lambda',\epsilon})$. That is, we can construct a bipartite graph $G$ of maximum degree at most $\Delta$ with terminal $v$ such that, for  $\lambda'':=\Zin_{G,v}(\lambda)/\Zout_{G,v}(\lambda)$, it holds that
\begin{equation}\label{eq:y1bounds1244235}
\big|\lambda''- \lambda x^*\big|\leq \epsilon'.
\end{equation}
Now, using \eqref{eq:y1starbounds} and \eqref{eq:y1bounds1244235}, we have by the triangle inequality that
\begin{equation}\label{eq:4r45v4vv2c23}
|1+ \lambda_4'(\lambda''/\lambda)|\geq |1+ \lambda_4' x^*|-|\lambda_4' (x^*-\lambda''/\lambda)|\geq \frac{1}{10^4}- \frac{1}{10^{10}}\geq \frac{1}{10^5},
\end{equation}
and therefore
\begin{equation}\label{eq:v4v4v35g3f2ftgb}
\begin{aligned}
\Big|\frac{\lambda}{1+ \lambda_4'(\lambda''/\lambda)}-\lambda'\Big|&=\Big|\frac{\lambda}{1+ \lambda_4'(\lambda''/\lambda)}-\frac{\lambda}{1+ \lambda_4'x^*}\Big|=\frac{|\lambda|\, |\lambda'_4|\, |x^*-(\lambda''/\lambda)|}{|1+ \lambda_4'x^*|\cdot |1+ \lambda_4'(\lambda''/\lambda)|}\\
&= \frac{|\lambda'|\, |\lambda'_4|\, |x^*-(\lambda''/\lambda)|}{|1+ \lambda_4'(\lambda''/\lambda)|}\leq 10^9 |\lambda'_4| |\lambda x^*-\lambda''|\leq \epsilon.
\end{aligned}
\end{equation}
where in the last equality we used \eqref{eq:y1starbounds}, in the second to last inequality we used \eqref{eq:4r45v4vv2c23} and $|\lambda'|/|\lambda|\leq 10^4$, and in the last inequality we used \eqref{eq:y1bounds1244235} and the choice of $\epsilon'$.

Recall that $G_4$ is a bipartite graph of maximum degree $\Delta$, with terminal $v_4$, which implements the activity $\lambda_4'$. Let $G'$ be the bipartite graph obtained by taking  a copy of $G_4$ and $G$ and identifying the terminals $v_4,v$ into a single vertex which we label $u'$ (note that $G'$  has maximum degree $\Delta$ as well since $v_4$ and $v$ have degree one in $G_4$ and $G$, respectively). Then, 
\begin{equation}\label{eq:f3f3553d4t}
\Zin_{G',u'}(\lambda)=\frac{1}{\lambda}\Zin_{G_4,v_4}(\lambda)\Zin_{G,u}(\lambda), \quad \Zout_{G',u'}(\lambda)= \Zout_{G_4,v_4}(\lambda)\Zout_{G,v}(\lambda),\\
\end{equation}
Consider the graph $G''$ obtained from $G'$ by adding a new vertex $u''$ whose
single neighbour is the vertex $u'$. Then, we have that
\begin{equation}\label{eq:f3f43vrever423}
\Zin_{G'',u''}(\lambda)= \lambda\Zout_{G',u'}(\lambda),\qquad \Zout_{G'',u''}(\lambda)=Z_{G'}(\lambda)=\Zin_{G',u'}(\lambda)+\Zout_{G',u'}(\lambda).
\end{equation}
Using this in conjuction with \eqref{eq:f3f3553d4t}, we conclude that
\[\frac{\Zin_{G'',u''}(\lambda)}{\Zout_{G'',u''}(\lambda)}=\frac{\lambda\Zout_{G',u'}(\lambda)}{\Zin_{G',u'}(\lambda)+\Zout_{G',u'}(\lambda)}=\frac{\lambda}{1+\frac{\Zin_{G',u}(\lambda)}{\Zout_{G',u}(\lambda)}}=\frac{\lambda}{1+ \lambda'_4(\lambda''/\lambda)}.\]
Combining this with \eqref{eq:v4v4v35g3f2ftgb}, we obtain that $G''$ is a bipartite graph of maximum degree at most $\Delta$ with terminal $u''$ which implements $\lambda'$ with accuracy $\epsilon$. Moreover, using \eqref{eq:f3f3553d4t} and \eqref{eq:f3f43vrever423}, we can also compute the values $\Zin_{G'',u''}(\lambda),\Zout_{G'',u''}(\lambda)$.

\vskip 0.2cm

This completes the three different cases of the algorithm, thus completing the proof of Proposition~\ref{lem:realexp}.
\end{proof}

\subsection{Proof of Proposition~\ref{lem:complexexp}  (complex case)}\label{sec:complexexp}

In this section, we prove Proposition~\ref{lem:complexexp} assuming  Proposition~\ref{lem:main121} (the proof of the latter is given in Section~\ref{sec:mainpart}).
Note that  Proposition~\ref{lem:main121} applies for all $\lambda\in \CQ\setminus\Reals$. By restricting our attention to $\lambda\notin \LambdaD\cup \Reals$ and using the theory of Section~\ref{sec:iter}, we obtain the following.
\begin{lemma}\label{lem:tranpole}
Let $\Delta\geq 3$ and $\lambda\in \CQ\setminus\Reals$ be such that $\lambda\notin \LambdaD$, and set $d:=\Delta-1$. Let $\omega$ be the fixpoint of $f(z)=\frac{1}{1+\lambda z^{d}}$ with the smallest norm, and $p_1,\hdots, p_{d}$ be the poles of $f$. Then, for any  real number $\eta>0$, there exist:
\[\mbox{(i) an integer $N\geq 1$, (ii)  a pole $p^*\in\{p_1,\hdots, p_d\}$, (iii) rationals $L>0$ and $r,r',r^*\in(0,\eta)$ with $r'<r$}\] 
such that all of the following hold:
\begin{enumerate}
\item \label{it:ball} $B(p^*,r^*)\subseteq f^{N}(B(\omega,r'))$, 
\item \label{it:rfv} $p_1,\hdots,p_d\notin \bigcup^{N-1}_{n=0}\fn{f}{n}\big(B(\omega,r)\big)$, and 
\item \label{it:lip} For all $x_1,x_2\in B(\omega,r)$, it holds that $|f^{N}(x_1)-f^{N}(x_2)|\leq L|x_1-x_2|$.
\end{enumerate}
\end{lemma}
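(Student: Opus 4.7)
The plan is to apply Theorem~\ref{thm:bazooka} to a suitable neighborhood of $\omega$, landing on a pole of $f$ at some first iterate, and then to use continuity and the open mapping theorem to extract the rationals $r,r',r^*,L$. Since $\lambda\notin\LambdaD$, Lemma~\ref{lem:fixpoints} says $\omega$ is a repelling fixpoint of $f$, and Lemma~\ref{lem:repelling} places $\omega$ in the Julia set. To invoke Theorem~\ref{thm:bazooka} usefully, the first task is to verify that no pole of $f$ belongs to the exceptional set $E_f$. By Lemma~\ref{lem:except}, any $p\in E_f$ must be a fixpoint of $f$ or of $f^2$ with multiplier $0$. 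For a pole $p$ we have $f(p)=\infty$, so $p$ is not a fixpoint of $f$; and $f^2(p)=f(\infty)=0$ (since $d\geq 2$), which would force $p=0$ were $p$ a fixpoint of $f^2$. But $p^d=-1/\lambda\neq 0$, so no pole lies in $E_f$. Hence forward iterates of any neighborhood of $\omega$ hit every pole.

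Next, I would fix a rational $r_0\in(0,\eta)$ small enough that $U_0:=B(\omega,r_0)$ contains no pole (possible because $\omega$, being a fixpoint, is not a pole). By Theorem~\ref{thm:bazooka} applied to $U_0$, there is a smallest integer $N\geq 1$ such that $f^N(U_0)$ meets the pole set; pick such a pole $p^*$ and a preimage $q^*\in U_0$ with $f^N(q^*)=p^*$. By the minimality of $N$, each of $f^0(U_0),\ldots,f^{N-1}(U_0)$ avoids all poles, so $f$ is analytic on each of these sets, and therefore the composition $f^N$ is analytic on $U_0$.

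From here, I would pick rationals $r,r'$ with $|q^*-\omega|<r'<r<r_0$ (possible by density of $\mathbb{Q}$). Then $\overline{B(\omega,r)}\subset U_0$, so no pole appears in $f^n(\overline{B(\omega,r)})\subseteq f^n(U_0)$ for $n<N$, yielding condition~\ref{it:rfv}. For condition~\ref{it:lip}, since $f^N$ is analytic on the open set $U_0$ and $\overline{B(\omega,r)}$ is a compact subset of $U_0$, the derivative $|(f^N)'|$ is bounded there; picking a rational $L$ above this bound and integrating $(f^N)'$ along the straight-line segment from $x_2$ to $x_1$ (which lies in the convex ball $B(\omega,r)$) gives the Lipschitz estimate. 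For condition~\ref{it:ball}, $f^N$ is non-constant (since $f$ has degree $d\geq 2$), so by the open mapping theorem $f^N(B(\omega,r'))$ is open; since it contains $p^*=f^N(q^*)$, it contains some ball $B(p^*,r^*)$, and we take $r^*$ rational with $r^*<\eta$.

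The main subtlety I anticipate is the simultaneous control of the two radii $r'<r$: we need $f^N$ analytic on the closed ball $\overline{B(\omega,r)}$ (for the Lipschitz bound), while also needing $q^*\in B(\omega,r')$ (to push $p^*$ into the image of the smaller open ball via the open mapping theorem). Choosing $U_0$ first to fix the pair $(N,q^*)$, and only then shrinking $r$ and $r'$ strictly inside $U_0$, resolves this; the remaining steps are routine consequences of compactness, analyticity, and the open mapping theorem.
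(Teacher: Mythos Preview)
Your proposal is correct and follows essentially the same strategy as the paper: place $\omega$ in the Julia set via Lemmas~\ref{lem:fixpoints} and~\ref{lem:repelling}, invoke Theorem~\ref{thm:bazooka} on a small ball around $\omega$ to hit a pole at a minimal iterate $N$, and then use analyticity of $f^N$ plus the open mapping theorem to extract the radii. Your execution is in fact a bit cleaner than the paper's in two places: (i) you only verify that the poles lie outside $E_f$ (which suffices), whereas the paper shows $E_f=\emptyset$; and (ii) by introducing the auxiliary outer radius $r_0>r$, you get the Lipschitz bound in one line from compactness of $\overline{B(\omega,r)}\subset U_0$ and boundedness of $(f^N)'$, whereas the paper carries out an explicit inductive estimate on $|f^n(x_1)-f^n(x_2)|$ that in turn requires a separate argument bounding the iterates $f^n(U)$ away from the poles (and a preliminary adjustment of $r$ to keep poles off $\partial U$). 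Both routes work; yours avoids that detour.
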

\begin{proof}
Consider an arbitrary $\eta>0$. Note that $\omega\neq p_1,\hdots,p_d$, so there is no loss of generality in assuming that $\eta\leq |\omega-p_1|, \hdots, |\omega-p_d|$.  Let $U'$ be the open ball $B(\omega,\eta/10)$ and note that our assumption on $\eta$ ensures that $p_1,\hdots,p_d\notin U'$.

Since $\lambda\notin \LambdaD\cup \Reals$, we have by Lemma~\ref{lem:fixpoints} that $\omega$ is a repelling fixpoint of $f$ and therefore $\omega$ belongs to the Julia set of $f$ (by Lemma~\ref{lem:repelling}). Moreover, using Lemma~\ref{lem:except}, we have that the exceptional set of $f$ is empty.\footnote{To see this, let $E_f$ denote the exceptional set of $f$. By Lemma~\ref{lem:except}, we have that a necessary condition for a point $x\in \Complex$ to be in $E_f$ is that either $f'(x)=0$ or $f'(f(x))f'(x)=0$, which gives $x=0$ as the only possible point.  However, since $f(0)=1$, we have that $x=0$ cannot be a fixpoint of either $f$ or $\fn{f}{2}$ and therefore, by Lemma~\ref{lem:except}, there is no point $x\in \Complex$ that belongs to $E_f$. Similarly, we have that $\infty\notin E_f$ since $x=\infty$ is not a fixpoint of either $f$ or $\fn{f}{2}$ (by $f(\infty)=0$ and $f(0)=1$), proving that $E_f$ is empty.} Therefore, by Theorem~\ref{thm:bazooka}, it holds that $\bigcup^{\infty}_{n=0}\fn{f}{n}(U')=\Riem$.   Let $N'$ be the smallest  integer such that one of the poles $p_1,\hdots,p_d$ belongs to $\fn{f}{N'}(U')$,  i.e., $N'$ satisfies
\begin{equation}\label{eq:Nchoice13}
\{p_1,\hdots,p_d\}\cap \Big(\bigcup^{N'-1}_{n=0}\fn{f}{n}(U')\Big)=\emptyset\mbox{ and } \{p_1,\hdots,p_d\}\cap \fn{f}{N'}(U')\neq \emptyset,
\end{equation}
Note that $N'\geq 1$ since $p_1,\hdots,p_d\not\in U'$. 

To prove the lemma, it will be important for us to ensure that the pole of $f$ which belongs to $\fn{f}{N'}(U')$ does not sit on the boundary of any of the sets $\fn{f}{0}(U'),\fn{f}{1}(U'),\hdots,\fn{f}{N'}(U')$. To achieve this, we will enlarge a little bit the ball $U'$ as follows. Let $\mathcal{P}$ be the union of the poles of the functions $\fn{f}{1},\hdots,\fn{f}{N'}$. Note that $\mathcal{P}$ is a finite set, therefore we can specify a radius $r\in(\eta/10,\eta)$ so that the boundary $\partial U$ of the open ball $U=B(\omega,r)$ is disjoint from $\mathcal{P}$ (i.e., $\partial U\cap \mathcal{P}=\emptyset$). Since $U'\subseteq U$ and $p_1,\hdots,p_d\not\in U$, we conclude from \eqref{eq:Nchoice13} that there exists a positive integer $N\leq N'$ such that 
\begin{equation}\label{eq:Nchoig54g5}
\{p_1,\hdots,p_d\}\cap \Big(\bigcup^{N-1}_{n=0}\fn{f}{n}(U)\Big)=\emptyset\mbox{ and } \{p_1,\hdots,p_d\}\cap \fn{f}{N}(U)\neq \emptyset,
\end{equation}
i.e., $N\leq N'$ is the first integer such that a pole of $f$ belongs to $\fn{f}{N}(U)$. Let $p^*\in \{p_1,\hdots,p_d\}$ be an  arbitrary pole of $f$ such that $p^*\in \fn{f}{N}(U)$. We claim that 
\begin{equation}\label{eq:4tv4b445}
\mbox{for all $n\in \{0,\hdots, N\}$, $p^*$ does not lie on the boundary of $\fn{f}{n}(U)$.}
\end{equation} 
Indeed, observe that $U$ is open and $\fn{f}{n}$ is holomorphic on $U$ for all $n=0,\hdots, N$ since $\fn{f}{n-1}(U)$ does not contain any pole of $f$. Therefore, by the open mapping theorem, we have that 
\begin{equation}\label{eq:3rf554354g5}
\mbox{$\fn{f}{n}(U)$ is an open set  for all $n=0,\hdots,N$}.
\end{equation}
Since $p^*\in \fn{f}{N}(U)$, this already shows that $p^*$ does not lie on the boundary of $\fn{f}{N}(U)$. For $n=0,\hdots,N-1$, we obtain from \eqref{eq:3rf554354g5}  and the open mapping theorem that $p^*$ lies on the boundary of  $\fn{f}{n}(U)$ only if $p^*$ lies on the boundary of $\fn{f}{n}(\partial U)$, i.e., $\fn{f}{n}(\partial U)$ contains a pole of $f$ and so $\partial U$ contains a pole of $\fn{f}{n+1}$. 
In turn, this would imply that $\partial U\cap \mathcal{P}\neq \emptyset$, which is excluded by the choice of the radius $r$ of $U$. This proves \eqref{eq:4tv4b445}.

We are now ready to prove Items~\ref{it:ball} and~\ref{it:rfv} of the lemma. Namely, from \eqref{eq:Nchoig54g5}, we have that  $p_1,\hdots,p_d\notin \bigcup^{N-1}_{n=0}\fn{f}{n}(B(\omega,r))$, which proves Item~\ref{it:rfv}. For Item~\ref{it:ball}, note that from $p^*\in \fn{f}{N}(U)$ we obtain that there exists $x^*\in U=B(\omega,r)$ such that $\fn{f}{N}(x^*)=p^*$. Since $U$ is an open ball, let $r'$ be a rational such that $|x^*-\omega|<r'<r$ and consider the open ball $B(\omega,r')$. Then, we have that  $p^*\in \fn{f}{N}(B(\omega,r'))$ (since $x^*\in B(\omega,r')$). We also have by the open mapping theorem that $\fn{f}{N}(B(\omega,r'))$ is open, therefore there exists rational $r^*\in (0,\eta)$ such that $B(p^*,r^*)\subseteq \fn{f}{N}(B(\omega,r'))$, thus proving Item~\ref{it:ball}.

It remains to prove Item~\ref{it:lip}, which essentially follows from the fact $\bigcup^{N-1}_{n=0}\fn{f}{n}(U)$ does not contain any poles of $f$ and (crude) Lipschitz arguments. In particular, note that, by \eqref{eq:Nchoig54g5} and \eqref{eq:3rf554354g5}, there exists a rational $\delta>0$ such that 
\begin{equation}\label{eq:lowpole12}
|x-p_1|,\hdots, |x-p_d|\geq \delta \mbox{ for all } x\in\bigcup^{N-1}_{n=0}\fn{f}{n}(U).
\end{equation}
Let $L_0=1$ and define $L_{n+1}=L_n\frac{d (|\omega|+L_n r)^{d-1}}{|\lambda| \delta^{2d}}$ for $n=1,\hdots,N-1$. For $n=0,\hdots,N$, we will show by induction that  
\begin{align}
&\mbox{for all $x_1,x_2\in U$ it holds that $|\fn{f}{n}(x_1)-\fn{f}{n}(x_2)|\leq L_n|x_1-x_2|$},\label{eq:lipschitz111}
\end{align}
which clearly proves Item~\ref{it:lip} by taking $L$ to be any rational $>L_N$. The base case $n=0$ is trivial,  so assume that \eqref{eq:lipschitz111} holds  for some non-negative integer $n\leq N-1$. Then, since $\omega$ is a fixpoint of $f$, we have that $\fn{f}{n}(\omega)=\omega$ and therefore $\fn{f}{n}(U)\subseteq B(\omega,L_n r)$, i.e.,
\begin{equation}\label{eq:vfever12}
|x|\leq |\omega|+L_n r \mbox{ for all } x\in \fn{f}{n}(U).
\end{equation} 
Moreover, since $n\leq N-1$, we have by \eqref{eq:lowpole12}  that $|x-p_1|,\hdots, |x-p_d|\geq \delta$ and hence, by factoring $1+\lambda x^d=\lambda(x-p_1)\cdots(x-p_d)$, we obtain
\begin{equation}\label{eq:vfever13}
\big|1+\lambda x^{d}\big|=|\lambda(x-p_1)\cdots(x-p_d)|\geq |\lambda| \delta^{d}\mbox{ for all } x\in \fn{f}{n}(U).
\end{equation}
Let $x_1,x_2\in U$ and set $z_1=\fn{f}{n}(x_1)$, $z_2=\fn{f}{n}(x_2)$, so that the inductive hypothesis translates into 
\begin{equation}\label{eq:f554b555b}
|z_1-z_2|\leq L_n|x_1-x_2|.
\end{equation}  We then obtain that 
\begin{align*}
|\fn{f}{n+1}(x_1)-\fn{f}{n+1}(x_2)|&=|f(z_1)-f(z_2)|=\frac{\big|\lambda\big(z_1^{d}-z_2^{d}\big)\big|}{\big|\big(1+\lambda z_1^{d}\big)\big(1+\lambda z_2^{d}\big)\big|}\\
&=|z_1-z_2| \cdot \frac{|\lambda|\cdot \big|z_1^{d-1}+z_1^{d-2}z_2+\cdots+z_2^{d-1}\big|}{\big|\big(1+\lambda z_1^{d}\big)\big(1+\lambda z_2^{d}\big)\big|}\\
&\leq L_{n+1} |x_1-x_2|,
\end{align*}
where in the last inequality we used \eqref{eq:vfever12}, \eqref{eq:vfever13}, and \eqref{eq:f554b555b}. This finishes the proof of \eqref{eq:lipschitz111}, and therefore the proof of Item~\ref{it:lip}. This concludes the proof of Lemma~\ref{lem:tranpole}.
\end{proof}

We are now ready to prove Proposition~\ref{lem:complexexp}, which we restate here for convenience.
\begin{lemcomplexexp}
\statelemcomplexexp
\end{lemcomplexexp}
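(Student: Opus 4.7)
The plan is to proceed in close parallel with the proof of Lemma~\ref{lem:realexp}, using Lemma~\ref{lem:main121} in place of Lemma~\ref{lem:realeps} as the source of exponentially-accurate implementations in a moderately dense set of activities, and using Lemma~\ref{lem:tranpole} to transport that density from a small disc $B(\lambda\omega,\rho)$ around $\lambda\omega$ to a small disc around $\lambda p^*$ for some pole $p^*$ of $f(x)=\tfrac{1}{1+\lambda x^d}$. The argument proceeds in two stages: in Stage~1 I build exponential-precision implementations of $\lambda x^*$ for every $x^*\in B(p^*,r^*)\cap\CQ$; in Stage~2 I recover arbitrary $\lambda'\in\CQ$ by a three-case cascade on $|\lambda'|$ that mirrors Cases I--III of Lemma~\ref{lem:realexp}.

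For Stage~1, I would first invoke Lemma~\ref{lem:main121} to extract the radius $\rho>0$ and then apply Lemma~\ref{lem:tranpole} with any rational $\eta<\rho/|\lambda|$ to obtain $N$, $p^*$, a Lipschitz constant $L$, and radii $r$, $r'$, $r^*$. Given a target $x^*\in B(p^*,r^*)\cap\CQ$ and accuracy $\epsilon>0$, Fact~\ref{fact:root} lets me numerically select a rational preimage $\hat w\in B(\omega,r')$ with $|f^N(\hat w)-x^*|\le\epsilon/(2|\lambda|)$; since $|\hat w-\omega|<r'<\rho/|\lambda|$, Lemma~\ref{lem:main121} then produces a bipartite gadget $G_0$ with terminal $v_0$ implementing $\lambda\hat w$ to accuracy $|\lambda|\epsilon/(2L)$. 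Next I would form a $(\Delta{-}1)$-ary tree of depth $N$ and, at each of its $d^N$ leaves, attach a copy of $G_0\setminus\{v_0\}$ by identifying the leaf with the unique former neighbour $u_0$ of $v_0$. A direct calculation (using that $v_0$ has degree one and that $G_0$ implements $\lambda\hat w$) shows that $G_0\setminus\{v_0\}$ generates $\hat w$ at $u_0$, so the tree iterates the generator recursion $x\mapsto f(x)$ exactly $N$ times; attaching one extra pendant vertex to the root converts the generator of $f^N(\hat w)$ into an implementer of $\lambda f^N(\hat w)$. The Lipschitz bound of Lemma~\ref{lem:tranpole}, Item~\ref{it:lip}, absorbs the error amplification across the $N$ iterates; bipartiteness and the maximum-degree bound are inherited from the tree and the underlying gadgets (Remark~\ref{rem:observ}); and the required $\Zin,\Zout$ of the final graph are computed recursively through Lemma~\ref{lem:transf} from those of $G_0$.

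For Stage~2 I would repeat the three cases of Lemma~\ref{lem:realexp} essentially verbatim. In the large-$|\lambda'|$ case, solve $\lambda/(1+\lambda(x^*)^d)=\lambda'$ for the root $x^*$ closest to $p^*$ (found via Fact~\ref{fact:root}), implement $\lambda x^*$ via Stage~1 with accuracy tuned to the local amplification factor, and apply one final $(\Delta{-}1)$-ary tree layer to realise $\lambda'=\lambda f(x^*)$; the modulus estimates are the complex transcription of \eqref{eq:erf3f5f}--\eqref{eq:4bhb5bb4rfv}. In the small-$|\lambda'|$ case, use the pendant identity $\lambda'=\lambda/(1+\hat\lambda)$ with $|\hat\lambda|$ large, as in Case~II of Lemma~\ref{lem:realexp}, and handle the degenerate sub-case $\lambda'=0$ by implementing a tiny nonzero activity and applying the triangle inequality. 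In the moderate case, combine a fixed implementer of a large auxiliary activity $\lambda_4'$ (built from the large case) with a small implementer of $\lambda x^*$ by attaching both at a common vertex, as in Case~III of Lemma~\ref{lem:realexp}.

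The main obstacle is Stage~1: one must ensure that the chosen preimage $\hat w$ of $x^*$ under $f^N$ genuinely lies in the disc $B(\omega,r')$ on which Lemma~\ref{lem:main121} applies and that $f^N$ is holomorphic and Lipschitz there, so that the implementation error does not blow up uncontrollably. This is precisely what the three items of Lemma~\ref{lem:tranpole} guarantee: Item~\ref{it:ball} supplies the surjectivity $B(p^*,r^*)\subseteq f^N(B(\omega,r'))$ and thus the existence of a preimage to select, Item~\ref{it:rfv} prevents the intermediate forward images from meeting any pole and thus keeps $f^N$ non-singular on $B(\omega,r')$, and Item~\ref{it:lip} provides the Lipschitz constant $L$ that caps the error blow-up. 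With these in place Stage~1 runs in polynomial time because $N$, $L$, and $d^N$ are constants depending only on $\lambda$ and $\Delta$; the remainder of the argument is bookkeeping that transcribes directly from the real case.
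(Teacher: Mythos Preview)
Your proposal is correct and follows essentially the same approach as the paper's proof. The only organisational difference is that the paper merges your Stage~1 and the large-$|\lambda'|$ case of Stage~2 into a single Case~I (it solves $\lambda/(1+\lambda(x^*)^d)=\lambda'$ and $f^N(w^*)=x^*$ together and builds one depth-$N$ tree plus a pendant, rather than first producing a reusable ``near-pole'' implementer), while Cases~II and~III are handled identically; the error bookkeeping you sketch is slightly loose in places (e.g.\ the mismatched $|\lambda|$ factors, and ensuring the approximate preimage lands in $B(\omega,r)$ rather than $B(\omega,r')$), but these are exactly the details the paper fills in via \eqref{eq:ttbbg4643}--\eqref{eq:g5665g5nnn} and \eqref{eq:rg4g45nonz}.
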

\begin{proof}[Proof of Proposition~\ref{lem:complexexp}]
For convenience, set $d:=\Delta-1$. Let $\omega$ be the fixpoint of $f(x)=\frac{1}{1+\lambda x^{d}}$ with the smallest norm, and $p_1,\hdots, p_{d}$ be the poles of $f$. Let $\rho>0$ be the constant in  Proposition~\ref{lem:main121}. Then, by Lemma~\ref{lem:tranpole} (applied with $\eta=\rho/|\lambda|$), there exist a positive  integer $N$, rationals $L>0$ and $r,r',r^*\in(0,\eta)$ with $r'<r$, and a pole $p^*\in \{p_1,\hdots, p_d\}$ such that  
\begin{gather}
\mbox{$B(p^*,r^*)\subseteq f^{N}(B(\omega,r'))$, \ \ \  $p_1,\hdots,p_d\notin \bigcup^{N-1}_{n=0}\fn{f}{n}(B(\omega,r))$,}\label{eq:f5g53423}\\
\mbox{for all $x_1,x_2\in B(\omega,r)$, it holds that $|f^{N}(x_1)-f^{N}(x_2)|\leq L|x_1-x_2|$.}\label{eq:4b32f2c24f}
\end{gather}
We may assume that $r^*$ is sufficiently small so that, for all poles $p\in \{p_1,\hdots,p_d\}$ which are different than $p^*$ it holds that 
\begin{equation}\label{eq:v4vrccww}
|x-p|\geq \delta \mbox{ for all } x\in B(p^*,r^*),
\end{equation} 
where $\delta>0$ is a sufficiently small constant. Moreover, since $p^*$ is a pole of $f$, we have that $1+\lambda (p^*)^d=0$, so there exists a unique integer $k\in\{0,1,\hdots, d-1\}$ so that $p^*=\frac{1}{|\lambda|^{1/d}}\emm^{\im\theta+2\pi\im (k/d)}$, where $\theta=\frac{1}{d}(\pi-\Arg(\lambda))$. Since $k$ is an integer depending on $\lambda$ but not on the inputs $\lambda'$ or 
$\epsilon$, the value of $k$ which specifies $p^*$ among the poles of $f$ may be used by the algorithm.

Now, suppose that we are given inputs $\lambda'\in \CQ$ and rational $\eps>0$.  We want to output in time $poly(\size{\lambda',\epsilon})$ a bipartite graph of maximum degree $\Delta$ that implements $\lambda'$ with accuracy $\epsilon$. Clearly, we may assume that $\eps\in (0,1)$.   Let $M>0$ be a rational so that $M>2/(r^* \delta^d)$. The algorithm has three cases depending on the value of $|\lambda'|$, namely:
\begin{equation*}
\mbox{Case I: $|\lambda'|\geq M$}, \quad \mbox{Case II: $|\lambda'|\leq |\lambda|/(M+1)$}, \quad \mbox{Case III: $\lambda/(M+1) <|\lambda'|<M$}.
\end{equation*}
Note that since $\lambda,\lambda'\in \CQ$ and $M\in \QQ$, the algorithm can decide in time $poly(\size{\lambda,\lambda',M})=poly(\size{\lambda'})$ which of the three cases applies.\footnote{E.g., by squaring the inequalities, the radical of the norm goes away and the algorithm has just to compare rational numbers.}
\vskip 0.2cm
\noindent \textbf{Case I (large $\mathbf{|\lambda'|})$: $|\lambda'|\geq M$.}  The rough outline of the proof is to first specify and implement an activity $\lambda w$ for some appropriate $w$ whose main property is that $\lambda/(1+\lambda (f^{N}(w))^{d})$ is $\epsilon$-close to $\lambda'$; then, we will show how to implement the activity $\lambda/(1+\lambda (f^{N}(w))^{d})$ by using an appropriate tree construction.

We begin by specifying $w$. We first claim that there exists a unique $x^*\in B(p^*,r^*/2)$ such that
\begin{equation}\label{eq:3fgg56yy3}
\frac{\lambda}{1+\lambda (x^*)^{d}}=\lambda'.
\end{equation}
Indeed, for all $x$ such that $|x-p^*|=r^*/2$  it holds that (using \eqref{eq:v4vrccww} and the choice of $M$)
\begin{equation}\label{eq:4f4f4}
|1+\lambda x^{d}|=|\lambda|\cdot |x-p_1|\cdots |x-p_d|\geq |\lambda|r^*\delta^d/2> |\lambda|/|\lambda'|,
\end{equation}
so by Rouch\'{e}'s theorem\footnote{Rouch\'{e}'s theorem says that for any functions $f$ and $g$ that are holomorphic inside a region~$B$ surrounded by 
a simple closed contour $\partial B$, if $|g(x)| < |f(x)|$ on $\partial B$, then $f$ and $f+g$ have the same number of roots inside~$B$.} we have that the polynomial $1+\lambda x^{d}-\frac{\lambda}{\lambda'}$ has the same number of roots as the polynomial $1+\lambda x^{d}$ in the ball $B(p^*,r^*/2)$; the roots of the latter polynomial are precisely the poles $p_1,\hdots, p_d$ and therefore, by \eqref{eq:v4vrccww},  exactly one of those  lies in the ball $B(p^*,r^*/2)$ (namely $p^*$). This establishes the existence and uniqueness of $x^*\in B(p^*,r^*/2)$ satisfying \eqref{eq:3fgg56yy3}. 

By the first part of \eqref{eq:f5g53423}, we obtain that there exists $w^*\in  B(\omega,r')$ such that $f^{N}(w^*)=x^*$, i.e., 
\begin{equation}\label{eq:6445g5h4}
\frac{\lambda}{1+\lambda \big(f^{N}(w^*)\big)^{d}}=\lambda'.
\end{equation}
A fact we will use later is that 
\begin{equation}\label{eq:bttb45b1s12}
|f^{N}(w^*)-p^*|=|x^*-p^*|> \tau, \mbox{ where } \tau>0\mbox{ is such that }d \tau (|p^*|+\tau)^{d-1}<  1/|\lambda'|.
\end{equation} 
To see this, note that for all $x$ such that $|x-p^*|\leq \tau$ we have 
\begin{align*}
|1+\lambda x^{d}|&\leq |1+\lambda (p^*)^d|+|\lambda|\cdot|x^{d}- (p^*)^d|=|\lambda|\cdot \big|x^{d}- (p^*)^d\big|&
\\&\leq |\lambda|\cdot |x-p^*|\cdot\Big|\sum^{d-1}_{j=0}x^j(p^*)^{d-1-j}\Big| \leq d |\lambda| \tau (|p^*|+\tau)^{d-1}< |\lambda|/|\lambda'|,
\end{align*}
and therefore, by \eqref{eq:3fgg56yy3}, it must be the case that $|x^*-p^*|> \tau$, thus proving \eqref{eq:bttb45b1s12}. Note, we can compute $\tau\in \QQ$ satisfying \eqref{eq:bttb45b1s12} in time $poly(\size{\lambda'})$. Let 
\[\hat{\epsilon}:=\min\Big\{\frac{r-r'}{3},\frac{r^*}{4L},\frac{\tau}{4L},\frac{1}{4d L|\lambda'|(|p^*|+r^*)^{d-1}},\frac{\epsilon}{4d L|(\lambda')^2|(|p^*|+r^*)^{d-1}}, 1\Big\},\] 
and let $\epsilon'\in (0,\hat{\epsilon})$ be a rational  with $\size{\epsilon'}=poly(\size{\lambda',\epsilon})$. 

Note that $\fn{f}{N}(z)$ is a rational function of degree $d^N$ and, in fact, we can write it as  $\fn{f}{N}(z)=\frac{P(z)}{Q(z)}$ where $P(z),Q(z)$ are polynomials with coefficients in $\CQ$. Therefore,  we can rewrite \eqref{eq:6445g5h4} as a polynomial equation in terms of $w^*$, whose degree is at most $d^N$ (note that this is independent of $\lambda'$ and
$\epsilon$)  and whose coefficients have polynomial size in terms of $\size{\lambda'}$. Let $w_1,\hdots, w_t$ denote the roots of the polynomial. Using Lemma~\ref{fact:root}, we can compute $\hat{w}_1,\hdots, \hat{w}_t\in \CQ$ such that $|w_i-\hat{w_i}|\leq \epsilon'/2$  for all $i\in [t]$. In particular,  there exists $j\in [t]$ such that $|\hat{w}_j-w^*|\leq \epsilon'/2$ and therefore $\hat{w}_j\in B(\omega,r'+\epsilon'/2)$. Moreover, by applying \eqref{eq:4b32f2c24f} for $x_1=\hat{w}_j$ and $x_2=w^*$, we have
\[\big|\fn{f}{N}(\hat{w}_j)-x^*\big|=\big|\fn{f}{N}(\hat{w}_j)-\fn{f}{N}(w^*)\big|\leq L|\hat{w}_j-w^*|\leq L\epsilon'/2.\]
Therefore, by trying\footnote{Note that $\omega$ is the root of the polynomial $x+\lambda x^{d+1}-1$ with the smallest norm and therefore, using Lemma~\ref{fact:root}, we can compute $\hat{\omega}\in \CQ$ such that $|\hat{\omega}-\omega|\leq \epsilon'/6$  in time $poly(\size{\epsilon'})$. Similarly, $x^*$ is the unique root of the polynomial $1+\lambda x^{d}-\frac{\lambda}{\lambda'}$ in the ball $B(p^*,r^*/2)$ and therefore we can compute $\hat{x}^*\in \CQ$ such that $|\hat{x}^*-x^*|\leq L\epsilon'/6$  in time $poly(\size{\lambda',\epsilon'})$. Then, for $i\in[t]$, we check whether $|\hat{w}_i-\hat{\omega}|\leq 2\epsilon'/3$ and $|\fn{f}{N}(\hat{w}_i)-\hat{x}^*|\leq 2L\epsilon'/3$; the check must pass for at least one $\hat{w}\in\{\hat{w}_1,\hdots,\hat{w}_t\}$,  and using the triangle inequality we obtain that $\hat{w}$ satisfies \eqref{eq:ttbbg4643}. }  all of $\hat{w}_1,\hdots, \hat{w}_t$, we can specify $\hat{w}\in \{\hat{w}_1,\hdots, \hat{w}_t\}$ in time $poly(\size{\lambda',\epsilon})$  such that
\begin{equation}\label{eq:ttbbg4643}
\hat{w}\in B(\omega,r'+\epsilon'), \quad \big|\fn{f}{N}(\hat{w})-x^*\big|\leq L\epsilon'.
\end{equation}
Since $\epsilon'\leq (r-r')/3$ we have that $\hat{w}\in B(\omega,r)$. Recall that $r<\rho/|\lambda|$, so by the algorithm of  Proposition~\ref{lem:main121}, we can construct a bipartite graph $G$ of maximum degree $\Delta$ with terminal $v$ that  implements $\lambda \hat{w}$ with accuracy $\lambda \epsilon'$ in time $poly(\size{\lambda \hat{w},\lambda\epsilon'})=poly(\size{\lambda',\epsilon})$. Moreover, we can compute the values $\Zin_{G,v}(\lambda)$ and $\Zout_{G,v}(\lambda)$ at the same time.  

Let $w$ be such $\lambda w= \frac{\Zin_{G,v}(\lambda)}{\Zout_{G,v}(\lambda)}$, so that $|w-\hat{w}|\leq \epsilon'$. Since $r'+2\epsilon'<r$, we obtain using  \eqref{eq:ttbbg4643} that $w\in B(\omega,r)$. Further, by applying \eqref{eq:4b32f2c24f} for $x_1=w$ and $x_2=\hat{w}$, we get that $\big|\fn{f}{N}(w)-\fn{f}{N}(\hat{w})\big|\leq L\epsilon'$ and therefore by the triangle inequality and \eqref{eq:ttbbg4643} we have
\begin{equation}\label{eq:v4454hb64g}
\big|\fn{f}{N}(w)-\fn{f}{N}(w^*)\big|\leq 2L \epsilon'.
\end{equation}
We will next show that 
\begin{equation}\label{eq:4bhb5bb4}
\Big|\frac{\lambda}{1+\lambda \big(f^{N}(w)\big)^{d}}-\lambda'\Big|\leq \epsilon.
\end{equation}
Since $\fn{f}{N}(w^*)=x^*$ and $|x^*|\leq |p^*|+r^*/2$, we can conclude (using \eqref{eq:v4454hb64g}) that $\big|\fn{f}{N}(w^*)\big|, \big|\fn{f}{N}(w)\big|\leq |p^*|+r^*$. In turn, this gives
\begin{equation}\label{eq:t4b4tb4bnhur4}
\begin{aligned}
\big|\big(\fn{f}{N}(w)\big)^{d}-\big(\fn{f}{N}(w^*)\big)^{d}\big|&=\big|\fn{f}{N}(w)-\fn{f}{N}(w^*)\big|\cdot \Big|\sum^{d-1}_{j=0}\big(\fn{f}{N}(w)\big)^{j}\,\big(\fn{f}{N}(w^*)\big)^{d-1-j}\Big|\\
&\leq 2d L(|p^*|+r^*)^{d-1}\epsilon'\leq 1/(2|\lambda'|),
\end{aligned}
\end{equation}
where in the last inequality we used that $\epsilon'\leq \frac{1}{4d L|\lambda'|(|p^*|+r^*)^{d-1}}$. From \eqref{eq:6445g5h4}, \eqref{eq:t4b4tb4bnhur4} and the triangle inequality, we obtain that
\begin{equation}\label{eq:g5665g5nnn}
\begin{aligned}
\Big|1+\lambda \big(\fn{f}{N}(w)\big)^{d}\Big|&\geq \Big|1+\lambda \big(\fn{f}{N}(w^*)\big)^{d}\Big|-\Big|\lambda\big(\fn{f}{N}(w)\big)^{d}-\lambda\big(\fn{f}{N}(w^*)\big)^{d}\Big| 
\geq \frac{|\lambda|}{2|\lambda'|}.
\end{aligned}
\end{equation} 
and therefore
\begin{equation*}
\begin{aligned}
\Big|\frac{\lambda}{1+\lambda \big(f^{N}(w)\big)^{d}}-\lambda'\Big|&=  \frac{|\lambda|^2\cdot\big|\big(\fn{f}{N}(w)\big)^{d}-\big(\fn{f}{N}(w^*)\big)^{d}\big|}{\big|1+\lambda \big(f^{N}(w)\big)^{d}\big|\cdot \big|1+\lambda \big(f^{N}(w^*)\big)^{d}\big|}\\
&=\frac{|\lambda'\lambda|\cdot\big|\big(\fn{f}{N}(w)\big)^{d}-\big(\fn{f}{N}(w^*)\big)^{d}\big|}{\big|1+\lambda \big(f^{N}(w)\big)^{d}\big|}\leq 4d L|\lambda'|^2(|p^*|+r^*)^{d-1}\epsilon'\leq \epsilon,
\end{aligned}
\end{equation*}
where in the last equality we used \eqref{eq:6445g5h4}, in the second to last inequality we used \eqref{eq:t4b4tb4bnhur4} and \eqref{eq:g5665g5nnn}, and in the last inequality we used the choice of $\epsilon'$. This concludes the proof of \eqref{eq:4bhb5bb4}. To ensure that certain partition functions are non-zero, we will need the following additional fact for $w$, namely that 
\begin{equation}\label{eq:rg4g45nonz}
\fn{f}{n}(w)\neq p_1,\hdots,p_d \mbox{ for all } n=0,1,\hdots, N.
\end{equation}
For $n=0,1,\hdots, N-1$, this just follows from \eqref{eq:f5g53423} and the fact that $w\in B(\omega,r)$. For $n=N$, we have from \eqref{eq:v4454hb64g} and $f^{N}(w^*)=x^*$  that $f^{N}(w)$ is within distance $2L\epsilon'\leq r^*/2$ from $x^*\in B(p^*,r^*/2)$ and therefore $f^{N}(w)\in B(p^*,r^*)$. This implies that for all poles $p\neq p^*$ it holds that $f^{N}(w)\neq p$ (cf. \eqref{eq:v4vrccww}). For the pole $p^*$, we have from \eqref{eq:v4454hb64g} that $f^{N}(w)$ is within distance $2L\epsilon'\leq \tau/2$ from $f^{N}(w^*)$ and therefore, using \eqref{eq:bttb45b1s12}, we have that $|f^{N}(w)-p^*|\geq \tau/2>0$, i.e., $f^{N}(w)\neq p^*$. This finishes the proof of \eqref{eq:rg4g45nonz}.

In light of \eqref{eq:4bhb5bb4}, we next focus on implementing the activity $\lambda/(1+\lambda (f^{N}(w))^{d})$ using a bipartite graph of maximum degree $\Delta$. For $h= 0,1,\hdots,N$, let $T_h$ denote the $d$-ary tree of height $h$ and denote the root of the tree by $u_h$. Let $G_h$ be the bipartite graph of maximum degree $\Delta$ obtained from $T_h$ by taking, for each leaf $l$ of $T_h$, $d$ distinct copies of the graph $G$ (which implements $\lambda w$) and identifying $l$ with the $d$ copies of the terminal $v$ of $G$. Then, using that $\lambda w=\Zin_{G,v}(\lambda)/\Zout_{G,v}(\lambda)$, we have
\begin{equation}\label{eq:bt454b51244}
\begin{gathered}
\Zin_{G_0,u_0}(\lambda)=\lambda\big(\Zin_{G,v}(\lambda)/\lambda\big)^{d}, \qquad \Zout_{G_0,u_0}(\lambda)=\big(\Zout_{G,v}(\lambda)\big)^{d},\\
Z_{G_0}(\lambda)=\Zin_{G_0,u_{0}}(\lambda)+\Zout_{G_0,u_{0}}(\lambda)=\big(\Zout_{G,v}(\lambda)\big)^{d}(1+\lambda w^d).
\end{gathered}
\end{equation}
\eqref{eq:rg4g45nonz} ensures that $w \neq p_1,..,p_d$ so we have $1 + \lambda w^d \neq 0$. We therefore obtain from \eqref{eq:bt454b51244} that
\begin{equation}\label{eq:btb45b3f}
Z_{G_0}(\lambda)\neq 0 \mbox{ and }\frac{\Zout_{G_0,u_0}(\lambda)}{Z_{G_0}(\lambda)}=\frac{\big(\Zout_{G,v}(\lambda)\big)^{d}}{\big(\Zout_{G,v}(\lambda)\big)^{d}\big(1+\lambda w^d\big)}=f(w).
\end{equation}
Further, for $h=1,\hdots, N$ it holds that
\begin{equation}\label{eq:um76h6h6r}
\begin{gathered}
\Zin_{G_h,u_h}(\lambda)=\lambda\big(\Zout_{G_{h-1},u_{h-1}}(\lambda)\big)^{d}, \quad \Zout_{G_h,u_h}(\lambda)=\big(Z_{G_{h-1}}(\lambda)\big)^{d},\\
Z_{G_h}(\lambda)=\Zin_{G_h,u_h}(\lambda)+\Zout_{G_h,u_h}(\lambda)=\big(Z_{G_{h-1}}(\lambda)\big)^{d}\left(1+\lambda \frac{\big(\Zout_{G_{h-1},u_{h-1}}(\lambda)\big)^d}{\big(Z_{G_{h-1}}(\lambda)\big)^d}\right).
\end{gathered}
\end{equation}
We will show by induction that for all $h=0,1,\hdots,N$ it holds that
\begin{equation}\label{eq:mm4545mg4f5}
Z_{G_h}(\lambda)\neq 0\mbox{ and } \frac{\Zout_{G_h,u_h}(\lambda)}{Z_{G_h}(\lambda)} = \fn{f}{h+1}(w).
\end{equation}
For $h=0$, this is just \eqref{eq:btb45b3f}. Assume that it holds for $h-1$; we have by \eqref{eq:um76h6h6r} and the induction hypothesis that 
\[Z_{G_h}(\lambda)=\big(Z_{G_{h-1}}(\lambda)\big)^{d}\Big(1+\lambda (\fn{f}{h}(w)\big)^d\Big)\neq 0,\]
where the disequality follows from $Z_{G_{h-1}}(\lambda)\neq0$ and \eqref{eq:rg4g45nonz}. We therefore obtain that
\[\frac{\Zout_{G_h,u_h}(\lambda)}{Z_{G_h}(\lambda)}=\frac{1}{1+\lambda \Big(\frac{\Zout_{G_{h-1},u_{h-1}}(\lambda)}{Z_{G_{h-1}}(\lambda)}\Big)^d}=\frac{1}{1+\lambda \big(\fn{f}{h}(w)\big)^d}=\fn{f}{h+1}(w),\]
completing the induction and the proof of \eqref{eq:mm4545mg4f5}. For $h=N$, \eqref{eq:mm4545mg4f5} gives  that $Z_{G_N}(\lambda)\neq 0$ and 
\begin{equation}\label{eq:retb4hr6}
\frac{\Zout_{G_{N},u_{N}}(\lambda)}{Z_{G_{N}}(\lambda)}=\fn{f}{N+1}(w)=f(\fn{f}{N}(w))=\frac{1}{1+\lambda \big(\fn{f}{N}(w)\big)^d}.
\end{equation}

Consider the graph $G'$ obtained from $G_{N}$ by adding a new vertex $u'$ whose
single neighbour is the vertex $u_{N}$. Then, we have that
\begin{equation}\label{eq:brb6r6uu3254}
\Zin_{G',u'}(\lambda)= \lambda\Zout_{G_{N},u_{N}}(\lambda),\qquad \Zout_{G',u'}(\lambda)=Z_{G_{N}}(\lambda)\neq 0.
\end{equation}
Using this in conjuction with \eqref{eq:retb4hr6}, we conclude that
\[\frac{\Zin_{G',u'}(\lambda)}{\Zout_{G',u'}(\lambda)}=\frac{\lambda\Zout_{G_{N},u_{N}}(\lambda)}{Z_{G_{N}}(\lambda)}=\frac{\lambda}{1+\lambda \big(\fn{f}{N}(w)\big)^d}.\]
From this and \eqref{eq:4bhb5bb4}, we obtain that $G'$ is a bipartite graph of maximum degree $\Delta$ with terminal $u'$ which implements $\lambda'$ with accuracy $\epsilon$. Moreover, using \eqref{eq:bt454b51244}, \eqref{eq:um76h6h6r} and \eqref{eq:brb6r6uu3254}, we can also compute the values $\Zin_{G',u'}(\lambda),\Zout_{G',u'}(\lambda)$.

\vskip 0.2cm

\noindent The remaining cases of the algorithm (Cases II and III below) are almost identical to Cases II and III of the algorithm in Proposition~\ref{lem:realexp} for the real case, so we focus on the main differences (which mostly amount to modifying the upper and lower bounds for the relevant quantities). To align  with the notation there, let $G_4$ be a bipartite graph of maximum degree at most $\Delta$ with terminal $v_4$ that implements a constant activity $\lambda_4'$ with 
\begin{equation}\label{eq:56g56g4g543ww}
|\lambda_4'|>(M+1)(M+2), \mbox{ where } \lambda_4'=\frac{\Zin_{G_4,v_4}(\lambda)}{\Zout_{G_4,v_4}(\lambda)}.
\end{equation}
Note, this implementation can be done using the algorithm for Case I. We next  give the details of the algorithm for the remaining cases. 
\vskip 0.2cm

\noindent \textbf{Case II (small $\mathbf{|\lambda'|}$): $|\lambda'|\leq |\lambda|/(M+1)$}. We first assume that $\lambda'\neq 0$. Let $\hat{\lambda}$ be such that
\begin{equation}\label{eq:n56n6j6j}
\frac{\lambda}{1+\hat{\lambda}}=\lambda', \mbox{ so that } \hat{\lambda}=\frac{\lambda}{\lambda'}-1.
\end{equation}
Let $\hat{\epsilon}=\epsilon \cdot \min\big\{\frac{|\lambda|}{2|\lambda'|},\frac{|\lambda|}{2|\lambda'|^2},1\big\}$ and let $\epsilon'$ be a rational less than $\hat{\epsilon}$ so that $\size{\epsilon'}=poly(\size{\lambda',\epsilon})$.  

Using the assumption $|\lambda'|\leq |\lambda|/(M+1)$ and the triangle inequality, we have that $|\hat{\lambda}|\geq \frac{|\lambda|}{|\lambda'|}-1\geq M$.  Therefore, by Case I, we can construct in time $poly(\size{\lambda',\eps})$ a bipartite graph $G$ of maximum degree $\Delta$ with terminal $v$ that  implements $\hat{\lambda}$ with accuracy $\epsilon'$, and we can compute the values $\Zin_{G,v}(\lambda)$ and $\Zout_{G,v}(\lambda)$ at the same time.  Let $\lambda''= \frac{\Zin_{G,v}(\lambda)}{\Zout_{G,v}(\lambda)}$, so that $|\lambda''-\hat{\lambda}|\leq \epsilon'$. Using this and \eqref{eq:n56n6j6j}, we have
\begin{equation}\label{eq:b556n45y54hj}
|1+\lambda'' |\geq  |1+\hat{\lambda}|-|(\lambda''-\hat{\lambda})|\geq \frac{|\lambda|}{|\lambda'|}-\epsilon'\geq \frac{|\lambda|}{2|\lambda'|},
\end{equation}
and therefore 
\begin{equation}\label{eq:6h6h43gt3}
\begin{aligned}
\Big|\frac{\lambda}{1+\lambda'' }-\lambda'\Big|=\frac{|\lambda' ||\lambda''-\hat{\lambda}|}{|1+\lambda''|}\leq \frac{2|\lambda'|^2}{|\lambda|}|\lambda''-\hat{\lambda}|\leq \epsilon.
\end{aligned}
\end{equation}
Now,  let $G'$ be the bipartite graph obtained from $G$ by adding a new vertex $u$ whose single neighbour is the terminal $v$ of $G$. Then, just as in Case II of Proposition~\ref{lem:realexp} we can conclude that  $G'$ with terminal $u$ is a bipartite graph of maximum degree $\Delta$ which  implements $\lambda'$ with accuracy $\epsilon$, and we can also compute the values $\Zin_{G',u}(\lambda),\Zout_{G',u}(\lambda)$. The case $\lambda'=0$ can be handled using the above technique by implementing the activity $\epsilon''/2\neq 0$ with accuracy $\epsilon''/2$, where $\epsilon''>0$ is a rational less than $\min\{\epsilon,|\lambda|/(M+1)\}$ such that $\size{\epsilon''}=poly(\size{\epsilon})$, see Case II of the proof of Proposition~\ref{lem:realexp} for more details.

\vskip 0.2cm

\noindent \textbf{Case III (moderate $\mathbf{|\lambda'|}$): $|\lambda|/(M+1)<|\lambda'|<M$}. Let $x^*$ be such that 
\begin{equation}\label{eq:y6b6y54g5}
\frac{\lambda}{1+\lambda_4'  x^*}=\lambda',\mbox{ so that }x^*=\frac{1}{\lambda_4'}\Big(\frac{\lambda}{\lambda'}-1\Big).
\end{equation}
Let $\hat{\epsilon}=\epsilon\cdot \min\{\frac{|\lambda|^2}{2M|\lambda_4'|},\frac{|\lambda|^2}{2M^2|\lambda_4'|},1\}$ and let $\epsilon'$ be a rational less than $\hat{\epsilon}$ so that $\size{\epsilon'}=poly(\size{\lambda',\epsilon})$.

Using the assumption $|\lambda'|>|\lambda|/(M+1)$ and  $|\lambda_4'|> (M+1)(M+2)$ (see \eqref{eq:56g56g4g543ww}), we have that $|x^*|\leq 1/(M+1)$.  By the algorithm for Case II, we can implement the activity $\lambda x^*$ with precision $\epsilon'$ in time $poly(\size{\lambda x^*, \epsilon'})=poly(\size{\lambda', \epsilon})$, i.e., we can construct a bipartite graph $G$ of maximum degree at most $\Delta$ with terminal $v$ such that, for  $\lambda'':=\Zin_{G,v}(\lambda)/\Zout_{G,v}(\lambda)$, it holds that
\begin{equation}\label{eq:mbtjbt65}
\big|\lambda''- \lambda x^*\big|  \leq \epsilon'.
\end{equation}
Now, using \eqref{eq:y6b6y54g5} and \eqref{eq:mbtjbt65}, we have by the triangle inequality that
\begin{equation}\label{eq:g4g4g}
|1+ \lambda_4'(\lambda''/\lambda)|\geq |1+ \lambda_4' x^*|-|\lambda_4' (x^*-\lambda''/\lambda)|\geq \frac{|\lambda|}{M}- \frac{|\lambda_4'|\epsilon'}{|\lambda|}\geq \frac{|\lambda|}{2M},
\end{equation}
and therefore
\begin{equation}\label{eq:v4v4v35g3f2f}
\begin{aligned}
\Big|\frac{\lambda}{1+ \lambda_4'(\lambda''/\lambda)}-\lambda'\Big|&=\Big|\frac{\lambda}{1+ \lambda_4'(\lambda''/\lambda)}-\frac{\lambda}{1+ \lambda_4'x^*}\Big|=\frac{|\lambda|\, |\lambda'_4|\, |x^*-(\lambda''/\lambda)|}{|1+ \lambda_4'x^*|\cdot |1+ \lambda_4'(\lambda''/\lambda)|}\\
&= \frac{|\lambda'|\, |\lambda'_4|\, |x^*-(\lambda''/\lambda)|}{|1+ \lambda_4'(\lambda''/\lambda)|}\leq 2M^2 \frac{|\lambda'_4|}{|\lambda|^2} |\lambda x^*-\lambda''|\leq \epsilon.
\end{aligned}
\end{equation}
where in the last equality we used \eqref{eq:y6b6y54g5}, in the second to last inequality we used \eqref{eq:g4g4g} and $|\lambda'|\leq M$, and in the last inequality we used \eqref{eq:mbtjbt65} and the choice of $\epsilon'$.

Recall from \eqref{eq:56g56g4g543ww} that $G_4$ is a bipartite graph of maximum degree $\Delta$  with terminal $v_4$ which implements the activity $\lambda_4'$. Let $G'$ be the bipartite graph obtained by taking  a copy of $G_4$ and $G$ and identifying the terminals $v_4,v$ into a single vertex which we label $u'$. Further, consider the graph $G''$ obtained from $G'$ by adding a new vertex $u''$ whose single neighbour is the vertex $u'$. Then, just as in Case III of Proposition~\ref{lem:realexp}, using \eqref{eq:v4v4v35g3f2f} we can conclude that $G''$ is a bipartite graph of maximum degree at most $\Delta$ with terminal $u''$ which implements $\lambda'$ with accuracy $\epsilon$, and we can also compute the values $\Zin_{G',u}(\lambda),\Zout_{G',u}(\lambda)$.

\vskip 0.2cm

This completes the three different cases of the algorithm, thus completing the proof of Proposition~\ref{lem:complexexp}.
\end{proof}

\section{ \#P-hardness}\label{sec:reductions}

In order to prove Theorems~\ref{thm:norm} and~\ref{thm:arg} 
we first prove \#P-hardness of multivariate versions of our problems.
Instead of insisting that every vertex has activity~$\lambda$, we allow the
activities to be drawn from  
the set
\begin{equation}\label{eq:Llam}
\mathcal{L}(\lambda) = \{\lambda,-\lambda-1,-1,1\}.
\end{equation}

 \prob{
$\FactorMVHardCore{K}$.} { A  graph $G$ with maximum degree at
most $\Delta$. An activity vector $\lambdab=\{\lambda_v\}_{v\in V}$, such that, for each $v\in V$, $\lambda_v \in \mathcal{L}(\lambda)$.
For every vertex $v\in V$ with $\lambda_v\neq \lambda$, the degree of~$v$ in~$G$ must be at most~$2$.} 
{   If $|Z_G(\lambdab)|=0$  then the algorithm may output any rational number. Otherwise,
 it must output a  rational number $\widehat{N}$ such that
$\widehat{N}/K \leq
|Z_{G}(\lambdab)|\leq K \widehat{N}$. }  
 
  \prob{
$\ArgMVHardCore{\rho}$.} {A graph $G=(V,E)$ with maximum degree at
most $\Delta$. An activity vector $\lambdab=\{\lambda_v\}_{v\in V}$ such that, for each $v\in V$, $\lambda_v \in \mathcal{L}(\lambda)$.
For every vertex $v\in V$ with $\lambda_v\neq \lambda$, the degree of~$v$ in~$G$ must be at most~$2$.
 }
{ If $Z_G(\lambdab)=0$ then the algorithm may output any rational number. Otherwise, it must output  
a rational number $\widehat{A}$ such that, for some
$a\in \arg(Z_{G}(\lambdab))$,
$ |\widehat{A} - a| \leq \rho$.  }

\subsection{Reducing the degree using equality gadgets}

Given a graph $B=(V,E)$ and 
two subsets $T_{\textin}$ and $T$ of the vertex set $V$
satisfying $T_{\textin}\subseteq T$, let 
$\mathcal{I}_{B,T,T_{\textin}}$ denote the set of independent sets $I$ of~$B$ 
such that $I\cap T=T_{\textin}$.
Let $Z_{B,T,T_{\textin}}(\lambda) = 
\sum_{I\in \mathcal{I}_{B,T,T_{\textin}}}\lambda^{|I|}$.
We use similar notation when activities are non-uniform.
We start by introducing ``equality'' gadgets.

\begin{figure}[h]
\begin{center}
{
\tikzset{lab/.style={circle,draw,inner sep=0pt,fill=none,minimum size=5mm}} 
\begin{tikzpicture}[xscale=1,yscale=1]
\draw (0,0) node[lab, label={below:$\lambda_{u_i}= \lambda$}] (1) {$u_i$};
\draw (0,-3) node[lab, label={below:$\lambda_{v_i}=1$}] (2) {$v_{i}$};
\draw (6,-1.5) node[lab, label={below:$\lambda_{z_i}=\lambda$}] (5) {$z_i$};
\draw (3,-0.75) node[lab, label={above:$\lambda_{x_i}=-\lambda-1$}] (3) {$x_i$};
\draw (3,-2.25) node[lab, label={below:$\lambda_{y_i}=-\lambda-1$}] (4) {$y_i$};
\draw (1) -- (3) -- (5);
\draw (2) -- (4) -- (5);
\draw (3,-4.5) node {$B_i$ with activity vector $\lambdab$};
\begin{scope}[shift={(10,0)},xscale=0.8,yscale=0.8]
\draw (-3,0.75) node[lab, label={below:$\lambda_{s_i}= \lambda$}] (0a) {$s_i$};
\draw (-3,-3.75) node[lab, label={below:$\lambda_{t_i}= 1$}] (0b) {$t_i$};
\draw (0,0) node[lab, label={below:$\lambda_{u_i}= -1$}] (1) {$u_i$};
\draw (0,-3) node[lab, label={below:$\lambda_{v_{i}}= -1$}] (2) {$v_{i}$};
\draw (6,-1.5) node[lab, label={below:$\lambda_{z_i}=1$}] (5) {$z_i$};
\draw (3,-0.75) node[lab, label={above:$\lambda_{x_i}=-1$}] (3) {$x_i$};
\draw (3,-2.25) node[lab, label={below:$\lambda_{y_i}=-1$}] (4) {$y_i$};
\draw (0a) -- (1) -- (3) -- (5);
\draw (0b) -- (2) -- (4) -- (5);
\draw (1,-5.7) node {$B'_i$ with activity vector $\lambdab'$};
\end{scope}
\end{tikzpicture}
}
\end{center}
\caption{The binary equality gadgets $B_i$  
and $B'_i$  used in the proof of Lemma~\ref{lem:equality}.   }
\label{fig:equal}
\end{figure}
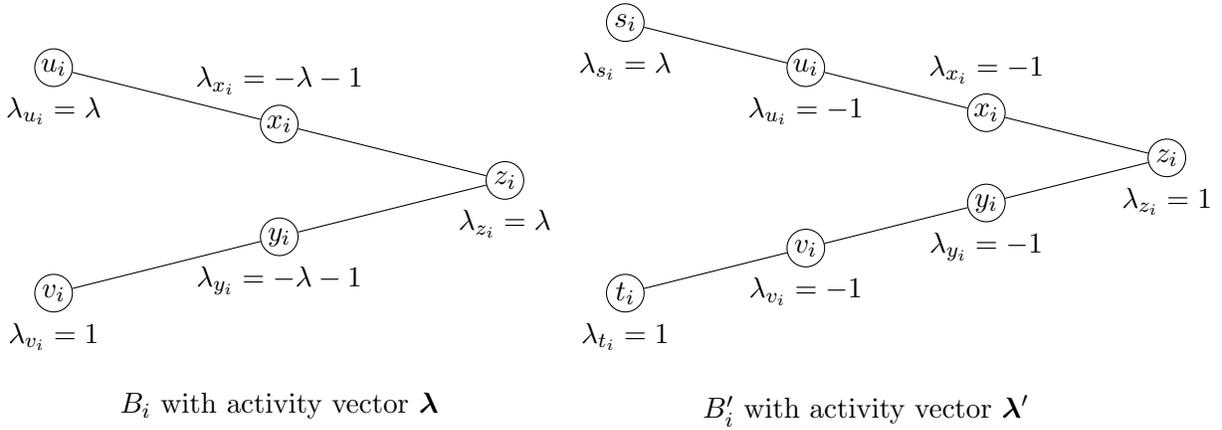 
\begin{lemma}\label{lem:equalitygadget}
Let $\lambda\in \Complex$ be such that $\lambda\neq -1,0$.  Let $B_i$ and $B'_i$ be the graphs in Figure~\ref{fig:equal} with activity vectors $\lambdab$ and $\lambdab'$, respectively, and set $T_i=\{u_i,v_i\}$, $T_i'=\{s_i,t_i\}$. Then,
$$Z_{B_i,T_i,\{u_i\}}(\lambdab) = Z_{B_i,T_i,\{v_{i}\}}(\lambdab)=0, \quad Z_{B_i,T_i,\emptyset}(\lambdab) = Z_{B_i,T_i,T_i}(\lambdab)=C:=\lambda(1+\lambda)\neq 0,$$
and
$$ Z_{B'_i,T'_i,\{s_i\}}(\lambdab') = Z_{B'_i,T'_i,\{{t}_i\}}(\lambdab') =0, \quad Z_{B'_i,T'_i,\emptyset}(\lambdab') =\frac{1}{\lambda} Z_{B'_i,T'_i,T'_i}(\lambdab')=1.$$
\end{lemma}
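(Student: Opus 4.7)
The plan is to compute each of the eight pinned partition functions directly, by case analysis on whether the central vertex $z_i$ (respectively $z_i$ in $B'_i$) is in the independent set or not. Once the status of $z_i$ is fixed, both gadgets split into two independent ``legs'': in $B_i$ the legs are the pair $(u_i,x_i)$ and $(v_i,y_i)$; in $B'_i$ the legs are $(s_i,u_i,x_i)$ and $(t_i,v_i,y_i)$. So the contribution to the partition function factors as a product of two leg-contributions times the weight of $z_i$, and each leg-contribution is an explicit rational function in $\lambda$. All four values $Z_{B_i,T_i,T_{\mathrm{in}}}(\lambdab)$ (for $T_{\mathrm{in}} \in\{\emptyset,\{u_i\},\{v_i\},T_i\}$) and the four values for $B'_i$ can then be read off by multiplying.

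For $B_i$, the design of the activities $\lambda_{x_i}=\lambda_{y_i}=-\lambda-1$ is what creates the cancellation: when a leaf $u_i$ (or $v_i$) is pinned out, the middle vertex $x_i$ (or $y_i$) is free, and the $z_i \notin I$ contribution contains the factor $1+\lambda_{x_i}=-\lambda$, while the $z_i\in I$ contribution contains the factor $\lambda$ from $z_i$; these cancel in the ``mixed'' configurations and reinforce in the ``symmetric'' configurations, yielding the claimed values $0$ and $\lambda(1+\lambda)$. Concretely, I expect the contributions for $B_i$ to look like $\lambda^2 + \lambda$ when $T_{\mathrm{in}}=\emptyset$ and when $T_{\mathrm{in}}=T_i$, and like $-\lambda^2 + \lambda^2$ (respectively $-\lambda + \lambda$) in the two mixed cases, confirming the statement (note $C=\lambda(1+\lambda)\neq 0$ because $\lambda\neq 0,-1$).

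For $B'_i$, the mechanism is even simpler: the activities $-1$ on $u_i,v_i,x_i,y_i$ make each ``free'' vertex with activity $-1$ contribute the factor $1+(-1)=0$. So any configuration that leaves a vertex of $B'_i$ with activity $-1$ completely free (with no neighbour pinned in) contributes $0$. Case analysis on the status of $z_i$ then shows that $Z_{B'_i,T'_i,\{s_i\}}(\lambdab')=Z_{B'_i,T'_i,\{t_i\}}(\lambdab')=0$ because both legs always leave at least one vertex free in each sub-case, while $Z_{B'_i,T'_i,\emptyset}(\lambdab')=1$ (only the $z_i\notin I$ sub-case survives, yielding $(1+(-1)(-1))\cdot(1+(-1)(-1))$ style product equal to $1$) and $Z_{B'_i,T'_i,T'_i}(\lambdab')=\lambda$ (only the $z_i\in I$ sub-case survives, contributing $\lambda\cdot 1\cdot 1\cdot 1$).

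There is no serious obstacle here, as the gadgets are small and the calculation is finite. The only care needed is bookkeeping: making sure that when a vertex of $T_i$ (or $T'_i$) is pinned in, the corresponding weight is still included, that the independent-set constraints across $z_i$ are enforced in both sub-cases, and that we use $\lambda\neq 0,-1$ at the very end to conclude $C\neq 0$. I would present the proof as a short table or as two displays (one per gadget) listing the four values with a one-line derivation each.
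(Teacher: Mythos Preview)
Your proposal is correct and takes essentially the same approach as the paper: a direct finite enumeration of the pinned partition functions, organized via case analysis on the central vertex $z_i$. The paper's proof for $B_i$ is identical in spirit (a one-line enumeration), while for $B'_i$ the paper organizes the count slightly differently---it views $B'_i$ as $B_i$ with $s_i,t_i$ appended and first computes $Z_{B_i,T_i,\cdot}(\lambdab')=(1,-1,-1,2)$ before combining---but this is only a bookkeeping difference. One small correction to your heuristic: in the $T'_{\textin}=\emptyset$ case the surviving $z_i\notin I$ contribution is $(1-1-1)(1-1-1)=(-1)(-1)=1$, not a ``$(1+(-1)(-1))$''-type product; the final value is still $1$ as you claim.
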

\begin{proof}
By enumerating the independent sets of $B_i$, we have 
\begin{equation}\label{eq:tv465b6b4}
\begin{aligned}
&Z_{B_i,T_i,\{u_i\}}(\lambdab)=\lambda_{u_i}(1+\lambda_{y_i}+\lambda_{z_i}),\qquad   Z_{B_i,T_i,\{v_{i}\}}(\lambdab) = \lambda_{v_i}(1+\lambda_{x_i}+\lambda_{z_i}),\\
&Z_{B_i,T_i,\emptyset}(\lambdab) = 1+ \lambda_{x_i}+\lambda_{y_i}+\lambda_{z_i}+ \lambda_{x_i}\lambda_{y_i},\quad Z_{B_i,T_i,T_i}(\lambdab)=\lambda_{u_i}\lambda_{v_i}(1+\lambda_{z_i}).
\end{aligned} 
\end{equation}
Observe that $B_i'$ is obtained from $B_i$ by ``appending'' the vertices $s_i$ and $t_i$. Therefore,
\begin{equation}\label{eq:45g44r2}
\begin{aligned}
Z_{B_i',T_i',\emptyset}(\lambdab')& = Z_{B_i,T_i,\emptyset}(\lambdab')+Z_{B_i,T_i,\{u_i\}}(\lambdab')+Z_{B_i,T_i,\{v_i\}}(\lambdab')+Z_{B_i,T_i,T_i}(\lambdab'),\\
Z_{B_i',T_i',\{s_i\}}(\lambdab')&=\lambda_{s_i}\big(Z_{B_i,T_i,\emptyset}(\lambdab')+Z_{B_i,T_i,\{v_i\}}(\lambdab')\big),\\
Z_{B_i',T_i',\{t_{i}\}}(\lambdab')&= \lambda_{t_i}\big(Z_{B_i,T_i,\emptyset}(\lambdab')+Z_{B_i,T_i,\{u_i\}}(\lambdab')\big),\\
Z_{B_i',T_i',T_i'}(\lambdab')&=\lambda_{s_i}\lambda_{t_i}Z_{B_i,T_i,\emptyset}(\lambdab').
\end{aligned} 
\end{equation}
Using \eqref{eq:tv465b6b4}, we have that 
\[Z_{B_i,T_i,\emptyset}(\lambdab')=1,\quad Z_{B_i,T_i,\{u_i\}}(\lambdab')=Z_{B_i,T_i,\{v_i\}}(\lambdab')=-1, \quad Z_{B_i,T_i,T_i}(\lambdab')=2.\]
Plugging this into \eqref{eq:45g44r2} concludes the proof of the lemma.
\end{proof}

The following lemma defines the gadgets for the case $\lambda=-1$.

\begin{lemma}\label{lem:minusoneeqgadget}
Let $\lambda=-1$. Let $B_i$ be the path of length six with endpoints $u_i,v_i$, with activity vector $\lambdab$, where every vertex has activity $\lambda=-1$, apart from the endpoint $v_i$  
   which has activity $+1$. Let $B_i'$ be the graph in Figure~\ref{fig:equal} with the activity vector $\lambdab'$ given there. Set $T_i=\{u_i,v_i\}$ and $T_i'=\{s_i,t_i\}$. Then,
$$Z_{B_i,T_i,\{u_i\}}(\lambdab) = Z_{B_i,T_i,\{v_{i}\}}(\lambdab)=0, \quad Z_{B_i,T_i,\emptyset}(\lambdab) = Z_{B_i,T_i,T_i}(\lambdab)=C:=1,$$
and
$$Z_{B'_i,T'_i,\{s_i\}}(\lambdab') = Z_{B'_i,T'_i,\{t_i\}}(\lambdab') =0, \quad Z_{B'_i,T'_i,\emptyset}(\lambdab') =\frac{1}{\lambda} Z_{B'_i,T'_i,T'_i}(\lambdab')=1.$$
\end{lemma}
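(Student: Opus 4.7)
The plan is to verify both parts of the lemma by direct calculation, exploiting the fact that at $\lambda=-1$ the independent set polynomial of certain homogeneous paths vanishes by cancellation. Writing $P_n$ for the value at $\lambda=-1$ of the independent set polynomial of an $n$-vertex path whose vertices all carry activity $-1$, the standard recurrence $P_n=P_{n-1}+\lambda P_{n-2}$ specialises to $P_n=P_{n-1}-P_{n-2}$ with $P_0=1$, $P_1=0$, producing the period-six sequence $1,0,-1,-1,0,1$ and, crucially, $P_4=0$.

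For $B_i$, the path $u_i=w_0,w_1,\ldots,w_6=v_i$ in which every $w_j$ except $v_i$ carries activity $-1$ and $\lambda_{v_i}=1$, I will split into the four choices for $(u_i,v_i)\in\{\mathrm{in},\mathrm{out}\}^2$. Placing $u_i$ in forces $w_1$ out; placing $v_i$ in forces $w_5$ out. In the two mixed cases, the remaining inner vertices form an unconstrained homogeneous path on four vertices contributing $P_4=0$, giving $Z_{B_i,T_i,\{u_i\}}=\lambda_{u_i}\cdot P_4=0$ and $Z_{B_i,T_i,\{v_i\}}=\lambda_{v_i}\cdot P_4=0$. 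With both endpoints in, the middle three vertices contribute $P_3=-1$, and multiplying by $\lambda_{u_i}\lambda_{v_i}=(-1)(1)$ yields $1$. With both endpoints out, the middle five vertices contribute $P_5=1$. Hence $Z_{B_i,T_i,\emptyset}=Z_{B_i,T_i,T_i}=1$, as claimed.

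For $B_i'$, the first observation is that the edge set $\{s_iu_i,u_ix_i,x_iz_i,z_iy_i,y_iv_i,v_it_i\}$ read off from Figure~\ref{fig:equal} is itself a path of length six with endpoints $s_i,t_i$, whose vertices carry the non-uniform activity sequence $(-1,-1,-1,1,-1,-1,1)$. I again split into the four choices for $T_i'$ and, after fixing the endpoints (and forcing their neighbours out where applicable), I compute the independent set polynomial of the resulting sub-path via the recurrence $Z_n=Z_{n-1}+b_nZ_{n-2}$ for a path with non-uniform activities $b_1,\ldots,b_n$. For instance, with $T_i'=\emptyset$ one evaluates this recurrence on the sub-path carrying activities $(-1,-1,1,-1,-1)$ to obtain $Z_5=1$; the remaining cases give $0$ in the two mixed situations and $-1=\lambda$ when $T_i'=T_i'$, matching every stated identity.

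There is no serious obstacle in this argument beyond careful bookkeeping; the essential structural fact is the cancellation $P_4=0$ at $\lambda=-1$ (together with the analogous vanishing produced by the non-uniform recurrence for $B_i'$), which is precisely what allows such short path gadgets to play the role that the gadget of Lemma~\ref{lem:equalitygadget} cannot play because its normalising constant $C=\lambda(1+\lambda)$ vanishes at $\lambda=-1$.
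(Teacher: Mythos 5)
Your proof is correct and is essentially the same direct computation the paper relies on: the paper's one-line proof invokes the path decomposition \eqref{eq:45g44r2} (and \eqref{eq:tv465b6b4}) established for Lemma~\ref{lem:equalitygadget} and substitutes the relevant activities, whereas you re-derive the same values from the generic non-uniform path recurrence $Z_n=Z_{n-1}+b_nZ_{n-2}$, which is the same bookkeeping organized slightly differently. All the numerical checks (in particular $P_4=0$ at $\lambda=-1$, and $Z_5=1$, $Z_4=0$, $Z_3=1$ for the relevant sub-paths of $B_i'$) come out right.
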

\begin{proof}
Note that both graphs are paths of length six.  We can therefore use the formula in \eqref{eq:45g44r2}. The lemma therefore follows by just making the substitutions.
\end{proof}

The following lemma shows how to use the equality gadgets  to 
replace high-degree vertices with equivalent subgraphs made up of low-degree vertices.
For a graph $G$ and a vertex $v$ in $G$, we denote by $d_v(G)$ the number of neighbours of $v$.
\begin{lemma}\label{lem:equality}
Let $\lambda\in \Complex_{\neq 0}$. 
If $\lambda=-1$, let $C=1$; otherwise, let 
$C= \lambda(1+\lambda)\neq 0$.

Suppose that $G=(V,E)$ is a graph with an activity vector $\lambdab=\{\lambda_v\}_{v\in V}$ such that, for
every vertex $v\in V$, we have $\lambda_v \in \{1,\lambda\}$.
Let $U_1 = \{ v\in V \mid \lambda_v = 1\}$ and $U_\lambda = \{v\in V \mid \lambda_v=\lambda\}$. Consider an arbitrary set $S\subseteq V$.
Then, there is a set $S'$ of vertices (distinct from $V$)
and a graph $G'=(V',E')$ with an activity vector $\lambdab'=\{\lambda_v'\}_{v\in V'}$ such  that
$$Z_G(\lambdab) = \frac{Z_{G'}(\lambdab')}
{
\prod_{v\in S\cap U_1} C^{d_v(G)} 
\prod_{v\in S\cap U_\lambda} C^{d_v(G)-1}}.$$
 Furthermore,
\begin{itemize}
\item $V' = S' \cup (V\setminus S)$ and $|S'|$ is at most  $8 |V|^2$.
\item Every vertex $v\in S'$ has $d_v(G')\leq 3$ and $\lambda'_v \in \mathcal{L}(\lambda).$
If $\lambda'_v\neq \lambda$ then $d_v(G') \leq 2$.
\item Every vertex  in $V(G)\setminus S$ 
has $d_v(G)=d_v(G')$ and $\lambda_v'=\lambda_v$.
\end{itemize}
 \end{lemma}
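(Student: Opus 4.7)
The plan is to replace each vertex $v \in S$ by a cycle of equality gadgets drawn from Lemma~\ref{lem:equalitygadget} (or from Lemma~\ref{lem:minusoneeqgadget} when $\lambda = -1$) and to verify the claimed identity by a case analysis on the states of the external neighbours of $v$ in an independent set of $G'$.

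Concretely, for each $v \in S$ of degree $d := d_v(G) \geq 1$ with neighbours $u_1,\ldots,u_d$, I will construct a gadget subgraph $H_v$ by taking $d$ equality gadgets and closing them into a cycle: $d-1$ copies of $B_i$ together with one copy of $B'_i$ when $v \in U_\lambda$, or $d$ copies of $B_i$ when $v \in U_1$. Cyclically identify the right terminal of the $j$-th gadget (prescribed activity $1$) with the left terminal of the $(j+1)$-st (prescribed activity $\lambda$); the $d$ identified vertices $t_1,\ldots,t_d$ each carry combined activity $\lambda$ and cycle-degree $2$. The graph $G'$ is then obtained from $G$ by deleting every $v \in S$ together with all its incident edges, attaching a disjoint copy of each $H_v$, and, for every former edge $vu$ with $v \in S$, adding an edge from the appropriate terminal $t_i$ of $H_v$ to either $u$ (if $u \in V \setminus S$) or to the corresponding terminal of $H_u$ (if $u \in S$). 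The new-vertex set $S'$ is the union of the $H_v$'s, with activities from $\mathcal{L}(\lambda)$ prescribed by Lemma~\ref{lem:equalitygadget}/\ref{lem:minusoneeqgadget}.

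For the partition-function identity, fix an independent set $I$ of $G'$ and, for each $v \in S$, record the boundary state $\sigma_j := \mathbf{1}[u_j \in I]$. The vanishing statements $Z_{B_i,T_i,\{u_i\}} = Z_{B_i,T_i,\{v_i\}} = 0$ (with analogues for $B'_i$) force the two terminals of every gadget to share a common state, so all $t_1,\ldots,t_d$ in $H_v$ share a common state; if any $\sigma_j = 1$, the edge $t_j u_j$ then forces every $t_i$ out. Using $Z_{B_i,T_i,\emptyset} = Z_{B_i,T_i,T_i} = C$ and $Z_{B'_i,T'_i,\emptyset} = 1$, $Z_{B'_i,T'_i,T'_i} = \lambda$, and noting that each identified terminal $t_i$ contributes its activity $\lambda$ exactly once in the combined graph (the product $1 \cdot \lambda$ of the two gadget-prescribed activities equals $\lambda$, so $\prod_j Z_{G^{(j)},T,T}$ correctly computes the all-in weight of the cycle without double-counting), the cycle contributes $C^{a}$ in its all-out configuration and $C^{a} \lambda^{d-a}$ in its all-in configuration, where $a$ is the number of $B_i$ copies. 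With $a = d - 1$ for $v \in U_\lambda$ (giving $d - a = 1$) and $a = d$ for $v \in U_1$ (giving $d - a = 0$), a two-case analysis on whether all $\sigma_j = 0$ or some $\sigma_j = 1$ yields the factorisation $Z_{H_v \mid \sigma} = C^{a} \cdot Z_{v \mid \sigma}$ in both regimes; multiplying over $v \in S$ gives exactly the claimed formula.

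The structural checks are routine: each $t_i$ has activity $\lambda$ and total degree $2 + 1 = 3$; each internal vertex of a $B_i$ or $B'_i$ gadget has degree $2$ and activity in $\mathcal{L}(\lambda)$; and every vertex of $V \setminus S$ preserves its degree and activity, since its incident edges are only rerouted through adjacent terminals. Since $C = \lambda(1+\lambda) \neq 0$ when $\lambda \neq -1$ (using $\lambda \in \Cnz$) and $C = 1 \neq 0$ when $\lambda = -1$, the division in the identity is well-defined. The main subtle point is the bookkeeping of identified-terminal activities in the all-in case, which must be handled carefully to avoid double-counting; the remaining arguments are direct computation.
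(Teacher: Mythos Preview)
Your approach is correct and is essentially the same as the paper's: both replace each $v\in S$ by a cycle of equality gadgets ($d$ copies of $B_i$ when $v\in U_1$; $d-1$ copies of $B_i$ together with one $B'_i$ when $v\in U_\lambda$), identify a terminal of activity $1$ with a terminal of activity $\lambda$ so the merged terminal has activity $\lambda$, connect each merged terminal to one former neighbour of $v$, and use the vanishing of the mixed two-terminal partition functions to reduce to the all-in/all-out configurations, which match $v$'s own two states up to the factor $C^{a}$.

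Two small points where the paper differs in presentation. First, the paper reduces to $|S|=1$ and iterates; this avoids having to talk about the boundary state $\sigma_j$ when the neighbour $u_j$ is itself in $S$ (in your write-up $\sigma_j:=\mathbf{1}[u_j\in I]$ is literally ill-defined when $u_j\in S$, since $u_j\notin V(G')$; you mean the state of the corresponding terminal of $H_{u_j}$, and the local simulation argument still goes through, but the single-vertex reduction makes this automatic). Second, for $d_v(G)=1$ the paper uses a single gadget with no identification rather than the one-gadget cycle implicit in your description; your version also works, but the paper's avoids the slightly awkward self-identification of the two terminals of one gadget. Neither difference affects correctness.
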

\begin{proof}
To prove the lemma, we can assume that $|S|=1$.
(To prove the lemma for a larger set $S$ we just repeatedly apply the 
singleton-set version to each of the vertices in~$S$.)
So let $S = \{v\}$ and, for convenience, let $d=d_v(G)$ be the degree of $v$ in $G$. There are two cases.
 
\begin{itemize}
\item {\bf Case 1. $v\in U_1$.} Assume first that $d\neq 1$. In this case, $S'$ will be the union of the vertices in the gadgets $B_1,\ldots,B_{d}$ by identifying vertex $v_i$ of $B_i$ with vertex $u_{i+1}$ of $B_{i+1}$ for each $i=1,\hdots, d$ (we will use the conventions that $u_{d+1}\equiv u_1$, $B_{d+1}\equiv B_1$, and  $B_{0}\equiv B_d$).
To construct $G'$ from $G$ we  
replace $v$ with the union of
 these gadgets.
If a vertex $w$ is in exactly one of these gadgets, then the activity $\lambda'_w$ will be inherited from the gadget. Also, $w$ will have no other neighbours in $G'$ (other than the neighbours in its gadget).
Now, for $i=1,\hdots,d$, the vertex $u_i$ is in two gadgets, namely $B_{i}$ and $B_{i-1}$.
In addition to having its gadget neighbours, this vertex $u_i$ will be connected to the $i$'th neighbour of~$v$ in~$G$.
Then we will  set $\lambda'_{u_i} = \lambda$, since this is the product of the activities inherited from~$B_{i}$ and~$B_{i+1}$.
Note from Lemma~\ref{lem:equalitygadget} (and Lemma~\ref{lem:minusoneeqgadget} in the case $\lambda=-1$) that, in the resulting graph $G'$, in any
independent set of non-zero weight, either  
all $v_i$'s are in the independent set, or all $v_i$'s are missing. 
In each of these cases, we get a factor of $C^{d}$ in the partition function. The construction for $d=1$ is analogous, i.e., we replace $v$ with the gadget $B_1$, but now we do not do any identification of vertices and we use $u_1$ to connect to the neighbour of $v$ in $G$; further, every vertex in $B_1$ retains its activity in $G'$. As before, we conclude that we get a factor of $C$ in the partition function.

 \item {\bf Case 2. $v\in U_\lambda$.}\quad
 Assume first that $d\neq 1$. In this case, $S'$ will be the union of the gadgets
 $B_1,\ldots,B_{d-1},B'_{d}$. We will further identify vertex $v_i$ of $B_i$ with vertex $u_{i+1}$ for each $i=1,\hdots, d-2$; for $i=d-1$, we will identify vertex $v_{d-1}$  of $B_{d-1}$ with vertex $s_d$ of $B_d'$ and vertex $t_d$ of $B_d'$ with vertex $u_1$ of $B_1$.
To construct $G'$ from $G$ we  
we will replace $v$ with the union of
these gadgets.
The assignment of activities is the same as  in Case 1; the only difference is in the construction of the graph $G'$ where now, for $i=1,\hdots, d-1$, the $i$'th neighbour of $v$ connects to the vertex $u_i$ while the $d$'th neighbour of $v$ connects to the vertex $s_d$. From Lemma~\ref{lem:equalitygadget} (and Lemma~\ref{lem:minusoneeqgadget} in the case $\lambda=-1$), we have that, in the resulting graph $G'$, in any
independent set of non-zero weight, either  
all $v_i$'s are in the independent set, or all $v_i$'s are missing. 
If they are all in, then we get a factor of $C^{d-1}  \lambda$ in the partition function; otherwise, we get a factor of $C^{d-1} $. The construction for $d=1$ is analogous, i.e., we replace $v$ with the gadget $B_1'$, but now we do not do any identification of vertices and we use $s_1$ to connect to the neighbour of $v$ in $G$; further, every vertex in $B_1'$ retains its activity in $G'$. As before, we conclude that in any
independent set of non-zero weight, either  
all $v_i$'s are in the independent set (contributing a factor of $\lambda$), or all $v_i$'s are missing (contributing a factor of $1$). 
\end{itemize}
This concludes  the proof of Lemma~\ref{lem:equality}.
\end{proof}
  
  \subsection{\#P-hardness of the multivariate problem}

\begin{theorem}\label{thm:MV}
Let $\Delta\geq 3$ and  
$\lambda\in \CQ$ be a complex number such that
$\lambda\not\in (\LambdaD \cup \Reals_{\geq 0})$.  Then, for $ K  =1.02$,
$\FactorMVHardCore{K}$ is $\numP$-hard. Also, for $\rho = 9\pi/24$, 
$\ArgMVHardCore{\rho}$ is $\numP$-hard.
\end{theorem}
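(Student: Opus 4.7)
My plan is to give a polynomial-time Turing reduction from the $\numP$-hard problem of exactly counting independent sets in bipartite graphs of maximum degree three (call it $\#\mathrm{BIS}_3$) to each of the two approximation problems in the theorem statement.

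Let $H$ be a bipartite input graph of maximum degree three and let $N:=|\mathcal{I}_H|$ be the integer we wish to compute. Since the activity $1$ belongs to $\mathcal{L}(\lambda)$, the pair $(H,\mathbf{1})$ satisfies $Z_H(\mathbf{1})=N$; however, it does not directly qualify as an input to the multivariate approximation problem, because vertices of $H$ carrying activity $1\neq\lambda$ are required to have degree at most two. I remedy this by applying the equality gadget of Lemma~\ref{lem:equality} to the entire vertex set of $H$, obtaining a bipartite graph $G$ of maximum degree at most $\Delta$, an activity vector $\lambdab\in\mathcal{L}(\lambda)^{V(G)}$ respecting the degree restrictions, and an explicit nonzero algebraic normalising constant $C\in\CQ$ (a product of the constants from Lemma~\ref{lem:equality}) such that $Z_G(\lambdab)=C\cdot N$. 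This reduces $\#\mathrm{BIS}_3$ to exact computation of $Z_G(\lambdab)$ in the multivariate model.

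Next I must bridge the gap from exact computation to constant-factor approximation of the norm, or additive approximation of the argument. For the norm problem, I will build, for each bit index $i\in\{0,1,\ldots,\lceil\log_2 N\rceil\}$, a graph $G_i$ with activity vector $\lambdab_i\in\mathcal{L}(\lambda)^{V(G_i)}$ (obtained by combining $G$ with a small parity gadget constructed using Lemma~\ref{lem:equality} together with Lemma~\ref{lem:complexexp} or Lemma~\ref{lem:realexp}) such that $|Z_{G_i}(\lambdab_i)|$ takes one of two values, separated by a multiplicative factor strictly exceeding $1.02^2$, according to the value of the $i$-th bit of $N$. One oracle call to $\FactorMVHardCore{1.02}$ per index $i$ then determines that bit, and the collected bits recover $N$. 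A symmetric scheme handles the argument oracle, with gadgets whose partition functions differ in argument by strictly more than $2\cdot 9\pi/24 = 3\pi/4$ depending on the bit of $N$, so that additive precision $9\pi/24$ suffices.

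The main obstacle is designing the parity gadget so that the values $Z_{G_i}(\lambdab_i)$ are robustly separated in norm or argument. The crucial property enabling the construction is that $\lambda\notin\LambdaD\cup\Reals_{\geq 0}$, combined with the presence of activity $-\lambda-1\in\mathcal{L}(\lambda)$: a pendant of activity $-\lambda-1$ multiplies the contribution of its neighbour by the factor $1$ if that neighbour lies in the independent set and by $-\lambda$ otherwise, injecting a $\lambda$-dependent phase that spreads the possible values of the partition function across the complex plane. Using this mechanism together with the equality gadgets of Lemma~\ref{lem:equality} (and, where needed, the arbitrary-activity implementations of Lemmas~\ref{lem:complexexp} and~\ref{lem:realexp} to fine-tune how these phases combine) yields the required constant-gap separation. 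Liouville's inequality (Lemma~\ref{lem:use}) ensures that the algebraic values arising are always separated by at least an inverse-exponential amount, so the only real difficulty is achieving the specific constant gap demanded by the oracle precision---precisely where the exponentially-precise implementation freedom provided by Lemmas~\ref{lem:complexexp} and~\ref{lem:realexp} becomes indispensable.
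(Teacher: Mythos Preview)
Your proposal has a genuine gap at its core. You assert the existence of a ``parity gadget'' that makes $|Z_{G_i}(\lambdab_i)|$ take one of two well-separated values according to the $i$-th bit of $N$, but you never say what this gadget is or why it should exist. This is not a technical detail to be filled in later: it is the entire content of the reduction. A hard-core partition function is a sum over independent sets of products of vertex activities; it depends on the underlying graph in an additive/multiplicative way and has no natural access to the binary expansion of $|\mathcal I_H|$. Your observation about the pendant of activity $-\lambda-1$ is correct as far as it goes, but it only reweights contributions according to whether a \emph{single vertex} is in the independent set; it does not yield any information about the parity (let alone a higher bit) of the total count $N$. Nothing in Lemmas~\ref{lem:equality},~\ref{lem:complexexp} or~\ref{lem:realexp} produces a gadget whose output depends on a prescribed bit of $N$, and I do not see any mechanism that would.

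The paper's approach is fundamentally different and is the idea you are missing. Rather than trying to read bits of $N$, one uses Lemmas~\ref{lem:complexexp}/\ref{lem:realexp} to build, for each rational $M\in[0,2^n]$, a gadget $J_M$ implementing the activity $-M$ with small constant accuracy. Attaching the terminal of $J_M$ to every vertex of $H$ (and then applying Lemma~\ref{lem:equality} to repair degrees) yields an instance whose partition function equals a known nonzero constant times $\lambda_M+N$, where $\lambda_M\approx -M$. A factor-$1.02$ norm oracle (respectively, an argument oracle with additive error $9\pi/24$) applied to this instance therefore reveals approximate information about $|N-M|$ (respectively, the sign of $N-M$). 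The paper then runs a binary search over $M$: for the norm version it compares $|N-M_i|$ across a small grid of $M_i$'s and infers from the pattern of differences how to shrink the interval containing $N$; for the argument version it simply reads off whether $N$ is above or below each test point. In both cases $O(n)$ oracle calls recover $N$ exactly. The crucial trick is thus \emph{comparison against a planted threshold $M$} via an implemented activity $-M$, not bit extraction.
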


\begin{proof}
Counting  the number of independent sets of an input graph is a well-known $\numP$-hard problem. We will reduce this to both problems. To do this, let $H$ be  an $n$-vertex graph, and let $N=Z_{H}(1)$ denote the number of independent sets of~$H$. Our goal is to 
show how to use an oracle for 
$\FactorMVHardCore{K}$
or
$\ArgMVHardCore{\rho}$ to
compute $N$. Let $\epsilon = 1/50$ and $\eta = 1/35$.

 Consider any   rational number $M$ in the range $ 0 \leq M \leq 2^n$.
Let $J_M$ be a graph with terminal $v$ 
and maximum degree at most~$\Delta$
that implements $-M$ with accuracy~$\epsilon$; note, by applying Proposition~\ref{lem:complexexp} and Proposition~\ref{lem:realexp} (for complex and real $\lambda$, respectively), there is an algorithm to construct $J_M$ in time $poly(\size{M,\epsilon})=poly(\size{M})$ that also outputs the exact values of $\Zin_{J_M,v}(\lambda),\Zout_{J_M,v}(\lambda)$. Let
\begin{equation}\label{eq:searchtwo}
\lambda_M := {\Zin_{J_M,v}(\lambda)}/{\Zout_{J_M,v}(\lambda)}, \mbox{ and note that } \Zout_{J_M,v}(\lambda)\neq 0 \mbox{ and } |\lambda_M+M|\leq \epsilon.
\end{equation}
Let $G_M$ be the graph 
formed by taking the disjoint union of $H$ and $J_M$ and
attaching the terminal $v$ of $J_M$ to every vertex in~$H$.
Let $\lambdab_M$ be the activity vector for $G_M$ obtained by setting the activities of vertices originally belonging to $H$ equal to 1 and the activities of vertices originally belonging to $J_M$ equal to $\lambda$.
Note that 
\begin{equation}\label{eq:4gb564h4vvw}
Z_{G_M}(\lambdab_M) = \Zin_{J_M,v}(\lambda) + \Zout_{J_M,v}(\lambda) Z_H(1)=\Zin_{J_M,v}(\lambda) + \Zout_{J_M,v}(\lambda) N.
\end{equation}

Let $G'_M$ and $\lambdab'_M$ be the graph 
and activity vector
constructed by applying Lemma~\ref{lem:equality} to $G_M$
with $S = V(H) \cup \{v\}$. 
Note that the size of $G_M$
is at most a polynomial in $\size{M}$ so
 (from Lemma~\ref{lem:equality})
 the size of $G'_M$ is also at most a polynomial in $\size{M}$.
From Lemma~\ref{lem:equality}, we have that
\begin{equation}\label{eq:6bh56n6h7}
Z_{G_M}(\lambdab_M) = Z_{G'_M}(\lambdab'_M)/W_H,\mbox{ where } W_H:=C^{n-1} 
\prod_{v\in V(H)} C^{d_v(H)},
\end{equation}
and $C$ is the constant in Lemma~\ref{lem:equality}.
Furthermore, every vertex $v$ 
of $G'_M$
has degree at most~$\Delta$ and every vertex $v$ of $G'_M$
with $\lambda'_v\neq \lambda$ has
degree at most~$2$ in~$G'_M$.
Thus, $G'_M$ is a valid input to 
 $\FactorMVHardCore{K}$ and $\ArgMVHardCore{\rho}$. Moreover, combining \eqref{eq:4gb564h4vvw},\eqref{eq:6bh56n6h7} and dividing through by $ \Zout_{J_M,v}(\lambda) $,
we have 
\begin{equation}
\label{eq:searchone}
\lambda_M +    N=\frac{Z_{G'_M}(\lambdab'_M)}{W_H\, \Zout_{J_M,v}(\lambda)}=:f_M.
\end{equation}
 
\noindent  {\bf Part one: } \#P-hardness of $\FactorMVHardCore{1.02}$.
   
By the triangle inequality (in the form $\big||a|-|b|\big|\leq |a+b|$),  we have
$$\big||f_M|- |M-N|\big|\leq |\lambda_M + N+M-N|= |\lambda_M + M| \leq \epsilon,$$
where in the last inequality we used \eqref{eq:searchtwo}. Therefore, $|N-M|-\epsilon \leq |f_M| \leq |N-M|+\epsilon$.

Consider $M$ so that
 $|N-M|\geq1$. Then
$|f_M|$ is not $0$. 
From the definition of~$f_M$ in~\eqref{eq:searchone}, 
this means that
$|Z_{G'_M}(\lambdab'_M)|\neq 0$.
Using an oracle for $\FactorMVHardCore{$1.02$}$ 
we can produce
an estimate for $|Z_{G'_M}(\lambdab'_M)|$ within a factor of $1.02$.
We can also obtain 
an estimate of the value 
$|W_H  \Zout_{J_M,v}(\lambda) |$ within a factor of $(1+7\eta/8)/1.02=1.025/1.02$, since $\Zout_{J_M,v}(\lambda)$ is
output by 
the algorithm from
Proposition~\ref{lem:complexexp} and Proposition~\ref{lem:realexp}. Combining these,  we obtain an estimate $\hat{f}_M$ for
$|f_M|$ 
satisfying
$(1-\eta) |f_M| \leq   \hat{f}_M \leq   (1+\eta) |f_M|$.
We now use the binary search technique of \cite{ComplexIsing}.

 The invariant that we will maintain is that we have an interval 
 $[M_{\text{start}},M_{\text{end}}]$ 
 of real numbers
 with $M_{\text{start}} \leq N \leq M_{\text{end}}$.
 Initially, $M_{\text{start}} =  0$ and $M_{\text{end}}=2^n$.
 Let $\ell=M_{\text{end}}-M_{\text{start}}$.
 If $\ell < 1$ then there is only one integer 
 between $M_{\text{start}}$ and $M_{\text{end}}$, so the value of~$N$ is known.
 
 Suppose $\ell\geq 1$. 
 For $i\in\{0,\ldots, 8\}$,
 let $M_i = M_{\text{start}} + i \ell/ 8$.
 For $i\in\{0,\ldots, 7\}$,
 let $s_i$ be the sign (positive, negative, or zero) of $\hat{f}_{M_i} - \hat{f}_{M_{i+1}}$.
 
 First, consider $i\in \{0,\ldots, 7\}$ and suppose 
 $N\geq M_{i+2}$.
 Then,
  \begin{align*}
 \hat{f}_{M_i} - \hat{f}_{M_{i+1}} &\geq
 (1-\eta) |f_{M_i}| - (1+\eta) | f_{M_{i+1}}|\\
 &\geq (1-\eta)(N-M_i - \epsilon) - (1+\eta) (N-M_i-\ell/ 8+\epsilon) \\
 &=  (1+\eta)\ell/ 8- 2 \eta (N-M_i)  - 2 \epsilon\\
 &\geq  (1+\eta)\ell/ 8 - 2 \eta  \ell  - 2 \epsilon.
 \end{align*}
 This is positive since $2\eta < (1+\eta)/16$ and  
 $\epsilon \leq \eta \leq \eta \ell$,  so $s_i$ is positive.
 Similarly, if $N \leq M_{i-1}$ then
  \begin{align*}
 \hat{f}_{M_{i}} - \hat{f}_{M_{i+1}} &\leq
 (1+\eta) |f_{M_i}| - (1-\eta) | f_{M_{i+1}}|\\
 &\leq (1+\eta)(M_i-N + \epsilon) - (1-\eta) (M_i-N+\ell/ 8-\epsilon) \\
 &=  -(1-\eta)\ell/ 8 + 2 \eta ( M_i-N)  + 2 \epsilon\\
   &\leq  -(1-\eta)\ell/ 8 + 2 \eta  \ell  + 2 \epsilon,
 \end{align*} 
 so $s_i$ is negative.
  
Now,  consider $i^*$ so that
 $M_{i^*} \leq N \leq M_{i^*+1}$.
Then $s_0,\ldots, s_{i^*-2}$ are plus
and $s_{i^*+2},\ldots, s_{ 7}$ are minus.
So we have a (possibly non-empty) sequence of pluses followed by three
arbitrary signs followed by a (possibly non-empty) sequence of minuses.
If there are  three minuses in a row at the end of the sequence,
we must have  
$N\leq M_{ 7}$ (otherwise  $s_5$ would have been a plus).
So we can shrink the interval by
redefining $M_{\text{end}}$ to be $M_{ 7}$.
Otherwise, there are three pluses in a row at the beginning of the sequence.
This means that $N\geq M_1$ (otherwise $s_2$ would have been negative). So, in this case, we can shrink the interval by
redefining $M_{\text{start}}$ to be $M_1$.
Either way, the interval shrinks to $ 7/ 8$ of its original size, so we can recurse
on the new interval; after at most $poly(n)$ steps, we will have $M_{\text{end}}-M_{\text{start}}<1$, which gives us the exact value of $N$.

\noindent  {\bf Part two: } \#P-hardness of $\ArgMVHardCore{9 \pi/24}$.
   
 Consider any   rational number $M$ in the range $ 0 \leq M \leq 2^n$.   
 We will show that, if $N> M+1/7$ then
there is an $a\in \arg(f_M)$ such that
$-\pi/12 <  a <   \pi/12$.
Also, if $N<M-1/7$ then
there is an $a\in \arg(f_M)$ such that
  $\pi - \pi/12 <  a  <\pi + \pi/12$.

To prove these claims, let $x_M = \lambda_M + M$ so by~\eqref{eq:searchtwo},
$|x_M| \leq \epsilon$.  
Then by~\eqref{eq:searchone},
$f_M = \lambda_M + N = x_M + N-M $.
Suppose $N>M+1/7$ and
consider $\theta \in   \arg(f_M) = \arg(x_M + N-M)$.
For concreteness (by adding integer multiples of $2\pi$ if necessary),
suppose that $\theta$ is in the range $[-\pi,\pi)$.
Then $ \tan(\theta)\leq \frac{\epsilon}{N-M} \leq 7 \epsilon$.
But $\tan(\pi/12) > 0.26 > 7\epsilon$.
So 
$\theta \leq \pi/12$.  
Similarly, $\theta \geq -\pi/12$.
The case $M>N+1/7$  is similar (restricting $\theta$ to $[0,2\pi)$).

Now suppose 
$|N-M|>1/7$. 
Since $f_M = x_M + N - M$, we have $f_M\neq 0$.
Since we can compute the value of   
$W_H \Zout_{J_M,v}(\lambda)$ exactly,
we can also compute   $\Arg(W_H \Zout_{J_M,v}(\lambda) )$  
within $\pm \pi/48$.
Using
an oracle for   $\ArgMVHardCore{9\pi/24}$ 
 with  input $(G'_M,\lambdab'_M)$
 we thus  
obtain an estimate $\hat{A}_M$
such that, for some $a\in \arg(f_M)$,
 $|\hat{A}_M -  a| \leq 9\pi/24+\pi/48=19\pi/48$.
 
As in Part One, we now do binary search.
Again, the invariant that 
 we will maintain is that we have an interval 
 $[M_{\text{start}},M_{\text{end}}]$ 
 of real numbers
 with $M_{\text{start}} \leq N \leq M_{\text{end}}$.
 Initially, $M_{\text{start}} =  0$ and $M_{\text{end}}=2^n$.
 Let $\ell=M_{\text{end}}-M_{\text{start}}$.
 If $\ell < 1$ then there is only one integer 
 between $M_{\text{start}}$ and $M_{\text{end}}$, so the value of~$N$ is known. 

 Suppose $\ell\geq 1$. 
 For $i\in\{0,\ldots, 6\}$,
 let $M_i = M_{\text{start}} + i \ell/ 6$.
 Let $s_i$ be  minus if 
 there is an integer~$j$ such that 
  $\pi/2 < \hat{A}_{M_i} + 2 \pi j  < 3\pi/2$.
  Let  $s_i$ be plus if 
  there is an integer~$j$ such that
  $-\pi/2 < \hat{A}_{M_i} + 2 \pi j < \pi/2$.
  If neither of these occurs, then $s_i$ is undefined.
 
If $N\leq M_{i-1}$ then $s_i$ is  minus.
If $N\geq  M_{i+1}$ then $s_i$ is  plus.
So $s_0,\ldots,s_6$ consists of a (possibly empty) sequences of  pluses followed by one unknown value
followed by a (possibly empty) sequence of  minuses.

Suppose that  $s_0$, $s_1$ and $s_2$ are all  plus.
Then $N\geq M_1$ (otherwise $s_2$ would be  minus).
Otherwise, $s_4$, $s_5$ and $s_6$ are all minus.
In this case, $N\leq M_5$ (otherwise $s_4$ would be plus). 
Either way, we can shrink the interval to $5/6$ of its original length, so, as in Part one, we can discover the value of $N$.
\end{proof}

\subsection{Restricting to bipartite graphs}\label{sec:MVtobip}

In this section, we reduce the problems $\FactorMVHardCore{K}$ and
$\ArgMVHardCore{\rho}$ to multivariate versions whose inputs are further restricted to bipartite graphs. To do this, we will need to enlarge slightly the set of activities that a vertex can have. Namely, let
\begin{equation}\label{eq:Lbiplam}
\Lbip(\lambda) := \mathcal{L}(\lambda)\cup \{-2,-1/4\}= \{\lambda,-\lambda-1,-1,1,-2,-1/4\}.
\end{equation}
We will consider the following problems.
 \prob{
$\FactorMVBipHardCore{K}$.} { A bipartite graph $G$ with maximum degree at
most $\Delta$. An activity vector $\lambdab=\{\lambda_v\}_{v\in V}$, such that, for each $v\in V$, $\lambda_v \in \Lbip(\lambda)$.
For every vertex $v\in V$ with $\lambda_v\neq \lambda$, the degree of~$v$ in~$G$ must be at most~$2$.} 
{   If $|Z_G(\lambdab)|=0$  then the algorithm may output any rational number. Otherwise,
 it must output a  rational number $\widehat{N}$ such that
$\widehat{N}/K \leq
|Z_{G}(\lambdab)|\leq K \widehat{N}$. }  
 \prob{
$\ArgMVBipHardCore{\rho}$.} {A bipartite graph $G=(V,E)$ with maximum degree at
most $\Delta$. An activity vector $\lambdab=\{\lambda_v\}_{v\in V}$ such that, for each $v\in V$, $\lambda_v \in \Lbip(\lambda)$.
For every vertex $v\in V$ with $\lambda_v\neq \lambda$, the degree of~$v$ in~$G$ must be at most~$2$.
 }
{ If $Z_G(\lambdab)=0$ then the algorithm may output any rational number. Otherwise, it must output  
a rational number $\widehat{A}$ such that, for some $a\in \arg(Z_{G}(\lambdab))$,
$ |\widehat{A} - a| \leq \rho$.  } 

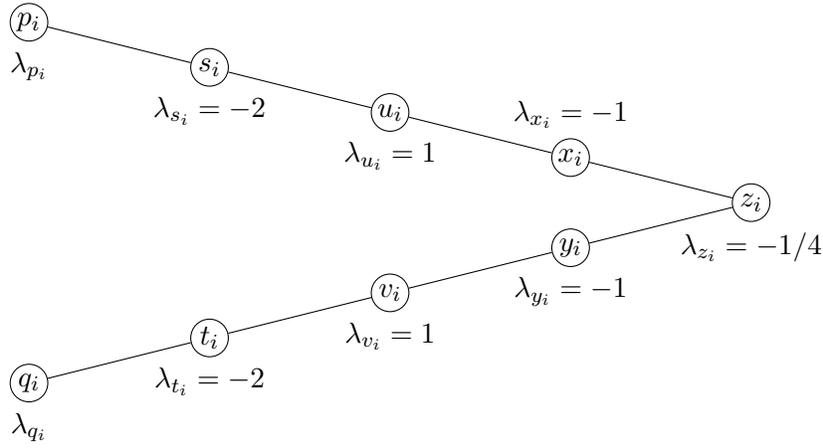
\begin{figure}[h]
\begin{center}
{
\tikzset{lab/.style={circle,draw,inner sep=0pt,fill=none,minimum size=5mm}} 
\begin{tikzpicture}[xscale=0.8,yscale=0.8]
\draw (-6,1.5) node[lab, label={below:$\lambda_{p_i}$}] (m1a) {$p_i$};
\draw (-6,-4.5) node[lab, label={below:$\lambda_{q_i}$}] (m1b) {$q_i$};
\draw (-3,0.75) node[lab, label={below:$\lambda_{s_i}= -2$}] (0a) {$s_i$};
\draw (-3,-3.75) node[lab, label={below:$\lambda_{t_i}= -2$}] (0b) {$t_i$};
\draw (0,0) node[lab, label={below:$\lambda_{u_i}= 1$}] (1) {$u_i$};
\draw (0,-3) node[lab, label={below:$\lambda_{v_{i}}= 1$}] (2) {$v_{i}$};
\draw (6,-1.5) node[lab, label={below:$\lambda_{z_i}=-1/4$}] (5) {$z_i$};
\draw (3,-0.75) node[lab, label={above:$\lambda_{x_i}=-1$}] (3) {$x_i$};
\draw (3,-2.25) node[lab, label={below:$\lambda_{y_i}=-1$}] (4) {$y_i$};
\draw (m1a) -- (0a) -- (1) -- (3) -- (5);
\draw (m1b) -- (0b) -- (2) -- (4) -- (5); 
\end{tikzpicture}
}
\end{center}
\caption{The  gadget $B_i''$  with activity vector $\lambdab''$ used in the proof of Lemma~\ref{lem:bipartitegadget}.   }
\label{fig:bip}
\end{figure} 
\begin{lemma}\label{lem:bipartitegadget}
Let $B_i''$ be the graph in Figure~\ref{fig:bip} with activity vector $\lambdab$, and set $T_i''=\{p_i,q_i\}$. Then, for any value of the activities $\lambda_{p_i},\lambda_{q_i}\in \Complex$, it holds that
$$Z_{B_i,T_i'',\emptyset}(\lambdab'') = 1, \quad Z_{B_i'',T_i'',\{p_i\}}(\lambdab'') =\lambda_{p_i},\quad  Z_{B_i,T_i'',\{q_{i}\}}(\lambdab'')=\lambda_{q_i}, \quad Z_{B_i'',T_i'',T_i''}(\lambdab'')= 0.$$
\end{lemma}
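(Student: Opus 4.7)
The plan is to exploit the structure of $B_i''$: the vertex $z_i$ is the unique cut-vertex joining two symmetric length-four paths, namely the upper path $P_u=p_i$--$s_i$--$u_i$--$x_i$ and the lower path $P_\ell=q_i$--$t_i$--$v_i$--$y_i$, both of which become attached to $z_i$ through the edges $x_iz_i$ and $y_iz_i$. Since $p_i,q_i,z_i$ lie in different components of $B_i''\setminus\{z_i\}\cup\{p_i,q_i\}$ (well, after pinning $p_i,q_i,x_i,y_i,z_i$), I will first compute path-partition functions for $P_u$ with the status of $p_i$ and $x_i$ pinned, and then multiply.

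Concretely, for $a,b\in\{\mathrm{in},\mathrm{out}\}$ let $L(a,b)$ denote the sum of weights of independent sets of $P_u$ in which $p_i$ has status $a$ and $x_i$ has status $b$. Using $\lambda_{s_i}=-2,\lambda_{u_i}=1,\lambda_{x_i}=-1$, a direct enumeration gives
\begin{equation*}
L(\mathrm{in},\mathrm{in})=-\lambda_{p_i},\ L(\mathrm{in},\mathrm{out})=2\lambda_{p_i},\ L(\mathrm{out},\mathrm{in})=1,\ L(\mathrm{out},\mathrm{out})=1+\lambda_{s_i}+\lambda_{u_i}=0,
\end{equation*}
where the last equality encodes the crucial ``cancellation'' $1+(-2)+1=0$. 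Define $M(a,b)$ analogously for $P_\ell$; by symmetry the same formulas hold with $\lambda_{p_i}$ replaced by $\lambda_{q_i}$. Finally the contribution of $z_i$ depends on $(x_i,y_i)$: it is $1+\lambda_{z_i}=3/4$ when both $x_i,y_i$ are out of the independent set, and $1$ otherwise.

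Putting the pieces together, for any fixed status $T_\textin\subseteq T_i''$ of $(p_i,q_i)$,
\begin{equation*}
Z_{B_i'',T_i'',T_\textin}(\lambdab'')=\!\!\!\sum_{b_1,b_2\in\{\mathrm{in},\mathrm{out}\}}\!\!\!L(\cdot,b_1)\,M(\cdot,b_2)\cdot\bigl[b_1=b_2=\mathrm{out}\bigr]\cdot\tfrac{3}{4}+L(\cdot,b_1)M(\cdot,b_2)\cdot\bigl[\text{else}\bigr],
\end{equation*}
where the $(\cdot)$ records the chosen status from $T_\textin$. The four cases then reduce to simple arithmetic: for $T_\textin=\emptyset$ only the $(b_1,b_2)=(\mathrm{in},\mathrm{in})$ term survives (since $L(\mathrm{out},\mathrm{out})=0$) and yields~$1$; for $T_\textin=\{p_i\}$ the sum collapses to $-\lambda_{p_i}+2\lambda_{p_i}=\lambda_{p_i}$; analogously for $\{q_i\}$; and for $T_\textin=T_i''$ the four non-vanishing contributions give
\begin{equation*}
\lambda_{p_i}\lambda_{q_i}\bigl((-1)(-1)+(-1)(2)+(2)(-1)+(2)(2)\cdot\tfrac{3}{4}\bigr)=\lambda_{p_i}\lambda_{q_i}(1-2-2+3)=0.
\end{equation*}

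There is no real obstacle here — the lemma is a carefully engineered finite identity, and once the decomposition at $z_i$ is made, the verification reduces to the four displayed arithmetic calculations. The only mild care needed is to ensure that the activities $\lambda_{p_i},\lambda_{q_i}$, which are not fixed by the lemma, never feed back into $L$ or $M$ except as the overall factors above; this is immediate because $p_i$ has degree one in $P_u$ (so its activity enters only when $p_i\in I$) and likewise for $q_i$.
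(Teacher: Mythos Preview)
Your proof is correct. The paper takes a different decomposition: it peels the gadget from the outside in, first computing the four restricted partition functions of the inner gadget $B_i$ (pinning $T_i=\{u_i,v_i\}$), then of $B_i'$ (pinning $T_i'=\{s_i,t_i\}$), and finally of $B_i''$, each step using the ``append a pendant'' relations of the form~\eqref{eq:45g44r2}. You instead decompose at the central vertex $z_i$, splitting $B_i''$ into the two symmetric four-vertex paths $P_u,P_\ell$ and a $z_i$-factor. Both are routine finite verifications; your route exploits the left--right symmetry of the gadget more directly and makes the role of the identity $1+\lambda_{s_i}+\lambda_{u_i}=0$ (which is exactly what forces $L(\mathrm{out},\mathrm{out})=0$ and hence kills most cross-terms) more transparent, whereas the paper's route has the advantage of reusing the computations already done for the earlier equality gadgets $B_i,B_i'$.
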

\begin{proof}
Note that $B_i''$ is obtained from the path $B_i'$ in Figure~\ref{fig:equal} by appending the vertices $p_i,q_i$, and in turn $B_i'$ is obtained from the path $B_i$ in Figure~\ref{fig:equal} by appending the vertices $s_i,t_i$. As in Lemma~\ref{lem:equalitygadget}, we denote $T_i'=\{s_i,t_i\}$ and $T_i=\{u_i,v_i\}$. Completely analogously to \eqref{eq:45g44r2}, we have  
\begin{equation}\label{eq:45g44r2b}
\begin{aligned}
Z_{B_i'',T_i'',\emptyset}(\lambdab'')& = Z_{B_i',T_i',\emptyset}(\lambdab'')+Z_{B_i',T_i',\{s_i\}}(\lambdab'')+Z_{B_i',T_i',\{t_i\}}(\lambdab'')+Z_{B_i',T_i',T_i'}(\lambdab''),\\
Z_{B_i'',T_i'',\{p_i\}}(\lambdab'')&=\lambda_{p_i}\big(Z_{B_i',T_i',\emptyset}(\lambdab'')+Z_{B_i',T_i',\{t_i\}}(\lambdab'')\big),\\
Z_{B_i'',T_i'',\{q_{i}\}}(\lambdab'')&= \lambda_{q_i}\big(Z_{B_i',T_i',\emptyset}(\lambdab'')+Z_{B_i',T_i',\{s_i\}}(\lambdab'')\big),\\
Z_{B_i'',T_i'',T_i''}(\lambdab'')&=\lambda_{p_i}\lambda_{q_i}Z_{B_i',T_i',\emptyset}(\lambdab').
\end{aligned} 
\end{equation}
To compute the r.h.s. in \eqref{eq:45g44r2b} we will use \eqref{eq:45g44r2}  which expresses the relevant quantities in terms of the gadget  $B_i$ in Figure~\ref{fig:equal}. Using \eqref{eq:tv465b6b4}  and \eqref{eq:45g44r2}, we therefore obtain
\begin{align*}
&Z_{B_i,T_i,\emptyset}(\lambdab'')=-1/4,& &Z_{B_i,T_i,\{u_i\}}(\lambdab'')=Z_{B_i,T_i,\{v_i\}}(\lambdab'')=-1/4,& Z_{B_i,T_i,T_i}(\lambdab'')=3/4,\\
&Z_{B_i',T_i',\emptyset}(\lambdab'')=0,& &Z_{B_i',T_i',\{s_i\}}(\lambdab')=Z_{B_i',T_i',\{t_i\}}(\lambdab'')=1,& Z_{B_i',T_i',T_i'}(\lambdab'')=-1.
\end{align*}
Plugging this into \eqref{eq:45g44r2b} concludes the proof of the lemma.
\end{proof}

\begin{theorem}\label{thm:BipMV}
Let $\Delta\geq 3$ and  
$\lambda\in \CQ$ be a complex number such that
$\lambda\not\in (\LambdaD \cup \Reals_{\geq 0})$.  

Then, for $ K  =1.02$,
$\FactorMVBipHardCore{K}$ is $\numP$-hard. Also, for $\rho = 9\pi/24$, 
$\ArgMVBipHardCore{\rho}$ is $\numP$-hard.
\end{theorem}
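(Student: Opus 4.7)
My plan is to reduce from the non-bipartite problems $\FactorMVHardCore{K}$ and $\ArgMVHardCore{\rho}$, which are $\numP$-hard by Theorem~\ref{thm:MV}, to their bipartite restrictions, using the gadget $B_i''$ from Lemma~\ref{lem:bipartitegadget}. The essential property of this gadget is that, after factoring out the terminal activities, its terminal-weight table
\[
\bigl(Z_{B_i'',T_i'',\emptyset},\ Z_{B_i'',T_i'',\{p_i\}},\ Z_{B_i'',T_i'',\{q_i\}},\ Z_{B_i'',T_i'',T_i''}\bigr)(\lambdab'')=(1,\lambda_{p_i},\lambda_{q_i},0)
\]
coincides with that produced by a single hard-core edge between vertices of activities $\lambda_{p_i}$ and $\lambda_{q_i}$; crucially, $B_i''$ is a tree whose unique $p_i$--$q_i$ path has the \emph{even} length $8$.

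Given an input $(G,\lambdab)$ of $\FactorMVHardCore{K}$ (resp.\ $\ArgMVHardCore{\rho}$), I will build $(G',\lambdab')$ by replacing each edge $e=\{u,v\}$ of $G$ with a fresh copy of $B_e''$, identifying $p_e$ with $u$ and $q_e$ with $v$. Original vertices keep their activities from $\mathcal{L}(\lambda)\subseteq \Lbip(\lambda)$; internal gadget vertices receive the activities prescribed in Figure~\ref{fig:bip}, all lying in $\{-2,-1,1,-1/4\}\subseteq \Lbip(\lambda)$. Since every cycle in $G'$ arises from a cycle of $G$ with each edge subdivided into a path of length $8$, all cycles of $G'$ have even length and $G'$ is bipartite. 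The degree constraints also hold: because $p_e$ and $q_e$ have degree one inside their gadget, each original vertex $v$ retains its degree in $G$ (hence at most $\Delta$ if $\lambda_v=\lambda$ and at most $2$ otherwise), while every internal vertex has degree exactly two in $G'$, satisfying the degree-at-most-two requirement for vertices with activity in $\Lbip(\lambda)\setminus\{\lambda\}$.

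What remains is the identity $Z_{G'}(\lambdab')=Z_G(\lambdab)$. Writing the partition function of $G'$ as an outer sum over the restriction $S\subseteq V(G)$ of an independent set to the original vertices, the internal configurations of the gadgets decouple and factor per edge. By Lemma~\ref{lem:bipartitegadget}, after moving the terminal activities into the outer product $\prod_{v\in S}\lambda_v$, each edge $e=\{u,v\}$ contributes $1$ when at most one of $u,v$ lies in $S$ and $0$ otherwise, which is precisely the edge indicator of $G$; hence
\[
Z_{G'}(\lambdab')=\sum_{S\in \mathcal{I}_G}\prod_{v\in S}\lambda_v = Z_G(\lambdab).
\]
An oracle for $\FactorMVBipHardCore{1.02}$ (resp.\ $\ArgMVBipHardCore{9\pi/24}$) therefore solves $\FactorMVHardCore{1.02}$ (resp.\ $\ArgMVHardCore{9\pi/24}$) with the same parameters, and Theorem~\ref{thm:MV} delivers the two claimed $\numP$-hardness results. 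No serious obstacle arises here: the only point requiring attention is the bookkeeping to avoid double-counting of the terminal activities in the partition-function expansion, which is exactly what the normalization built into Lemma~\ref{lem:bipartitegadget} is designed to handle.
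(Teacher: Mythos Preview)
Your proposal is correct and follows essentially the same approach as the paper: reduce from $\FactorMVHardCore{1.02}$ and $\ArgMVHardCore{9\pi/24}$ by replacing each edge of $G$ with a copy of the gadget $B_i''$ from Lemma~\ref{lem:bipartitegadget}, observe that the resulting graph is bipartite with all activities in $\Lbip(\lambda)$ and with the required degree constraints, and use the terminal-weight table of $B_i''$ to conclude $Z_{G'}(\lambdab')=Z_G(\lambdab)$. The paper's own proof is a three-line version of exactly this argument; your write-up simply fills in the bipartiteness, degree, and partition-function bookkeeping that the paper leaves implicit.
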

\begin{proof}
Let $G=(V,E)$ be an $n$-vertex graph with maximum degree at most $\Delta$. Suppose that $\lambdab=\{\lambda_v\}_{v\in V}$ is an activity vector for~$G$ such that,
for each $v\in V$, $\lambda_v \in \mathcal{L}(\lambda)$. Let $e_1,\hdots, e_m$ be an arbitrary enumeration of the edges of $G$ and suppose that $e_i=\{p_i,q_i\}$, where $p_i$ and $q_i$ are vertices of $G$.  Let $H$ be the bipartite graph constructed from~$G$ by  
replacing every edge $e_i$ of $G$ with the gadget $B_i''$ from Figure~\ref{fig:bip} and denote by $\lambdab'$ the resulting activity vector on $H$ (every vertex originally in $G$ retains its activity in $H$). Observe that every vertex activity in $H$ is from the set $\Lbip(\lambda)$. Moreover, $Z_G(\lambdab)= Z_H(\lambdab')$. The result therefore follows from Theorem~\ref{thm:MV}.
\end{proof}

We will need the following lower bound on $Z_G(\lambdab)$, which follows from Lemma~\ref{lem:use}. 
\begin{lemma}
\label{lem:LB}
Suppose that $\lambda\in \CQ$.
Then,  there exists a  rational $C_\lambda>1$ such that the following holds. For any $n$-vertex graph $G=(V,E)$
and any activity vector $\lambdab=\{\lambda_v\}_{v\in V}$ such that $\lambda_v\in \Lbip(\lambda)$ for all $v\in V$, it holds that either $Z_G(\lambdab)=0$ or
$|Z_G(\lambdab)| > C_\lambda^{-n}$.
\end{lemma}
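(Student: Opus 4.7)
\medskip
\noindent\textbf{Proof plan for Lemma~\ref{lem:LB}.}
The plan is to view $Z_G(\lambdab)$ as the evaluation at $\lambda$ of a univariate polynomial with rational coefficients whose degree and coefficient size are controlled in terms of $n$, and then invoke Liouville's inequality (Lemma~\ref{lem:use}). Since $\lambda\in\CQ$, the real and imaginary parts of $\lambda$ are rational, so $\lambda$ is algebraic over $\QQ$ of degree at most~$2$; let $d\in\{1,2\}$ denote this degree.

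First, define the formal polynomial $P(x)\in\QQ[x]$ by replacing each vertex activity in the definition of $Z_G(\lambdab)$ with the corresponding expression in $x$: for each independent set $I\in\mathcal{I}_G$, let $n_1(I),n_2(I),n_3(I),n_4(I),n_5(I),n_6(I)$ count the vertices of $I$ with activity $\lambda,-\lambda-1,-1,1,-2,-1/4$, respectively, and set
\[
P(x) := \sum_{I\in\mathcal{I}_G} x^{n_1(I)}(-x-1)^{n_2(I)}(-1)^{n_3(I)}(-2)^{n_5(I)}(-1/4)^{n_6(I)}.
\]
Then $Z_G(\lambdab) = P(\lambda)$, and clearly $\deg P \leq n$. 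The denominators in the coefficients of $P$ come only from the factors $(-1/4)^{n_6(I)}$, so $Q(x) := 4^n P(x)$ has integer coefficients and $\deg Q \leq n$.

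Second, I will bound the height $H(Q)$. Expanding $(-x-1)^{n_2(I)}$ binomially, the contribution of each independent set $I$ to the coefficient of $x^k$ in $P(x)$ has absolute value at most $\binom{n_2(I)}{k-n_1(I)}\cdot 2^{n_5(I)}\cdot (1/4)^{n_6(I)} \leq 2^{n_2(I)+n_5(I)}\cdot (1/4)^{n_6(I)} \leq 2^n$. Summing over the at most $2^n$ independent sets of $G$ gives $|[x^k]P(x)|\leq 4^n$ for every~$k$, so $H(Q)\leq 4^n \cdot 4^n = 16^n$.

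Third, if $Z_G(\lambdab)\neq 0$ then $Q(\lambda)\neq 0$, so Lemma~\ref{lem:use} applies, giving
\[
|Q(\lambda)| \;\geq\; c_\lambda^{-n}\,(n+1)\,H(Q)^{-d+1} \;\geq\; c_\lambda^{-n}\,16^{-(d-1)n} \;\geq\; c_\lambda^{-n}\,16^{-n},
\]
where $c_\lambda>1$ is the constant from Lemma~\ref{lem:use}. Dividing by $4^n$ yields
\[
|Z_G(\lambdab)| \;=\; |P(\lambda)| \;\geq\; (64\,c_\lambda)^{-n}.
\]
Choosing any rational $C_\lambda$ with $C_\lambda > 64\,c_\lambda$ then gives $|Z_G(\lambdab)|>C_\lambda^{-n}$, completing the proof. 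There is no real obstacle here; the only careful step is the combinatorial bound on $H(Q)$, which I have kept loose (using $2^n$ independent sets and crude binomial estimates) because only an exponential lower bound on $|Z_G(\lambdab)|$ is required.
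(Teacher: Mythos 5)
Your proof follows essentially the same route as the paper's: rewrite $Z_G(\lambdab)$ as the evaluation at $\lambda$ of a degree-$\le n$ polynomial, clear denominators by multiplying by $4^n$ to get an integer polynomial with height bounded by $16^n$, and apply Liouville's inequality (Lemma~\ref{lem:use}). The height bound you derive matches the paper's. One small slip: you transcribed the Liouville bound as $c_\lambda^{-n}\,(n+1)\,H(Q)^{-d+1}$, but the lemma states $c_\lambda^{-n}\,\big((n+1)H(Q)\big)^{-d+1}$, with the exponent $-d+1$ applied to the product $(n+1)H(Q)$, not just to $H(Q)$. For $d=2$ this weakens the lower bound by a factor $(n+1)$, not strengthens it; but since $n+1\le 2^n$ this is absorbed into the exponential base (yielding, say, $(128\,c_\lambda)^{-n}$ in place of $(64\,c_\lambda)^{-n}$), so your conclusion still holds with a slightly larger choice of $C_\lambda$.
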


\begin{proof} 
Let $\lambda_1=-\lambda-1$, $\lambda_2=-1$, $\lambda_3=1$, $\lambda_4=-2$ and $\lambda_5=-1/4$, so that $\Lbip(\lambda)=\{\lambda,\lambda_1,\hdots,\lambda_5\}$.

Let $\{U_0,U_1,\hdots,U_5\}$ be a partition of $V$
such that $\lambda_v$ is equal to $\lambda$ if $v\in U_0$ and, for $i=1,\hdots,5$, $\lambda_v=\lambda_i$ if $v\in U_i$. For an independent set $I\in \calI_G$ and $i=0,\hdots,5$, let $n_i(I) = |I \cap U_i|$ and $n_{-}(I)=n_1(I)+n_2(I)+n_4(I)+n_5(I)$. Then,
\begin{align*}
Z_G(\lambdab)&= \sum_{I\in \calI_G} \lambda^{n_0(I)}\prod^5_{i=1}\lambda^{n_i(I)}_i=\sum_{I\in \calI_G} (-1)^{n_{-}(I)}2^{n_4(I)}(1/4)^{n_5(I)}\lambda^{n_0(I)}(1+\lambda)^{n_1(I)}\\
&=\sum_{I\in \calI_G} (-1)^{n_{-}(I)}2^{n_4(I)}(1/4)^{n_5(I)}\lambda^{n_0(I)}\sum^{n_1(I)}_{k=0}\binom{n_1(I)}{k}\lambda^k.
\end{align*}
Thus, we have that $4^n Z_G(\lambdab)$ is an integer  polynomial of $\lambda$. Moreover, observe that the absolute values of the coefficients of $4^n Z_G(\lambdab)$ corresponding to an independent set $I\in\calI_G$ sum  to at most $4^n\cdot 2^n=8^n$.   Since $|\calI_G | \leq 2^n$, we have that the absolute values of the coefficients of $4^n Z_G(\lambdab)$ sum to at most $2^n \cdot 8^n=16^n$. The result now follows by applying Liouville's inequality (cf. Lemma~\ref{lem:use}).
\end{proof}

\subsection{Reduction from the multivariate problem to the single-activity problem}\label{sec:MVtoSV}

 The purpose of this section is to prove the following theorem.

\begin{theorem}\label{thm:MVtoSV}
Let $\Delta\geq 3$ and  
$\lambda\in \CQ$ be a complex number such that
$\lambda\not\in (\LambdaD \cup \Reals_{\geq 0})$. 
Then  
there is a polynomial-time Turing reduction
from the problem $\FactorMVBipHardCore{1.02}$ to
$\FactorHardCore{1.01}$.  
There is also a polynomial-time Turing reduction from the problem
$\ArgMVBipHardCore{9\pi/24}$ to the problem $\ArgHardCore{\pi/3}$.
\end{theorem}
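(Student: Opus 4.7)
The plan is to transform an instance $(G,\lambdab)$ of the multivariate bipartite problem into a single-activity bipartite instance $G'$ via Lemma~\ref{lem:transf}, by building, for each of the finitely many activities in $\Lbip(\lambda)\setminus\{\lambda\}$, a bipartite gadget that implements that activity with exponentially small accuracy, using Lemma~\ref{lem:complexexp} (or Lemma~\ref{lem:realexp} when $\lambda$ is a negative real with $\lambda<-\lambda^*$). The key quantitative ingredients are Lemma~\ref{lem:LB}, which gives the exponential lower bound $|Z_G(\lambdab)|>C_\lambda^{-n}$ whenever the partition function is non-zero, together with the fact that $Z_G$ is multilinear in the vertex activities, so perturbing each $\lambda_v$ by at most $\epsilon$ perturbs $Z_G$ by at most $n\epsilon(2B)^n$, where $B:=\max_{\mu\in\Lbip(\lambda)}|\mu|$.

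Concretely, let $n:=|V(G)|$ and fix a rational $\epsilon>0$ of magnitude at most $C_\lambda^{-n}/(10^{6}\,n\,(2B)^n)$; note $\size{\epsilon}=O(n)$. For each $\mu\in\Lbip(\lambda)\setminus\{\lambda\}$ (a set of constant size), invoke the appropriate implementation lemma to obtain, in time $\mathrm{poly}(\size{\mu,\epsilon})=\mathrm{poly}(n)$, a bipartite graph $H_\mu$ of maximum degree at most $\Delta$ with terminal $u_\mu$ implementing $\mu$ with accuracy $\epsilon$, together with the exact values $\Zin_{H_\mu,u_\mu}(\lambda)$ and $\Zout_{H_\mu,u_\mu}(\lambda)$. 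Applying Lemma~\ref{lem:transf} and Remark~\ref{rem:observ} yields a bipartite graph $G'$ of maximum degree at most $\Delta$ (the only vertices whose degree changes had $\lambda_v\neq\lambda$ and therefore degree at most $2$ in $G$, rising to at most $3\leq\Delta$) such that $Z_{G'}(\lambda)=C\cdot Z_G(\lambdab')$, where $C=\prod_\mu\Zout_{H_\mu,u_\mu}(\lambda)^{|V_\mu|}\in\CQ\setminus\{0\}$ is known exactly and $\lambdab'$ is an activity vector on $G$ with $\lambda_v'=\lambda_v$ when $\lambda_v=\lambda$ and $|\lambda_v'-\lambda_v|\leq\epsilon$ otherwise.

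The choice of $\epsilon$ ensures $|Z_G(\lambdab')-Z_G(\lambdab)|\leq C_\lambda^{-n}/10^{6}$, so by Lemma~\ref{lem:LB} either both $Z_G(\lambdab)$ and $Z_G(\lambdab')$ vanish, or $Z_G(\lambdab')\neq 0$ and $|Z_G(\lambdab')/Z_G(\lambdab)-1|\leq 10^{-6}$. For the norm reduction, I query the $\FactorHardCore{1.01}$ oracle on $G'$ to obtain $\widehat{N}$, compute a rational $R\in\bigl[(1-10^{-5})|C|^{-1},\,(1+10^{-5})|C|^{-1}\bigr]$ (feasible in polynomial time since $|C|^2$ is a known positive rational), and output $\widehat{N}R$. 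A short calculation shows that the accumulated multiplicative error coming from the oracle factor $1.01$, the perturbation factor $1\pm 10^{-6}$, and the approximation factor $1\pm 10^{-5}$ stays strictly below $1.02$ in both directions, so $\widehat{N}R$ satisfies the $\FactorMVBipHardCore{1.02}$ specification whenever $Z_G(\lambdab)\neq 0$ (and the output is unconstrained otherwise). For the argument reduction, I query the $\ArgHardCore{\pi/3}$ oracle on $G'$ to obtain $\widehat{A}'$, compute a rational $A_C$ with $|A_C-a_C|\leq\pi/100$ for some $a_C\in\arg(C)$ (feasible in polynomial time by bisection using the exact values of $\Re(C),\Im(C)$), and output $\widehat{A}:=\widehat{A}'-A_C$. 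Since $\arg(Z_{G'}(\lambda))=\arg(C)+\arg(Z_G(\lambdab'))$ modulo $2\pi$, and $|Z_G(\lambdab')/Z_G(\lambdab)-1|\leq 10^{-6}$ forces some representative of $\arg(Z_G(\lambdab))$ to lie within $10^{-5}$ of a matching representative of $\arg(Z_G(\lambdab'))$, the triangle inequality gives $|\widehat{A}-a|\leq \pi/3+\pi/100+10^{-5}<9\pi/24$ for some $a\in\arg(Z_G(\lambdab))$.

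The only non-routine aspect is balancing the three sources of approximation error (the oracle tolerance, the perturbation $\lambdab\mapsto\lambdab'$, and the approximation of $|C|^{-1}$ or $\arg(C)$) so that their combined effect stays within the slightly weaker tolerance required by the multivariate problem; the exponential accuracy provided by Lemmas~\ref{lem:complexexp}~and~\ref{lem:realexp} is precisely what makes this budget feasible in polynomial time, and Lemma~\ref{lem:LB} is what rules out spurious vanishing of $Z_G(\lambdab')$ when $Z_G(\lambdab)\neq 0$.
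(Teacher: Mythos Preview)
Your proposal is correct and follows essentially the same approach as the paper: implement each activity in $\Lbip(\lambda)\setminus\{\lambda\}$ with exponentially small accuracy via Lemmas~\ref{lem:complexexp}/\ref{lem:realexp}, apply Lemma~\ref{lem:transf} to obtain a single-activity bipartite instance, use Lemma~\ref{lem:LB} to control the relative perturbation of $Z_G$, and then absorb the approximation of $|C|$ (respectively $\arg C$) into the slack between the oracle tolerance and the target tolerance.

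One small numerical slip: your multilinearity bound ``$|Z_G(\lambdab)-Z_G(\lambdab')|\le n\epsilon(2B)^n$'' forgets the factor $2^n$ coming from the number of independent sets. The telescoping estimate actually gives $|Z_G(\lambdab)-Z_G(\lambdab')|\le 2^n\cdot n(2B)^{n-1}\epsilon=n(4B)^{n-1}\epsilon$, exactly as in the paper's computation with $(4M)^n$. This does not affect your argument---just take $\epsilon\le C_\lambda^{-n}/\bigl(10^6 n(4B)^n\bigr)$ instead, which is still of polynomial size in~$n$.
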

 
\begin{proof}
Let $\lambda_1=-\lambda-1$, $\lambda_2=-1$, $\lambda_3=1$, $\lambda_4=-2$, $\lambda_5=-1/4$ so that $\Lbip(\lambda)=\{\lambda,\lambda_1,\hdots,\lambda_5\}$.
Let $M>1$ be a rational number such that $M>\max\{|\lambda|,|\lambda_1|, \hdots, |\lambda_5|\}$ and let $C_\lambda>1$ be the rational in Lemma~\ref{lem:LB}.

Let $G=(V,E)$ be an $n$-vertex bipartite graph with maximum degree at
most $\Delta$. 
Suppose that $\lambdab=\{\lambda_v\}_{v\in V}$ is an activity vector for~$G$ such that,
for each $v\in V$, $\lambda_v \in \Lbip(\lambda)$. Let $\{U_0,U_1,\hdots,U_5\}$ be a partition of $V$
such that $\lambda_v$ is equal to $\lambda$ if $v\in U_0$ and, for $i=1,\hdots,5$, $\lambda_v=\lambda_i$ if $v\in U_i$. Suppose further that, for every vertex $v\in \bigcup^5_{i=1}U_i$,  the degree of~$v$ in~$G$  is at most~$2$.
 
Let $\epsilon:=\frac{1}{10^{4}n(4M C_\lambda)^{n}}$. For $i=1,\hdots,5$, let 
$J_i$ be a bipartite graph with maximum degree at most~$\Delta$ with terminal~$v_i$ 
that implements $\lambda_i$ with accuracy~$\epsilon$.
Proposition~\ref{lem:complexexp} and Proposition~\ref{lem:realexp} guarantee that $J_i$ exists, and
that it can be constructed in  time $poly(\size{\eps})$.
The propositions also guarantee that $\Zout_{J_i,v_i}(\lambda)\neq 0$.
Let $\lambda_i' = \Zin_{J_i,v_i}(\lambda)/\Zout_{J_i,v_i}(\lambda)$, so that  $|\lambda_i' - \lambda_i| \leq \epsilon$. Note, we have the crude bound $|\lambda_i'|\leq 2M$ for all $i=1,\hdots,5$.

Let $\lambdab'$ be the activity vector for~$G$ formed from~$\lambdab$ by replacing 
every instance of $\lambda_i$ with $\lambda'_i$ for $i=1,\hdots,5$. Let $H$ be the bipartite graph constructed from~$G$ by  
replacing, for each $i\in[5]$,
every vertex $v\in  U_i$   with
a (distinct) copy of $J_i$, relabelling the terminal $v_i$ as $v$  and attaching the terminal to the (at most two) neighbours of~$v$ in~$G$ (note that this is the same construction as the one in Lemma~\ref{lem:transf}). Note that the maximum degree of~$H$ is at most~$\Delta$ and, by Lemma~\ref{lem:transf},
\begin{equation}
\label{eq:GH}
Z_{G}(\lambdab') = Z_{H}(\lambda)/C_H, \mbox{ where } C_H:=\prod^5_{i=1} \prod_{v\in U_i} \Zout_{J_i,v_i}(\lambda).
\end{equation}
Note that using the output of the algorithm provided by Proposition~\ref{lem:realexp} and Proposition~\ref{lem:complexexp}, we can compute $C_H$ exactly
in time $poly(\size{\eps})$. 

We will show that, whenever $Z_G(\lambdab)\neq 0$, it holds that
\begin{equation}
\label{finish:one}
\frac{1.01}{1.015} 
|Z_G(\lambdab')|
\leq 
|Z_G(\lambdab)|
\leq \frac{1.015}{1.01}
|Z_G(\lambdab')|
\end{equation}
and that
\begin{equation}
\label{finish:two}
\mbox{there are 
$a \in \arg(Z_G(\lambdab))$
and 
$a' \in \arg(Z_G(\lambdab'))$
such that $|a-a'|
  \leq \pi/30$.}
\end{equation}

Before proving  \eqref{finish:one} and \eqref{finish:two}, we show that they give the desired reductions. 
The reduction from $\FactorMVBipHardCore{1.02}$ to
$\FactorHardCore{1.01}$  goes as follows.
Suppose that $(G,\lambdab)$ is an input to 
$\FactorMVBipHardCore{1.02}$ and that
 $Z_G(\lambdab)\neq 0$. By \eqref{finish:one}, we obtain  that $Z_G(\lambdab')$ is non-zero and  therefore, by \eqref{eq:GH}, $|Z_H(\lambda)| \neq 0$ as well. Thus,
 an oracle for 
$\FactorHardCore{1.01}$ 
with input~$H$
gives an approximation $\hat{N}$
so that 
$\hat{N}/1.01 \leq |Z_H(\lambda)| \leq 1.01 \hat{N}$. By \eqref{eq:GH} and \eqref{finish:one}, we therefore obtain that
$$
\frac{\hat{N} }{1.015 |C_H|}
\leq |Z_G(\lambdab)| 
\leq 1.015 \frac{\hat{N}}{|C_H|}.$$
As we noted earlier, 
  the value $C_H$ can be computed exactly 
 in time $poly(\size{\eps})$.  
 Thus, it is easy, in time $poly(\size{\eps})$,
 to compute a value $\hat{C}$ such that
 $$
\frac{1.015}{1.02}\hat{C} \leq  {|C_H|} \leq \frac{1.02  }{1.015 }\hat{C}.$$
Thus, the algorithm for $\FactorMVBipHardCore{1.02}$
can return $\hat{N}/\hat{C}$ as an output.

For the reduction from 
$\ArgMVHardCore{9\pi/24}$ to 
$\ArgHardCore{\pi/3}$, suppose   that $(G,\lambdab)$ is an input to 
 $\ArgMVHardCore{9\pi/24}$.
 An oracle call to  $\ArgHardCore{\pi/3}$
 with input $H$ gives a value $\hat{A}$ such that, for some $h\in \arg(Z_H(\lambda))$,
   $|\hat{A}-h| \leq \pi/3$.
Consider $a$ and $a'$ from \eqref{finish:two}.
 By~\eqref{eq:GH} there is a 
 $c\in \arg(C_H)$ such that $a'=h-c$.
 Now, by the triangle inequality, 
 $$|a - (\hat{A}-c)| \leq |a-a'| + |a'-(h-c)| + |h-\hat{A}| \leq \pi/30 + 0 +  \pi/3 =  11 \pi/30.$$
 Adding
 an integer multiple of $2  \pi$ to both $a$ and $c$ on the
 left-hand-side, we conclude that 
 for every $\tilde{c}\in \arg(C_H)$
 there exists an $\tilde{a} \in \arg(Z_G(\lambdab))$
 such that
 $|\tilde{a} - (\hat{A} - \tilde{c})| \leq 11 \pi/30$. 
 In particular,
 taking $\tilde{c} = \Arg(C_H)$,
  there exists an $\tilde{a} \in \arg(Z_G(\lambdab))$
 such that
 $|\tilde{a} - (\hat{A} -  \Arg(C_H))| \leq 11 \pi/30$.  
 Thus, 
   the algorithm for   $\ArgMVHardCore{9\pi/24}$   can compute a value $\hat{C}$ such that
 $|\hat{C} - \Arg(C_H)| \leq 9 \pi/24 - 11\pi/30$
 and return the value $\hat{A} - \hat{C}$ as output.
    
So in the rest of the proof, we will establish \eqref{finish:one} and \eqref{finish:two}. First, we show that, whenever $Z_G(\lambdab)\neq 0$, it holds that
\begin{equation}\label{eq:difZllp34}
|Z_G(\lambdab) - Z_G(\lambdab')|\leq \frac{1}{10^4(C_\lambda)^n}\leq \frac{|Z_G(\lambdab)|}{10^4}.
\end{equation}
The rightmost inequality is immediate by Lemma~\ref{lem:LB}. For the leftmost inequality, note that for all positive integers $k$ and arbitrary complex numbers $x_1,y_1,\hdots,x_k,y_k$ we have the telescopic expansion $\prod^{k}_{i=1}x_i-\prod^{k}_{i=1}y_i=\sum^{k}_{j=1}(x_{j}-y_{j})\prod^{j-1}_{i=1}x_i\prod^{k}_{i=j+1}y_i$. Hence, for an arbitrary independent set $I\in \mathcal{I}_G$,  we have that (using  the crude bounds $|\lambda_i|,|\lambda_i'|<2M$ and $|\lambda|<2M$)
\[\Big|\prod_{v\in I}\lambda_v-\prod_{v\in I}\lambda_v'\Big|\leq \sum_{v\in I}(2M)^{|I|-1}|\lambda_v-\lambda_v'|\leq n (2M)^{n} \epsilon,\]
where in the last inequality we used that $|\lambda_v-\lambda_v'|\leq \epsilon$ for all $v\in V$. Since $|\mathcal{I}_G|\leq 2^n$ and $\epsilon=\frac{1}{10^{4}n(4M C_\lambda)^{n}}$, we obtain \eqref{eq:difZllp34}.

Now we are ready to show \eqref{finish:one} and \eqref{finish:two} whenever $Z_G(\lambdab)\neq 0$.
In particular, by the triangle inequality and \eqref{eq:difZllp34}, we have that
\[
\bigg|\frac{|Z_G(\lambdab')|}
{|Z_G(\lambdab)|} -1\bigg|=\frac{\big||Z_G(\lambdab')|-|Z_G(\lambdab)|\big|}
{|Z_G(\lambdab)|}\leq \frac{|Z_G(\lambdab')-Z_G(\lambdab)|}{|Z_G(\lambdab)|}\leq 10^{-4},
\]
which proves \eqref{finish:one}. In fact, using the inequality above and \eqref{finish:one}, it follows that the Ziv distance between $Z_G(\lambdab)$ and $Z_G(\lambdab')$ is at most $10^{-3}$, and therefore \eqref{finish:two} follows from  \cite[Lemma 2.1]{ComplexIsing}. \end{proof}

\section{Proof of  Proposition~\ref{lem:main121}}\label{sec:mainpart}

In this section, we prove  Proposition~\ref{lem:main121}. To do this, we will focus on understanding the following type of sequences.
\begin{definition}\label{def:generates}
Let $\Delta\geq 3$ and $\lambda\in \CQ\setminus\Reals$, and set $d:=\Delta-1$. A hard-core-program is a sequence $a_0, a_1,\hdots$, starting with $a_0=1$,  and
\begin{equation}\label{eq:4f4m7ujf43f}
a_{k} = \frac{1}{1+\lambda a_{i_{k,1}}\cdots a_{i_{k,d}}}\quad\mbox{for $k\geq1$, }
\end{equation} where $i_{k,1},\dots,i_{k,d}\in\{0,\dots,k-1\}$. We say that the hard-core-program generates $x\in \Complex$ if there exists integer $k\geq 0$ such that $a_k=x$.
\end{definition}
Our interest in hard-core-programs is justified by the following lemma.
\begin{lemma}\label{lem:hard-core-imp}
Let $\lambda\in \Complex$ and $d\geq 2$. Suppose that $a_0,a_1,\hdots$ is a hard-core program. Then, for every $k\geq 1$, there exists a tree of maximum degree at most $\Delta=d+1$ that implements the activity $\lambda a_k$.
\end{lemma}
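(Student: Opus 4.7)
The plan is to prove the lemma by induction on $k$, constructing recursively a sequence of rooted trees $T_0, T_1, \ldots$ with roots $r_0, r_1, \ldots$ such that
\[ Z_{T_k}(\lambda) \neq 0 \quad\text{and}\quad \frac{\Zout_{T_k, r_k}(\lambda)}{Z_{T_k}(\lambda)} = a_k, \]
and then obtaining the implementation of $\lambda a_k$ from $T_k$ by attaching a pendant vertex. The base case $k=0$ uses the convention that $T_0$ is the empty (or, more precisely, ``trivial'') tree so that both $\Zin$ and $\Zout$ take the empty-product value $1$, matching $a_0 = 1$. For the inductive step, set $S_k := \{ j \in [d] : i_{k,j} \neq 0 \}$ and construct $T_k$ by introducing a new root $r_k$ and, for each $j \in S_k$, attaching a fresh distinct copy of $T_{i_{k,j}}$ by identifying its root with a new child of $r_k$. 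Indices $i_{k,j}=0$ contribute ``no child,'' which is consistent with $a_0 = 1$ being the neutral element of the product.

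The standard hard-core tree recursion then yields
\[ \Zout_{T_k, r_k}(\lambda) = \prod_{j \in S_k} Z_{T_{i_{k,j}}}(\lambda), \qquad \Zin_{T_k, r_k}(\lambda) = \lambda \prod_{j \in S_k} \Zout_{T_{i_{k,j}}, r_{i_{k,j}}}(\lambda), \]
so, using the inductive hypothesis,
\[ Z_{T_k}(\lambda) = \Bigl(\prod_{j \in S_k} Z_{T_{i_{k,j}}}(\lambda)\Bigr)\Bigl(1 + \lambda \prod_{j=1}^{d} a_{i_{k,j}}\Bigr). \]
The first factor is nonzero by induction, and the second factor equals $1/a_k \neq 0$ since the sequence is a hard-core program (so \eqref{eq:4f4m7ujf43f} has nonzero denominator). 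This gives $Z_{T_k}(\lambda)\neq 0$ and $\Zout_{T_k,r_k}(\lambda)/Z_{T_k}(\lambda) = a_k$, completing the induction.

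To obtain the implementation, form $T'_k$ from $T_k$ by adding a new vertex $v$ adjacent only to $r_k$. A direct computation gives
\[ \Zin_{T'_k, v}(\lambda) = \lambda\, \Zout_{T_k, r_k}(\lambda), \qquad \Zout_{T'_k, v}(\lambda) = Z_{T_k}(\lambda) \neq 0, \]
whence $\Zin_{T'_k,v}(\lambda)/\Zout_{T'_k,v}(\lambda) = \lambda a_k$. Since $v$ has degree one, $T'_k$ is a valid implementation of $\lambda a_k$ in the sense of Definition~\ref{def:Gimplement}.

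The main thing to verify is the degree bound. In $T_k$, the root $r_k$ has degree $|S_k| \le d$, and for each $j \in S_k$ the subroot $r_{i_{k,j}}$ (whose degree in $T_{i_{k,j}}$ is at most $d$ by induction) gains exactly one new edge (to $r_k$), because we use a distinct copy of $T_{i_{k,j}}$ for each occurrence; hence its degree in $T_k$ is at most $d+1 = \Delta$. All other vertices retain their degrees. Adding the pendant $v$ raises $\deg(r_k)$ to $|S_k| + 1 \le d+1 = \Delta$, while $v$ itself has degree $1$, so $T'_k$ is a tree of maximum degree at most $\Delta$. This completes the proof; the only subtlety (and, indeed, the one point requiring care) is the bookkeeping ensuring that repeated indices $i_{k,j}$ correspond to distinct copies, so that no root accumulates more than one extra edge per parent.
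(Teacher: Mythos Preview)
Your proof is correct and follows essentially the same approach as the paper's: both argue by induction on $k$, skip the indices $i_{k,j}=0$ since $a_0=1$, build the tree by hanging fresh copies of the inductive subtrees from a new root, and append a pendant vertex to obtain the degree-one terminal. The only cosmetic difference is that you maintain the invariant $\Zout_{T_k,r_k}/Z_{T_k}=a_k$ on rooted trees and attach the pendant at the end, whereas the paper maintains the implementation ratio $\Zin/\Zout=\lambda a_k$ directly on trees that already carry the degree-one terminal; unwinding one step shows the two constructions coincide.
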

\begin{proof}
The proof is by induction on $k$. For $k=1$, we have that $i_{0,1}=\hdots,i_{0,d}=0$, so $a_1=1/(1+\lambda)$. Let $T$ be the single-edge tree $\{u,v\}$. Then, we have that
\[\Zin_{T,v}(\lambda)=\lambda, \quad \Zout_{T,v}(\lambda)=1+\lambda,\]
and therefore $T$ with terminal $v$ implements the activity $\frac{\Zin_{T,v}(\lambda)}{\Zout_{T,v}(\lambda)}=\lambda a_1$, as wanted.

Suppose that the statement is true for all values $\leq k$ and suppose that 
\[a_{k+1}=\frac{1}{1+\lambda a_{i_{k,1}}\cdots a_{i_{k,d}}}\]
for some $i_{k,1},\dots,i_{k,d}\in\{0,\dots,k-1\}$. Let 
$J=\{j\in [d] \mid i_{k,j}\neq 0\}$ and note that for every $j\in [d]\backslash J$, it holds that $a_{i_{k,j}}=1$. By the induction hypothesis, for every $j\in J$, there exists a tree $T_j$ of maximum degree at most $\Delta$, with terminal $v_j$, such that $\frac{\Zin_{T,v}(\lambda)}{\Zout_{T,v}(\lambda)}=\lambda a_{i_{k,j}}$. Let $T$ be the tree obtained by taking the disjoint union of the trees $T_j$ with $j\in J$ and identifying all the terminals $v_j$ into a new vertex $v$. Then,
\begin{equation}\label{eq:45t5v3v5g5}
\Zin_{T,v}(\lambda)=\lambda\prod_{j\in J}\frac{\Zin_{T_j,v_j}(\lambda)}{\lambda}, \quad \Zout_{T,v}(\lambda)=\prod_{j\in J}\Zout_{T_j,v_j}(\lambda).
\end{equation}
Now consider the tree $T'$ obtained from $T$ by adding a new vertex $u$ whose single neighbour is the vertex $v$. Then,
\[\Zin_{T',u}(\lambda)=\lambda \Zout_{T,v}(\lambda), \quad \Zout_{T',u}(\lambda)=Z_{T}(\lambda).\]
Using this and \eqref{eq:45t5v3v5g5}, we therefore obtain that
\[\frac{\Zin_{T',u}(\lambda)}{\Zout_{T',u}(\lambda)}=\lambda\frac{\Zout_{T,v}(\lambda)}{Z_{T}(\lambda)}=\lambda\frac{1}{1+\frac{\Zin_{T,v}(\lambda)}{\Zout_{T,v}(\lambda)}}=\frac{\lambda}{1+\lambda \prod_{j\in [J]}a_{i_k,j}}=\frac{\lambda}{1+\lambda \prod_{j\in [d]}a_{i_k,j}}=\lambda a_{\ell+1},\]
where in the second to last equality we used that $a_{i_{k,j}}=1$ for $j\in [d]\backslash J$.
\end{proof}

\subsection{Getting close to a repelling fixpoint}\label{sec:f4ggefr}

In this section, we will show how to generate points that are arbitrarily close to a fixpoint of the function $f(x)=\frac{1}{1+\lambda x^{\Delta-1}}$ using a hard-core-program. We start with the following lemma.

\begin{lemma}\label{lele1}
Let $\Delta\geq 3$ and set $d:=\Delta-1$. Let $p_k$ be a polynomial in $\lambda$ defined by
\begin{equation*}
p_0=p_1=\dots=p_d=1, \mbox{ and } p_{k+1} = p_k + \lambda p_{k-d} \mbox{\ \ for $k\geq d$}.
\end{equation*}
Then, for all $k\geq 0$, all roots of $p_k$ are real.
\end{lemma}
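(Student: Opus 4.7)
The plan is to interpret each $p_k$ as the independent set polynomial of a claw-free graph and then invoke the theorem of Chudnovsky and Seymour \cite{DBLP:journals/jct/ChudnovskyS07} which guarantees that such a polynomial has only real roots. This is precisely the interpretation alluded to just after Lemma~\ref{hop2} in the proof outline, and it turns the problem into an essentially routine check.

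The first step is to identify the right graph. For each integer $m\geq 0$, let $H_m$ denote the graph on vertex set $\{1,\hdots,m\}$ with $i\sim j$ iff $0<|i-j|\leq d$, and let $H_m$ be the empty graph for $m\leq 0$ (so $Z_{H_m}(\lambda)=1$). I would show by induction on $k$ that $p_k(\lambda)=Z_{H_{k-d}}(\lambda)$ for every $k\geq 0$. The base cases $p_0=p_1=\cdots=p_d=1$ correspond to $H_{-d},\hdots,H_0$ being empty. For the inductive step, conditioning on whether vertex $m$ is used by an independent set of $H_m$ gives
\[
Z_{H_m}(\lambda)=Z_{H_{m-1}}(\lambda)+\lambda\, Z_{H_{m-d-1}}(\lambda),
\]
because the $d$ neighbours of $m$ are exactly $\{m-d,\hdots,m-1\}$, and deleting them together with $m$ leaves $H_{m-d-1}$. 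Setting $m=k+1-d$ yields the recurrence $p_{k+1}=p_k+\lambda p_{k-d}$, matching the definition of $p_k$.

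The second step is to verify that $H_m$ is claw-free. Suppose for contradiction that some vertex $v$ has three pairwise non-adjacent neighbours $a<b<c$. Since all three lie in $[v-d,v+d]$, we have $c-a\leq 2d$. On the other hand, pairwise non-adjacency forces $b-a\geq d+1$ and $c-b\geq d+1$, so $c-a\geq 2d+2$, a contradiction. Hence $H_m$ is claw-free for every $m$.

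The last step is to apply the theorem of Chudnovsky and Seymour: the independent set polynomial of any claw-free graph has only real roots. Combined with $p_k(\lambda)=Z_{H_{k-d}}(\lambda)$, this immediately gives that all roots of $p_k$ are real, for every $k\geq 0$ (trivially so when $k\leq d$, since then $p_k\equiv 1$). The only nontrivial ingredient is locating the correct claw-free graph family; once this correspondence is in hand, there is no real obstacle left in the proof.
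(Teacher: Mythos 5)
Your proof is correct and takes essentially the same route as the paper: both identify $p_k$ with the independence polynomial of the ``power of a path'' graph (your $H_{k-d}$ is the paper's $G_k$ up to relabelling the vertex set), verify claw-freeness, and invoke the Chudnovsky--Seymour real-rootedness theorem. The only difference is cosmetic---your claw-free check bounds $c-a$ directly rather than splitting into cases on whether the middle leaf lies left or right of the centre.
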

\begin{proof}
Consider a graph $G_k=(V,E)$ with $V=\{d+1,\dots,k\}$ with $i,j$ are connected if $|i-j|\leq d$.
We will show, by induction, that the independent set polynomial of $G_k$ is $p_k$. The claim is true
for $k=0,\dots,d$ (since the graph is empty and $p_k=1$). For $k+1\geq d+1$ the claim follows by induction:
\begin{itemize}
\item the contribution of the independent sets with $k+1$ included is $\lambda$ times the independence polynomial
of $G_{k-d}$, which, by the inductive hypothesis, is $\lambda p_{k-d}$,
\item the contribution of the independent sets with $k+1$ not included is the independent set polynomial of $G_{k}$,
which, by the inductive hypothesis, is $p_{k}$.
\end{itemize}
Hence, the independent set polynomial of $G_{k+1}$ is $p_{k}+\lambda p_{k-d}=p_{k+1}$.

A claw is a graph with $4$ vertices $a,b_1,b_2,b_3$ and $3$ edges $ab_1,ab_2,ab_3$. A claw-free graph is a graph
that does not contain a claw as an induced subgraph. We will show that $G_k$ is claw-free.
Suppose $a,b_1,b_2,b_3\in\{d+1,\dots,k\}$ form a claw; w.l.o.g. $b_1<b_2<b_3$. Then $|a-b_i|\leq d$ for $i=1,2,3$.
If $b_2<a$ then $b_2-b_1\leq a-b_1\leq d$ and hence $b_1b_2$ is an edge---a contradiction with the
assumption that $a,b_1,b_2,b_3$ form a claw. If $b_2>a$ then $b_3-b_2\leq b_3-a\leq d$ and hence $b_2b_3$ is an edge---a contradiction with the
assumption that $a,b_1,b_2,b_3$ form a claw. Thus $G_k$ is claw-free.

Now we use \cite[Theorem 1.1]{DBLP:journals/jct/ChudnovskyS07} which states that the roots of
the independent set polynomial of a claw-free graph are all real.
\end{proof}

We will now show that we can get close to the fixpoint of $f$ with the smallest norm.
\begin{lemhoptwo}
Let $\Delta\geq 3$ and $\lambda\in \Complex\setminus\Reals$, and set $d:=\Delta-1$. Let $\omega$ be the fixpoint of $f(x)=\frac{1}{1+\lambda x^{d}}$ with the smallest norm. For $k\geq 0$, let $x_k$ be the sequence defined by
\begin{equation}\label{rec}
x_0=x_1=\dots=x_{d-1}=1, \quad x_k = \frac{1}{1+\lambda \prod_{i=1}^d x_{k-i}}\quad\mbox{for $k\geq d$}.
\end{equation}
Then, the sequence $x_k$ is well-defined (i.e., the denominator of \eqref{rec} is nonzero for all $k\geq d$) and converges to the fixpoint $\omega$ as $k\rightarrow \infty$. Moreover, there exist infinitely many $k$ such that $x_k\neq \omega$.
\end{lemhoptwo}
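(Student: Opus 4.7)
The plan is to follow the suggestion given in the sketch preceding the lemma, namely to pass from the multivariate recurrence \eqref{rec} to the \emph{linear} recurrence $R_0=R_1=\cdots=R_d=1$, $R_{k+1}=R_k+\lambda R_{k-d}$ for $k\geq d$, and exploit the identity $x_k=R_k/R_{k+1}$. First I will verify this identity by induction on $k$. The base cases $k=0,\dots,d-1$ are trivial since $R_k=R_{k+1}=1$. For $k\geq d$, assuming the identity for all smaller indices, the product telescopes:
\[
\prod_{i=1}^{d} x_{k-i} \;=\; \prod_{i=1}^{d} \frac{R_{k-i}}{R_{k-i+1}} \;=\; \frac{R_{k-d}}{R_k},
\]
so $x_k = R_k/(R_k+\lambda R_{k-d}) = R_k/R_{k+1}$. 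Next, viewing $R_k$ as a polynomial in the formal variable $\lambda$ gives exactly the polynomial $p_k$ of Lemma~\ref{lele1}. That lemma (which identifies $p_k$ with the independence polynomial of a claw-free graph and invokes Chudnovsky--Seymour) says that all roots of $p_k$ are real; since we assume $\lambda\in\Complex\setminus\Reals$, we conclude $R_k=p_k(\lambda)\neq 0$ for every $k$. This makes both $x_k$ and the denominator in \eqref{rec} well-defined.

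For the convergence, I will use the generating function $T(x)=\sum_{k\geq 0} R_k x^k$. A direct calculation from the recurrence and the initial conditions collapses to the closed form
\[
T(x) \;=\; \frac{1}{1-x-\lambda x^{d+1}}.
\]
The denominator's roots are precisely the fixpoints of $f$: indeed, $\omega$ is a fixpoint iff $\lambda\omega^{d+1}+\omega-1=0$. By Lemma~\ref{lem:4frf46}, since $\lambda\notin\Reals$, the $d+1$ fixpoints $\omega_0,\omega_1,\dots,\omega_d$ of $f$ are distinct and have pairwise distinct norms, and we order them so that $|\omega_0|<|\omega_1|<\cdots<|\omega_d|$; by hypothesis $\omega=\omega_0$. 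Setting $t_i:=1/\omega_i$, the characteristic roots of the recurrence are $t_0,\dots,t_d$ with $|t_0|>|t_1|>\cdots>|t_d|$. Partial-fraction decomposition of $T(x)$ followed by geometric-series expansion yields
\[
R_k \;=\; \sum_{i=0}^{d} c_i\, t_i^{\,k}, \qquad c_i \;=\; \frac{t_i}{\lambda\prod_{j\neq i}(\omega_i-\omega_j)}.
\]
Since $\omega_i\neq 0$ (as $f(0)=1$), the $\omega_i$'s are pairwise distinct, and $\lambda\neq 0$, \emph{every} $c_i$ is nonzero. Dividing by $t_0^k$ then gives $R_k/t_0^k \to c_0$ because $|t_i/t_0|<1$ for $i\geq 1$, and consequently
\[
x_k \;=\; \frac{R_k/t_0^{\,k}}{t_0\cdot R_{k+1}/t_0^{\,k+1}} \;\longrightarrow\; \frac{c_0}{t_0\,c_0} \;=\; \frac{1}{t_0} \;=\; \omega,
\]
establishing convergence to the smallest-norm fixpoint.

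Finally, I will rule out that the sequence is eventually $\omega$. If $x_k=\omega$ for all $k\geq k_0$, then $R_{k+1}=t_0 R_k$ for all $k\geq k_0$, i.e.\ $R_k=C\,t_0^k$ for some constant $C$ and all sufficiently large $k$. Comparing with $R_k=\sum_{i=0}^{d}c_i t_i^{\,k}$ and using the linear independence of the geometric sequences $(t_i^{\,k})_k$ (immediate from distinctness of the $t_i$'s via a Vandermonde argument), one forces $c_1=\cdots=c_d=0$, contradicting $c_i\neq 0$. The only nontrivial step is ensuring $c_0\neq 0$, which I expect to be the main obstacle in a naive Vandermonde/Cramer approach; the generating-function route above circumvents this by producing all coefficients $c_i$ simultaneously and in explicit product form, from which nonvanishing is transparent.
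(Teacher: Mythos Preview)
Your proof is correct and follows essentially the same route as the paper: introduce the linear recurrence $R_k$, show $x_k=R_k/R_{k+1}$, invoke Lemma~\ref{lele1} to get $R_k\neq 0$, decompose $R_k=\sum_i c_i t_i^k$ via the characteristic roots $t_i=1/\omega_i$, and deduce convergence from $c_0\neq 0$ together with $|t_0|>|t_i|$ for $i\geq 1$.

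The only noteworthy differences are in implementation. For the key nonvanishing $c_0\neq 0$, the paper avoids generating functions and instead argues by contradiction: if some $\alpha_i=0$, then adjoining an extra node $\omega_{d+2}=1$ with coefficient $-1$ produces a nonzero solution to a $(d+1)\times(d+1)$ homogeneous Vandermonde system, which is impossible. Your partial-fraction route yields the same conclusion via the explicit product formula, which is arguably cleaner. For the ``infinitely many $k$ with $x_k\neq\omega$'' part, the paper gives a one-line direct argument from the recurrence \eqref{rec} itself (if $k_0$ were the last index with $x_{k_0}\neq\omega$, then expanding $x_{k_0+d}$ forces $x_{k_0}=\omega$), which does not use the spectral decomposition at all; your argument via linear independence of the $(t_i^k)$ is also correct but uses more machinery than necessary.
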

\begin{proof}
For $k\geq 0$, let $R_k$ be the sequence defined by
\begin{equation}\label{zzz}
R_0=R_1=\dots=R_d=1, \quad R_{k+1} = R_k + \lambda R_{k-d} \mbox{ for $k\geq d$}.
\end{equation}
Observe that $R_k=p_k(\lambda)$, where $p_k$ is the polynomial in Lemma~\ref{lele1}. Since $\lambda\in\Complex\backslash\Reals$, Lemma~\ref{lele1} implies  that  $R_k\neq 0$ for all $k$
and hence for $k\geq 0$ we can let
\begin{equation}\label{zio}
y_k=R_k/R_{k+1}.
\end{equation}
Note that for $k\in\{0,\dots,d-1\}$ we have $y_k = x_k$. For $k\geq d$ we have
$$
y_{k} = \frac{R_k}{R_{k+1}} = \frac{R_k}{R_k + \lambda R_{k-d}} =  \frac{1}{1 + \lambda \frac{R_{k-d}}{R_{k}}}
= \frac{1}{1+\lambda\prod_{j=1}^d \frac{R_{k-j}}{R_{k-j+1}}}
= \frac{1}{1+\lambda\prod_{j=1}^d y_{k-j}},
$$
and hence, by induction, $x_k = y_k$ for all $k$. It follows that the sequence $x_k$ is well-defined.

Let $\omega_1,\dots,\omega_{d+1}$ be the
fixpoints of $f(x)=\frac{1}{1+\lambda x^d}$  sorted in increasing order of their absolute value, so that $\omega_1=\omega$. Note, since $\lambda\in\Complex\backslash\Reals$, by Lemma~\ref{lem:4frf46} we have that $|\omega_i|\neq |\omega_j|$ for different $i,j\in[d+1]$.  Observe that the characteristic polynomial of the recurrence \eqref{zzz} is $z^{d+1}-z^d-\lambda$ and that the roots of the polynomial are $1/\omega_1,\hdots,1/\omega_{d+1}$ (to verify this, use that $\lambda \omega_j^{d+1}+\omega_j-1=0$ for $j \in [d+1]$). Therefore, from the theory of linear recurrences we have that there exist $\alpha_1,\dots,\alpha_{d+1}\in \Complex$ such that for all $k\geq 0$
$$
R_k = \sum_{j=1}^{d+1} \alpha_j(1/\omega_j)^k.
$$
Note that since $R_0=R_1=\dots=R_d=1$ we have that $\alpha_1,\dots,\alpha_{d+1}$ is the solution of the following
(Vandermonde) system
$$
\sum_{j=1}^{d+1} (1/\omega_j)^k \alpha_j = 1 \quad \mbox{for}\ k\in\{0,\dots,d\}.
$$
Suppose one of the $\alpha_1,\dots,\alpha_{d+1}$ is zero, say $\alpha_{i}=0$ for some $i\in[d+1]$. Let $\alpha_{d+2} = -1$ and $\omega_{d+2}=1$.
For $j\in [d+1]$, note that $\omega_j\neq 1$  (since that would imply $\lambda = 0$) and therefore $\omega_j\neq \omega_{d+2}$.  Then we have that
$\{\alpha_j\}_{j\in [d+2]\backslash\{i\}}$ is a non-zero solution of the following (again Vandermonde) system
\begin{equation}\label{vander}
\sum_{j\in [d+2]\backslash\{i\}} (1/\omega_j)^k \alpha_j = 0 \quad \mbox{for}\ k\in\{0,\dots,d\}.
\end{equation}
This is a contradiction, since the system only has a zero solution (since $\omega_1,\dots,\omega_{d+1},\omega_{d+2}$ are distinct). Thus none of the $\alpha_1,\dots,\alpha_{d+1}$ is zero and, in particular, $\alpha_1\neq 0$. It follows that $x_k=R_k/R_{k+1}$ converges to $\omega_1=\omega$ as $k\rightarrow \infty$.

To finish the proof, it remains to show that $x_k\neq \omega$ for infinitely many $k$. For the sake of contradiction, assume otherwise, and let $k_0$ be the largest integer such that $x_{k_0}\neq \omega$. From \eqref{rec}, we obtain
\[x_{k_0+d}=\frac{1}{1+\lambda \prod^{d-1}_{j=0}x_{k_0+j}}\]
By the choice of $k_0$, we have $x_{k_0+1}=\cdots=x_{k_0+d}=\omega$, which gives that $x_{k_0}=\omega$ (using that $\omega=\frac{1}{1+\lambda \omega^d}$), contradiction. 

This concludes the proof of Lemma~\ref{hop2}.
\end{proof}

Finally, we conclude this section with the following lemma, which will be useful later.
\begin{lemma}\label{uiopa}
For $\lambda\in \Complex\backslash \Reals$ and $d\geq 2$, let $\omega$ be the
fixpoint of $f(x)=\frac{1}{1+\lambda x^d}$  with the smallest norm. Then, $\omega\in \Complex\backslash \Reals$ and $0<|\omega-1|<1$.
\end{lemma}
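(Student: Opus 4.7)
The plan is to work with the substitution $y = 1-\omega$ throughout. A point $\omega$ is a fixpoint of $f$ iff $\lambda \omega^{d+1} + \omega - 1 = 0$, equivalently $\lambda \omega^{d+1} = 1-\omega$. This immediately handles the first two claims: if $\omega\in\Reals$ then since $\omega\neq 0$ (otherwise $1-\omega=0$, impossible), we would have $\lambda = (1-\omega)/\omega^{d+1}\in\Reals$, contradicting the hypothesis; and if $\omega=1$ then $\lambda \cdot 1 = 0$, contradicting $\lambda\neq 0$ (which is forced by $\lambda\notin\Reals$). So the nontrivial part is the bound $|\omega-1|<1$.

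For that, I will label the fixpoints $\omega_1,\ldots,\omega_{d+1}$ so that $\omega=\omega_1$ has the smallest norm, and set $y_i=1-\omega_i$. Substituting $\omega_i=1-y_i$ into $\lambda \omega^{d+1}+\omega-1=0$ shows that $y_1,\ldots,y_{d+1}$ are precisely the roots of
\begin{equation*}
P(y) = \lambda(1-y)^{d+1} - y.
\end{equation*}
Expanding, the constant coefficient of $P$ is $\lambda$ and the leading coefficient is $\lambda(-1)^{d+1}$. By Vieta's formulas,
\begin{equation*}
y_1 y_2 \cdots y_{d+1} = (-1)^{d+1}\,\frac{\lambda}{\lambda(-1)^{d+1}} = 1,
\end{equation*}
so in particular $|y_1||y_2|\cdots|y_{d+1}|=1$.

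Finally, I will show that the ordering of norms of the $y_i$ matches that of the $\omega_i$. From $\lambda(1-y_i)^{d+1}=y_i$, i.e., $\lambda\omega_i^{d+1}=y_i$, we obtain the key identity
\begin{equation*}
|y_i| = |\lambda|\,|\omega_i|^{d+1}.
\end{equation*}
By Lemma~\ref{lem:4frf46}, the norms $|\omega_1|<|\omega_2|<\cdots<|\omega_{d+1}|$ are pairwise distinct (since $\lambda\notin\Reals$), so the above identity forces $|y_1|<|y_2|<\cdots<|y_{d+1}|$ as well. Combining this strict ordering with $\prod_i|y_i|=1$ yields $|y_1|^{d+1}<\prod_i|y_i|=1$, hence $|y_1|<1$. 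Since $y_1=1-\omega$, this is exactly $|\omega-1|<1$. The only subtlety to double-check is the Vieta computation of the product of roots (ensuring the signs cancel correctly for both even and odd $d$), but this is a direct calculation and is not a real obstacle.
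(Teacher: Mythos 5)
Your proof takes essentially the same route as the paper's: use Lemma~\ref{lem:4frf46} (via $1-\omega_i = \lambda\omega_i^{d+1}$) to strictly order the norms $|1-\omega_i|$, compute $\prod_i(1-\omega_i)=1$ by Vieta, and conclude $|1-\omega_1|<1$. One tiny slip worth flagging: your parenthetical justification that $\omega\neq 0$ (``otherwise $1-\omega=0$, impossible'') is wrong --- if $\omega=0$ then $1-\omega=1$, not $0$; the correct reason is that $\omega=0$ in the fixpoint equation $\lambda\omega^{d+1}=1-\omega$ yields $0=1$. This does not affect the rest of the argument, which is sound.
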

\begin{proof}
Since $1-\omega = \lambda\omega^{d+1}$, we have that $|\omega-1|>0$ (otherwise, $\lambda=0$) and $\omega\in \Complex\backslash \Reals$ (otherwise, $\lambda\in \Reals$). We therefore focus on showing that $|\omega-1|<1$. 

Let $\omega_1,\dots,\omega_{d+1}$ be the
fixpoints of $f$  sorted in increasing order of their norm, so that $\omega_1=\omega$. By Lemma~\ref{lem:4frf46}, we have $|\omega_1|<\cdots<|\omega_{d+1}|$, so by $1-\omega_j = \lambda\omega_j^{d+1}$ ($j\in [d]$),  we obtain that
\begin{equation}\label{aa}
|1-\omega_1|<|1-\omega_2|<\dots<|1-\omega_{d+1}|.
\end{equation}
Note that $\omega_1-1,\dots,\omega_{d+1}-1$ are roots of $\lambda(y+1)^{d+1} + y = 0$;  the coefficient of $y^{d+1}$ and the coefficient of $y^0$ are both equal to $\lambda$, so by Vieta's formula, 
\begin{equation}\label{aaww}
\prod_{j=1}^{d+1} (1-\omega_j)=1.
\end{equation}
Equations \eqref{aa} and \eqref{aaww} imply $|1-\omega_1|<1$, as needed.
\end{proof}

\subsection{Bootstrapping a point next to a fixpoint to an $\eps$-covering}\label{sec:45g45g4b}

Once we have the ability to obtain points close to the fixpoint $\omega$, we proceed to the next step, which is creating a moderately dense set of points around $\omega$ (cf. Lemma~\ref{opop2}).

The main idea of the proof of Lemma~\ref{opop2} is that close to the  fixpoint the recurrence implements with a small error any polynomial with non-negative integer coefficients (evaluated at $\omega-1$). Then we use the fact that the values of these polynomials yield a dense set of points in $\Complex$. Before proceeding with the proof of Lemma~\ref{opop2} we state these ingredients formally.

\begin{lemma}\label{polydense}
Let $z\in\Complex\setminus\Reals$ be such that $|z|<1$. Let $S$ be the set of
values of polynomials with non-negative integer coefficients evaluated at $z$ (that is,
$S=\{ p(z)\,|\, p\in{\mathbb Z_{\geq 0}}[x]\}$). Then $S$ is dense in ${\mathbb C}$.
\end{lemma}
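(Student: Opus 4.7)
The plan is to establish $\overline{S}=\Complex$ by combining two ingredients: (i) identifying a finite set of powers $\{z^{m_1},\dots,z^{m_k}\}$ that ``positively spans'' $\Complex$, meaning every complex number can be written as a non-negative real linear combination of them; and (ii) exploiting $|z|<1$ to scale this configuration by $z^N$ for large $N$ and round the non-negative real coefficients down to non-negative integers while keeping the error small. The key convex-geometric fact I will invoke is that nonzero vectors $v_1,\dots,v_k\in\Complex$ are not all contained in any single closed half-plane through the origin if and only if they positively span $\Complex$.

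For ingredient (i), since $z$ is non-real we have $\arg(z)\bmod 2\pi\in(0,\pi)\cup(\pi,2\pi)$, and I will analyse the sequence $\{m\arg(z)\bmod 2\pi\}_{m\geq 0}$. If $\arg(z)/(2\pi)$ is irrational, Weyl equidistribution makes this sequence dense in $[0,2\pi)$, so I can pick distinct $m_1,m_2,m_3$ with $\arg(z^{m_i})$ within $\pi/100$ of $0,2\pi/3,4\pi/3$ respectively; the resulting three vectors carve the unit circle into arcs each of length less than $\pi$, so they are not all in any closed half-plane. If $\arg(z)/(2\pi)=p/q$ in lowest terms, then $q\geq 3$ (since $\arg(z)\notin\{0,\pi\}$), and the sequence takes the $q$ equally-spaced values $\{2\pi j/q\}_{j=0}^{q-1}$; a closed half-plane through the origin corresponds to a closed half-arc of length $\pi$, which contains at most $\lfloor q/2\rfloor+1<q$ of these angles, so no half-plane captures all $q$ directions. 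In either case, I fix distinct representatives $0\leq m_1<\dots<m_k$ so that $\{z^{m_i}\}_{i=1}^{k}$ positively spans $\Complex$.

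For ingredient (ii), fix $c\in\Complex$ and $\eps>0$. Multiplication by any nonzero scalar is a linear bijection of $\Complex$ that maps closed half-planes through the origin to closed half-planes through the origin, so for every $N\geq 0$ the vectors $a_i^{(N)}:=z^{N+m_i}$ also positively span $\Complex$. I choose $N$ large enough that $k|z|^N<\eps$, write $c=\sum_{i=1}^{k}\alpha_i\,a_i^{(N)}$ with $\alpha_i\geq 0$, and set $n_i:=\lfloor\alpha_i\rfloor\in\mathbb{Z}_{\geq 0}$. Since $\alpha_i-n_i\in[0,1)$ and $|a_i^{(N)}|=|z|^{N+m_i}\leq |z|^N$, the error satisfies $\bigl|c-\sum_i n_i a_i^{(N)}\bigr|\leq k|z|^N<\eps$. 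The distinct exponents $N+m_1<\dots<N+m_k$ guarantee that $\sum_i n_i\,z^{N+m_i}$ equals $p(z)$ for some $p\in\mathbb{Z}_{\geq 0}[x]$, so this approximant lies in $S$. The main obstacle is the case analysis in ingredient (i); once a positively spanning configuration is in hand, the rounding argument is essentially forced by $|z|<1$.
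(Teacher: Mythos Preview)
Your argument is correct. The convex-geometric fact you invoke (nonzero vectors in $\mathbb{R}^2$ positively span iff they are not all contained in a closed half-plane through the origin) is standard, and your case analysis in ingredient~(i) is sound: in the irrational case density of $\{m\arg(z)\bmod 2\pi\}$ yields three directions not in any half-plane, and in the rational case with denominator $q\geq 3$ the $q$ equally spaced directions cannot all fit in a closed half-arc of length~$\pi$. The rounding step in ingredient~(ii) then goes through exactly as you wrote.

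Your route differs from the paper's. The paper treats the two cases with separate constructions: in the rational case it first produces arbitrarily small points on each of the $q$ rays, builds a dense set on every ray by taking integer multiples, and then approximates an arbitrary target by a convex combination of ray-points with rational weights; in the irrational case it matches the \emph{direction} of the target with a single power $z^k$ (using density of $\{kx\}$) chosen so that $|z|^k$ is small, and then takes an integer multiple $w z^k$. By contrast, you unify both cases under a single mechanism: find a fixed finite set of powers whose directions positively span, scale by $z^N$, and round non-negative real coefficients to integers. Your approach is shorter and more uniform; the paper's irrational case has the minor advantage of producing an approximant that is a single monomial with an integer coefficient, but this extra structure is not used anywhere in the paper.
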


\begin{proof}
We can write $z = |z| \emm^{2\pi \im x}$ for some $x\in [0,1)$. Note that $x\neq 0$ and $x\neq 1/2$
(since we assumed $z\not\in\Reals$).

First suppose that $x$ is rational, that is, $x=p/q$ for integer co-prime $p,q$, where $q\geq 3$.
For any $k\in\{0,\dots,q-1\}$ we can obtain an arbitrarily small number on the ray with angle $2\pi k/q$
(by taking $(z^{p^{-1}\mod q})^{k+q\ell}$ for large $\ell$)
and hence we have a dense set of points on the ray (taking integer multiples of the small number). Now we show how using the points on the rays we
obtain a point arbitrarily close to any complex number $t\in\Complex$. First, we can write $t$ as a convex combination of
points on the rays, that is, $t=\alpha_0 r_0 + \dots + \alpha_{q-1} r_{q-1}$ where $r_k$ is
on the ray with angle $2\pi k/q$ (for $k\in\{0,\dots,q-1\}$), $\alpha_k\in [0,1]$ (for $k\in\{0,\dots,q-1\}$)
and  $\sum_{k=0}^{q-1}\alpha_k = 1$. For $\epsilon>0$, let $\hat{\alpha}_k$ be a rational such that
$|\hat{\alpha}_k-\alpha_k|\leq\eps/q$ and let $w$ be the product
of the denominators of $\hat{\alpha}_0,\dots,\hat{\alpha}_{q-1}$.
Since we have a dense set of points on each ray we can obtain $\hat{r_k}$ on the ray with angle $2\pi k/q$
such that $|r_k/w - \hat{r_k}|\leq\eps/w$ (for $k\in\{0,\dots,q-1\})$. Now we argue that
$\sum_{k=0}^{q-1} (w\hat{\alpha}_k) \hat{r}_k$ is close to $t$. We have
\begin{equation}
\begin{split}
\bigg|t- \sum_{k=0}^{q-1} (w\hat{\alpha}_k) \hat{r}_k\bigg| =
\bigg|\sum_{k=0}^{q-1} (\alpha_k-\hat{\alpha}_k) r_k + \sum_{k=0}^{q-1} \hat{\alpha}_k (r_k - w\hat{r}_k) \bigg|
\leq \eps\sum_{k=0}^{q-1} |r_k| + (1+\eps)\eps.
\end{split}
\end{equation}
Taking $\eps$ sufficiently small we get a point arbitrarily close to $t$.

Now suppose $x$ is irrational. Let $t=|t|\emm^{2\pi \im y}$ be a complex number where $y\in [0,1)$.
The fractional part $\{kx\}$ of $kx$ for positive integers $k$ is dense in $[0,1)$ and hence, for $\epsilon>0$, there exists
$k$ such that $|\{kx\} - y|\leq\eps$ and $|z|^k\leq\eps$. Let $w=\lfloor |t|/|z^k| \rfloor$ so that $0\leq |t|-w |z|^k\leq \eps$. Observe that $wz^k=w|z|^k\emm^{2\pi \im \{kx\}}$, so by the triangle inequality
\begin{equation*}
|t - w z^k| \leq \big|t - |t|\emm^{2\pi \im kx}\big|+ \big||t|\emm^{2\pi \im kx}-wz^k\big|=|t|\, | \emm^{2\pi i (y-\{kx\})}-1|+ \big|  |t|- w |z|^k\big| \leq \eps ( 1+ 2\pi |t| ),
\end{equation*}
where in the last inequality we used that for $\theta=2\pi(y-\{kx\})$ it holds that \[| \emm^{i \theta}-1|=\sqrt{(\cos \theta-1)^2+(\sin \theta)^2}=\sqrt{2-2\cos \theta}=2|\sin(\theta/2)|= 2\sin|\theta/2|\leq  |\theta|\leq 2\pi\eps.\] Taking $\eps$ sufficiently small we get a point ($w z^k$) arbitrarily close to $t$.
\end{proof}

\begin{remark}
Note that the assumption $|z|<1$ is necessary---the lemma would be false for, e.g., $z=i$.
\end{remark}

The following operation (as we will prove in Lemma~\ref{apx} below) is a first-order approximation of operation \eqref{eq:4f4m7ujf43f} when applied to points around $\omega$ perturbed by $a_1,\dots,a_d$:
\begin{equation}\label{enq1}
(a_1,\dots,a_d)\mapsto z\sum_{i=1}^d a_i.
\end{equation}

We will use a sequence of~\eqref{enq1} to implement polynomials with non-negative integer
coefficients; the complexity of a polynomial will be the number of steps in the sequence.

\begin{definition}
A straight-line-program with operation~\eqref{enq1} is a sequence of assignments starting with $a_0 = 0$, $a_1 = 1$, and
$$
a_k = z \left(a_{i_{k,1}}+\cdots+a_{i_{k,d}}\right),\quad\mbox{for $k=2,3,\dots$},
$$ where $i_{k,1},\dots,i_{k,d}\in\{0,\dots,k-1\}$. We say that the straight-line-program generates $x\in \Complex$ if there exists integer $k\geq 0$ such that $a_k=x$.
\end{definition}

Using a finite sequence of~\eqref{enq1}, we can implement any polynomial with non-negative integer coefficients, up to factors of $z$. More precisely, we have the following.
\begin{lemma}\label{getpo}
Let $p\in {\mathbb Z_{\geq 0}}[z]$ be a polynomial with non-negative integer coefficients.
There exist non-negative integers $k:=k(p)$ and $n:=n(p)$ and a straight-line-program with operation~\eqref{enq1}
such that $a_{k} = z^{n} p(z)$.
\end{lemma}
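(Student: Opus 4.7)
My plan is to proceed in two stages: first show that any non-negative integer can be produced (scaled by a suitable power of $z$), then use a Horner-style induction to assemble arbitrary polynomials.

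The auxiliary observation I would record first is a \emph{power-raising tool}: once any value $v$ has been generated as some $a_j$ in a straight-line program, the value $zv$ is also generated by applying operation \eqref{enq1} with one slot equal to $v$ and the remaining $d-1$ slots equal to $a_0 = 0$. Iterating, $z^j v$ is generated for every $j \ge 0$. This lets me match exponents of $z$ between values I want to combine later.

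Next I would prove, by strong induction on $c$, that for every $c \in \mathbb{Z}_{\ge 0}$ there exists $n_c \ge 0$ with $c\, z^{n_c}$ generated by some straight-line program. The cases $c = 0$ and $c = 1$ are immediate from $a_0$ and $a_1$. For $2 \le c \le d$, a single application with $c$ copies of $a_1$ and $d - c$ copies of $a_0$ yields $cz$. For $c > d$, write $c = qd + r$ with $1 \le q < c$ and $0 \le r < d$; by induction produce $q z^{n_q}$, form $qd\, z^{n_q+1}$ by one application with all $d$ slots equal to $qz^{n_q}$, separately produce $r z^{n_r}$ (possible since $r \le d-1$), raise powers with the tool above so that both sit at a common exponent $N$, and finally combine them (together with $d - 2$ zero slots) in one operation to obtain
\[
z\bigl(qd\, z^{N} + r\, z^{N}\bigr) = c\, z^{N+1}.
\]

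Finally, I would run a Horner-style downward induction on polynomials. Writing $p(z) = c_0 + c_1 z + \cdots + c_m z^m$ and $p_k(z) = c_k + c_{k+1} z + \cdots + c_m z^{m-k}$, the identity $p_k = c_k + z p_{k+1}$ holds. The base case $p_m = c_m$ is handled by the integer-generation step above. For the inductive step, given that $z^{n_{k+1}} p_{k+1}$ has been generated, I use the power-raising tool and the integer-generation step to make both $c_k z^{n-1}$ and $z^n p_{k+1}$ available for some common $n \ge 1$; one application with these two entries and $d-2$ zero slots then gives
\[
z\bigl(c_k z^{n-1} + z^n p_{k+1}(z)\bigr) = c_k z^n + z^{n+1} p_{k+1}(z) = z^n p_k(z).
\]
Descending to $k = 0$ yields $z^{n_0} p(z)$, as required.

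I do not expect a serious obstacle here; the argument is purely combinatorial and uses only $d \ge 2$, which is ensured by $\Delta \ge 3$. The only book-keeping issue is aligning exponents of $z$ across summed values, which is precisely what the power-raising observation handles, and using the freedom to repeat indices $i_{k,j}$ in the definition of a straight-line-program (so that the same computed value can occupy several slots simultaneously).
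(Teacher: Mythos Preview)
Your proof is correct; the power-raising tool, the integer-generation step, and the Horner induction all work as you describe (the only implicit point is that two separately-built straight-line programs can be concatenated into one, since both begin with the same $a_0=0$, $a_1=1$, but this is routine).

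The paper's proof takes a different and somewhat shorter route: it runs a single induction on the quantity $t+\sum_{j=0}^t c_j$, where $p(z)=\sum_{j=0}^t c_j z^j$. At each step it either subtracts $1$ from the constant term (when $c_0\ge 1$), inductively generates $z^n(p(z)-1)$, separately builds $z^n$ from $a_1$ by repeated power-raising, and combines the two via one application of \eqref{enq1}; or, when $c_0=0$, it divides through by $z$ and applies one power-raising step. Your argument instead separates the problem into two stages---first produce arbitrary non-negative integers (up to a power of $z$) via a base-$d$ style decomposition, then assemble the polynomial by Horner's scheme. The paper's single induction is more economical in writing; your decomposition is more modular and makes the role of the integer coefficients explicit. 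Neither approach yields better quantitative bounds on $k(p)$ or $n(p)$ than the other in any way that matters for the downstream use of the lemma.
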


\begin{proof}
Let $p(z)=\sum_{j=0}^t c_j z^j$, where $c_t\neq 0$ or $t=0$.  We will prove the claim by induction on $t+\sum_{j=0}^t c_j$. The base case is $p(z)\equiv 0$,
here we can take $k(p)=0$ and $n(p)=0$. Now assume $t+\sum_{j=0}^t c_j\geq 1$.

First assume that $c_0\geq 1$. Let $q(z)=p(z)-1$. By the induction hypothesis there exist $n,k$ and a straight-line-program with operation~\eqref{enq1}
such that $a_k = z^{n} q(z)$. Let $a_{k+1} = z a_1$ and $a_{k+j} = z a_{k+j-1}$ for $j=2,\dots,n$ (note that
$a_{k+n} = z^n$). Finally, add $a_{k+n+1} = z (a_k + a_{k+n})$. Note that $a_{k+n+1} = z^{n+1} p(z)$.

Now assume $c_0=0$. Let $q(z)=p(z)/z$. By the induction hypothesis there exist $n,k$ and a straight-line-program with operation~\eqref{enq1}
such that $a_k = z^{n} q(z)$. Let $a_{k+1} = z a_k$. Note that $a_{k+1} = z^n p(z)$.
\end{proof}

From the Taylor expansion we have that close to the fixpoint the multi-variate hard-core
recurrence implements operation~\eqref{enq1} (with a small error).

\begin{lemma}\label{apx} 
Suppose that  $\lambda\in \Complex\backslash\Reals$ and $d\geq 2$. 
Let $\omega$ be the fixpoint of $f(x)=\frac{1}{1+\lambda x^{d}}$ with the smallest norm and set $z=\omega-1$.

There exist reals $C_0:=C_0(\omega,\lambda,d)>1$ and
$\delta_0:=\delta_0(\omega,\lambda,d)>0$ such that for
any $a_1,\dots,a_d\in\Complex$ with $|a_j|\leq\delta_0$ (for $j\in [d])$ we have
\begin{equation}
\frac{1}{1+\lambda \prod_{j=1}^d (\omega + a_j) } = \omega + z \bigg(\sum_{j=1}^d a_j\bigg) + \tau,
\end{equation}
where $|\tau|\leq C_0 \max_{j\in [d]} |a_j|^2$.
\end{lemma}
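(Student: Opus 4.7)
The plan is a standard first-order Taylor expansion argument, using crucially the fixpoint relation to identify the linear coefficient as $z$.

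First I would exploit the fixpoint equation. Since $\omega = f(\omega) = \frac{1}{1+\lambda \omega^d}$, we have $1 + \lambda \omega^d = 1/\omega$, equivalently
\[ \lambda \omega^{d+1} = 1 - \omega = -z. \]
Note that by Lemma~\ref{uiopa} we have $\omega \neq 0$ and $z \neq 0$, so in what follows $\omega$ is bounded away from zero and all quantities are well-defined.

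Next I would expand the product. Writing $A := \max_{j \in [d]} |a_j|$, the elementary-symmetric-function expansion gives
\[ \prod_{j=1}^{d}(\omega + a_j) \;=\; \omega^d \;+\; \omega^{d-1}\sum_{j=1}^{d} a_j \;+\; E, \]
where $E = \sum_{S \subseteq [d],\, |S|\geq 2} \omega^{d-|S|} \prod_{j \in S} a_j$ collects all terms of total degree $\geq 2$ in the $a_j$. For $A \leq 1$ (say), the triangle inequality gives $|E| \leq C_1 A^2$ with $C_1 := \sum_{k=2}^{d}\binom{d}{k}|\omega|^{d-k}$, a constant depending only on $\omega$ and $d$. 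Setting $D := 1 + \lambda \prod_j(\omega+a_j)$ and $D_0 := 1 + \lambda \omega^d = 1/\omega$, this yields
\[ D - D_0 \;=\; \lambda \omega^{d-1}\sum_{j=1}^{d} a_j \;+\; \lambda E, \qquad |D - D_0| \leq \bigl(|\lambda||\omega|^{d-1} d\bigr)A + |\lambda| C_1 A^2. \]

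Now I would choose $\delta_0 > 0$ small enough so that for $A \leq \delta_0$ we have $|D - D_0| \leq |D_0|/2 = 1/(2|\omega|)$; this is possible since the right-hand side above is $O(A)$. Consequently $|D| \geq 1/(2|\omega|)$, and in particular $D \neq 0$. Using the exact identity
\[ \frac{1}{D} \;=\; \frac{1}{D_0} \;-\; \frac{D - D_0}{D_0^2} \;+\; \frac{(D-D_0)^2}{D_0^2\, D}, \]
and substituting the expression for $D - D_0$, the linear contribution is
\[ -\frac{\lambda \omega^{d-1}\sum_j a_j}{D_0^2} \;=\; -\omega^2 \lambda \omega^{d-1}\sum_j a_j \;=\; -\lambda \omega^{d+1}\sum_j a_j \;=\; z \sum_{j=1}^{d} a_j, \]
exactly using the fixpoint identity $\lambda \omega^{d+1} = -z$. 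The remaining two terms, namely $-\lambda E / D_0^2 = -\omega^2 \lambda E$ and $(D-D_0)^2/(D_0^2 D)$, are both $O(A^2)$: the first is bounded by $|\omega|^2 |\lambda| C_1 A^2$, and for the second we use $|D-D_0| \leq C_2 A$ (valid for $A \leq \delta_0$, with $C_2$ a constant) together with $|D| \geq 1/(2|\omega|)$ to get a bound of $2|\omega|^3 C_2^2 A^2$. Collecting these into a single constant $C_0$ that depends only on $\omega, \lambda, d$ finishes the proof.

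There is no real obstacle here: the only thing to watch is that the linear-in-$a_j$ term precisely assembles into $z \sum_j a_j$ via the fixpoint identity, and that $\delta_0$ is chosen small enough that $D$ stays bounded away from zero so that the quadratic remainder in the expansion of $1/D$ is genuinely quadratic in $A$.
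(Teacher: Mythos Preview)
Your proof is correct and follows the same underlying idea as the paper: expand to first order, use the fixpoint identity $\lambda\omega^{d+1}=1-\omega=-z$ to identify the linear coefficient as $z\sum_j a_j$, and bound the remainder quadratically after ensuring the denominator stays bounded away from zero.

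The execution differs slightly. The paper parameterises by a single real variable, setting $t=\max_j|a_j|$, $b_j=a_j/t$, and $F(t)=1/G(t)$ with $G(t)=1+\lambda\prod_j(\omega+b_jt)$; it then computes $F'(0)=z\sum_j b_j$, bounds $|F''(t)|$ uniformly on $(0,\delta_0)$ via bounds on $|G'|$, $|G''|$, and invokes the one-variable Taylor remainder $|F(t)-F(0)-tF'(0)|\leq Ct^2$. You instead expand $\prod_j(\omega+a_j)$ algebraically via elementary symmetric functions and use the exact identity $\tfrac{1}{D}=\tfrac{1}{D_0}-\tfrac{D-D_0}{D_0^2}+\tfrac{(D-D_0)^2}{D_0^2 D}$ to isolate the linear and quadratic pieces. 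Your route is slightly more elementary (no calculus) and makes the constants marginally more explicit; the paper's single-variable parameterisation is a clean way to package the same estimate and is reused verbatim in the companion Lemmas~\ref{apxinv} and~\ref{derile}. One cosmetic point: the statement asks for $C_0>1$, so at the end take $C_0:=\max\{C_0,2\}$ as the paper does.
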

\begin{proof}
Let $b_1,\hdots,b_d$ be arbitrary complex numbers in the unit disc, i.e., $|b_1|,\hdots,|b_d|\leq 1$. Let
\[F(t) = \frac{1}{G(t)},\quad\mbox{where}\quad G(t)=1+\lambda \prod_{j=1}^d (\omega + b_j t).\]
Then, using that $\omega$ is a fixpoint of $f(x)=\frac{1}{1+\lambda x^{d}}$, we have $G(0)=1/\omega$ and
\begin{equation}\label{zilo}
F'(0) = - \frac{\lambda \left(\sum_{j=1}^d b_j\right) (\omega)^{d-1}}{\left(1+\lambda (\omega)^d\right)^2} = z \sum_{j=1}^d b_j.
\end{equation}
Note that for all $t\in [0,1]$
\begin{equation}\label{hiop}
|G'(t)| = \Big|\lambda\sum_{j=1}^d b_j \prod_{k\neq j} (\omega + b_k t)\Big|\leq d |\lambda| (1+|\omega|)^d.
\end{equation}
Similarly, we have that for all $t\in [0,1]$
\begin{equation*}
|G''(t)| \leq d^2 |\lambda| (1+|\omega|)^d.
\end{equation*}

Let $\delta_0: = \min\big\{\frac{1}{2 |\omega| d |\lambda| (1+|\omega|)^{d}},1\big\}$. Note that~\eqref{hiop} implies that for $t\in (0,\delta_0)$ we have
\[|G(t)| \geq |G(0)| - t d |\lambda| (1+|\omega|)^d \geq \frac{1}{|\omega|} - \delta_0  d |\lambda| (1+|\omega|)^d \geq \frac{1}{2|\omega|}.\]
and hence for $t\in (0, \delta_0)$
\begin{equation}\label{zzzzz1}
|F''(t)| = \Bigg| \frac{2G'(t)^2}{G(t)^3} - \frac{G''(t)}{G(t)^2}\Bigg| \leq 3 (2|\omega|)^3 d^4 |\lambda|^2 (1+|\omega|)^{2d} =: C,
\end{equation}
which implies
\begin{equation}\label{ziuq}
|F(t) - F(0) - t F'(0)| \leq Ct^2.
\end{equation}
Given $a_1,\dots,a_d$ such that $|a_j|\leq\delta_0$, let $t=\max_j |a_j|\in(0,\delta_0)$ and $b_j=a_j/t$ (for $j\in [d]$); note,
$$
\frac{1}{1+\lambda \prod_{j=1}^d (\omega + a_j) } = F(t) \quad\mbox{and}\quad t F'(0) = z \sum_{j=1}^d a_j.
$$
Let $C_0:=\max\{C,2\}>1$. The lemma now follows  from~\eqref{ziuq},~\eqref{zilo} and the fact that $F(0)=\omega$.
\end{proof}

Finally we can prove Lemma~\ref{opop2}, which we restate here for convenience.
\begin{lemopoptwo}
\statelemmaopoptwo
\end{lemopoptwo}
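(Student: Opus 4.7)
The plan is to combine three ingredients from the preceding subsection. First, Lemma~\ref{apx} says that near the fixpoint $\omega$, the multivariate hard-core recurrence reduces, up to a quadratic error, to the linear operation~\eqref{enq1} on displacements $a_j := x_j - \omega$, where $z:=\omega-1$ satisfies $z\in\Complex\setminus\Reals$ and $0<|z|<1$ by Lemma~\ref{uiopa}. Second, Lemma~\ref{getpo} shows that this linear operation, viewed as a straight-line-program starting from $a_0=0$ and $a_1=1$, realises every value of the form $z^{n}p(z)$ with $p\in {\mathbb Z_{\geq 0}}[x]$. Third, Lemma~\ref{polydense} gives density of $\{p(z):p\in {\mathbb Z_{\geq 0}}[x]\}$ in $\Complex$, so finitely many such values already form an $(\eps/3)$-covering of the closed unit disc. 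Finally, Lemma~\ref{lem:hard-core-imp} will convert each finished hard-core-program into a tree implementing $\lambda$ times the generated value.

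The construction I would carry out is as follows. First I fix a finite $F\subseteq {\mathbb Z_{\geq 0}}[x]$ such that $\{p(z):p\in F\}$ is an $(\eps/3)$-covering of $\overline{B(0,1)}$. For each $p\in F$, Lemma~\ref{getpo} supplies a straight-line-program of length $k(p)$ producing $z^{n(p)}p(z)$; I pad each program with extra multiplications by $z$ (instances of~\eqref{enq1} using $d-1$ copies of the ``$a_0$''-symbol) so that all of them terminate at $z^N p(z)$ for the common exponent $N:=\max_{p\in F}n(p)$, within at most some total $K$ steps, and with all intermediate values bounded in magnitude by a constant $C_1$. Then, using Lemma~\ref{hop2}, I pick two hard-core values $y_0=x_{k_0}$ and $y_1=x_{k_1}$ with displacements $d_0,d_1$ from $\omega$ satisfying $0<|d_1|\leq\delta_1$ and $|d_0|\ll|d_1|$, for a small parameter $\delta_1$ to be fixed below; each padded program is then extended to a hard-core-program (Definition~\ref{def:generates}) by executing the hard-core recurrence with the same choice of indices, but using $y_0$ wherever the straight-line-program refers to ``$a_0$'' and $y_1$ wherever it refers to ``$a_1$''. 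By the linearity of~\eqref{enq1}, the idealised displacement reached after the final step is $d_1\cdot z^N p(z)$, so Lemma~\ref{lem:hard-core-imp} produces a tree $T_p$ of maximum degree at most $\Delta$ implementing $\lambda$ times the actual hard-core value.

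The main obstacle is controlling the accumulated error between the idealised displacements $d_1 z^N p(z)$ and the true hard-core displacements $\tilde a_K(p)$. Lemma~\ref{apx} adds an error $\tau_k$ of magnitude $O(\max_j|\tilde a_{i_{k,j}}|^2)$ at each step, while the linear part of~\eqref{enq1} may amplify past errors by a factor of up to $d|z|$, which can exceed one. The key observation is that the per-step errors are \emph{quadratic} in the displacement magnitudes (which are $O(|d_1|)$), whereas the target scale $|d_1 z^N|$ is \emph{linear} in $|d_1|$; since $K$ depends only on $\eps$, $\lambda$, $\Delta$ and not on $|d_1|$, a routine induction yields a bound of the form $|e_K|\leq (d|z|)^K\bigl(|d_0|+K C_0 (C_1|d_1|)^2\bigr)$, and choosing $\delta_1$ small enough (allowed by Lemma~\ref{hop2}) forces $|e_K|\leq (\eps/3)|d_1 z^N|$ while also keeping every intermediate $|\tilde a_k|\leq\delta_0$ so that Lemma~\ref{apx} applies throughout. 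Setting $\rho := |\lambda|\cdot|d_1 z^N|$, the idealised displacements $\{d_1 z^N p(z):p\in F\}$ form an $(\eps|d_1 z^N|/3)$-covering of $B(0,|d_1 z^N|)$, so after absorbing $|e_K|$ the trees $\{T_p:p\in F\}$ together implement every $\lambda'\in B(\lambda\omega,\rho)$ with accuracy at most $\rho\eps$; and by shrinking $|d_1|$ further we can guarantee $\rho<\kappa$, as required.
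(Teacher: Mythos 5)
Your proposal follows the paper's own argument in every essential respect: an $(\eps/\mathrm{const})$-covering of the unit disc by polynomial values $p(z)$ (Lemma~\ref{polydense}), padding the straight-line-programs of Lemma~\ref{getpo} to a common exponent $N$ and then substituting two nearby hard-core points $y_0,y_1$ with $|y_0-\omega|$ quadratically smaller than $|y_1-\omega|$, an induction showing the per-step quadratic errors from Lemma~\ref{apx} stay a factor $O(|d_1|)$ below the target scale $|d_1 z^N|$ because the number of steps $K$ is a constant, and finally Lemma~\ref{lem:hard-core-imp} to turn the programs into trees. The constants and the exact form of the error recursion differ cosmetically from the paper's $Z(C_0,d,\ell)$, but the decomposition, the key lemmas invoked, and the choice $\rho\sim|\lambda||d_1 z^N|$ are all the same.
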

\begin{proof}
Consider arbitrary $\epsilon,\kappa>0$ and let $z:=\omega-1$. By Lemma~\ref{uiopa}, we have that $z\in \Complex\backslash \Reals$ and $0<|z|<1$. Therefore, Lemma~\ref{polydense} gives that polynomials with non-negative integer coefficients evaluated at $z$ are dense in $\Complex$. Hence there exists a finite collection ${\cal F}$ of polynomials with non-negative
integer coefficients whose values at $z$ form an $\eps/2$-covering of the unit disk (to obtain the collection
take a finite $\eps/4$-covering of the unit disk and for every point in the covering, using the density,
get a polynomial whose value at $z$ is at distance at most $\eps/4$ from the point).

By Lemma~\ref{getpo} every polynomial $p$  with non-negative integer coefficients can be generated, up to a factor  $z^{n(p)}$, using $k(p)$ operations~\eqref{enq1}. Let
$$
N:=\max_{p\in{\cal F}} n(p)\quad\mbox{and}\quad K:=N+\max_{p\in{\cal F}} k(p).
$$
For every $p\in{\cal F}$, there is a straight-line-program to generate $z^N p(z)$ using at most $K$ applications of~\eqref{enq1} (the extra $N$
in the definition of $K$ is to allow for extra operations~\eqref{enq1} to move from $z^{n(p)} p(z)$ to $z^N p(z)$).
Let $C_0>1$ and $\delta_0>0$ be the constants from Lemma~\ref{apx}. Let
\begin{equation}\label{dede}
\delta = \min\Big\{\delta_0 / d^K, \frac{\eps |z|^N}{2 Z(C_0,d,K)},\frac{1}{Z(C_0,d,K)},\frac{\kappa}{2|\lambda|} \Big\}, \mbox{ where $Z(C_0,d,k) := C_0 (2d)^{2k}$ for $k\geq 0$}.
\end{equation}
 By Lemma~\ref{hop2}, there is a hard-core-program $x_0,x_1,\hdots$ such that for sufficiently large $m,m'$, it holds that  $0<|x_m-\omega|\leq \delta$ and $|x_{m'}-\omega|\leq |x_m-\omega|^2$. Let $y_1=x_m$ and $y_0=x_{m'}$, so that $0<|y_1-\omega|\leq\delta$ and $|y_0-\omega|\leq |y_1-\omega|^2$. Finally, let 
\begin{equation*}
\mbox{$h:=y_1-\omega$ and set $r:=|hz^N|$.}
\end{equation*}
Since $|z|<1$, we have $r\leq |h|\leq \delta<\kappa/|\lambda|$.

We claim that for any $u$ such that $(\omega+u)\in B(\omega,r)$, there exists a hard-core-program that generates $\omega+u'$ with $|u-u'|\leq \epsilon r$. By Lemma~\ref{lem:hard-core-imp}, this implies that, for $\rho:=|\lambda| r\in(0,\kappa)$, any activity in the ball $B(\lambda \omega,\rho)$ can be implemented with accuracy $\epsilon \rho$ by a tree of maximum degree $\Delta$, as needed.

To obtain the desired hard-core-program, observe first that $u/r$ belongs to the unit disc, so there exists $p\in {\cal F}$ such that 
\begin{equation}\label{eq:g4g435f5g5w}
|p(z)-u/r|\leq\eps/2.
\end{equation}
Moreover, there is a straight-line-program $\hat{y}_0, \hat{y}_1,\hdots,\hat{y}_k$ with $k\leq K$ to generate $z^Np(z)$, i.e., $\hat{y}_0=0$, $\hat{y}_1=1$, and 
\begin{equation}\label{rty}
\hat{y}_\ell = z \left(\hat{y}_{i_{\ell,1}}+\cdots+\hat{y}_{i_{\ell,d}}\right),\quad\mbox{for $\ell=2,3,\dots,k$},
\end{equation}
where $i_{\ell,1},\dots,i_{\ell,d}\in\{0,\dots,\ell-1\}$ and $\hat{y}_k=z^Np(z)$.  Note that for all $\ell\in\{0,\dots,k\}$ we have
by induction (using $|z|<1$ from  Lemma~\ref{uiopa}) that
\begin{equation}\label{yuop}
|\hat{y}_\ell|\leq d^{\ell-1}.
\end{equation}
We will next convert the straight-line-program \eqref{rty} into a hard-core-program. Namely, for $y_0,y_1$ as above,  let $y_2,\dots,y_k$ be given by
\begin{equation}\label{rty2}
y_\ell = \frac{1}{1+\lambda \prod_{j=1}^d y_{i_{\ell,j}}}, \mbox{ for $\ell=2,3,\dots,k$}.
\end{equation}
We will prove that for all $\ell=0,1,\hdots, k$ it holds that
\begin{equation}\label{zhi}
y_\ell = \omega + h \hat{y}_\ell + \tau_\ell,\mbox{ where } |\tau_\ell|\leq |h|^2 Z(C_0,d,\ell).
\end{equation}
Assuming \eqref{zhi} for the moment, let us conclude the proof of the claim by showing that  $y_k$ is at distance $\leq r\eps$ from $\omega+u$ and that it can be generated by a hard-core-program.  In particular, for $\ell=k$, \eqref{zhi} gives  (using that $\hat{y}_k=z^N p(z)$, $|h|=|y_1-\omega|\leq\delta$ and \eqref{dede})
\[y_k=\omega+h z^N p(z) + \tau_k, \mbox{ where $|\tau_k|\leq |h|^2 Z(C_0,d,K)\leq |h| |z|^N\epsilon/2$}.\]
Combining this with \eqref{eq:g4g435f5g5w} and recalling that $r=|h z^N|$, we obtain that
\[|y_k - (\omega+u)| = |hz^N p(z) + \tau_k  - u | \leq  r |p(z) -u/r|+|\tau_k|\leq  r\eps.\]
To finish the proof of the claim, we only need to observe that, since $y_0,y_1$ can be generated using a hard-core-program, we can also generate the sequence $y_0,y_1,\hdots,y_k$ using a hard-core-program.

It remains to prove \eqref{zhi}. We do this by induction. 
Note that $y_0 = \omega + 0\cdot h + \tau_0$ where $|\tau_0|=|y_0-\omega|\leq |h|^2$ and $y_1 = \omega+1\cdot h+0$, covering the base cases of~\eqref{zhi}. For the induction step, assume that for all $0\leq \ell'<\ell$ it holds that
\begin{equation}\label{eq:zhizhizhi}
y_{\ell'} = \omega + h \hat{y}_{\ell'} + \tau_{\ell'},\mbox{ where } |\tau_{\ell'}|\leq |h|^2 Z(C_0,d,\ell').
\end{equation}
For all $\ell\leq K$, we have by \eqref{dede} that $Z(C_0,d,\ell)\leq Z(C_0,d,K)\leq 1/\delta\leq 1/h$ and $|h| d^{\ell - 1}\leq \delta_0/d$. Therefore \eqref{yuop} and \eqref{eq:zhizhizhi} give that, for all $0\leq \ell'<\ell$,
\begin{equation}\label{zhito}
|h \hat{y}_{\ell'} + \tau_{\ell'}| \leq |h| d^{\ell'-1} + |h|^2 Z(C_0,d,\ell') \leq |h| d^{\ell - 1} + |h| \leq \delta_0.
\end{equation}
Hence, we can apply Lemma~\ref{apx} with $a_j$'s of the form $h \hat{y}_{\ell'} + \tau_{\ell'}$ ($\ell'\in \{0,1,\hdots,\ell-1\}$).
From Lemma \ref{apx} and \eqref{zhito} we have
\begin{equation*}
y_\ell = \frac{1}{1+\lambda \prod_{j=1}^d (\omega + h \hat{y}_{i_{\ell,j}} + \tau_{i_{\ell,j}})} =\omega +
h z \Big(\sum_{j=1}^d \hat{y}_{i_{\ell,j}} \Big) + \tau_\ell, \mbox{ where } \tau_\ell = z \sum_{j=1}^d \tau_{i_{\ell,j}} + \tau,
\end{equation*}
and
$$
|\tau|\leq C_0 \max_{\ell'=0,\hdots,\ell-1}|h \hat{y}_{\ell'} + \tau_{\ell'}|^2 \leq C_0 |h|^2 (d^{\ell-1} + 1 )^2 \leq C_0|h|^2 d^{2\ell}.
$$
Thus
$$
|\tau_\ell| \leq d |h|^2 Z(C_0,d,\ell-1) + C_0 |h|^2 d^{2\ell} = |h|^2 C_0 ( d (2d)^{2\ell-2} + d^{2\ell} )
\leq |h|^2 C_0 (2d)^{2\ell}  =  |h|^2 Z(C_0,d,\ell),
$$
competing the induction step. This finishes the proof of \eqref{zhi} and therefore the proof of Lemma~\ref{opop2}.
\end{proof}

\subsection{Bootstraping $\eps$-covering to arbitrary density}\label{sec:tv45vef}

Our next step is to use the ``moderately dense'' set of points in a small disk around $\omega$ to create a
dense set of points.  We first need a few technical results.

\begin{lemma}\label{apxinv} Suppose that  $\lambda\in \Complex\backslash\Reals$ and $d\geq 2$. 
Let $\omega$ be the fixpoint of $f(x)=\frac{1}{1+\lambda x^{d}}$ with the smallest norm and let $z=\omega-1$.

There exist reals $C_1:=C_1(\omega,\lambda,d)>0$ and
$\delta_1:=\delta_1(\omega,\lambda,d)>0$ such that for
any $a_1,\dots,a_{d}\in\Complex$ with $|a_j|\leq\delta_1$ (for $j\in [d])$ we have
\begin{equation}
\frac{1}{\lambda}\Big(\frac{1}{\omega+a_{d}}-1\Big) \prod_{j=1}^{d-1} (\omega + a_k)^{-1} = \omega + \left( \frac{a_d}{z} - \sum_{j=1}^{d-1} a_j \right) + \tau,
\end{equation}
where $|\tau|\leq C_1 \max_{k\in [d]} |a_k|^2$.
\end{lemma}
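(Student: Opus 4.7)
The plan is to mirror the strategy of Lemma~\ref{apx}: reduce to a one-parameter family, verify that the constant and linear terms of the Taylor expansion match the claim, and then uniformly bound the second derivative on a small disk to get the quadratic error.

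Define $G(a_1,\hdots,a_d):=\tfrac{1}{\lambda}\Big(\tfrac{1}{\omega+a_d}-1\Big)\prod_{j=1}^{d-1}(\omega+a_j)^{-1}$ and, for $|b_1|,\hdots,|b_d|\leq 1$, set $F(t)=G(tb_1,\hdots,tb_d)$. First I would check the constant term. Using the fixpoint equation $\lambda\omega^{d+1}=1-\omega=-z$, one gets $G(0,\hdots,0)=\tfrac{1-\omega}{\lambda\omega^d}=\omega$, so $F(0)=\omega$. Next I would compute the partial derivatives at the origin. Direct differentiation yields $\partial_{a_d}G|_0=-\tfrac{1}{\lambda\omega^{d+1}}=\tfrac{1}{z}$, and for $j\in[d-1]$, $\partial_{a_j}G|_0=-\tfrac{1-\omega}{\lambda\omega^{d+1}}=-1$. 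Thus $F'(0)=\tfrac{b_d}{z}-\sum_{j=1}^{d-1}b_j$, and setting $t=\max_k|a_k|$, $b_k=a_k/t$ gives $tF'(0)=\tfrac{a_d}{z}-\sum_{j=1}^{d-1}a_j$, exactly the linear correction claimed.

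For the error estimate, note that $\omega\neq 0$ (since $f(0)=1$), so I can choose $\delta_1\in(0,|\omega|/2)$ small enough that whenever $|a_k|\leq\delta_1$ for all $k$, each denominator satisfies $|\omega+a_k|\geq|\omega|/2>0$. On this region, $F(t)$ is a smooth function of $t\in[0,\delta_1]$ and $F''(t)$ can be bounded uniformly: both $\tfrac{1}{\omega+a_d}-1$ and the inverses $(\omega+a_j)^{-1}$ together with their first two derivatives in $t$ are bounded in terms of $|\omega|$, $|\lambda|$, $d$ and $\delta_1$, so by the Leibniz product rule there is a constant $C_1=C_1(\omega,\lambda,d)>0$ such that $|F''(s)|\leq 2C_1$ for all $s\in[0,\delta_1]$.

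Finally, Taylor's theorem with remainder gives
\[
|F(t)-F(0)-tF'(0)|\leq \tfrac{t^2}{2}\sup_{s\in[0,t]}|F''(s)|\leq C_1 t^2 = C_1\max_{k\in[d]}|a_k|^2,
\]
which is exactly the stated bound on $\tau$. The main obstacle is just the bookkeeping required to produce an explicit uniform bound on $F''$, but this is completely routine once $\delta_1$ is chosen to keep the denominators away from zero; no new ideas beyond Lemma~\ref{apx} are needed.
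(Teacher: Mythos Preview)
Your proposal is correct and follows essentially the same approach as the paper: parameterize by $t$ with direction vector $(b_1,\dots,b_d)$, verify $F(0)=\omega$ and $F'(0)=b_d/z-\sum_{j=1}^{d-1}b_j$ via the fixpoint identity $\lambda\omega^{d+1}=-z$, choose $\delta_1$ small enough to keep all denominators bounded away from zero, and then bound $|F''|$ uniformly to obtain the quadratic remainder. The only cosmetic difference is that the paper rewrites the expression as $\tfrac{1}{\lambda}\big(\tfrac{1}{G_d(t)}-\tfrac{1}{G_{d-1}(t)}\big)$ with $G_k(t)=\prod_{j=1}^k(\omega+b_jt)$ before differentiating, whereas you differentiate the product form directly; the computations and the resulting constants are equivalent.
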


\begin{proof}
Let $b_1,\hdots,b_d$ be arbitrary complex numbers in the unit disc, i.e., $|b_1|,\hdots,|b_d|\leq 1$. Let
$$
F(t) = \frac{1}{\lambda} \Bigg( \frac{1}{G_d(t)} - \frac{1}{G_{d-1}(t)} \Bigg),\quad\mbox{where }G_k(t):=\prod_{j=1}^k (\omega + b_j t) \mbox{ for $k\in [d]$}.
$$
Then,  $G_k(0)=\omega^{k}$ and $G_k'(0)=\omega^{k-1}\sum^k_{j=1}b_j$, so
\begin{equation}\label{oozilo}
F'(0) =-\frac{G_d'(0)}{\lambda\big(G_d(0)\big)^2} +\frac{G_{d-1}'(0)}{\lambda\big(G_{d-1}(0)\big)^2}=-\frac{\sum^{d}_{j=1}b_j}{\lambda \omega^{d+1}}+\frac{\sum^{d-1}_{j=1}b_j}{\lambda \omega^{d}}=\frac{b_d}{z} - \sum_{j=1}^{d-1} b_j,
\end{equation}
where in the last equality we used that $\lambda \omega^{d+1}=1-\omega=-z$. As in \eqref{hiop}, for $k\in[d]$ and all $t\in [0,1]$, we have that
\begin{equation}\label{oohiop}
|G_k'(t)| \leq d  (1+|\omega|)^d.
\end{equation}
Let $\delta_1:= \min\big\{\frac{|\omega|^d}{2 d (1+|\omega|)^{d})},1\big\}>0$. Note that~\eqref{oohiop} implies that for $k\in [d]$ and $t\in (0,\delta_1)$ we have
$$
|G_k(t)| \geq |G_k(0)| - t d (1+|\omega|)^d \geq \frac{|\omega|^d}{2}.
$$
Similarly to~\eqref{oohiop}, for $k\in[d]$ and all $t\in [0,1]$, we have 
\begin{equation}\label{oohiop2}
|G_k''(t)| \leq d^2 (1+|\omega|)^d,
\end{equation}
and hence for $t\in (0, \delta_1)$, following the same argument as in~\eqref{zzzzz1}
$$
|F''(t)| \leq \frac{6}{|\lambda|} (2/|\omega|^d )^3 d^4 (1+|\omega|)^{2d} =: C_1,
$$
which implies
\begin{equation}\label{ooziuq}
|F(t) - F(0) - t F'(0)| \leq C_1 t^2.
\end{equation}
Given $a_1,\dots,a_d$ such that $|a_j|\leq\delta_1$, let $t=\max_j |a_j|$ and $b_j=a_j/t$ (for $j\in [d]$); then, from~\eqref{oozilo}
$$
\frac{1}{\lambda}\Big(\frac{1}{\omega+a_{d}}-1\Big) \prod_{j=1}^{d-1} (\omega + a_k)^{-1} = F(t) \quad\mbox{and}\quad t F'(0) = \frac{a_d}{z} - \sum_{j=1}^{d-1} a_j.
$$
The lemma now follows from~\eqref{ooziuq} and the fact that $F(0)=\omega$.
\end{proof}

\begin{lemma}\label{derile}
Suppose that  $\lambda\in \Complex\backslash\Reals$ and $d\geq 2$. 
Let $\omega$ be the fixpoint of $f(x)=\frac{1}{1+\lambda x^{d}}$ with the smallest norm and let $z=\omega-1$.

There exist reals $C_2:=C_2(\omega,\lambda,d)>0$ and
$\delta_2:=\delta_2(\omega,\lambda,d)>0$ such that for
any $a_1,\dots,a_{d}\in\Complex$ with $|a_j|\leq\delta_2$ (for $j\in [d])$ we have
\begin{equation}
\frac{\partial}{\partial x} \frac{1}{1+\lambda (\omega+x) \prod_{j=1}^{d-1}(\omega + a_j)} \Big |_{x=a_d}  = z + \tau,
\end{equation}
where $|\tau|\leq C_2 \max_{j\in [d]} |a_j|$.
\end{lemma}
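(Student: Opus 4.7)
The plan is to compute the partial derivative explicitly and reduce to a Lipschitz estimate on a rational function. Writing $P=\prod_{j=1}^{d-1}(\omega+a_j)$, the quotient rule gives
\[
\frac{\partial}{\partial x}\frac{1}{1+\lambda(\omega+x)P}\Big|_{x=a_d} = -\frac{\lambda P}{\big(1+\lambda(\omega+a_d)P\big)^2} =: h(a_1,\ldots,a_d),
\]
which is a rational function of $(a_1,\ldots,a_d)$ whose denominator is nonzero at the origin (precisely because $\omega$ is a fixpoint of $f$).

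Next I would evaluate $h$ at $(0,\ldots,0)$. There $P=\omega^{d-1}$, and the fixpoint equation $\omega(1+\lambda\omega^d)=1$ rearranges to $\lambda\omega^{d+1}=1-\omega=-z$. Thus
\[
h(0,\ldots,0)=-\frac{\lambda\omega^{d-1}}{(1+\lambda\omega^d)^2}=-\frac{\lambda\omega^{d-1}}{(1/\omega)^2}=-\lambda\omega^{d+1}=z,
\]
exactly matching the claimed leading term.

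The remaining $O(\max_j|a_j|)$ error is a routine Lipschitz argument in the spirit of the proofs of Lemmas~\ref{apx} and~\ref{apxinv}. Choose $\delta_2>0$ small enough that $|1+\lambda(\omega+a_d)P|\geq 1/(2|\omega|)$ whenever $|a_j|\leq\delta_2$ for every $j$; this is possible by continuity, since the quantity equals $1/\omega$ at the origin. On the polydisc $\{|a_j|\leq\delta_2\}$ the function $h$ is therefore holomorphic, bounded, and has uniformly bounded partial derivatives by a constant that depends only on $\omega,\lambda,d$. Applying the mean value theorem along the segment from $(0,\ldots,0)$ to $(a_1,\ldots,a_d)$ then yields $|h(a_1,\ldots,a_d)-z|\leq C_2\max_j|a_j|$ for an appropriate $C_2=C_2(\omega,\lambda,d)$. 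There is no real obstacle here; the only care needed is to track the constants $C_2$ and $\delta_2$ so that they depend only on $\omega,\lambda,d$ as claimed.
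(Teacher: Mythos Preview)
Your proposal is correct and follows essentially the same route as the paper: compute the derivative explicitly, verify via the fixpoint identity $\lambda\omega^{d+1}=1-\omega$ that its value at the origin is $z$, and bound the deviation by a Lipschitz argument. The only cosmetic difference is that the paper makes the Lipschitz step concrete by factoring the derivative as $-(\lambda\omega^{d-1}+\tau_1)(\omega+\tau_2)^2$ and reusing Lemma~\ref{apx} to bound $\tau_2$, whereas you invoke bounded partial derivatives on the polydisc directly.
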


\begin{proof}
Let $C_0,\delta_0$ be the constants in Lemma \ref{apx} and let $\delta_2:=\min\{1,\delta_0\}$. 

Suppose that $a_1,\dots,a_{d}\in\Complex$ with $|a_j|\leq\delta_2\leq 1$.  From Lemma~\ref{apx} and the fact that $|z|<1$ (cf. Lemma~\ref{uiopa}), we have 
\begin{equation}\label{csmwdz1}
\Big|\frac{1}{1+\lambda\prod_{j=1}^d (\omega+a_j)}-\omega\Big| \leq \Big| z \sum^d_{j=1} a_j\Big| + C_0 \max_{j\in [d]} |a_j|^2 \leq  C \max_{j\in [d]} |a_j|, \mbox{ where } C:=d+C_0.
\end{equation} 
 From~\eqref{hiop}, we have 
\begin{equation}\label{csmwdz2}
\Big| \lambda\prod_{j=1}^{d-1}(\omega + a_j) - \lambda \omega^{d-1}\Big | \leq  C'\max_{j\in [d-1]} |a_j|,\quad\mbox{where}\quad C':= d|\lambda| (1+|\omega|)^d.
\end{equation}
Note 
$$
\frac{\partial}{\partial x} \frac{1}{1+\lambda (\omega+x) \prod_{j=1}^{d-1}(\omega + a_j)} \Big |_{x=a_d} = - \frac{\lambda\prod_{j=1}^{d-1}(\omega + a_j)}{\Big(1+\lambda\prod_{j=1}^d (\omega+a_j)\Big)^2} 
= -  (\lambda\omega^{d-1} + \tau_1) (\omega + \tau_2)^2,
$$
where $|\tau_1|\leq C' \max_{j\in [d]} |a_j|$ and $|\tau_2|\leq C \max_{j\in [d]} |a_j|$. Notice 
$$
-  (\lambda\omega^{d-1} + \tau_1) (\omega + \tau_2)^2 = -\lambda\omega^{d+1} + \tau,
$$
where $|\tau|\leq |\lambda||\tau_2||\omega|^{d-1} (2|\omega|+ |\tau_2|)+|\tau_1|(|\omega|+|\tau_2|)^2$. Using that $\max_{j\in [d]} |a_j|\leq 1$, we obtain $|\tau|\leq C_2\max_{j\in [d]} |a_j|$, where 
\[C_2:=|\lambda||\omega|^{d-1} C(2|\omega|+ C)+C'(|\omega|+C)>0.\]
This finishes the proof.
\end{proof}

\begin{lemma}\label{lem:perturbcontract}
Let $\Delta\geq 3$ and $\lambda\in \CQ\setminus\Reals$, and set $d:=\Delta-1$. Let $\omega$ be the fixpoint of $f(x)=\frac{1}{1+\lambda x^{d}}$ with the smallest norm. There is a set of activities $\{\lambda_0',\lambda_1',\hdots,\lambda_t'\}\subseteq \CQ\backslash\{0\}$ and a real $r>0$ such that the following hold for all $\hat{\omega}\in B(\omega,r)$. 
\begin{enumerate}
\item for $i=0$, $\frac{1}{1+\lambda_{0}'}\in B(\hat{\omega},2r)$,
\item for $i=1,\hdots,t$, the map $\Phi_i:x\mapsto \frac{1}{1+\lambda_i'x}$ is contracting on  the ball $B(\hat{\omega},2r)$,
\item $B(\hat{\omega},2r)\subseteq \bigcup^{t}_{i=1}\Phi_i(B(\hat{\omega},2r))$.
\end{enumerate}
Moreover, for $i=0,1,\hdots,t$, there is a bipartite graph $G_i'$ of degree $\Delta$ with a vertex $w_i$ such that 
\[\lambda_i'=\frac{\Zin_{G_i',w_i}(\lambda)}{\Zout_{G_i',w_i}(\lambda)}\quad\mbox{and}\quad \mathrm{deg}_{w_i}(G_i')=\begin{cases} \Delta-1 & \mbox{ if $i=0$,}\\ \Delta-2 & \mbox{ if $i=1,\hdots,t$.}\end{cases}\]
\end{lemma}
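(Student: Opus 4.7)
The plan is to build each $G_i'$ by taking several trees from Lemma~\ref{opop2} that implement activities of the form $\lambda\tilde{x}$ with $\tilde{x}\in\Complex$ close to $\omega$, and identifying their degree-one terminals at a common vertex $w_i$. Analogously to the proof of Lemma~\ref{lem:hard-core-imp}, a direct calculation will show that if trees $T_1,\ldots,T_k$ implement activities $\lambda\tilde{x}_1,\ldots,\lambda\tilde{x}_k$ at their degree-one terminals, and we merge those terminals into a single vertex $w$ (which then has degree $k$), then $\Zin_{G,w}(\lambda)/\Zout_{G,w}(\lambda)=\lambda\prod_{j=1}^k\tilde{x}_j$. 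I will take $k=d$ for $i=0$ (giving $w_0$ of degree $\Delta-1$) and $k=d-1$ for $i\geq 1$ (giving $w_i$ of degree $\Delta-2$). For $i=0$ all of the $\tilde{x}_j$ will be chosen close to $\omega$, so that $\lambda_0'\approx\lambda\omega^d=(1-\omega)/\omega$ and hence $1/(1+\lambda_0')\approx\omega$. For $i\geq 1$, $d-2$ of the $\tilde{x}_j^{(i)}$ will be close to $\omega$ while the remaining one is shifted across a finite set to give the covering property.

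First I will fix a rational $c\in(|z|,1)$ (possible since $|z|<1$ by Lemma~\ref{uiopa}) and the constants $C_0,C_1,C_2,\delta_0,\delta_1,\delta_2$ from Lemmas~\ref{apx}, \ref{apxinv}, \ref{derile}. I will then pick $r>0$ rational small enough that $5r\leq\min\{\delta_0,\delta_1,\delta_2\}$, $3rC_2\leq c-|z|$, $C_1(3r)^2\leq r/4$, and a few additional error-bound conditions. Next, I will select a finite $(r/8)$-covering $\{\beta_1,\ldots,\beta_t\}\subseteq\CQ$ of the compact ball $\overline{B(0,3r/|z|)}$. For each $i\in[t]$, I will invoke Lemma~\ref{opop2} with sufficiently small accuracy to obtain trees $T_1^{(i)},\ldots,T_{d-1}^{(i)}$ of maximum degree $\Delta$, where $T_1^{(i)}$ implements an activity close to $\lambda(\omega+\beta_i)$ and $T_j^{(i)}$ for $j\geq 2$ implements an activity close to $\lambda\omega$; identifying their terminals yields $G_i'$, which is a tree and hence bipartite. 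The graph $G_0'$ is constructed analogously using $d$ such trees all targeting $\lambda\omega$. The resulting $\lambda_i'$ lies in $\CQ\setminus\{0\}$ since $\lambda\in\CQ$ and the target values $\lambda\omega^{d-1}(\omega+\beta_i)$ and $\lambda\omega^d$ are nonzero (with the implemented $\lambda_i'$ arbitrarily close to them).

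The verification of the three conditions will go as follows. For (1), tuning the Lemma~\ref{opop2} accuracy fine enough makes $|\lambda_0'-\lambda\omega^d|$ as small as desired, so $|1/(1+\lambda_0')-\omega|<r$, and hence $1/(1+\lambda_0')\in B(\hat{\omega},2r)$ by the triangle inequality. For (2), fix $x\in B(\hat{\omega},2r)$ and set $a_d=x-\omega$, so that $|a_d|\leq 3r\leq\delta_2$; Lemma~\ref{derile} applied with $a_j=\tilde{x}_j^{(i)}-\omega$ then gives $|\Phi_i'(x)-z|\leq C_2\cdot 3r\leq c-|z|$, so $|\Phi_i'(x)|\leq c<1$. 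For (3), given $y\in B(\hat{\omega},2r)$, set $a_d=y-\omega$; since $a_d/z\in\overline{B(0,3r/|z|)}$, pick $i\in[t]$ with $|a_d/z-\beta_i|\leq r/8$. Lemma~\ref{apxinv} then yields
\[
\Phi_i^{-1}(y)=\omega+\Big(a_d/z-(\tilde{x}_1^{(i)}-\omega)-\sum_{j=2}^{d-1}(\tilde{x}_j^{(i)}-\omega)\Big)+\tau,\quad |\tau|\leq C_1(3r)^2\leq r/4,
\]
and tuning the Lemma~\ref{opop2} accuracies makes the remaining summands contribute at most $r/4$ each, whence $|\Phi_i^{-1}(y)-\omega|<r$ and therefore $|\Phi_i^{-1}(y)-\hat{\omega}|<2r$.

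The main obstacle will be carefully balancing all the error terms (the implementation accuracy from Lemma~\ref{opop2}, the Taylor-expansion remainders from Lemmas~\ref{apx}, \ref{apxinv}, \ref{derile}, and the telescoping errors in the product $\lambda\prod\tilde{x}_j^{(i)}$) so that the contraction and covering conditions hold with strict inequalities \emph{uniformly} over all $\hat{\omega}\in B(\omega,r)$. Since only finitely many $\lambda_i'$ are involved, once $r$ and the accuracies are chosen sufficiently small all the bounds close up, but the chain of estimates is somewhat delicate.
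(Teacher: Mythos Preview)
Your approach is essentially the paper's: obtain via Lemma~\ref{opop2} trees implementing activities near $\lambda\omega$ and near $\lambda(\omega+\beta_i)$ for finitely many shifts $\beta_i$, merge their degree-one terminals to form the $G_i'$ (so $\lambda_i'=\lambda\prod_j\tilde{x}_j^{(i)}$), and verify contraction and covering via Lemmas~\ref{derile} and~\ref{apxinv} respectively.

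The one point that needs repair is the order of dependencies. Lemma~\ref{opop2} does not let you prescribe the radius $\rho$; you supply $\eps,\kappa$ and it returns some $\rho\in(0,\kappa)$, and (from its proof) shrinking $\eps$ also shrinks $\rho$. So if you first fix $r$, then pick $\beta_i$ with $|\beta_i|$ up to $3r/|z|$, and only then invoke Lemma~\ref{opop2} ``with sufficiently small accuracy'', there is no guarantee that $\lambda(\omega+\beta_i)\in B(\lambda\omega,\rho)$, and the implementation step can fail. The paper avoids this by applying Lemma~\ref{opop2} \emph{first}, with the fixed choice $\eps=|z|/9$, to obtain both $\rho$ and an $(\eps\rho)$-covering $\{\lambda_1,\dots,\lambda_t\}$ of $B(\lambda\omega,\rho)$ in one shot, and only then setting $r=\rho|z|/(3|\lambda|)$; with this order (and taking $\kappa$ small enough to force the remaining smallness conditions on $r$) all of your estimates go through verbatim.
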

\begin{proof}
Take $C=\max\{C_0,C_1,C_2,1\}>0$ and $\delta =\min\{\delta_0,\delta_1,\delta_2,|\omega|,1\}\in (0,1]$, where $C_0,C_1,C_2,\delta_0,\delta_1,\delta_2$ are the constants from Lemmas~\ref{apx},~\ref{apxinv} and \ref{derile}. Moreover, let $z=\omega-1$ and recall by Lemma~\ref{uiopa} that $0<|z|<1$. Let $\eps = \frac{1}{9}|z|>0$. By Lemma~\ref{opop2}, $(\Delta,\lambda)$ implements a set of activities $S=\{\lambda_1,\hdots,\lambda_t\}$ which is an $(\eps \rho)$-covering  of $B(\lambda \omega,\rho)$ for some $\rho>0$ satisfying
\begin{equation}\label{smallr}
\rho<\frac{|\lambda|}{10C} \delta|z|(1-|z|). 
\end{equation}
 For convenience, for $i\in[t]$, define $\zeta_i$ by setting
\[\lambda_i=\lambda(\omega+\zeta_i), \mbox{ so that }\max_{i\in [t]} |\zeta_i|\leq \frac{\rho}{|\lambda|}<\frac{1}{5C} \delta|z|(1-|z|)<\delta.\]
By Lemma~\ref{opop2}, $(\Delta,\lambda)$ also implements an activity $\lambda_0$ such that $\lambda_0=\lambda(\omega+\zeta_0)$ with $d|\zeta_0|<\epsilon \rho/|\lambda|$ (in particular, $|\zeta_0|<\delta$). Note that since $\lambda\in \CQ$ we have that $S\subseteq \CQ$. 

Let $r:=\displaystyle\rho\frac{|z|}{3|\lambda|}$. Moreover, let $\lambda_0':=\lambda(\omega+\zeta_0)^{d}$, while for $i=1,\hdots,t$, let 
\[\lambda_i':=\lambda(\omega+\zeta_i)(\omega+\zeta_0)^{d-2}  \mbox{ and $\Phi_i$ be the map }x\mapsto \frac{1}{1+\lambda_i' x}.\] 
Note that $\{\lambda_0',\hdots, \lambda_t'\}\subseteq \CQ\backslash\{0\}$, since $\lambda\in \CQ$, $S\subseteq \CQ$ and for $i=0,1,\hdots,t$ it holds that $|\zeta_i|<\delta\leq |\omega|$.

Consider an arbitrary $\hat{\omega}\in B(\omega,r)$. We first show that
\begin{equation}\label{eq:contra123contra}
\Phi_i\mbox{ is contracting on the ball } B(\hat{\omega},2r) \mbox{ for every $i\in[t]$}.
\end{equation}
Let $x\in B(\hat{\omega},2r)$. Since $|\hat{\omega}-\omega|\leq r$, we have $x\in  B(\omega,3r)$. Note that $3r<\delta$ and $\max_{i\in [t]} |\zeta_i|<\delta$, therefore from Lemma~\ref{derile} (applied to $a_d=x-\omega$, $a_1=\zeta_i$ and $a_2=\cdots=a_{d-1}=\zeta_0$), we have 
\[\Phi'_i(x)=z+\tau, \mbox{ where } |\tau|\leq C \max\{\zeta_0,\zeta_i,|x-\omega|\}\leq C\max\Big\{\frac{\rho}{|\lambda|}, 3r\Big\}= C\frac{\rho}{|\lambda|}\leq \frac{1-|z|}{2},\]
and hence $|\Phi_i'(x)|\leq \frac{1+|z|}{2}$, proving \eqref{eq:contra123contra}. We next show that
\begin{equation}\label{eq:rf34f23342d}
\mbox{for every $x\in B(\hat{\omega},2r)$, there exists $i\in [t]$ such that $\Phi_i^{-1}(x)\in B(\hat{\omega},2r)$}.
\end{equation}
We first calculate $\Phi_i^{-1}(x)$ for $i\in[t]$. By Lemma~\ref{apxinv} (applied to $a_d=x-\omega$, $a_1=\zeta_i$ and $a_2=\cdots=a_{d-1}=\zeta_0$), we have
$$
\Phi_i^{-1}(x) = \frac{1}{\lambda}\Big(\frac{1}{x} - 1\Big)\frac{1}{(\omega+\zeta_i)(\omega+\zeta_0)^{d-2}} = \omega + \Big(\frac{x-\omega}{z} - \zeta_i-(d-2)\zeta_0 \Big) + \tau,
$$
where 
\[|\tau|\leq C \max\{|\zeta_0|^2,|\zeta_i|^2,|x-\omega|^2\}\leq  C\Big(\frac{\rho}{|\lambda|}\Big)^2\leq\eps \frac{\rho}{|\lambda|} \quad\mbox{and}\quad (d-2)|\zeta_0|\leq d|\zeta_0|<\eps \frac{\rho}{|\lambda|}.\]
It follows that
\begin{equation}\label{eq:t5342f254g}
\Phi_i^{-1}(x) - \omega=(\omega+\zeta) - (\omega+\zeta_i)+\tau',\mbox{ where } \zeta:=\frac{x-\omega}{z} \mbox{ and }|\tau'|\leq 2\eps \frac{\rho}{|\lambda|}.
\end{equation}
Note that
\[|\lambda \zeta|\leq |\lambda| \frac{|x-\omega|}{|z|}\leq |\lambda|\frac{3r}{|z|}= \rho,\]
so $\lambda(\omega+\zeta)$ belongs to the ball $B(\lambda \omega,\rho)$. In particular, we can choose $\lambda_i$ from the $(\epsilon \rho)$-covering such that 
\[\mbox{$\big|\lambda(\omega+\zeta)-\lambda_i\big|\leq \epsilon\rho$, which gives that }\big|(\omega+\zeta)-(\omega+\zeta_i)\big|\leq \eps \frac{\rho}{|\lambda|}.\]
Combining this with \eqref{eq:t5342f254g} and using that $\epsilon=\frac{1}{3}|z|$, we obtain that $|\Phi_i^{-1}(x) - \omega|\leq 3\eps\displaystyle \frac{\rho}{|\lambda|}=r$, and therefore by the triangle inequality $|\Phi_i^{-1}(x) - \hat{\omega}|\leq 2r$ (since $|\hat{\omega}-\omega|\leq r$). This proves \eqref{eq:rf34f23342d}. Finally, we show that
\begin{equation}\label{eq:4tg44h575h4}
\frac{1}{1+\lambda_{0}'}\in B(\hat{\omega},2r).
\end{equation}
From Lemma~\ref{apx} (applied to $a_1=\cdots=a_d=\zeta_0$), we obtain that
\[\frac{1}{1+\lambda_{0}'}=\omega+d\zeta_0+\tau, \mbox{ where } |\tau|\leq C|\zeta_0|^2\leq \eps \frac{\rho}{|\lambda|}=r/3.\]
Since $d|\zeta_0|\leq \eps \frac{\rho}{|\lambda|}= r/3$, we obtain that $\frac{1}{1+\lambda_{0}'}\in B(\omega,r)$ and therefore $\frac{1}{1+\lambda_{0}'}\in B(\hat{\omega},2r)$ as well (since $|\hat{\omega}-\omega|\leq r$), finishing the proof of \eqref{eq:4tg44h575h4}.

In light of \eqref{eq:contra123contra}, \eqref{eq:rf34f23342d} and \eqref{eq:4tg44h575h4}, to prove the lemma it remains to show the existence of the graphs $G_0',G_1',\hdots, G_t'$ claimed in the stamement. Since $(\Delta,\lambda)$ implements the activities $\lambda_0,\lambda_1,\hdots,\lambda_t$, for $i=0,1,\hdots,t$ there exists a bipartite graph $G_i$ of maximum degree at most $\Delta$ with terminal $v_i$ such that $\lambda_i=\frac{\Zin_{G_i,v_i}(\lambda)}{\Zout_{G_i,v_i}(\lambda)}$. Consider first the case where we want to implement $\lambda_i'$ for some $i\neq 0$. Construct the graph $G_i'$ by taking $d-2$ disjoint copies of the graph $G_0$, one copy of the graph $G_i$ and identifying their terminals into a single vertex $w_i$. Then, the degree of $w_i$ in $G_i$ is $d-1=\Delta-2$ and we have that
\[\Zin_{G_i',w_i}(\lambda)=\lambda\Big(\frac{\Zin_{G_0,v_0}(\lambda)}{\lambda}\Big)^{d-2}\Big(\frac{\Zin_{G_i,v_i}(\lambda)}{\lambda}\Big), \quad \Zout_{G_i',w_i}(\lambda)=(\Zout_{G_0,v_0}(\lambda))^{d-2}\Zout_{G_i,v_i}(\lambda),\]
and therefore 
\[\frac{\Zin_{G_i',w_i}(\lambda)}{\Zout_{G_i',w_i}(\lambda)}=\lambda (\lambda_0/\lambda)^{d-2}(\lambda_i/\lambda)=\lambda(\omega+\zeta_i)(\omega+\zeta_0)^{d-2}=\lambda_i',\]
as needed. The construction for the case $i=0$ is analogous; to construct $G_0'$, we take $d$ disjoint copies of the graph $G_0$ and identify their terminals into a single vertex $w_0$. Then, $w_0$ has degree $d=\Delta-1$ and it holds that $\lambda_0=\Zin_{G_0',w_0}(\lambda)/\Zout_{G_0',w_0}(\lambda)$.
\end{proof}

Using Lemma~\ref{lem:perturbcontract}, we can now prove  Proposition~\ref{lem:main121} by applying Lemmas~\ref{lem:precision} and~\ref{lem:pathpath}.
\begin{lemmainonetwoone}
\statelemmainonetwoone
\end{lemmainonetwoone}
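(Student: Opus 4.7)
The plan is to combine the three main technical lemmas developed in this section together with the transformation Lemma~\ref{lem:transf}. First, I invoke Lemma~\ref{lem:perturbcontract} to obtain activities $\lambda_0',\lambda_1',\hdots,\lambda_t'\in\CQ\setminus\{0\}$, a positive real $r$, and bipartite graphs $G_0',G_1',\hdots,G_t'$ of maximum degree $\Delta$ (with distinguished vertices $w_i$ of degree $\Delta-1$ if $i=0$ and $\Delta-2$ if $i\in[t]$) realising these activities. The point $x_0:=\frac{1}{1+\lambda_0'}$ lies in $B(\hat{\omega},2r)$ and each $\Phi_i(z)=\frac{1}{1+\lambda_i'z}$ is contracting and collectively covers $B(\hat{\omega},2r)$, for every $\hat{\omega}\in B(\omega,r)$. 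Using Fact~\ref{fact:root} I compute a rational approximation $\hat{\omega}\in\CQ$ with $|\hat{\omega}-\omega|<r/2$, and I set $\rho$ to be any positive rational $\leq|\lambda|r/2$. Then for every input $\lambda'\in B(\lambda\omega,\rho)\cap\CQ$, the target $x:=\lambda'/\lambda$ satisfies $|x-\omega|<r/2$, so $x\in B(\hat{\omega},r)\subseteq B(\hat{\omega},2r)$.

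Given input $\lambda'\in B(\lambda\omega,\rho)\cap\CQ$ and rational $\epsilon>0$, I apply Lemma~\ref{lem:precision} with centre $z_0=\hat{\omega}$, radius $2r$, maps $\Phi_1,\hdots,\Phi_t$, starting point $x_0$, target $x=\lambda'/\lambda$, and desired accuracy $\epsilon':=\epsilon/|\lambda|$. This produces, in time $\mathrm{poly}(\size{\lambda',\epsilon})$, a point $\hat{x}\in B(\hat{\omega},2r)\cap\CQ$ together with a sequence $i_1,\hdots,i_k\in[t]$ such that $\hat{x}=\Phi_{i_k}\big(\Phi_{i_{k-1}}(\cdots\Phi_{i_1}(x_0)\cdots)\big)$ and $|\hat{x}-x|\leq\epsilon'$. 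Noting that $(1-x_0)/x_0=\lambda_0'$, I then apply Lemma~\ref{lem:pathpath} to $x_0,\lambda_1',\hdots,\lambda_t'$ and the sequence $i_1,\hdots,i_k$. This yields a path $P$ on vertices $v_0,v_1,\hdots,v_{k+1}$ carrying the non-uniform activity vector $\lambdab$ with $\lambda_{v_0}=\lambda_0'$, $\lambda_{v_j}=\lambda_{i_j}'$ for $j\in[k]$, and $\lambda_{v_{k+1}}=\lambda$, satisfying $\Zout_{P,v_{k+1}}(\lambdab)\neq 0$ and $\Zin_{P,v_{k+1}}(\lambdab)/\Zout_{P,v_{k+1}}(\lambdab)=\lambda\hat{x}$. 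Moreover, the exact values $\Zin_{P,v_{k+1}}(\lambdab)$ and $\Zout_{P,v_{k+1}}(\lambdab)$ are computed within the same time budget.

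Finally, I lift the non-uniform path to a uniform bipartite construction using Lemma~\ref{lem:transf}: for each vertex $v_j$ with $j\in\{0\}\cup[k]$, I attach a copy of $G_0'$ (if $j=0$) or $G_{i_j}'$ (if $j\in[k]$), identifying the terminal $w_{\cdot}$ with $v_j$; the vertex $v_{k+1}$ is left untouched. By the degree count in Lemma~\ref{lem:perturbcontract}, vertex $v_0$ has final degree $(\Delta-1)+1=\Delta$, each vertex $v_j$ for $j\in[k]$ has final degree $(\Delta-2)+2=\Delta$, and $v_{k+1}$ has degree $1$, so the resulting graph $G$ has maximum degree $\Delta$; by Remark~\ref{rem:observ} it is bipartite. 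From \eqref{eq:uni} and \eqref{eq:uniinout} we get $\Zin_{G,v_{k+1}}(\lambda)/\Zout_{G,v_{k+1}}(\lambda)=\lambda\hat{x}$ and can compute both partition functions exactly. Since $|\lambda\hat{x}-\lambda'|=|\lambda|\cdot|\hat{x}-x|\leq|\lambda|\epsilon'\leq\epsilon$, the pair $(G,v_{k+1})$ implements $\lambda'$ with accuracy $\epsilon$, completing the proof.

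There is no significant new mathematical obstacle at this stage; all the difficulty has been absorbed into the preparatory lemmas (chasing the repelling fixpoint in Lemma~\ref{hop2}, bootstrapping to a moderately dense covering in Lemma~\ref{opop2}, producing contracting maps that cover in Lemma~\ref{lem:perturbcontract}, and amplifying precision in Lemma~\ref{lem:precision}). The only points requiring mild care are (i)~shifting from the ``irrational'' centre $\omega$ to a rational centre $\hat{\omega}$ so that the hypotheses of Lemma~\ref{lem:precision} hold, which only costs a factor of two in the working radius; (ii)~ensuring the degree-$\Delta$ bound is preserved by matching the degrees of the terminals of $G_0',G_1',\hdots,G_t'$ to the degrees of the path vertices they replace; and (iii)~verifying that $\epsilon'=\epsilon/|\lambda|$ has polynomial size, which holds because $\lambda\in\CQ$ is a fixed constant of the problem.
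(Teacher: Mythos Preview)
Your proof is correct and follows essentially the same route as the paper: invoke Lemma~\ref{lem:perturbcontract} to get the maps and gadgets, pick a rational centre $\hat{\omega}$ close to $\omega$, run Lemma~\ref{lem:precision} on the target $\lambda'/\lambda$, build the path via Lemma~\ref{lem:pathpath}, and then glue on the $G_i'$'s to uniformise the activities. Two small technical points are worth tightening. First, $\epsilon':=\epsilon/|\lambda|$ need not be rational (e.g.\ $\lambda=1+\im$ gives $|\lambda|=\sqrt{2}$), so you should instead take a rational $\epsilon'\in(0,\epsilon/|\lambda|)$ of polynomial size, as the paper does. Second, Lemma~\ref{lem:transf} as stated applies to gadgets that \emph{implement} an activity in the sense of Definition~\ref{def:Gimplement}, i.e.\ with a degree-one terminal; the $G_i'$ from Lemma~\ref{lem:perturbcontract} have terminals of degree $\Delta-1$ or $\Delta-2$, so the lemma does not apply verbatim. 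The underlying identity \eqref{eq:uniinout} still holds for the identification you describe, and your manual degree count is exactly right --- the paper handles this by writing ``analogously to Lemma~\ref{lem:transf}'' rather than citing it directly.
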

\begin{proof}
Let $r>0$, $\{\lambda_0',\lambda_1',\hdots,\lambda_t'\}\subseteq \CQ\backslash\{0\}$ and $G_0',\hdots,G_t'$ (with vertices $w_0,\hdots,w_t$, respectively)  be as in Lemma~\ref{lem:perturbcontract}.  Let $\hat{\omega}\in \CQ$ be such that $|\omega-\hat{\omega}|< r$ and let $\rho>0$ be a rational such that $|\omega-\hat{\omega}|<\rho/|\lambda|<r$. Note that the choice of $\hat{\omega}$ and  $\rho$ ensures that 
\begin{equation}\label{eq:x0qwexa13}
B(\omega,\rho/|\lambda|)\subset B(\hat{\omega},2r)
\end{equation}
since for all $x$ with $|x-\omega|\leq \frac{\rho}{|\lambda|}$, we have by the triangle inequality $|x-\hat{\omega}|\leq |x-\omega|+|\omega-\hat{\omega}|< 2r$. Let 
\begin{equation}\label{eq:x0qwexa12}
x_0:=\frac{1}{1+\lambda_{0}'}, \mbox{ so that, by Lemma~\ref{lem:perturbcontract}, } x_0\in B(\hat{\omega},2r).
\end{equation}

Now, suppose that we are given inputs $\lambda'\in B(\lambda \omega,\rho)\cap \CQ$ and rational $\eps>0$.  We want to output in time $poly(\size{\lambda',\epsilon})$ a bipartite graph of maximum degree $\Delta$ that implements $\lambda'$ with accuracy $\epsilon$. By Lemma~\ref{lem:perturbcontract}, we have that the maps $\Phi_i(x)=\frac{1}{1+\lambda_i'x}$ with $i\in[t]$ satisfy the hypotheses of Lemma~\ref{lem:precision} (with $z_0=\hat{\omega}$ and radius $2r$). Moreover, by \eqref{eq:x0qwexa13} and \eqref{eq:x0qwexa12}, we have that $x_0$ and $x^*=\frac{\lambda'}{\lambda}$ belong to the ball $B(\hat{\omega},2r)$. Let $\hat{\epsilon}=\epsilon/|\lambda|$ and $\epsilon'\in (0,\hat{\epsilon})$ be a rational such that $\size{\epsilon'}=poly(\size{\hat{\epsilon}})=poly(\size{\epsilon})$. Using the algorithm of Lemma~\ref{lem:precision} on input $x_0$, $x^*$ and $\epsilon'$, we obtain in time $poly(\size{x_0,x^*,\eps'})=poly(\size{\lambda',\eps})$ a number $\hat{x}\in B(\hat{\omega},2r)$ and a sequence $i_1,\hdots,i_k\in[t]$ such that 
\begin{equation}\label{eq:mg5gm54}
\hat{x}=\Phi_{i_k}(\Phi_{i_{k-1}}(\cdots\Phi_{i_1}(x_0)\cdots))\mbox{ and }\Big|\hat{x}-\frac{\lambda'}{\lambda}\Big|\leq \epsilon'\leq\epsilon/|\lambda|.
\end{equation}
For convenience, let $i_0=0$.

Now,  let $P$ be a path of length $k+1$ whose vertices are labelled as $v_0,v_1,\hdots,v_{k}, v_{k+1}$. Let $\lambdab$ be the activity vector on $P$ given by
\[\lambda_{v_0}=\lambda_{0}'=\frac{1-x_0}{x_0}, \quad \lambda_{v_j}=\lambda_{i_j}' \mbox{ for $j\in[k]$},  \quad \lambda_{v_{k+1}}=\lambda .\]
Then, by Lemma~\ref{lem:pathpath}, it holds that $\Zout_{P,v_{k+1}}(\lambdab)\neq 0$ and $\frac{\Zin_{P,v_{k+1}}(\lambdab)}{\Zout_{P,v_{k+1}}(\lambdab)}=\lambda \hat{x}$; moreover, we can also compute the values $\Zin_{P,v_{k+1}}(\lambdab),\Zout_{P,v_{k+1}}(\lambdab)$ in time polynomial in $k=poly(\size{\lambda',\eps})$ and $\size{x_0,\lambda_1',\hdots,\lambda_t'}=O(1)$. 

Now, let $G'$ be the bipartite graph obtained from the path $P$ by taking for each $j=0,1,\hdots,k$ a disjoint copy of the graph $G_{i_j}'$ and identifying its vertex $w_{i_j}$ with the vertex $v_j$ of the path $P$. Using the degree specifications in Lemma~\ref{lem:perturbcontract}, we have that $G'$ has maximum degree $\Delta$. Moreover, by Lemma~\ref{lem:perturbcontract} we have that $\lambda_i'=\Zin_{G_i',w_i}(\lambda)/\Zout_{G_i',w_i}(\lambda)$ for all $i=0,1,\hdots,t$, so analogously to  Lemma~\ref{lem:transf} we have that
\begin{equation}\label{eq:pbgbprpew3532}
\Zin_{G',v_{k+1}}(\lambda)= C\cdot \Zin_{P,v_{k+1}}(\lambdab),\qquad \Zout_{G',v_{k+1}}(\lambda)=C\cdot \Zout_{P,v_{k+1}}(\lambdab),
\end{equation}
where $C=\prod^t_{i=0}\big(\Zout_{G_i,v_i}(\lambda)\big)^{|\{j\in\{0,\hdots, k\}\,\mid\, \lambda_{v_j}=\lambda_i'\}|}$. We conclude that
\[\frac{\Zin_{G',v_{k+1}}(\lambda)}{\Zout_{G',v_{k+1}}(\lambda)}=\frac{\Zin_{P,v_{k+1}}(\lambdab)}{\Zout_{P,v_{k+1}}(\lambdab)}=\lambda \hat{x}.\]
Combining this with \eqref{eq:mg5gm54}, we obtain that $G'$ with terminal $v_{k+1}$ is a bipartite graph of maximum degree $\Delta$ which implements $\lambda'$ with accuracy $\epsilon$. Moreover, using \eqref{eq:pbgbprpew3532}, we can also compute the values $\Zin_{G',v_{k+1}}(\lambda),\Zout_{G',v_{k+1}}(\lambda)$.
\end{proof}

\bibliographystyle{plain}
\bibliography{\jobname}

\end{document}